\newcommand\extrafootertext[1]{%
    \bgroup
    \renewcommand\thefootnote{\fnsymbol{footnote}}%
    \renewcommand\thempfootnote{\fnsymbol{mpfootnote}}%
    \footnotetext[0]{#1}%
    \egroup
}
\let\proof\undefined
\let\endproof\undefined
\newtheorem{result}{Result}
\setlist[itemize]{label*=-}
\definecolor{amber}{rgb}{1.0, 0.49, 0.0}
\newcommand*{\eqdef}{:=}
\renewcommand{\vec}[1]{\mathbf{#1}}
\renewcommand{\tr}{\mathrm{Tr}}
\newcommand{\efi}{\mathsf{EFI}}
\newcommand{\ver}{\mathsf{Ver}}
\newcommand{\eps}{\varepsilon}
\newcommand{\kgen}{\mathsf{KeyGen}}
\newcommand{\sgen}{\mathsf{StateGen}}
\newcommand{\thetaOWF}{\theta_{\mathsf{owf}}}
\newcommand{\thetaSZK}{\theta_{\mathsf{szk}}}
\newcommand{\thetaEFI}{\theta_{\mathsf{efi}}}
\newcommand{\thetaOWSG}{\theta_{\mathsf{ows}}}
\newcommand{\tauOWSG}{\tau_{\mathsf{ows}}}
\newcommand{\pol}{\mathsf{Polarize}}
\newcommand{\adv}{\mathcal{A}}
\newcommand{\Dec}{\mathsf{Dec}}
\newcommand{\Sim}{\mathsf{Sim}}
\newcommand{\poly}{\mathsf{poly}}
\newcommand{\negl}{\mathsf{negl}}
\newcommand{\red}{R}
\newcommand{\encod}{\mathrm{E}}
\newcommand{\map}{R}
\newcommand{\lang}{L}
\newcommand{\PClass}{\mathsf{P}}
\newcommand{\BPP}{\mathsf{BPP}}
\newcommand{\NP}{\mathsf{NP}}
\newcommand{\coNPpoly}{\mathsf{coNP}/\mathsf{Poly}}
\newcommand{\QSZK}{\mathsf{QSZK}}
\newcommand{\SZK}{\mathsf{SZK}}
\newcommand{\SRE}{\mathsf{SRE}}
\newcommand{\OR}{\textsc{Or}}
\newcommand{\AND}{\textsc{And}}
\newcommand{\PARITY}{\textsc{Parity}}
\newcommand{\MOD}{\textsc{Mod}}
\newcommand{\QSD}{\textsc{QSD}}
\newcommand{\SD}{\textsc{SD}}
\newcommand{\MAJ}{\textsc{Maj}}
\newcommand{\Th}{\textsc{Threshold}}
\newcommand{\cute}{\text{mildly-lossy}}
\newcommand{\Cute}{\text{Mildly-Lossy}}
\newcommand{\cuteness}{\text{mild-lossiness}}
\newcommand{\Cuteness}{\text{Mild-Lossiness}}
\newcommand{\lossy}{\text{lossy}}
\newcommand{\wcwhatevs}{\text{WC-DIST}}
\newcommand{\SAT}{3\textsc{Sat}}
\newcommand{\kSAT}{k\textsc{Sat}}
\newcommand{\sea}{\text{search}}
\newcommand{\OV}{\textsc{OV}}
\newcommand{\SUM}{3\textsc{Sum}}
\newcommand{\permanent}{\textsc{Permanent}}
\newcommand{\Turing}{\text{Turing}}
\newcommand{\unif}{\mathcal{U}}
\newcommand\pr[2][]{\Pr_{#1}\left[#2\right]} 
\newcommand{\rnd}{\stackrel{{}_\$}{\leftarrow}}
\newcommand{\perm}{\mathfrak{S}}
\newcommand{\supp}{\mathrm{Supp}}
\newcommand{\fintro}{R}
\begin{document}

\title{Cryptography from Lossy Reductions:\\ 
Towards OWFs from ETH, and Beyond
}

\author{Pouria Fallahpour\inst{1} \and Alex B. Grilo \inst{1} \and Garazi Muguruza\inst{2} \and Mahshid Riahinia \inst{3}\thanks{Part of this work was done when the author was visiting IRIF, Université Paris Cité, Paris, France.}}

\institute{Sorbonne Universit\'e, CNRS and LIP6, France  \and QuSoft, Informatics Institute, University of Amsterdam, Netherlands \and DIENS, \'Ecole Normale Sup\'erieure, CNRS, Inria, PSL University, Paris, France}

\maketitle
\noindent
\makebox[\linewidth]{\small (May 27, 2025)}

\begin{abstract}

One-way functions (OWFs) form the foundation of modern cryptography, yet their unconditional existence remains a major open question. In this work, we study this question by exploring its relation to lossy reductions, \emph{i.e.}, reductions~$R$ for which it holds that $I(X;R(X)) \ll n$ for all distributions~$X$ over inputs of size $n$.
Our main result is that either OWFs exist or any lossy reduction for any promise problem~$\Pi$ runs in time~$2^{\Omega(\log\tau_\Pi / \log\log n)}$, where $\tau_\Pi(n)$ is the infimum of the runtime of all (worst-case) solvers of~$\Pi$ on instances of size~$n$. More precisely, by having a reduction with a better runtime, for an arbitrary promise problem~$\Pi$, and by using a non-uniform advice, we construct (a family of) OWFs. In fact, our result requires a milder condition, that $R$ is lossy for \emph{sparse uniform} distributions (which we call $\cuteness$). It also extends to $f$-reductions as long as $f$ is a non-constant permutation-invariant Boolean function, which includes \textsc{And-, Or-, Maj-, Parity-}, $\MOD_k$-, and~$\Th_k$-reductions.

Additionally, we show that worst-case to average-case Karp reductions and randomized encodings are special cases of $\cute$ reductions and improve the runtime above as $2^{\Omega(\log \tau_\Pi)}$ when these mappings are considered. Restricting to weak fine-grained OWFs, this runtime can be further improved as~$\Omega(\tau_\Pi)$. 
Intuitively, the latter asserts that if weak fine-grained OWFs do not exist then any instance randomization of any~$\Pi$ has the same runtime (up to a constant factor) as the best worst-case solver of $\Pi$.

Taking~$\Pi$ as~$\kSAT$, our results provide sufficient conditions under which (fine-grained) OWFs exist assuming the Exponential Time Hypothesis (ETH). Conversely, if (fine-grained) OWFs do not exist, we obtain impossibilities on instance compressions (Harnik and Naor, FOCS 2006) and instance randomizations of~$\kSAT$ under the ETH. 
Moreover, the analysis can be adapted to studying such properties of any $\NP$-complete problem.

Finally, we partially extend these findings to the quantum setting; the existence of a pure quantum $\cute$ reduction for $\Pi$ within the runtime~$2^{o(\log\tau_\Pi / \log\log n)}$ implies the existence of one-way state generators, where~$\tau_\Pi$ is defined with respect to quantum solvers.

\keywords{\hspace{-0.54pt} one-way functions, lossy reductions, randomized encodings, worst-case to average-case reductions, instance compression, exponential time hypothesis}
\end{abstract}

\tableofcontents
\newpage

\section{Introduction}\label{sec:intro}

One-way functions (OWFs) are essential cryptographic tools and can be viewed as
the minimal assumption required for cryptography. Informally, a function is
called one-way if it is easy to compute but hard to invert. The existence of one-way functions implies that of many cryptographic primitives such as pseudorandom generators and functions~\cite{BM82,Yao82,HILL99,GGM86}, commitments schemes~\cite{Nao91} and zero-knowledge proofs~\cite{GoldreichMW91}. %
Given their centrality, numerous works are dedicated to constructing OWFs.
Although it is unknown whether they unconditionally exist, several candidate
constructions have been proposed assuming the hardness of concrete computational problems such as discrete logarithm~\cite{DH76}, lattice-based
problems~\cite{Ajt98,MR07,Reg09}, and more. Instead of depending on the hardness of specific problems, the
pinnacle result in this direction would be to construct OWF
from minimal computational complexity assumptions such as~$\NP \neq \PClass$, or~$\NP \not\subseteq \BPP$, or~$\NP \not\subseteq \mathsf{non} \text{-}\mathsf{uniform}\text{-}\PClass$. However, many works~\cite{FF93,BT06,AGGM06,BB15} have shown barriers in this direction.\footnote{We briefly explain these works later in this section.} 

A possibly more feasible direction, therefore, is to slightly relax the above conditions by replacing~$\PClass$ (and $\BPP$ and $\mathsf{non} \text{-}\mathsf{uniform}\text{-}\PClass$) with subexponential-time algorithms. This is because, despite the huge effort that has been made in the literature, no subexponential-time algorithms is known for~$\NP$-complete problems and most notably for the variants of~$\textsc{Sat}$. Recall that the~$\kSAT$ problem asks to decide whether a CNF formula of~$N$ variables and~$M$ clauses, where each clause has~$k$ variables, has a satisfiable assignment.
The subexponential-time hardness of~$\NP$-complete problems has been formulated in the variants of the \emph{Exponential Time Hypothesis} (ETH). Informally, the exponential time hypothesis states that there is no algorithm that can solve~$\kSAT$ in time subexponential in the number of variables $N$. %
This leads us to the following question:

\smallskip
\begin{center}
	\emph{Do one-way functions exist under the exponential time hypothesis (ETH)? \\ Otherwise, what would be the implications for the hardness of~$\textsc{Sat}$?}
\end{center}
\smallskip

Ball, Rosen, Sabin and Vasudevan~\cite{BRSV17} have asked a similar question
about the existence of \emph{weak fine-grained} one-way functions from ETH,
which remains open. A weak fine-grained one-way function requires (i) an
attacker to fail in inverting the function with non-negligible probability (as
opposed to negligible probability for OWFs) and (ii) that there exists a fixed
\emph{polynomial} gap (as opposed to super-polynomial for OWFs) between the
runtime of the function and that of the attacker.

\subsection{Our Contribution}
Trying to answer the above question, we study \emph{lossy} reductions. A reduction $R$ is lossy if it loses information about its input; it should hold that the mutual information between the input and output of $R$ is very small, \emph{i.e.}, $I(X;R(X)) \ll n$ for all distributions~$X$ on inputs of size $n$.
For example, a special type of lossy reductions is compressions that map $n$ bits into $\lambda \ll n$ bits.
In this work, we consider a less restrictive notion that we call \emph{mild-lossiness}: it requires that the same inequality holds for sparse uniform distributions~$X$ over inputs of size $n$.\footnote{More precisely, the distribution~$X$ has a support of size~$2^{o(n)}$. In fact, for our results to hold,~$X$ can be even more sparse depending on the upper bound on the runtime of~$R$. See section~\ref{sec:lossy} for more details.} 
 We prove the following (informal) theorem:

\begin{result}[OWFs from $\Cute$ or Worst-to-Average-Case Reductions]
Let~$\Pi$ be a promise problem, and let~$\tau$ be the infimum of the runtime of all worst-case solvers of~$\Pi$. 
We construct a family of non-uniform functions $\mathsf{F}_\Pi$, 
  such that either $\mathsf{F}_\Pi$ is a one-way function, or  (i) any $\cute$ Karp reduction from~$\Pi$ (to any other problem), given an input instance of size $n$, has runtime~$2^{\Omega(\log\tau / \log\log n)}$, and (ii) any worst-case to average-case Karp reduction from $\Pi$ (to any other problem), given an input instance of size $n$, has runtime~$2^{\Omega(\log \tau)}$.
\end{result}

In the above statement, the worst-case to average-case Karp reduction from~$\Pi$ can be replaced by randomized encodings for~$\Pi$. 
Moreover, we obtain a variant of the above statement regarding weak fine-grained one-way functions.

\begin{result}[Weak Fine-Grained OWFs from Worst-to-Average-Case Reductions]\label{res:gen-fgowf}
  Let~$\Pi$ be a promise problem, and let~$\tau$ be the infimum of the runtime
  of all worst-case solvers of~$\Pi$. 
We construct a family of non-uniform functions $\mathsf{F}_\Pi$, such that either $\mathsf{F}_\Pi$ is a weak fine-grained one-way function or any worst-case to average-case Karp reduction from~$\Pi$ (to any other problem) runs in time $\Omega(\tau)$.
\end{result}

In other words, the above statements assert that if one-way functions do not exist, then randomizing or compressing the worst-case instances of a problem~$\Pi$ has roughly polynomially-better runtime as solving these instances. In the case of non-existence of fine-grained one-way functions, randomizing worst-case instances takes roughly the same time as solving them.\\

Our results are quite flexible in different ways. Firstly, we prove the above
statements for 
the general set of \emph{$f$-reductions}.
More precisely, Drucker~\cite{Dru15} defines these reductions as follows: let $\Pi$ be a promise problem, and let
$\chi_{\Pi}$ be the characteristic function of $\Pi$, \emph{i.e.}, for an input $x$,
$\chi_\Pi(x) = 1$ if $x$ is a YES instance of $\Pi$, and 0 otherwise. For a
function $f: \{0,1\}^m \rightarrow \{0,1\}$, an $f$-reduction $R$ from $\Pi$ to
$\Pi'$ is such that on input $m$ instances of $\Pi$, the output
$R(x_1,\ldots,x_m)$ is a YES instance of $\Pi'$ \emph{iff} $f(\chi_\Pi(x_1),
\cdots , \chi_\Pi(x_m))=1$. 
Our results hold with respect to $f$-reductions for any non-constant
permutation-invariant function $f$%
, such as $\OR, \ \AND, \
\MAJ, \ \PARITY, \ \MOD_k, \ \text{and } \Th_k$. Moreover,~$\Pi$ is not
necessarily confined to~$\NP$ problems. Finally, our proofs relativize; the theorems hold even when all of the considered algorithms have access to a common arbitrary oracle. 

Our results are obtained by proposing clear and generic definitions that facilitate analysis and by closely analyzing the $\cuteness$ of special well-known reductions, including \emph{instance compressions, worst-case to average-case reductions and randomized encodings}.  precisely, we relate the concrete $\cuteness$ of these cases to, among others, the error of the reduction, the privacy of the randomized encoding, or the distance of the output distribution of the worst-case to average-case reduction from the average-case distribution. Our analysis allows a wide range of parameters. For instance, for the aforementioned theorems to hold, the error of the randomized encoding or worst-to-average reductions can be any constant smaller than~$2^{-19}$ and the privacy or distance from the average-case distribution can be as large as 
$\approx 2^{-1.5\log(\tau)}$ (see Section~\ref{sec:application} for more details).\\

We can then use these general results to study the existence of OWFs from $\kSAT$ (and other $\NP$-complete problems). 

\begin{result}[OWFs from $\Cute$ or Worst-to-Average-Case Reductions from $k$SAT]
	We construct a family of non-uniform functions $\mathsf{F}_{\kSAT}$
  such that, under the ETH, either $\mathsf{F}_{\kSAT}$ is a one-way function or for any non-constant permutation-invariant function $f$,
	(i) any $\cute$ Karp $f$-reduction from $\kSAT$ (to any other problem), given an input instance of size $n$, has runtime $2^{\Omega(n/(\log n \cdot \log\log n))}$, and
	(ii) any worst-case to average-case Karp $f$-reduction from $\kSAT$ (to any other problem), given an input instance of size $n$, has runtime ~$2^{\Omega(n/\log n)}$.
\end{result}

For a better comparison, note that~$\kSAT$ has a worst-case solver that runs in time~$2^{O(n/\log n)}$ but assuming ETH it cannot be solved in time~$2^{o(n/\log n)}$ (see Section~\ref{sec:sat} for more details). Interestingly, the first item implies that if one-way functions do not exist, then for any~$\varepsilon < 1$, any~$f$-compression reduction~\cite{Dru15} of~$\kSAT$ that maps~$mn$ bits to~$mn^\varepsilon$ bits runs in nearly exponential time.

We also instantiate Result~\ref{res:gen-fgowf} with~$\kSAT$. 

\begin{result}[Weak Fine-Grained OWFs from Worst-to-Average-Case Reductions from $k$SAT]
  We construct a family of non-uniform functions $\mathsf{F}_{\kSAT}$ 
  such that, under the ETH, either $\mathsf{F}_{\kSAT}$ is a weak fine-grained one-way function or for any non-constant permutation-invariant function $f$, any worst-case to average-case Karp $f$-reduction from $\kSAT$ (to any other problem), given an input instance of size $n$, has runtime~$\Omega(2^{cn/\log n})$, for some constant~$c$. Note that~$c$ is such that any solver for $\kSAT$ runs in time~$\Omega(2^{cn/\log n})$ by the ETH.
\end{result}

Again, in both Results~3 and~4, one can replace worst-case to average-case reductions by randomized encodings. Moreover, these results can be adapted to any of the following problems:~$\textsc{Clique}$, $\textsc{VertexCover}$, $\textsc{IndependentSet}$, $k\textsc{SetCover}$, or~$k\textsc{Colorability}$. This is a direct consequence of $\NP$-completeness under subexponential-time reductions (\emph{e.g.}, see~\cite{IPZ98}).

Result~4 opens up a new direction for non-uniform constructions of fine-grained one-way functions by discovering ``slightly better than trivial'' instance randomizations of $\NP$-complete problems (see Theorem~\ref{thm:gap-to-fgowf}, and Corollary~\ref{cor:gap-to-fgowf} for the details). This draws a new approach to address the aforementioned question raised by Ball, Rosen, Sabin and Vasudevan~\cite{BRSV17}, regarding the existence of weak fine-grained OWFs from the ETH.

Additionally, we answer an open question raised by Drucker~\cite{Dru15} regarding the~$f$-compression reductions of~$\SAT$. %
The main result of Drucker is refuting strong~$\OR$ or~$\AND$ compressions
for~$\SAT$ under the assumption that~$\NP \not\subseteq \SZK/\mathsf{Poly}$, and their
techniques cannot directly exclude more general functions. Recall that $\SZK$ is the class of all languages that have an interactive
proof where a malicious verifier learns almost nothing beyond the membership of
the instance in the language.
The extension of their result, using the techniques of~\cite{FS08}, to $f$-compression reductions for any function~$f$ that depends on
all of its input bits for each input length, have some caveats. For
an arbitrary~$f$, the compression must be to another problem in~$\NP$,
unless~$f$ is monotone, and at the same time the range of covered parameters are
somewhat weaker than those of~$\OR$ and~$\AND$. In this work, we show the following:

\begin{result}[$f$-Compression Implies $\SZK$]
If a problem $\Pi$
has a $f$-compression reduction that maps $m$ instances of $n$ bits to
$m\lambda$ bits, then $\Pi$ can be reduced to $\SZK/\mathsf{Poly}$ in time~$2^{O(\lambda+\log
n)}$. In this case, there is no compressing $f$-compression reduction of~$\SAT$ for any non-constant
permutation-invariant functions~$f$ with the same range of parameters from
  \cite{Dru15}, unless $\NP \subseteq \SZK/\mathsf{Poly}$.
\end{result}

We notice that our result gives a framework to
study~$f$-compression reductions of~$\NP$-complete problems under superpolynomial-time
algorithms, and we leave as an open question exploring this direction.

\subsubsection*{Quantum Settings}

We initiate the study of cryptographic implications of quantum $\cute$ reductions. A quantum reduction $R$ is said to be $\cute$ when $I_q(X;R(X)) \ll n$ for all sparse uniform distributions~$X$ on inputs of size $n$, where~$I_q$ is the quantum mutual information. Moreover, such a reduction is said to be a pure-outcome reduction if (i) for every instance~$x$ the outcome~$R(x)$ is a pure quantum state (ii) and there exists a (possibly unbounded) binary quantum measurement that, given~$R(x)$, decides~$x$. We obtain partial results in the this regard. More precisely, we show that such reductions imply one-way state generators (OWSGs); a type of quantum functions that are easy to evaluate but hard to invert. %
\begin{result}[OWSGs from Quantum $\Cute$ Reductions]
	Let~$\Pi$ be a promise problem, and let~$\tau^Q$ be the infimum of the runtime of all quantum worst-case solvers of~$\Pi$. 
We construct a family of non-uniform quantum mappings $\mathsf{G}_\Pi$,
  such that either $\mathsf{G}_\Pi$ is a one-way state generator, or any quantum $\cute$ pure-outcome Karp reduction from~$\Pi$ (to any other problem), given an input instance of size $n$, has runtime~$2^{\Omega(\log\tau^Q / \log\log n)}$.
\end{result}

\subsection{Technical Overview}
In this section, we briefly present the core technical tools that we use.

The link between lossy reductions with the randomized encodings and worst-case to average-case reductions was raised by~\cite{BBDD+20}, but the exact connection was left as an open question, which we answer in a precise manner. We define a more inclusive type of lossy reductions: \emph{$\cute$~$f$-distinguisher reductions}. Such reductions include randomized encodings, compressions, and a variant of worst-case to average-case non-adaptive Turing reductions. We show that these weaker reductions can also be used to build one-way functions. The full-fleged lossiness of randomized encodings and worst-case to average-case reductions is only satisfied in a very restricted regime of parameters, \emph{e.g.,} when the error is zero and the privacy or distance is exponentially-small. Our new definition allows to significantly relax the parameters (see Section~\ref{sec:application} for more details).\\ 

\noindent\textbf{$f$-distinguisher reductions.} For a Boolean function~$f:\{0,1\}^m \rightarrow \{0,1\}$, we define an~$f$-distinguisher reduction for a problem~$\Pi$ as a mapping $R:\{0,1\}^{*} \rightarrow \{0,1\}^*$ for which there exists an unbounded distinguisher~$\mathcal{D}$ that can distinguish between $R(x_1,\ldots,x_m)$ and $R(x'_1,\ldots,x'_m)$, also given one of~$\{x_i\}_i$'s at random, if $f(\chi_{\Pi}(x_1),\cdots,\chi_{\Pi}(x_m)) \neq f(\chi_{\Pi}(x'_1),\cdots,\chi_{\Pi}(x'_m))$. We show that all non-adaptive Turing reductions, most importantly Karp reductions, are special cases of~$f$-distinguisher reductions. Moreover,~$f$-distinguisher reductions contain~$f$-compression reductions that are studied in the context of parameterized complexity (\emph{e.g.}, see~\cite{HN06,FS08,Dru15}) and randomized encodings~\cite{IK00,AIK06,App17} of the characteristic function~$\chi_\Pi$ are of this type. Our results are therefore stated in terms of this general flavor of reductions.\\

\noindent\textbf{Mild lossiness.} Originally in~\cite{BBDD+20} a multivariate mapping~$\fintro$ is said to be~$t$-lossy if the quantity $I((X_1,\cdots,X_m);\fintro(X_1,\cdots,X_m))$ is bounded above by~$t$ for \emph{all} possible distributions~$X_i$ over~$n$-bit strings. We propose an alternative definition that we call \emph{mild lossiness}. 
\begin{definition}[Informal]\label{def:intro-cute}
We say that~$\fintro$ is $(\lambda,\gamma)$-$\cute$ if 
$$\sup_{X_1,\cdots,X_m}\{I((X_1,\cdots,X_m);\fintro(X_1,\cdots,X_m))\} \leq \lambda m\, ,$$
 where each~$X_i$ ranges over all uniform distributions of support-size roughly~$\widetilde{O}(1/\gamma^3)$. 
\end{definition}

The parameter~$\gamma$ controls the sparseness of the distribution. In the original lossiness,~$\gamma$ is exponentially small, however, it can be fine-tuned depending on various parameters in our new setting.  
Moreover, if~$\fintro$ is an~$f$-reduction for~$\Pi$, each~$X_i$ in the supremum above can be either supported on~$\Pi_{\text{YES}}$, the set of YES instances of~$\Pi$, or~$\Pi_{\text{NO}}$, the set of NO instances of~$\Pi$. In other words,~$\{X_1,X_2,\cdots,X_m\}$ can be split into~$\Pi_{\text{YES}}$-supported and~$\Pi_{\text{NO}}$-supported distributions. \\

\noindent
\textbf{An extended disguising lemma:} 
We first enhance the disguising lemma of Drucker~\cite{Dru15}. Let~$\fintro:\{0,1\}^* \rightarrow \{0,1\}^*$ be a function and consider the problem of finding~$x$ given~$\fintro(x)$. Fano's inequality gives a lower bound for the amount of information about~$x$ that an unbounded algorithm can recover from~$\fintro(x)$, for any choice of~$x$. For instance, if~$\fintro$ is compressing, \emph{i.e.}, it maps an input of size~$n$ to an input of size~$\lambda<n$, then~$\fintro(x)$ loses information about~$x$ which makes it difficult to recover the instance.

The original variant of the disguising lemma by Drucker~\cite{Dru15} is a distinguishing variant of Fano's inequality which states that assuming~$\fintro$ is a compressing map, for any set~$S \subseteq \{0,1\}^n$, there exists a \emph{sparse} distribution~$D_S$ over~$S$ such that~$\mathbb{E}_{D_S}[\|\fintro(y)-\fintro(D_S)\|_1] \leq \delta^*$ for all~$y \in S$. Here,~$\delta^* \approx 1- 2^{-\lambda-2}$ if~$\fintro$ compresses $n$-bit instance to~$\lambda$ bits.

We improve the disguising lemma by showing that $\cuteness$ of~$\fintro$, instead of compression, suffices to obtain a similar result. In order to sketch our improvements, we briefly go over the proof of this lemma in the following. The proof of Drucker's lemma essentially consists of showing that as long as~$\fintro$ is sufficiently compressing, it has the following property: Let~$\mathcal{Y}$ be any distribution and let~$(y,D)$ be a distribution obtained by sampling~$d+1$ instances from~$\mathcal{Y}$, setting~$y$ to be one of them at random, and~$D$ to be the uniform over the $d$ remaining samples. Then, we have
\begin{equation}\label{eq:intro-1}
\mathbb{E}_{\mathcal{Y}^{\otimes d}}[\|\fintro(y)-\fintro(D)\|_1] \leq \delta^* \, .
\end{equation} 
The proof then proceeds by swapping the quantifiers of the above statement using the minimax theorem; more precisely, consider a simultaneous-move two-player game where one player chooses the distribution~$D$ (subject to be uniform over some multiset of size~$d$) and the other player chooses the element~$y$, 
and let the payoff be~$\|\fintro(y) - \fintro(D)\|_1$. For any strategy~$\mathcal{Y}$ for choosing~$y$, let~$(y,D)$ be as explained earlier with~$\mathcal{Y}$ being the base distribution.
 Then, Equation~\eqref{eq:intro-1} bounds the expected payoff from above. By minimax theorem, there must exist a distribution~$\mathcal{D}_S$, not necessarily sparse, %
 that bounds the quantity~$\mathbb{E}_{D\sim \mathcal{D}_S}[\|\fintro(y) - \fintro(D)\|_1]$ for every choice of~$y$. However, note that~$\mathcal{D}_S$ can be not sparse. The final step of this proof, therefore, uses a result by Lipton and Young~\cite[Theorem~2]{LY02} to freely set~$\mathcal{D}_S$ to be a uniform distribution over a sparse number of possible~$D$'s. In fact, the theorem of~\cite[Theorem~2]{LY02} roughly states that in a two-player simultanous-move zero-sum game, restricting the strategies of Player~$1$ to uniform strategies with support size~$\ln (\#\{\text{choices of Player }1\}) /\gamma^2$ only changes %
 the optimal expectation payoff %
 with an additive factor~$\gamma$. \footnote{The same holds for Player~$2$.} This indeed \emph{sparsifies} the support of~$\mathcal{D}_S$ %
 . On the other hand, one loses at most an additive factor~$\gamma$ in the expectation bound in Equation~\eqref{eq:intro-1} and obtains~$\delta^* + \gamma$. 
 
Let us now focus on Equation~\eqref{eq:intro-1}. Drucker~\cite{Dru15} shows that this inequality holds if~$\fintro$ is compressing. The work of~\cite{BBDD+20} instead obtains Equation~\eqref{eq:intro-1} by considering~$\fintro$ to be, more generally, lossy. Recall that a mapping~$R$ is said to be~$\lambda$-lossy if for all distribtions~$X$, it holds that~$I(X;\fintro(X)) \leq \lambda$, where~$I$ denotes the mutual information. We relax the requirement on~$\fintro$ even further and show that $\cuteness$ of~$\fintro$ suffices to obtain a similar result. More precisely, we show that if the lossiness only holds with respect to the uniform distributions with support size~$\widetilde{O}(1/\gamma^3)$, then one looses nothing but an(other) additive factor~$\gamma$ in the expectation bound. This relies on a double use of the result by Lipton and Young~\cite[Theorem~2]{LY02}; we apply it once for Player~2 and once more for Player~1. 
More precisely, before using the minimax theorem, we restrict the base distribution~$\mathcal{Y}$ to be uniform distributions with support size $\widetilde{O}(1 /\gamma^3)$, and we choose~$d \approx 1/\gamma$. By showing that Equation~\eqref{eq:intro-1} remains correct even with this new restriction, we obtain an additive~$\gamma$-approximation of the value of the game (first use of~\cite[Theorem~2]{LY02} for Player~2). Following the minimax theorem, and sparsifying~$\mathcal{D}_S$ (second use of~\cite[Theorem~2]{LY02} for Player~1), we conclude the final upper bound~$\delta^*+2\gamma$. We note that this step is crucial for our results, otherwise, we could not sufficiently bound the lossiness of worst-case to average-case reductions or randomized encodings.

To be more precise, all of the above has been analyzed by Drucker~\cite{Dru15} in the setting where $\fintro$ is multivariate, \emph{e.g.}, taking $m$ instances as input. In this setting the disguising lemma, proved by Drucker, bounds the distance of $R(D_S,\cdots,y,\cdots,D_S)$ (where there are $m-1$ samples of $D_S$ and exactly one $y$ in a random place) from $R(D_S,\cdots,D_S)$ (where there are $m$ samples of $D_S$), when $R$ is a compression. In a similar way as above, we extend the disguising lemma in the multivariate setting, by relaxing the condition on~$R$ and showing that the distance of the two aforementioned distributions are bounded by $\delta^* + 2\gamma$ when $R$ is mildly lossy, \emph{i.e.}, lossy for the sparse uniform distributions.\footnote{In fact,~$m$ possibly changes the upper bound, but by tuning~$d\approx m/\gamma$, one can keep the bound the same.}

Furthermore, \cite{BBDD+20} shows that the inputs can follow two distinct distributions and that the set~$S$ can be replaced by two sets~$S_0,S_1$. Consequently, the type of each input can be set to either~$S_0$ or~$S_1$. 
Then, for any choice of~$0 \leq p \leq m$, there exist two sparse distributions~$D_{S_0}$ and~$D_{S_1}$ of inputs such that, for every~$y \in S_0$, the distance between~$\fintro(\pi(D_{S_0}, \cdots, y, \cdots, D_{S_1}))$ and~$\fintro(\pi(D_{S_0}, \cdots,D_{S_0}, \cdots, D_{S_1}))$ is at most~$\delta^* + 2\gamma$ in expectation, where the number of~$D_{S_0}$ and~$D_{S_1}$ samples in the input of the latter is respectively~$p$ and~$m-p$, and~$\pi$ is a uniformly random permutation.
Similar result holds for replacing one of~$D_{S_1}$'s with an arbitrary~$y \in S_1$. In other words,~$\fintro(\pi(\cdot))$ remains roughly within the same distance (in expectation) if one of the input distributions~$D_{S_{i}}$ is replaced with an arbitrary~$y\in S_i$ (note the constraint that~$y$ must have the same support as the distribution that it replaces). 
Our variant of disguising lemma with $\cute$ reductions also extends to this setting (see Section~\ref{sec:disg} for more details). For simplifying the notation, we define 
\begin{equation}\label{eq:notation}
\fintro_p[\star]:= \fintro(\pi(D_{S_0}, \cdots, \star, \cdots, D_{S_1}))\, ,
\end{equation}
where the number of~$D_{S_0}$ and~$D_{S_1}$ samples in the input is respectively~$p-1$ and~$m-p$, and~$\star$ can posses a fixed quantity or a random variable.
The disguising lemma forms the core of the following results. We start with showing, similarly to Drucker~\cite{Dru15}, that a $\cute$ problem, i.e., a problem that admits a $\cute$ reduction, has a reduction to~$\SZK$. The runtime of the reduction is determined by the amount of $\cuteness$.
In comparison to~\cite{Dru15}, our result holds for any non-constant permutation-invarinat function, requires less restricted notion of $\cuteness$ (as opposed to lossiness that is required in \cite{Dru15}), and allows superpolynomial-time reductions.

\medskip\noindent
 \textbf{Reduction to the statistical difference ($\SD$) problem.} In the statistical difference ($\SD$) problem, the description of two circuits~$(C_0,C_1)$ is given with the promise that on uniformly random inputs their induced distributions are either at least~$2/3$-far or at most~$1/3$-far, with respect to the statistical distance. The question asks to decide which one is the case. This problem is complete for~$\SZK$ under polynomial-time reductions. The parameters~$1/3$ and~$2/3$ can be replace by any real numbers~$\alpha,\beta \in (0,1)$ as long as~$\beta^2 > \alpha$. We sketch how $\cute$ problems reduce to~$\SD$. 

 Let~$\Pi$ be a decision problem and 
 $\fintro$ be any lossy function over~$m$ instances~$x_1,\cdots,x_m$ of~$\Pi$. Let~$S_0:= \Pi_{N} \cap \{0,1\}^n$ and~$S_1:= \Pi_{Y} \cap \{0,1\}^n$.  By the disguising lemma, for any~$0 \leq p \leq m$, there exist two sparse distributions~$D_{S_0}$ and~$D_{S_1}$ such that~$\mathbb{E}[\|\fintro_p[y]-\fintro_p[D_{S_0}]\|_1] \leq \delta^* + 2\gamma$ for all~$y \in S_0$ (recall $\fintro_p[\star]$ as per Equation~\eqref{eq:notation}).
 What is this quantity if~$y \in S_1$? We show that it is large if~$\fintro$ is an~$f$-distinguisher reduction for some particular set of functions~$f:\{0,1\}^m \rightarrow \{0,1\}$.
Assume that~$f$ is a non-constant permutation-invariant function, \emph{i.e.}, a non-constant function that is invariant under permuting its inputs.
Let~$f_i$ be the evaluation of~$f$ over the inputs with~$i$ number of~$0$'s. In fact, since~$f$ is permutation-invariant, only the number of~$0$'s in the input determines the output. In the sequence~$f_0,f_1,\cdots,f_m$, there must be an index~$1 \leq p \leq m$ such that~$f_{p-1} \neq f_p$, because otherwise~$f$ is constant. Now, let us go back to our question. What is the expectation value if~$y \in S_1$? In this case, the number of NO instances in the argument of~$\fintro_p[y]$ is equal to~$p-1$ while in~$\fintro_p[D_{S_0}]$ is~$p$ (note that~$D_{S_0}$ is supported on NO instances). 
Therefore, if~$\fintro$ is also an~$f$-distinguisher reduction with error~$\mu^*$, for all~$y \in S_1$, it must hold that~$\mathbb{E}[\|\fintro_p[y]-\fintro_p[D_{S_0}]\|_1] \geq 1-\mu^*$. 
Putting these two properties of~$\fintro$ together, one can conclude that an instance~$y$ of~$\Pi$ can be reduced to two circuits\footnote{From here forward we call them circuits instead of functions.}~$(\fintro_p[y],\fintro_p[D_{S_0}])$ 
such that
\begin{itemize}
\item[-] if~$y$ is a NO instance, the two circuits have statistical distance at most~$\delta^* + 2\gamma$, 
\item[-] and if~$y$ is a YES instance, the two circuits have statistical distance at least~$1-\mu^*$.
\end{itemize}

This gives a reduction to~$\SZK$ as long as~$(1-\mu^*)^2 - (\delta^* + 2\gamma)$ is a positive constant. The details of the extension to smaller quantities is discussed in Section~\ref{sec:szk}.

\medskip\noindent
\textbf{One-way functions and one-way state generators:} 
The circuits~$\fintro_p[D_{S_0}]$ and~$\fintro_p[y]$ use an internal randomness to sample from~$D_{S_0}$ and~$D_{S_1}$. More precisely, they are two circuits that given uniformly random strings, sample elements from~$D_{S_0}$ and~$D_{S_1}$, and return the evaluation of~$\fintro$. Since both these distributions are uniform over some given multisets of size $d\approx m/\gamma$ with~$n$-bit elements, sampling one element requires $O(\log (m/\gamma))$ number of bits and, with an appropriate data structure, runs in $O(mn/\gamma)$ time. Moreover, if~$T_R$ is the runtime of the reduction, the total runtime (or size) of each circuit will be~$O(T_R+(mn/\gamma)m)$. This is because there are approximately~$m$ inputs to be sampled, each of which requires~$O(mn/\gamma)$ operations. We let~$C_0$ and~$C_1$ to be circuits, taking as input uniform bit strings, that denote respectively~$R_p[D_{S_0}]$ and~$R_p[y]$. When it is needed, we use~$C_1[y]$ to denote the dependence of~$C_1$ on~$y$.

In~\cite{BBDD+20}, it is shown that when the reduction is perfect and~$\Pi$ is worst-case hard (with respect to polynomial-time algorithms),~$C_0(\cdot)$ is a weak one-way function. We propose an alternative construction as follows:
\begin{equation}\label{intro:owf-construction}
\mathsf{F}(b,r):= 
\begin{cases}
C_0(r) & \text{if } b=0\, ,\\
C_1[y^*](r) & \text{if } b=1 \, ,
\end{cases}
\end{equation}
when~$y^*$ is also sampled from~$D_{S_0}$ {(supported on NO instances of $\Pi$)}. This function frequently appears in the~$\SZK$ literature (\emph{e.g.}, see~\cite{SV03}) and was used in~\cite{BDRV19} to build one-way functions from the average-case hardness of the statistical difference problem. 

We sketch the proof of one-wayness of~$\mathsf{F}$. Let~$\mathcal{A}$ be an inverter for~$\mathsf{F}$.
We show how~$\mathcal{A}$ can be used to decide~$\Pi$.
One can use~$\mathcal{A}$ to decide every instance~$\hat{y}$ of~$\Pi$ as follows. Compute~$C_0,C_1[\hat{y}]$, sample~$b$ at random, and feed~$\mathcal{A}$ with~$C_b(r)$. If $b=0$, then $\mathcal{A}$ receives an instance of the function $F$ and can therefore invert it. However, this does not help us with solving $\Pi$. Let us now focus on when $b=1$. We have two cases: If~$\hat{y}$ is a NO instance,~$C_1[\hat{y}]$ would be roughly close to~$C_1[y^*]$, as discussed earlier. Therefore,~$\mathcal{A}$ would succeed to invert it. On the other hand, if~$\hat{y}$ is a YES instance, then~$C_0$ and~$C_1[\hat{y}]$ are far from each other. We also know that~$C_0$ and~$C_1[y^*]$ are close. Hence,~$C_1[y^*]$ and~$C_1[\hat{y}]$ must be far. Consequently, the image spaces of~$C_1[y^*]$ and~$C_1[\hat{y}]$ have small intersection. Therefore, if~$b=1$ and $\hat{y}$ is a YES instance, then there would be no pre-image (except with a small probability) {for the value that~$\mathcal{A}$ tries to invert}. We can therefore run this test several times on~$\mathcal{A}$ and decides~$\hat{y}$ by observing the success rate of~$\mathcal{A}$. 

The detailed proof substantially relies on a fine-grained analysis. Recall that~$\mu^*$ is the error of the reduction,~$\delta^*$ is determined in the disguising lemma depending on the amount of~$\cuteness$ of the reductions, and~$\gamma$ is the sparseness factor as per Definition~\ref{def:intro-cute}. 
Let~$\theta_{\mathsf{owf}}:=(1-\mu^*) - (\delta^* + 2\gamma)$. For all~$ \theta_{\mathsf{owf}} = \Omega(\gamma)$, we can show the following: if the success probability of~$\mathcal{A}$ is at least~$1-\theta_\mathsf{owf}/2$, then the runtime of the aforementioned reduction of deciding~$\Pi$ to inverting~$\mathsf{F}$ will be~$\poly(1/\gamma) \cdot O(T_R+T_\mathcal{A}+m^2n)$.
Now, if the $\cute$ reduction $R$ of $\Pi$ and the adversary~$\mathcal{A}$ both run in time~$T_R,T_{\mathcal{A}} = \poly(1/\gamma,m,n)$, and~$\Pi$ is worst-case hard for all algorithms than run in~$\poly(1/\gamma,m,n)$, then~$\mathcal{A}$ cannot succeed with probability more than~$1-\theta_{\mathsf{owf}}/2$. This is because when~$T_R,T_{\mathcal{A}} = \poly(1/\gamma,m,n)$, one can use~$\mathcal{A}$ as above to decide~$\Pi$ in time~$\poly(1/\gamma,m,n)$. Therefore, if~$\Pi$ is worst-case hard for all algorithms than run in~$\poly(1/\gamma,m,n)$, the aforementioned reduction should not be able to decide~$\Pi$ in this time, meaning that $\mathcal{A}$ cannot succeed with probability more than $1-\theta_{\mathsf{owf}}/2$. By setting~$\kappa:= mn/\gamma$ as the security parameter, one can see that~$\mathsf{F}$ runs in time~$\poly(\kappa)$ but no algorithm~$\mathcal{A}$ of runtime~$\poly(\kappa)$ can invert it with probability better than~$1-1/\poly(\kappa)$. This gives a weak one-way function, which can be leveraged to build one-way functions using the standard hardness amplification techniques.

Although we give a detailed analysis of the proposal of~\cite{BBDD+20}, we find our construction more sound, since we can extend to one-way state generators without much extra effort. In fact, when the circuits are quantum, there are only two more technical details to fix: (i) showing that the image spaces of two quantum circuits~$C_1[y^*]$ and~$C_1[\hat{y}]$ have small intersection even in the quantum case, (ii) computing the success rate of~$\mathcal{A}$. The latter uses SWAP test and requires that for any fixed randomness~$r$, the outputs of~$C_0(r)$ and~$C_1(r)$ be pure.

\medskip\noindent
\textbf{Hardness vs one-wayness.} 
In what we discussed earlier, the parameter~$\gamma > 0$ is not fixed and can be chosen freely subject to the condition~$\theta_{\mathsf{owf}} = \Omega(\gamma)$. In fact, to obtain one-way functions, it suffices that~$\gamma$ be roughly bounded by~$\poly(1/T_\fintro,1/ n)$, where~$T_\fintro$ is the runtime of the reduction. Let~$\Pi$ be a polynomially-hard problem, and let~$\tau_\Pi$ be the infimum of the runtime of all solvers of $\Pi$.\footnote{ In Section~\ref{sec:application},~$\tau_\Pi$ is defined slightly differently.} Note that~$\tau_\Pi$ is superpolynomial. We set~$\gamma$ such that~$1/\gamma = o(\tau_\Pi)$. 
 From the earlier discussion, recall that if~$\Pi$ is~$\poly(1/\gamma,m,n)$-hard and if it admits a $\cute$ reduction with the same runtime, then the function we built in Equation~\eqref{intro:owf-construction} is one-way. By the choice of~$\gamma$,~$\Pi$ is indeed~$\poly(1/\gamma,m,n)$-hard. Therefore, if~$\Pi$ admits a $\cute$ reduction with runtime~$\poly(1/\gamma,m,n)$, one-way functions exist. 
 
 On the other hand, the non-existence of one-way functions,\footnote{More precisely, infinitely often one-way functions. See Section~\ref{sec:application} for more details.} implies that~$\Pi$ does not admit a $\cute$ reduction with runtime~$\poly(1/\gamma,m,n)$. Since this argument applies to every $1/\gamma = o(\tau_\Pi)$, one can set $\log 1/\gamma := \log\tau_\Pi/\log\log n$. Therefore, any $\cute$ reduction for~$\Pi$ must have runtime $2^{\Omega(\log\tau_\Pi/\log \log n)}$.

\medskip\noindent
\textbf{Mild lossiness of worst-case to average-case reductions.}  
Essentially, a worst-case to average-case reduction from a problem\footnote{Here, ``problem'' refers to the common concept of problem, search, decision or promise.} $\Pi$ to a problem $\Sigma$ maps \emph{any} instance of $\Pi$ to an instance of $\Sigma$ whose distribution is \emph{efficiently-samplable}.
In this work we compute the lossiness of worst-case to average-case reductions, and find the specifics of such reductions that can contribute to building one-way functions. Roughly speaking, such reductions are highly midly lossy, which allows to strengthen the previous general results.
We also focus on $f$-distinguisher reductions. Recall that the common concept of reductions, including non-adaptive Turing and Karp reductions, are captured by the notion of $f$-distinguisher reductions.

Firstly, we discuss Karp reductions. We define a worst-case to average-case reduction as follows: A reduction~$\fintro$ from~$\Pi$ is worst-case to average-case if there exist a small~$d<1$ and a distribution $D = \{D_n\}_{n \in \mathbb{N}}$ over $\{0,1\}^*$, such that:
		\begin{equation}\label{eq:intro-avg}
		\forall x \in \Pi \cap \{0,1\}^n: \ \Delta\left(\fintro(x),D\right) \leq d~. 
		\end{equation}

This definition can be viewed as a generalization of worst-case to average-case reductions in the sense that (i) the reduction is oblivious to the target average-case problem, and (ii) the reduction maps inputs to a distribution that is \emph{not} necessarily efficiently samplable. The latter does not impose any issues in our setting, since we are only discussing lossiness of the reductions.

Intuitively, worst-case to average-case reductions should lose information about their inputs as the distribution $D$ is independent from the input instance. However, the proof is not direct. Firstly, note that, thanks to our extended disguising lemma, proving the $\cuteness$ of these reductions suffices for using them to build OWFs. Next, recall that to prove the $\cuteness$ of~$\fintro$, we need to bound the quantity~$\sup_X\{I(X;\fintro(X))\}$, for all sparse uniform distributions~$X$ over subsets of~$\Pi \cap \{0,1\}^n$ of size~$\widetilde{O}(1/\gamma^3)$. In order to do so, we first translate the mutual information $I(X;\fintro(X))$ in terms of the KL-divergence, and then use an inverse Pinsker inequality. It is shown by Sason~\cite{corr:sas15}, that for every two random variables~$X$ and~$Y$, we have
\begin{align*}
		D_{KL}\left( X \| Y  \right) &\leq \log\left( 1+ \frac{2 \cdot \Delta(X,Y)^2}{\alpha_{X}}  \right)\, , \\
\end{align*}
where~$\alpha_{X} = \min\limits_x \Pr(X = x)>0$. The term~$\Delta(X,R(X))$ is bounded by the worst-case to average-case property, therefore, it suffices to bound~$\alpha_X$. %
Since the mild lossiness concerns uniform distributions $X$ with a support of size~$\widetilde{O}(1/\gamma^3)$, we can bound~$\alpha_X$ by~$\widetilde{\Omega}(\gamma^3)$.
 More precisely, we bound the $\cuteness$ from above by
$$   \max\left\{ 1,13+\log\left(\frac{nd^2}{\gamma^3}\right) \right\}  \, .
$$

Next, we discuss Turing reductions. Recall that a non-adaptive Turing reduction from~$\Pi$ to~$\Sigma$, is an algorithm that, given an instance~$x$, outputs oracle queries~$y_1,\cdots,y_k$ and a circuit~$C$ such that $C(y_1,\mathcal{O}(y_1),\cdots,y_k,\mathcal{O}(y_k)) = \chi_\Pi(x)$, where $\mathcal{O}$ is an oracle solver for~$\Sigma$. The common notion of worst-case to average-case \emph{Turing} reduction in the literature is that the marginal distribution of each~$y_i$ alone follows a distribution that is independent of~$x$. This is for instance the notion used in the worst-case to average-case reductions for~$\permanent, \SUM$, or~$\OV$ problems (\emph{e.g.}, see~\cite{Lip89,FF93,BRSV17}). However, the joint distribution of $(y_1,\ldots,y_k)$ might not be independent of~$x$. We therefore consider a variant of non-adaptive Turing reductions where all queries together $(y_1,\ldots,y_k)$ follow a distribution that is roughly independent of~$x$ (as in Equation~\eqref{eq:intro-avg}), and~$C$ does not leak much information about~$x$ either. Any worst-case to average-case Karp reduction is of this type as well as reductions that require \emph{part} of the instance or the random coins in the worst-case to average-case mapping. We show that such reductions are indeed mildly lossy . Thanks to generality of our statement, one can also consider 
$f$-distinguisher non-adaptive Turing reductions. 
Finally, we consider randomized encodings. 
Recall that a randomized encoding for a problem~$\Pi$, or more precisely for~$\chi_\Pi$, is a function~$E$ such that~$E(x)$ encodes the value of~$\chi_\Pi(x)$, therefore, it can be viewed as a reduction for~$\Pi$. Such an encoding further 
requires the existence of two efficiently samplable distributions~$D_{\text{YES}}$ and~$D_{\text{NO}}$ for respectively simulating the encoding of YES and NO instances of~$\Pi$ within the statistical distance~$d$ (that is called privacy). This requirement is in fact similar to Equation~\eqref{eq:intro-avg}. Calculating the lossiness follows the same argument as for the worst-case to average-case reductions, however, it only implies the splitting $\cuteness$ here.\footnote{Since the simulation is split between YES and NO instances.} This is, in fact, allowed by the extended disguising lemma. 
It is worth to mention that one can also consider 
randomized $f$-encodings of~$\Pi$, \emph{i.e.}, the mappings that encode the value of~$f(\chi_\Pi(x_1), \cdots, \chi_\Pi(x_m))$,
for any non-constant permutation-invariant choices of~$f:\{0,1\}^m\rightarrow \{0,1\}$. 

\subsection{Background and Related Works.}
As mentioned earlier, building one-way functions from assumptions like~$\NP \neq
\PClass$ has been a challenging problem. Building upon the work of Feigenbaum and Fortnow~\cite{FF93}, Bogdanov and Trevisan~\cite{BT06} show that, in the non-uniform setting -- where algorithms can take some advice as input -- if there exists a non-adaptive Turing reduction from the worst-case complexity of a decision problem $\Pi$ to the average-case complexity of another (search or decision) problem, then $\Pi$ has a polynomial-time reduction to~$\coNPpoly$. As a result, if there exists a non-adaptive Turing reduction from the worst case of an~$\NP$-complete problem to inverting a one-way function on uniform inputs, then~$\NP$ reduces to~$\coNPpoly$ in polynomial-time, which is believed to be unlikely. Later, works of Akavia, Goldreich, Goldwasser, and Moshkovitz~\cite{AGGM06}, and Bogdanov and Brzuska~\cite{BB15} extend this impossibility to the uniform settings and adaptive Turing reductions, under the condition that the one-way function is regular\footnote{A one-way function $F:\{0,1\}^n \rightarrow \{0,1\}^m$ is called \emph{regular} if for all $n$ and $x,x' \in \{0,1\}^n$, the number of preimages of $x$ is equal to that of $x'$.} and has an efficiently recognizable range. 

Ostrovsky~\cite{Ost91} linked the existence of one-way functions to the
average-case hardness of the statistical zero-knowledge ($\SZK$)
complexity class. %
Ostrovsky and Wigderson~\cite{OW93} showed that
if~$\SZK$ is worst-case hard, then the (seemingly weaker) auxiliary-input one-way functions exist.\footnote{This variant of one-way functions requires the existence of a polynomial~$p(\cdot)$, such that for every family of polynomial-size circuits~$\{\mathcal{A}_n\}_n$, there exists a family of size~$p(n)$ circuits~$\{C_n\}_n$ that~$\mathcal{A}_n$ cannot invert~$C_n$ when given the description of~$C_n$ as advice.}.

In an effort to base the existence of one-way functions on the worst-case complexity, Applebaum and Raykov~\cite{AR16} show that if there exists a worst-case hard language in the complexity class~$\SRE$, then one-way functions exist. $\SRE$ is the class of problems whose characteristic function admits polynomial-time randomized encodings~\cite{IK00,AIK06,App17}. This class is included in~$\SZK$, but it is not known whether the inclusion is proper or not.\\ 

\noindent\textbf{OWFs and lossy reductions.} More recently, the existence of one-way functions have been linked to \emph{lossy} reductions of worst-case hard problems by Ball et al.~\cite{BBDD+20}. Their techniques stem from previous works of Harnik and Naor~\cite{HN06} and Drucker~\cite{Dru15}. In~\cite{HN06}, the compressibility of~$\textsc{Sat}$ is leveraged to construct collision-resistant hash functions from one-way functions. More precisely, they show that if~$\textsc{Sat}$ admits a strong compression, then collision-resistant hash functions can be built based on one-way functions, in a non-black-box way. A strong compression reduces an instance of~$\textsc{Sat}$ with~$M$ clauses and~$N$ variables to an instance of size $p(N)$ for a polynomial $p(\cdot)$. Later, it was shown that such a compression does not exist unless the polynomial-time hierarchy collapses (\emph{e.g.}, see~\cite{FS08,Dru15}). Notably, Drucker~\cite{Dru15} shows that if a (decision) problem has a sufficiently compressing  polynomial-time $\OR_m$-reduction to any other (decision) problem, then it falls into~$\SZK$. 
Drucker~\cite{Dru15} showed that if a problem~$\Pi$ admits an~$\OR_m$ or~$\AND_m$ compressing reduction in the sense that it maps $m$ instances of size $n$ to an instance of size $\poly(n)$, then~$\Pi \in \SZK$.
Later,~\cite{BBDD+20} observed that one can obtain the same result by replacing compression with lossy reductions. Recall that a mapping~$R$ is said to be~$\lambda$-lossy if for all distributions~$X$, it holds that~$I(X;\fintro(X)) \leq \lambda$, where~$I$ denotes the mutual information.
In particular, the work of~\cite{BBDD+20} shows that given a worst-case hard (decision) problem~$\Pi$, one-way functions exist if there exists (i) an~$m/100$-lossy $\OR_m$ reduction from~$\Pi$ \emph{to itself}, or (ii) an~$m/100$-lossy $\MAJ_m$-reduction from~$\Pi$ to any other problem, or (iii) a perfect~$O(m\log n)$-lossy $\OR_m$-reduction from~$\Pi$ to any other problem. However, they could not instantiate these results based on worst-case hardness of~$\NP$. In fact, such results imply that~$\Pi \in \SZK/\mathsf{Poly}$.

\subsection{Open Questions}
A candidate for~$\Pi$ is the~$\textsc{GapSVP}$ problem with constant approximation. If Gap-ETH\footnote{There exists a constant~$\alpha$ such that the following promise variant of~$\SAT$ does not admit a non-uniform subexponential-time algorithm: either the CNF formula is satisfiable or the maximum number of satisfiable clauses is at most~$\alpha m$. } holds, then for every~$\ell_p$ norm, there exists a constant~$\gamma_p$ such that~$\gamma_p\textsc{-GapSVP}$ cannot be solved in subexponential-time as a function of the lattice dimension~\cite{AS18}. Therefore, our results about $\SAT$ can be adapted to~$O(1)\textsc{-GapSVP}$ (with the extra care about the size of the input versus the dimension of the lattice). As a result, an interesting question to investigate is whether~$\textsc{GapSVP}$ admit mild-lossy~$f$-distinguisher reductions with subexponential runtime. We expect that the structure of lattices might help with answering this question.

\subsection*{Organization of the Paper}

After introducing an overview of the context and results of our work in Section~\ref{sec:intro}, we introduce the notation and tools required for our work in Section~\ref{sec:preliminaries}. Then, in Section~\ref{sec:disg} we introduce the notion of lossy mappings and state and prove our extended disgusing lemma. This constitutes the foundation for linking lossiness to one-wayness (and more). In Section~\ref{sec:lossy}, we introduce the general notion of $f$-distinguisher reductions, used to state and analyze our theorems, and prove that this definition contains $f$-reductions and non-adaptive Turing and Karp reductions. We finally define $\cute$ problems at the end of the same section. Sections~\ref{sec:szk},~\ref{sec:efi-owf}, and~\ref{sec:owsg-lossy}, show that $\cuteness$ can be leveraged to build zero-knowledge proofs, one-way functions, and one-way state generators, respectively. In Section~\ref{sec:lossiness-wild}, we study the $\cuteness$ of worst-case to average-case (Karp and Turing) reductions as well as randomized encodings. Finally, Section~\ref{sec:application} states all of results regarding the relation between the hardness of problems and the existence of one-way functions and one-way state generators.

\section{Preliminaries}\label{sec:preliminaries}
In this work, we always consider non-uniform algorithms. All classical algorithms are quantum algorithms, therefore, we mostly use the quantum formalism for generalization and simplification. When the distinction is necessary, we explicitly mention it in the beginning of a section or inside an statement, and clearly distinguish between classical and quantum settings.

\medskip\noindent\textbf{Notation.} We let~$n$ denote the security parameter, and all variables are implicitly parametrized by~$n$. We let~$\mathsf{MS}_n$ denote the set of all mixed states over~$n$ qubits and we define~$\mathsf{MS}_* := \cup_{n=1}^{\infty}\mathsf{MS}_n$. For a positive integer~$n$, we let~$[n]$ denote~$\{1,2,\cdots,n\}$. The set of all permutations over~$[n]$ is~$\mathfrak{S}_n$. We abuse the notation and use the same symbol to refer to the uniform distribution over all permutations of~$[n]$. The set of natural numbers $\{1,2,3,\cdots\}$ is denoted by $\mathbb{N}$. We denote by $\mathbb{R}^+$ the set of positive real numbers.\\  
A collection of functions~$\{f_i\}_{i \in \mathcal{I}}$ is said to be infinitely often if the index set~$\mathcal{I}$ is an increasing infinite sequence of~$\mathbb{N}$.  

\medskip\noindent\textbf{Uniform and $S$-Uniform Distributions.} For any set~$S$, we let~$\unif_{S}$ denote the uniform distribution over~$S$. A distribution is called~$s$-uniform if it is sampled uniformly from a multiset of at most~$s$ elements.

\medskip\noindent\textbf{Boolean functions.} 
A Boolean function~$f:\{0,1\}^m \rightarrow \{0,1\}$ is called non-constant if it is not always~$0$ nor always~$1$. \\

\medskip\noindent\textbf{Language.} A language~$\lang$ is a subset of~$\{0,1\}^*$. The complement of~$\lang$ is defined as~$\overline{\lang} \eqdef \{0,1\}^* \setminus \lang$.

\medskip\noindent\textbf{Promise Problems.} A Promise Problem $\Pi$ consists of two disjoint sets $\Pi_Y,\Pi_N \subset \{ 0,1 \}^*$, respectively referred to as the set of YES and NO instances. Problem $\Pi$ asks to decide whether a given instance, which is promised to lie in $\Pi_Y \cup \Pi_N$, belongs $\Pi_Y$ or $\Pi_N$.\begin{definition}[Characteristic Function of a Promise Problem]
For a promise problem $\Pi$, the characteristic function of $\Pi$ is the map $\chi_\Pi(x) : \{0,1\}^* \rightarrow \{0,1,\star\}$ given by 
\begin{equation*}
	{\chi}_\Pi(x) = 
	\begin{cases}
		1 & \text{if } x \in \Pi_Y\\ 
		0 & \text{if } x \in \Pi_N\\
		\star & \text{otherwise}
	\end{cases}~.
\end{equation*}
\end{definition}

\medskip\noindent\textbf{Search Problems.} We recall the definition of a search problem, inspired by that of~\cite{SIAM:BelGol94}. We define a \emph{search} problem $\Pi_{\sea}$ as a binary relation over $\{0,1\}^* \times \{0,1\}^*$. For any $(x,w) \in \Pi_{\sea}$, we call $x$ an \emph{instance} and $w$ a \emph{witness}. For any $x \in \{0,1\}^*$, we define $\Pi_{\sea}(x) = \{w \in \{0,1\}^* \ | \ (x,w) \in \Pi_{\sea}\}$. We refer to the sets $\Pi_{\sea_{|Y}} = \{ x \in \{0,1\}^* \ | \ \Pi_{\sea}(x) \neq \emptyset \}$, and $\Pi_{\sea_{|N}} = \{0,1\}^* \setminus \Pi_{\sea_{|Y}}$ as the set of YES and NO instances, respectively.

We say that an algorithm $\adv$ solves $\Pi_{\sea}$, if for any $x \in \{0,1\}^*$ for which $\Pi_{\sea}(x) \neq \emptyset$, $\adv$ returns some $w \in \Pi_{\sea}(x)$, and otherwise, outputs $\bot$.

We denote the decision language defined by $\Pi_{\sea}$ as $\Pi = \{x \in \{0,1\}^* \ | \ \exists w \in \{0,1\}^*, (x,w) \in \Pi_{\sea} \}$. Each decision language~$\Pi$ can have multiple associated \textit{search problems}, one for every relation $\Pi_{\sea}$ that defines~$\Pi$. Given~$x\in\Pi$, the $\Pi_{\sea}$-search problem consists on finding~$\omega\in\Pi_{\sea}(x)$.

\medskip\noindent\textbf{Games.} A two-player, simultaneous-move, zero-sum game is specified by a matrix~$\mathbf{M} \in \mathbb{R}^{a \times b}$. Player 1 chooses a row index~$i \in [a]$ and Player 2 chooses a column index~$ j \in [b]$, and Player 2 receives the payoff~$\mathbf{M}_{ij}$ from Player 1. The goal of Player 1 is minimizing the expected payoff, while Player 2 opts to maximize it. The row and column indices are called the pure strategies of Player 1 and Player 2, respectively. The mixed strategies are distributions or possible choices of indices. A mixed strategy is~$s$-uniform if it is sampled uniformly from a multiset of at most~$s$ pure strategies. 
\begin{lemma}[\cite{vNeu28}]\label{lemma:minmax}
Let~$\mathcal{P}$ and~$\mathcal{Q}$ be two mixed strategies for Player 1 and 2, respectively. It holds that~$\min_{\mathcal{P}} \max_{j} \mathbb{E}_{i \sim \mathcal{P}} [\mathbf{M}_{ij}] = \max_{\mathcal{Q}} \min_{i} \mathbb{E}_{j \sim \mathcal{Q}} [\mathbf{M}_{ij}]$.
\end{lemma}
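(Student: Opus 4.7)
The plan is to split the equality into the two inequalities and handle them separately. The easy direction, $\min_{\mathcal{P}}\max_j \mathbb{E}_{i\sim\mathcal{P}}[\mathbf{M}_{ij}] \ge \max_{\mathcal{Q}}\min_i \mathbb{E}_{j\sim\mathcal{Q}}[\mathbf{M}_{ij}]$, holds pointwise: for any fixed $(\mathcal{P},\mathcal{Q})$,
\[
\max_j \mathbb{E}_{i\sim\mathcal{P}}[\mathbf{M}_{ij}] \;\ge\; \mathbb{E}_{j\sim\mathcal{Q}}\mathbb{E}_{i\sim\mathcal{P}}[\mathbf{M}_{ij}] \;=\; \mathbb{E}_{i\sim\mathcal{P}}\mathbb{E}_{j\sim\mathcal{Q}}[\mathbf{M}_{ij}] \;\ge\; \min_i \mathbb{E}_{j\sim\mathcal{Q}}[\mathbf{M}_{ij}],
\]
by Fubini and the trivial bound of a maximum (resp.\ a minimum) by an average. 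Taking $\min_{\mathcal{P}}$ on the left-hand side and $\max_{\mathcal{Q}}$ on the right-hand side preserves the inequality.

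For the reverse direction, set $v^\star \eqdef \min_{\mathcal{P}}\max_j \mathbb{E}_{i\sim\mathcal{P}}[\mathbf{M}_{ij}]$, which is attained because the simplex of mixed strategies $\Delta_a\subseteq\mathbb{R}^a$ is compact and the objective, being the maximum of finitely many linear functionals of $\mathcal{P}$, is convex and continuous. It suffices to exhibit a mixed strategy $\mathcal{Q}^\star\in\Delta_b$ satisfying $\mathbf{M}\mathcal{Q}^\star \ge v^\star\mathbf{1}$ coordinate-wise, where $\mathbf{1}$ is the all-ones vector in $\mathbb{R}^a$, since this gives $\min_i \mathbb{E}_{j\sim\mathcal{Q}^\star}[\mathbf{M}_{ij}]\ge v^\star$. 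Equivalently, the convex compact set
\[
S \;\eqdef\; \{\mathbf{M}\mathcal{Q} - v^\star \mathbf{1} \; : \; \mathcal{Q}\in\Delta_b\} \;\subseteq\; \mathbb{R}^a
\]
must intersect the non-negative orthant $\mathbb{R}^a_{\ge 0}$.

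I would argue this by contradiction: assume $S\cap\mathbb{R}^a_{\ge 0}=\emptyset$. Since $S$ is convex and compact and $\mathbb{R}^a_{\ge 0}$ is convex and closed, the separating hyperplane theorem yields a nonzero $\lambda\in\mathbb{R}^a$ with $\sup_{s\in S}\lambda^\top s < \inf_{r\in\mathbb{R}^a_{\ge 0}}\lambda^\top r$. The right-hand infimum is finite only if $\lambda\ge 0$, in which case it equals $0$. Normalising $\lambda$ so that $\sum_i \lambda_i = 1$ gives a mixed strategy $\mathcal{P}^\star$ for Player 1 such that $\mathbb{E}_{i\sim\mathcal{P}^\star}\mathbb{E}_{j\sim\mathcal{Q}}[\mathbf{M}_{ij}] < v^\star$ for every $\mathcal{Q}\in\Delta_b$. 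Specialising $\mathcal{Q}$ to the pure strategy concentrated on an arbitrary column $j$ gives $\mathbb{E}_{i\sim\mathcal{P}^\star}[\mathbf{M}_{ij}] < v^\star$ for every $j$, hence $\max_j \mathbb{E}_{i\sim\mathcal{P}^\star}[\mathbf{M}_{ij}] < v^\star$, contradicting the definition of $v^\star$ as a minimum. The main (and only nontrivial) ingredient is this appeal to the separating hyperplane theorem; the rest is bookkeeping with bilinearity and compactness of simplices.
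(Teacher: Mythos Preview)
Your argument is correct; this is one of the standard proofs of von Neumann's minimax theorem via the separating hyperplane theorem, and all the steps check out (compactness of $S$ as the affine image of a simplex, strong separation of a compact convex set from a disjoint closed convex set, the sign constraint on $\lambda$ forced by the orthant, and the contradiction with the definition of $v^\star$).

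There is nothing to compare against in the paper: the lemma is quoted as a classical result with a citation to von Neumann and is not proved there. So your proposal simply supplies a proof where the paper deliberately omits one.
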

The value of the game, which we denote by~$\omega(\mathbf{M})$, is the optimal expected value guaranteed by the above lemma. 
The following lemma shows that each player has nearly-optimal~$s$-uniform strategy when~$s$ is chosen to be logarithm of the number of pure strategies of the opponent. 
\begin{lemma}[{\cite[Theorem~2]{LY02}}]\label{lemma:s-uniform}
For any real~$\varepsilon > 0$, any~$\vec{M} \in \mathbb{R}^{a \times b}$, and any integer~$s \geq \ln(b)/(2\varepsilon^2)$, it holds that 
\begin{align*}
\min_{\mathcal{P} \in \mathfrak{P}_s} \max_{j} \mathbb{E}_{i \sim \mathcal{P}} [\mathbf{M}_{ij}] \leq \omega(\mathbf{M}) + \varepsilon (\mathbf{M}_{\text{max}}-\mathbf{M}_{\text{min}})\ ,
\end{align*}
where~$\mathfrak{P}_s$ denotes the set of all~$s$-uniform strategies for Player 1. Similar statement holds for Player 2, namely, 
\begin{align*}
\max_{\mathcal{Q} \in \mathfrak{Q}_s} \min_{i} \mathbb{E}_{j \sim \mathcal{Q}} [\mathbf{M}_{ij}] \geq \omega(\mathbf{M}) - \varepsilon (\mathbf{M}_{\text{max}}-\mathbf{M}_{\text{min}})\ ,
\end{align*}
where~$\mathfrak{Q}_s$ denotes the set of all~$s$-uniform strategies for Player 2.
\end{lemma}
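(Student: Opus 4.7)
\medskip

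\textbf{Proof proposal for Lemma~\ref{lemma:s-uniform}.} The plan is to use the probabilistic method: I will sample an $s$-uniform strategy by drawing $s$ i.i.d.\ pure strategies from an optimal mixed strategy of Player~1, and show via a Hoeffding-type concentration plus a union bound over Player~2's $b$ pure strategies that, with positive probability, the resulting empirical strategy is nearly optimal against \emph{every} column. Since this is an existential statement in expectation, probabilistic existence suffices to establish the inequality.

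First, I would fix an optimal mixed strategy $\mathcal{P}^*$ for Player~1 achieving the minimax value $\omega(\mathbf{M})$, which exists by Lemma~\ref{lemma:minmax}. I then draw $i_1,\ldots,i_s$ independently from $\mathcal{P}^*$ and let $\widehat{\mathcal{P}}$ be the $s$-uniform strategy placing mass $1/s$ on each multiset entry $i_k$. For any \emph{fixed} column $j\in[b]$, the random variables $\mathbf{M}_{i_k,j}$ are i.i.d., bounded in $[\mathbf{M}_{\min},\mathbf{M}_{\max}]$, with mean $\mathbb{E}_{i\sim\mathcal{P}^*}[\mathbf{M}_{ij}]\leq \omega(\mathbf{M})$ (by definition of the minimax value). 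Applying Hoeffding's inequality,
\begin{equation*}
\Pr\left[\frac{1}{s}\sum_{k=1}^{s}\mathbf{M}_{i_k,j}\;>\;\omega(\mathbf{M})+\varepsilon(\mathbf{M}_{\max}-\mathbf{M}_{\min})\right]\;<\;\exp(-2s\varepsilon^2).
\end{equation*}

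Next I take a union bound over the $b$ pure strategies of Player~2. The probability that the sampled empirical strategy $\widehat{\mathcal{P}}$ fails to satisfy the desired bound against \emph{some} column is at most $b\cdot\exp(-2s\varepsilon^2)$. Using the hypothesis $s\geq \ln(b)/(2\varepsilon^2)$, this quantity is at most $1$; with a standard relaxation (take $s'=\lceil s\rceil$ or replace $\varepsilon$ by $\varepsilon+\eta$ and let $\eta\to 0$) one gets a strict inequality, so there exists a concrete realization $\widehat{\mathcal{P}}\in\mathfrak{P}_s$ for which $\max_j \mathbb{E}_{i\sim\widehat{\mathcal{P}}}[\mathbf{M}_{ij}]\leq \omega(\mathbf{M})+\varepsilon(\mathbf{M}_{\max}-\mathbf{M}_{\min})$. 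Taking the minimum over $\mathfrak{P}_s$ yields the first inequality of the lemma. The symmetric statement for Player~2 follows by replacing $\mathbf{M}$ with $-\mathbf{M}^\top$ and swapping the roles of rows and columns.

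The main technical obstacle is purely bookkeeping: getting the boundary case $s=\lceil \ln(b)/(2\varepsilon^2)\rceil$ to produce a \emph{strict} probability gap, since $b\cdot e^{-2s\varepsilon^2}\leq 1$ only gives non-strict inequality; this is resolved either by absorbing an arbitrarily small slack into $\varepsilon$ and then taking a limit, or by noting that Hoeffding's inequality above is strict, so even equality in the union bound leaves a positive-probability event on which the desired bound holds. No other step is delicate: the argument relies only on Lemma~\ref{lemma:minmax} (to produce $\mathcal{P}^*$), Hoeffding's inequality, and the union bound, and it transparently explains why the sample size depends on $\ln b$ (the opponent's support size), not on $a$.
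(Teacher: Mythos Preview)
The paper does not supply its own proof of this lemma: it is stated as a quotation of \cite[Theorem~2]{LY02} and used as a black box. Your proposal reproduces the standard Lipton--Young argument (sample $s$ i.i.d.\ pure strategies from an optimal mixed strategy, apply Hoeffding per column, union-bound over the opponent's $b$ pure strategies), which is exactly the proof in the cited reference; your handling of the boundary case $s=\lceil \ln(b)/(2\varepsilon^2)\rceil$ via an $\varepsilon+\eta$ slack and a limit is a clean way to close the gap, so the argument is correct as written.
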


\medskip\noindent\textbf{Classical information.} Given two probability distributions~$X$ and~$Y$ over~$\Sigma$, their statistical distance, also called total variation distance, is defined as
$$\Delta(X,Y):=\frac{1}{2}\sum_{x\in\Sigma}|\Pr(X=x)-\Pr(Y=x)|~.$$
The Kullback–Leibler divergence or classical relative entropy of~$X$ with respect to~$Y$ is defined as
$$D_{KL}(X||Y):=\sum_{x\in\Sigma} \Pr(X=x)\log(\frac{\Pr(X=x)}{\Pr(Y=x)})~.$$

\medskip\noindent\textbf{Quantum information.}
For a mixed state~$\rho$, we let~$\|\rho\|_1$ denote its~$1$-norm. We denote by~$\tr(\rho,\sigma)$ the the trace distance between any two states~$\rho$ and~$\sigma$, with~$\tr(\rho,\sigma):=\|\rho-\sigma\|_1/2$. For an operator~$\Phi$, we let~$\|\Phi\|_{op}$ denote its operator norm. Let~$\map:\{0,1\}^{n} \rightarrow \mathsf{MS}_{m}$ be any quantum mapping and~$X$ a random variable supported over~$\{0,1\}^{n}$. We let
\begin{align}\label{eq:c-q_map}\rho_{X,\map(X)} \eqdef \sum\limits_{x \in \{0,1\}^{n}} \Pr_X(x) \op{x}{x} \otimes \map(x) \ .\end{align}

For a mixed state~$\rho$, we let~$S(\rho)\eqdef \mathrm{Tr}(\rho \log_2 \rho)$ denote the Von Neumann entropy of~$\rho$.
The quantum mutual information of two subsystems~$A$ and~$B$ is defined as follows. Let~$\rho_{AB}$ be their joint state, then
$$I_q(A;B)_\rho \eqdef S(\rho_A) + S(\rho_B) - S(\rho_{AB}) \ ,$$
where~$\rho_A = \tr_{B}(\rho_AB)$ and~$\rho_B = \tr_{A}(\rho_AB)$. For the sake of simplicity, we sometimes drop the subscripts~$q$ and~$\rho$ in~$I_q$. When working with quantum systems~$A,B$, the notation~$I(A;B)$ implicitly refers to~$I_q(A;B)$. 

For two quantum states~$\rho$ and~$\sigma$, the quantum relative entropy of~$\rho$ with respect to~$\sigma$ is 
\begin{align*}
	D(\rho  \|  \sigma):=\begin{cases}
		\tr(\rho(\log(\rho)-\log(\sigma)) &\text{if } \supp(\rho)\subseteq\supp(\sigma)\, ,\\
		\infty&\text{otherwise}\, .
	\end{cases}
\end{align*}
Given a bipartite state~$\rho_{AB}$ with marginals~$\rho_A$ and~$\rho_B$, the relative entropy can be written in terms of the mutual information as
$$D(\rho_{AB} \| \rho_A\otimes\rho_B)=I_q(A;B)_\rho\, .$$
For a classical-quantum states~$\rho_{XB}:=\sum_xp(x)\ketbra{x}_X\otimes\rho_B^x$ and~$\sigma_{XB}:=\sum_xq(x)\ketbra{x}_X\otimes\sigma_B^x$, the relative entropy takes the simpler form
\begin{equation}\label{eq:relative_entropy_cq}
	D(\rho_{XB}\|\sigma_{XB}) = \sum_x p(x)D(\rho_B^x\|\sigma_B^x) + D_{KL}(p\|q)\,,
\end{equation}
where~$D_{KL}$ is the classical Kullback-Leibler divergence.
\begin{lemma}[{\cite[Theorem~1]{eisert}}]\label{lemma:eisert}
	Let~$\rho$ and~$\sigma$ be two quantum states, let the smallest eigenvalue of~$\sigma$ be uniformly bounded from below, i.e. there exists~$\beta>0$ such that~$\lambda_{\text{min}}(\sigma)>\beta$. Then the relative entropy of~$\rho$ with respect to~$\sigma$ is bounded by
	$$D(\rho \| \sigma)\leq\left(\beta+T(\rho,\sigma)\right)\log(1+\frac{T(\rho,\sigma)}{\beta})\,.$$
\end{lemma}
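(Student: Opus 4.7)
The plan is to decompose the relative entropy into pieces that can be individually controlled using the two available resources: the trace-distance closeness $T(\rho,\sigma)$ and the spectral lower bound $\lambda_{\min}(\sigma) > \beta$. Starting from the identity
$$D(\rho\|\sigma) = S(\sigma) - S(\rho) - \tr\bigl((\rho-\sigma)\log\sigma\bigr)\, ,$$
obtained by adding and subtracting $\tr(\sigma\log\sigma)$ inside the definition of relative entropy, the trace duality $|\tr(AB)|\leq\|A\|_1\|B\|_{op}$ together with $-\log\sigma\preceq\log(1/\beta)\cdot I$ yields the crude estimate $|\tr((\rho-\sigma)\log\sigma)|\leq 2T(\rho,\sigma)\log(1/\beta)$. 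This gives a bound of order $T\log(1/\beta)$, which matches the claim in the large-$T$ regime but is too coarse when $T$ is small.

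To obtain the interpolating form $(\beta+T)\log(1+T/\beta)$, I would first observe that $D(\sigma\|\rho)\geq 0$ is equivalent to the symmetrised inequality
$$D(\rho\|\sigma) \leq \tr\bigl((\rho-\sigma)(\log\rho-\log\sigma)\bigr)\, ,$$
which often admits tighter control than the unsymmetric form. Substituting the resolvent representation
$$\log\rho-\log\sigma = \int_0^\infty \frac{1}{\sigma+sI}\,(\rho-\sigma)\,\frac{1}{\rho+sI}\, ds$$
and using the spectral bound $(\sigma+sI)^{-1}\preceq(\beta+s)^{-1}I$ reduces the integrand to a scalar expression in $s$ and $T$. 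A comparable alternative route is to interpolate along $\rho_t = \sigma + t(\rho-\sigma)$ for $t\in[0,1]$ and integrate the second derivative of $\phi(t)=D(\rho_t\|\sigma)$, since $\rho_t \succeq (1-t)\beta I$ inherits the spectral bound. In either scheme the $s$- or $t$-integration is elementary and produces an expression of the shape $(\beta+T)\log(1+T/\beta)$.

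The main obstacle will be ensuring tightness of the operator-norm bounds, which generically ignore the geometry of how $\rho-\sigma$ sits relative to $\sigma$. I expect the extremal configuration to be a rank-two perturbation on a two-dimensional subspace where $\rho$ and $\sigma$ commute; on such a subspace the problem reduces to maximising the classical Kullback--Leibler divergence $p\log(p/q)+(1-p)\log((1-p)/(1-q))$ subject to the constraints $q,1-q\geq\beta$ and $|p-q|=T$, and a direct calculus exercise yields exactly $(\beta+T)\log(1+T/\beta)$ as the maximum. A pinching or dephasing argument in the eigenbasis of $\sigma$, together with monotonicity of the relative entropy under unital completely positive maps, should then lift this classical calculation to the general non-commuting setting.
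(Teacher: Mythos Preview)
The paper does not prove this lemma; it is quoted in the preliminaries as \cite[Theorem~1]{eisert} (Audenaert--Eisert) without argument, so there is no in-paper proof to compare against. On its own merits your sketch has a real gap in the final reduction step.

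Your two-point classical calculation is correct: taking $q=\beta$ and $p=\beta+T$ gives $p\log(p/q)+(1-p)\log\bigl((1-p)/(1-q)\bigr)\leq(\beta+T)\log(1+T/\beta)$, the second summand being nonpositive. The problem is the lifting to the non-commuting case. Pinching $\mathcal{P}$ in the eigenbasis of $\sigma$ is CPTP, so data processing yields $D(\mathcal{P}(\rho)\|\sigma)\leq D(\rho\|\sigma)$, which is the \emph{wrong direction}: bounding the dephased (classical) divergence does not bound the quantum one. Concretely, writing $\tilde\rho=\mathcal{P}(\rho)$ one has $D(\rho\|\sigma)=D(\tilde\rho\|\sigma)+\bigl(S(\tilde\rho)-S(\rho)\bigr)$, and the entropy gap $S(\tilde\rho)-S(\rho)=D(\rho\|\tilde\rho)\geq 0$ receives no control whatsoever from $\lambda_{\min}(\sigma)$. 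Your resolvent and interpolation routes are sounder starting points, but the operator-norm shortcuts you propose also fall short: in the resolvent integral the only available bound on the $\rho$-factor is $\|(\rho+sI)^{-1}\|_{op}\leq s^{-1}$ (there is no spectral floor on $\rho$), and this makes the $s$-integral diverge at $0$; the interpolation $\rho_t=(1-t)\sigma+t\rho\succeq(1-t)\beta I$ does produce a finite bound, namely $D(\rho\|\sigma)\leq\|\rho-\sigma\|_2^2/\beta\leq 4T^2/\beta$, but this is not $(\beta+T)\log(1+T/\beta)$ and is strictly weaker in the regime $T\gg\beta$ where the logarithm matters. Obtaining the exact logarithmic form requires tracking how $\rho-\sigma$ sits relative to the spectral projectors of $\sigma$, not discarding that structure via blanket $\|\cdot\|_{op}$ estimates.
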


We let $S(\rho \| \sigma):=\tr(\rho(\log(\rho)-\log(\sigma))$ denote the relative entropy. We define the quantum conditional entropy of a two-system state $\rho_{AB}$ as follows
\begin{align*}
	S(A|B) := S(\rho_{AB}) - S(\rho_B)\, ,
\end{align*}
The quantum mutual information in terms of conditional quantum entropy is
$$ I(A;B)_\rho = S(\rho_A) - S(A|B)_\rho = S(\rho_B) - S(B|A)_\rho\, .$$

\begin{lemma} Let $\rho_{AB}$ be a quantum state in two subsystems $A$ and $B$, with marginal states~$\rho_A$ and~$\rho_B$. The following properties hold.
	\begin{enumerate}
		\item The Von Neumann entropy is additive for tensor product states: $ S(\rho_A\otimes\rho_B) = S(\rho_A) + S(\rho_B)\, .$
		\item Conditioning does not increase entropy: $ S(\rho_A)\geq |S(A|B)_\rho|\, .$
		\item The quantum entropy of a system is bounded by the dimension: $ S(\rho_A)\leq\dim(H_A)\, . $
	\end{enumerate}
\end{lemma}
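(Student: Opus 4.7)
The plan is to dispatch each of the three items by reducing to a spectral decomposition combined with a standard entropy inequality, since all three assertions are classical facts about the Von Neumann entropy. Each item takes only a few lines, so I would structure the proof as three short paragraphs, one per claim, and present them in the order stated.

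For item~1, I would diagonalize $\rho_A = \sum_i p_i \op{i}{i}$ and $\rho_B = \sum_j q_j \op{j}{j}$, form the tensor-product spectral decomposition $\rho_A \otimes \rho_B = \sum_{i,j} p_i q_j\, \op{i}{i} \otimes \op{j}{j}$, and then observe that $-\sum_{i,j} p_i q_j \log(p_i q_j)$ splits via $\log(p_i q_j) = \log p_i + \log q_j$ and the fact that $\sum_i p_i = \sum_j q_j = 1$, recovering $S(\rho_A) + S(\rho_B)$. This is a purely mechanical calculation.

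For item~2, I would split the absolute value into two inequalities. The upper bound $S(A|B)_\rho \leq S(\rho_A)$ is a direct reformulation of subadditivity $S(\rho_{AB}) \leq S(\rho_A) + S(\rho_B)$, combined with the defining relation $S(A|B)_\rho = S(\rho_{AB}) - S(\rho_B)$. The matching lower bound $S(A|B)_\rho \geq -S(\rho_A)$ is equivalent to $S(\rho_{AB}) \geq S(\rho_B) - S(\rho_A)$ and therefore follows immediately from the Araki--Lieb triangle inequality $S(\rho_{AB}) \geq |S(\rho_A) - S(\rho_B)|$. Combining the two yields $|S(A|B)_\rho| \leq S(\rho_A)$.

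For item~3, I would once more use the spectral decomposition $\rho_A = \sum_i p_i \op{i}{i}$, so that $S(\rho_A)$ coincides with the Shannon entropy $-\sum_i p_i \log p_i$ of the eigenvalue distribution of $\rho_A$, and then apply concavity (or equivalently Jensen's inequality) to bound it above by $\log(\dim H_A)$, which is in particular at most $\dim(H_A)$ as literally stated. The only non-computational ingredient in the whole lemma is the invocation of Araki--Lieb in item~2; since this is a textbook fact (typically proved by purifying $\rho_{AB}$ on an auxiliary system and applying subadditivity to the resulting pure state), I would simply cite it rather than rederive it, and I do not expect any genuine obstacle in the proof.
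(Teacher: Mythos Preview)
Your proposal is correct and covers all three items with the standard textbook arguments. Note, however, that the paper does not give its own proof of this lemma: it is listed among the quantum-information preliminaries as a collection of standard facts (in the spirit of Wilde's textbook, which the paper cites), so there is nothing to compare against beyond confirming that your derivations are the expected ones.
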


\begin{lemma}[Alicki–Fannes–Winter Inequality~\cite{wilde}]\label{thm:afw_inequality}
	Let~$\rho_{AB},\omega_{AB}\in\mathcal{D}(H_A\otimes H_B)$, then
	$$\left| S(A|B)_\rho-S(A|B)_\omega \right|\leq 2\Tr(\rho,\sigma)\log\dim(H_A)+h(\Tr(\rho,\sigma))\, , $$
	where~$h(p):= -p \log p -(1-p) \log (1-p)$ is the binary entropy function.
\end{lemma}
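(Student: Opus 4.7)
The plan is to prove the bound via the standard ``coupling trick'': construct a single auxiliary bipartite state that sits above both $\rho_{AB}$ and $\omega_{AB}$, and exploit concavity of the Von Neumann entropy. First, I would set $\varepsilon := \tr(\rho,\omega)$ and apply the Jordan decomposition to write $\rho - \omega = \Delta_+ - \Delta_-$ with $\Delta_\pm \succeq 0$ having orthogonal supports and $\mathrm{Tr}(\Delta_+) = \mathrm{Tr}(\Delta_-) = \varepsilon$. Writing $\Delta_\pm = \varepsilon\,\hat\Delta_\pm$ for density operators $\hat\Delta_\pm$, define
$$\mu_{AB} \;:=\; \frac{1}{1+\varepsilon}\bigl(\rho_{AB} + \varepsilon\,\hat\Delta_-\bigr) \;=\; \frac{1}{1+\varepsilon}\bigl(\omega_{AB} + \varepsilon\,\hat\Delta_+\bigr),$$
so that $\mu$ appears simultaneously as the convex combination $\frac{1}{1+\varepsilon}\rho + \frac{\varepsilon}{1+\varepsilon}\hat\Delta_-$ and as $\frac{1}{1+\varepsilon}\omega + \frac{\varepsilon}{1+\varepsilon}\hat\Delta_+$.

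Next, I would invoke the elementary two-sided bound for conditional entropy under classical mixing: for bipartite states $\tau_1,\tau_2$ and $\lambda\in[0,1]$, with $\tau := \lambda\tau_1 + (1-\lambda)\tau_2$,
$$\lambda\,S(A|B)_{\tau_1} + (1-\lambda)\,S(A|B)_{\tau_2} \;\leq\; S(A|B)_\tau \;\leq\; \lambda\,S(A|B)_{\tau_1} + (1-\lambda)\,S(A|B)_{\tau_2} + h(\lambda).$$
This follows by introducing a classical flag $X$ recording the mixture index, using the chain rule $S(A|B)_\tau = S(A|BX)_\tau + I(A;X|B)_\tau$, observing that $S(A|BX)_\tau$ equals the convex combination of the two conditional entropies, and bounding $0 \leq I(A;X|B)_\tau \leq S(X) = h(\lambda)$. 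Applying the lower bound to the $\rho$-decomposition of $\mu$ and the upper bound to the $\omega$-decomposition, and subtracting to eliminate $S(A|B)_\mu$, I obtain
$$S(A|B)_\omega - S(A|B)_\rho \;\geq\; \varepsilon\bigl(S(A|B)_{\hat\Delta_-} - S(A|B)_{\hat\Delta_+}\bigr) - (1+\varepsilon)\,h\!\left(\frac{\varepsilon}{1+\varepsilon}\right).$$
The dimension bound $|S(A|B)_{\hat\Delta_\pm}| \leq \log\dim(H_A)$ from the preceding lemma then yields $S(A|B)_\rho - S(A|B)_\omega \leq 2\varepsilon \log\dim(H_A) + (1+\varepsilon)\,h(\varepsilon/(1+\varepsilon))$, and swapping $\rho\leftrightarrow\omega$ furnishes the matching lower bound.

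The final step is to reconcile the ``Winter remainder'' $(1+\varepsilon)\,h(\varepsilon/(1+\varepsilon))$ with the $h(\varepsilon)$ appearing in the stated form. This reduces to the elementary numerical identity $(1+\varepsilon)h(\varepsilon/(1+\varepsilon)) = -\varepsilon\log\varepsilon + (1+\varepsilon)\log(1+\varepsilon)$ together with a comparison against $h(\varepsilon) = -\varepsilon\log\varepsilon - (1-\varepsilon)\log(1-\varepsilon)$; any residual slack between the two expressions can be absorbed into the $2\varepsilon\log\dim(H_A)$ term whenever $\dim(H_A)\geq 2$. The main obstacle is the careful bookkeeping needed to ensure that the dimension bounds on $S(A|B)_{\hat\Delta_\pm}$ are applied on the correct side of each inequality so that the two $\varepsilon\log\dim(H_A)$ contributions add rather than cancel. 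Beyond this accounting, the argument uses only the concavity and additivity properties of the Von Neumann entropy recorded in the preceding lemma and the chain rule for conditional quantum entropy.
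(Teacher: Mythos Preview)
The paper does not prove this lemma at all; it is stated in the preliminaries and simply cited from Wilde's textbook~\cite{wilde}. So there is no ``paper's own proof'' to compare against. Your sketch is essentially Winter's (2016) refinement of the Alicki--Fannes argument, and the core of it---the intermediate state $\mu_{AB}$, the two-sided concavity bound for $S(A|B)$ under classical mixing, and the dimension bound $|S(A|B)|\leq \log\dim(H_A)$---is correct and standard.

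There is, however, a genuine issue in your final reconciliation step. Writing $\varepsilon=\tr(\rho,\omega)$, a direct computation gives
\[
(1+\varepsilon)\,h\!\left(\frac{\varepsilon}{1+\varepsilon}\right)-h(\varepsilon)
=(1+\varepsilon)\log(1+\varepsilon)+(1-\varepsilon)\log(1-\varepsilon),
\]
which is \emph{nonnegative} for all $\varepsilon\in[0,1]$ (it vanishes at $\varepsilon=0$ and has positive derivative). Hence the Winter remainder you obtain is \emph{larger} than $h(\varepsilon)$, not smaller, and there is no slack to ``absorb'' into the $2\varepsilon\log\dim(H_A)$ term: your argument yields $2\varepsilon\log\dim(H_A)+(1+\varepsilon)h(\varepsilon/(1+\varepsilon))$, which does not imply the form $2\varepsilon\log\dim(H_A)+h(\varepsilon)$ exactly as stated. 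In fact the form written in the paper is a slight hybrid of the Alicki--Fannes constant ($4\varepsilon\log\dim H_A+2h(\varepsilon)$) and Winter's sharpening; for the purposes of this paper the discrepancy is immaterial (only the qualitative $O(\varepsilon\log\dim H_A)$ behaviour is used in Theorem~\ref{thm:wc-avg-q-is-cute}), but as a standalone proof you should either keep Winter's remainder $(1+\varepsilon)h(\varepsilon/(1+\varepsilon))$ or cite the specific variant from~\cite{wilde} that matches the stated constants.
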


The following lemma states that if the outcome of a measurement is close to deterministic, then it must not alter much the state.
\begin{lemma}[Gentle Measurement Lemma~\cite{winter99}]\label{lemma:gml}
	Let~$\rho$ be a mixed state and~$\{\Lambda,I-\Lambda\}$ a two-outcome POVM with $\tr(\Lambda\rho)\geq1-\eps$, then~$\|\rho-\rho'\|_1\leq\sqrt{\eps}$, where $\rho'=\frac{\sqrt{\Lambda}\rho\sqrt{\Lambda}}{\tr(\Lambda\rho)}$.
\end{lemma}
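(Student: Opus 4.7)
The plan is to reduce to pure states via purification, establish a fidelity lower bound on the purified system by exploiting the operator inequality $\sqrt{M}\succeq M$ for $0\preceq M\preceq I$, and then convert the fidelity bound to a trace-norm bound via the Fuchs--van de Graaf relation. The structural input that makes this reduction sound is the monotonicity of $\|\cdot\|_1$ under the partial-trace channel, which lifts any distance bound proved on the purified system back to the original mixed states.

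Concretely, first purify $\rho$ as $\rho=\tr_E(\ketbra{\Psi})$ on an auxiliary environment $E$ and extend the POVM to $\{M,I-M\}$ with $M:=\Lambda\otimes I_E$, so that $\langle\Psi|M|\Psi\rangle=\tr(\Lambda\rho)\geq 1-\eps$. The (normalized) post-measurement pure state associated to outcome $M$ is
\begin{equation*}
\ket{\Phi} \;=\; \frac{\sqrt{M}\,\ket{\Psi}}{\sqrt{\langle\Psi|M|\Psi\rangle}},
\end{equation*}
and its partial trace over $E$ equals exactly the state $\rho'$ from the lemma. Monotonicity of $\|\cdot\|_1$ under partial trace therefore reduces the task to bounding $\|\ketbra{\Psi}-\ketbra{\Phi}\|_1$ on the purified level.

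On the purified system, the inner product computes in closed form as
\begin{equation*}
|\langle\Psi|\Phi\rangle|^{2} \;=\; \frac{\langle\Psi|\sqrt{M}|\Psi\rangle^{2}}{\langle\Psi|M|\Psi\rangle}.
\end{equation*}
The scalar inequality $\sqrt{t}\geq t$ on $[0,1]$ lifts, via the functional calculus, to the operator inequality $\sqrt{M}\succeq M$, and hence $\langle\Psi|\sqrt{M}|\Psi\rangle\geq \langle\Psi|M|\Psi\rangle$. Consequently $|\langle\Psi|\Phi\rangle|^{2}\geq\langle\Psi|M|\Psi\rangle\geq 1-\eps$. Invoking the identity $\|\ketbra{\Psi}-\ketbra{\Phi}\|_1=2\sqrt{1-|\langle\Psi|\Phi\rangle|^{2}}$ for unit vectors then yields a trace-norm bound of order $\sqrt{\eps}$, matching the claim up to the normalization convention adopted for $\|\cdot\|_1$ in this work.

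The main obstacle is recognizing why the fidelity lower bound actually closes: the numerator $\langle\Psi|\sqrt{M}|\Psi\rangle$ is squared, and this is precisely what cancels one factor of $\langle\Psi|M|\Psi\rangle$ in the denominator so that the final bound is linear in $\tr(\Lambda\rho)$ rather than quadratic. The operator-monotonicity step $\sqrt{M}\succeq M$ is the only non-routine ingredient, and it follows directly from the functional calculus because $\sqrt{t}\geq t$ holds pointwise on $[0,1]$; everything else (purification, partial-trace monotonicity, and the Fuchs--van de Graaf identity for pure states) is standard.
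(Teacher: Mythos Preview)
The paper does not supply its own proof of this lemma; it is quoted as a known result from~\cite{winter99} in the preliminaries. Your argument via purification, the operator inequality $\sqrt{M}\succeq M$ on $[0,I]$, and the pure-state Fuchs--van de Graaf identity is a correct and standard derivation. One small remark: the paper's Definition~\ref{def:ell1} explicitly takes $\|\cdot\|_1$ for quantum states to mean the trace distance (half the Schatten $1$-norm), so your purified bound $2\sqrt{1-|\langle\Psi|\Phi\rangle|^2}\leq 2\sqrt{\eps}$ in Schatten norm becomes exactly $\sqrt{\eps}$ in the paper's normalization, with no residual constant to absorb.
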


For two quantum states \(\sigma,\rho\) stored in two different registers \(A,B\), the swap test is executed on the registers \(A,B\) and a control register \(C\) initialized to \(\ketbra{1}\).
It applies Hadamard on \(C\), swaps \(A\) and \(B\) conditioned on \(C\), and measures \(B\) on the Hadamard basis.
\begin{lemma}[SWAP Test~\cite{fingerprinting}]\label{lemma:swap}
	The SWAP test on input \((\sigma,\rho)\) outputs 1 with probability~$(1+\tr(\rho\sigma))/2$, in which case we say that it passes the test. For pure states \(\ket\sigma,\ket\rho\), it equals to \((1+|\braket{\rho}{\sigma}|^2)/2\).
\end{lemma}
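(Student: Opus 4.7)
The plan is to propagate the input state through the three steps of the swap-test circuit and read off the outcome probability by the Born rule; the proof is then a direct calculation combining linearity in the input with one pure-state computation.

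First, I would reduce to the pure-state case. Using the spectral decompositions \(\rho = \sum_i p_i \ket{\rho_i}\!\bra{\rho_i}\) and \(\sigma = \sum_j q_j \ket{\sigma_j}\!\bra{\sigma_j}\), the circuit is linear in the input state \(\sigma_A \otimes \rho_B\), so the acceptance probability on \((\sigma,\rho)\) equals the convex combination \(\sum_{i,j} p_i q_j \cdot P_{ij}\), where \(P_{ij}\) is the acceptance probability on the pure pair \((\ket{\sigma_j},\ket{\rho_i})\). Granted the pure-state formula, this yields \(\sum_{i,j} p_i q_j (1+|\!\braket{\rho_i}{\sigma_j}\!|^2)/2\), and recognizing \(\sum_{i,j} p_i q_j |\!\braket{\rho_i}{\sigma_j}\!|^2 = \tr(\rho\sigma)\) gives the claimed mixed-state expression.

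For the pure-state calculation, starting from the joint state of registers \(C,A,B\), I apply the Hadamard on \(C\) and then the controlled-SWAP to obtain an entangled state of the form
\[
\tfrac{1}{\sqrt 2}\bigl(\ket{0}_C\ket{\sigma}_A\ket{\rho}_B \;\pm\; \ket{1}_C\ket{\rho}_A\ket{\sigma}_B\bigr)\, .
\]
The final measurement step — whether written as a Hadamard on the control followed by a computational-basis measurement, or as the equivalent Hadamard-basis measurement indicated in the setup — projects onto the symmetric versus antisymmetric combination of the two branches. Taking the squared norm of the symmetric branch and expanding via \(|\!\braket{\rho}{\sigma}\!|^2 = \braket{\rho}{\sigma}\braket{\sigma}{\rho}\) gives exactly \((1+|\!\braket{\rho}{\sigma}\!|^2)/2\).

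There is no genuine mathematical obstacle here: the argument is one explicit unitary evolution followed by one Born-rule computation, and the lift from pure to mixed states is a one-line use of linearity and the identity \(\tr(\rho\sigma) = \sum_{i,j} p_i q_j |\!\braket{\rho_i}{\sigma_j}\!|^2\). The only step demanding care is matching the specific convention in the setup (control initialised to \(\ket{1}\), controlled-SWAP, Hadamard-basis measurement on the indicated register) to the standard presentation of the swap test, so as to confirm that the same symmetric/antisymmetric projector is indeed implemented. Once that bookkeeping is verified, the two identities above close the proof immediately.
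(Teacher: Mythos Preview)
The paper does not supply its own proof of this lemma; it is quoted as a standard fact with a citation to~\cite{fingerprinting}. Your derivation is the canonical one---linearity to reduce to pure inputs, then a single unitary-evolution-plus-Born-rule computation---and it is correct. Your caveat about matching the paper's slightly nonstandard convention (control initialised to $\ket{1}$, Hadamard-basis measurement) is well placed: under that convention the ``pass'' outcome corresponds to the $\ket{-}$ branch rather than the usual $\ket{0}$ branch, but the resulting probability is still $(1+|\braket{\rho}{\sigma}|^2)/2$, so nothing changes.
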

Given that the trace distance of two pure states \(\ket\sigma,\ket\rho\) can be expressed in terms of the inner product uniquely as $\sqrt{1-|\braket{\rho}{\sigma}|^2}$, the SWAP test can also be used to calculate their trace distance.

\begin{definition}[$\ell_1$ distance for classical distributions and quantum states]\label{def:ell1}
We use the notation~$\|X-Y\|_1$ to refer to (i) either statistical distance~$\Delta(X,Y)$ when variables~$X,Y$ are classical distributions (ii) or trace distance~$\tr(X,Y)$ whe  they are quantum states.
\end{definition}

\medskip\noindent\textbf{Worst-case hardness.}
In this work, we consider fine-grained worst-case hardness, as introduced below.

\begin{definition} For a function~$T:\mathbb{N} \rightarrow \mathbb{R}^+$, a promise problem~$\Pi$ is said to be $T(n)$-hard, if for any non-uniform classical-advice algorithm~$\mathcal{A}$ with runtime at most~$T(n)$ over~$n$-bit inputs, and any sufficiently large~$n \in \mathbb{N}$, there exists an input~$x \in (\Pi_Y \cup \Pi_N ) \cap \{0,1\}^n$ such that~$\Pr[\mathcal{A}(x) = \chi_\Pi(x)] < 2/3$. 
\end{definition}
One can without loss of generality assume that the size of the advice is not larger than the runtime. By setting~$\lambda = \log n$, one recovers the regular definition of worst-case hardness.

\medskip\noindent\textbf{Complexity class~$\QSZK$.} 
We recall the quantum state distinguishability problem below. We refer to~\cite{wat02} for more details.

\begin{definition}[Quantum State Distinguishability]
	Let~$\alpha ,\beta \in [0,1]$ such that~$\alpha < \beta$. Given two quantum circuits~$\mathcal{C}_0$ and~$\mathcal{C}_1$, let~$\rho_0$ and~$\rho_1$ be the (mixed) quantum states that they produce by running on all-zero states with the promise that either~$\|\rho_0-\rho_1\|_1 \geq \beta$ (corresponds to no instances) or~$\|\rho_0-\rho_1\|_1 \leq \alpha$ (corresponds to yes instances). The~$\QSD_{\alpha,\beta}$ problem is to decide which one is the case.
\end{definition}

The above problem enjoys a polarization property. The lemma below is adapted from~\cite{wat02,SV03}.

\begin{lemma}\label{lemma:polarization}
	Let $n$ be a positive integer. Let~$\alpha,\beta:\mathbb{N}\rightarrow [0,1]$, and~$\theta:\mathbb{R} \rightarrow (1,+\infty)$ be functions of~$n$ such that~$\theta:=\beta^{2} / \alpha$.
There exists a deterministic classical algorithm~$\pol$ that given a pair of (quantum) circuits~$(C_0,C_1)$ as well as a unary parameter~$1^n$, outputs a pair of (quantum) circuits~$(P_0,P_1)$ such that
	\begin{equation*}\begin{split}
		\|C_0\ket{0}-C_1\ket{0}\|_1\leq\alpha &\Rightarrow \|P_0\ket{0}-P_1\ket{0}\|_1\leq 2^{-n} \, ,\\
		\|C_0\ket{0}-C_1\ket{0}\|_1\geq\beta &\Rightarrow \|P_0\ket{0}-P_1\ket{0}\|_1\geq 1-2^{-n}\, .
	\end{split}
	\end{equation*}
	Moreover, the runtime and output size of~$\pol$ are of~$O( n \log(8n)(|C_0|+|C_1|)/\log(\theta))$ when~$n \rightarrow +\infty$.
\end{lemma}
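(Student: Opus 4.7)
My plan is to follow the template of Sahai and Vadhan's polarization for statistical difference, transported to trace distance on quantum circuits as originally done by Watrous. The strategy is to design two kinds of circuit transformations and interleave them: an \emph{amplifying} operation that drives the large (no-instance) distance toward $1$, and a \emph{shrinking} operation that drives the small (yes-instance) distance toward $0$, both at rates that, thanks to the hypothesis $\theta = \beta^{2}/\alpha > 1$, separate the two cases exponentially.

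First I would introduce the tensor-power transformation $(C_{0}, C_{1}) \mapsto (C_{0}^{\otimes k}, C_{1}^{\otimes k})$. By subadditivity of trace distance, the $\alpha$ case becomes at most $k\alpha$, while by multiplicativity of fidelity under tensor products together with the Fuchs--van de Graaf inequalities, the $\beta$ case rises to at least $\sqrt{1 - (1-\beta^{2})^{k}}$, which tends to $1$ exponentially in $k$. Next I would introduce the quantum analog of the Sahai--Vadhan XOR/direct-sum operation: conditioned on uniformly random classical control bits $b_{1}, \ldots, b_{k}$, run $C_{b_{1}} \otimes \cdots \otimes C_{b_{k}}$ and equally mix over all even-parity (respectively odd-parity) control strings, obtaining a new pair $(C_{0}', C_{1}')$. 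A spectral computation on the resulting block-diagonal state shows that $\|C_{0}'\ket{0} - C_{1}'\ket{0}\|_{1}$ equals the $k$-th power of the original trace distance, so both $\alpha$ and $\beta$ are raised to the $k$-th power.

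With these two building blocks in hand, I would show that a single round consisting of a tensor power followed by the XOR operation (with $k$ tuned against the current gap) transforms $(\alpha, \beta)$ into a new pair whose logarithmic gap $\log(\beta'^{2}/\alpha')$ has grown by an additive multiple of $\log\theta$. After $O(n / \log\theta)$ such rounds, $\alpha$ falls below $2^{-n}$ while $\beta$ exceeds $1 - 2^{-n}$. A careful count of the circuit-size blowup per round, with the $\log(8n)$ factor absorbing the controlled tensoring and the final polishing step that synchronizes output sizes across the two branches, yields the stated bound $O(n \log(8n)(|C_{0}| + |C_{1}|) / \log\theta)$.

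The main obstacle will be the quantum XOR step, since for mixed outputs the classical parity trick does not transport verbatim: one cannot simply XOR two density matrices. I would handle this by implementing the parity mixing coherently on an ancillary classical register and then tracing out the register, so that the resulting state is block-diagonal and its trace distance reduces to a weighted sum over pure-state sub-blocks, where the clean identity $T(\ket{\psi}^{\otimes k}, \ket{\phi}^{\otimes k}) = \sqrt{1 - |\braket{\psi}{\phi}|^{2k}}$ applies term by term. Once this identity is established at the level of circuits, the remainder of the proof is a bookkeeping exercise matching the classical Sahai--Vadhan iteration count.
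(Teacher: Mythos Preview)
The paper does not prove this lemma; it is stated as a known result ``adapted from~\cite{wat02,SV03}'' and used as a black box. Your outline follows precisely the Sahai--Vadhan polarization template (alternating the tensor-power and XOR transformations) transported to trace distance as Watrous did, so at the level of strategy you are aligned with the cited sources.

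One technical point worth tightening: your treatment of the quantum XOR step is slightly muddled. You do not need pure-state sub-blocks or the identity $T(\ket{\psi}^{\otimes k},\ket{\phi}^{\otimes k})=\sqrt{1-|\braket{\psi}{\phi}|^{2k}}$ here; that identity belongs to the tensor-power step. For the XOR step with mixed states $\rho_0,\rho_1$, the direct computation
\[
\rho_0'-\rho_1' \;=\; \frac{1}{2^{k-1}}\,(\rho_0-\rho_1)^{\otimes k}
\]
together with multiplicativity of the trace norm under tensor products already gives $T(\rho_0',\rho_1')=T(\rho_0,\rho_1)^k$ exactly, with no block-diagonal or purity argument needed. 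Also, the standard procedure is not an iterated ``round'' structure but a single carefully parameterized sequence (XOR by $\ell\approx \log(8n)/\log\theta$, then direct product, then a final XOR); this is where the $O(n\log(8n)/\log\theta)$ size factor actually comes from, rather than from counting rounds.
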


There are various equivalent definitions of the complexity class~$\QSZK$. 
The following definition suffices for our purposes.

\begin{definition}[\textsf{QSZK}]\label{lemma:qszk-def}
	The class~$\QSZK$ is consisted of all promise problems that have many-to-one polynomial-time reductions to~$\QSD_{1/4,3/4}$.
\end{definition}

All definitions and lemmas above can be restricted to classical algorithms. In this case, we let~$\SZK$ denote the corresponding classical complexity class and~$\SD$ denote the statistical difference problem (classical variant of~$\QSD$).

\medskip\noindent\textbf{Cryptographic primitives.}
One-way functions are defined as follows:

\begin{definition}[Non-Uniform One-Way Functions]
Let~$T:\mathbb{N} \rightarrow \mathbb{R}^{+}$ and~$\theta: \mathbb{N} \rightarrow [0,1]$.
A family of non-uniform PPT algorithms~$\mathsf{F}:=\{\mathsf{F}_n\}_{n \in \mathbb{N}}$ is said to be a~$(T,\theta)$-one-way function (OWF) if for all sufficiently large~$n$ and any~$T(n)$-time algorithm~$\mathcal{A}$, it holds that
$$\Pr_{x \sim \mathcal{U}_{\{0,1\}^n}}[\mathsf{F}( \mathcal{A}( \mathsf{F}(x) ) ) = \mathsf{F}(x)] \leq \theta(n) \, .$$

Furthermore, we say that~$\mathsf{F}$ is a~$\theta$-OWF for an algorithm $\mathcal{A}$ if the above inequality holds without imposing any bound on the runtime of $\mathcal{A}$. If the above equation only holds for all~$n$ in an infinite subset of the natural numbers, i.e. $S\subseteq\mathbb{N}$, then we say that~$\mathsf{F}$ is an infinitely-often OWF.
\end{definition}

When~$T=\poly(n)$ and~$\theta = \negl(n)$, the above definition corresponds to the common definition of one-way functions. If~$\theta$ is~$1-1/n^c$ for some constant~$c$, this corresponds to weak one-way functions. It is shown by~\cite{Yao82} that weak one-way functions imply one-way functions.

Below, we define efficiently samplable statistically far but computationally indistinguishable quantum states (EFI). 

\begin{definition}[Non-Uniform EFI]
	Let~$T:\mathbb{N} \rightarrow \mathbb{R}^{+}$ and~$d,D:\mathbb{N} \rightarrow [0,1]$ be functions. A non-uniform~$(T,D,d)$-EFI scheme is a QPT algorithm~$\efi_h(1^n,b)$ that is given a classical~$\poly(n)$-size advice~$h$ and a bit~$b$, outputs a quantum state~$\rho_b$, such that for any sufficiently large~$n \in \mathbb{N}$ has the following specifications:
	\begin{enumerate}
		\item \textbf{Computational indistinguishability.} For all non-uniform (possibly quantum)~$T(n)$-time algorithms $\adv$:
		\begin{align*}
			\left|\pr{\adv(\rho_0)=1}-\pr{\adv(\rho_1)=1}\right|\leq d(n).
		\end{align*}
		\item \textbf{Statistical Distance.} $ \|\rho_0-\rho_1\|_1\geq D(n)$.	\end{enumerate}
		
		Furthermore, we say that~$\mathsf{EFI}$ is a~$(D,d)$-EFI for an algorithm~$\mathcal{A}$, if the computational indistinguishability holds for~$\mathcal{A}$ without requiring any bound of the runtime of $\mathcal{A}$.
\end{definition}

\begin{remark}\label{remark:efi-to-owf}
When restricted to classical algorithms, EFI pairs with~$D-d \geq 1/\poly(n)$ and~$T=\poly(n)$ imply the existence of one-way functions (\emph{e.g.}, see~\cite{Gol90,NR06,BDRV19}). The state of the art for the quantum EFI pairs is more restricted. More precisely, an EFI pair with mixed states and~$D^2 - \sqrt{d} \geq O(1)$ implies quantum bit commitment (see~\cite[Corollary~8.8]{BQSY24} for EFI polarization and~\cite{BCQ23} for the generic transformation to construct quantum bit commitments from EFI pairs). 
\end{remark}

In this work, we consider the inefficient-verifier one-way state generators.

\begin{definition}[Non-Uniform One-Way State Generators]\label{def:owsg}
Let~$T:\mathbb{N} \rightarrow \mathbb{R}^{+}$ and~$\theta: \mathbb{N} \rightarrow [0,1]$.
	A~$(T,\theta)$-one-way state generator (OWSG) is a tuple of algorithms $\mathsf{G}:=(\kgen,\sgen,\ver)$ with the following specification:
	\begin{itemize}
		\item $\kgen_h(1^n)\rightarrow k$: is a QPT algorithm that given the security parameter~$1^n$ and a~$\poly(n)$-size classical advice~$h$, outputs a classical string $k\in\{0,1\}^n$;
		\item $\sgen(k)\rightarrow\rho_k$: is a QPT algorithm that given a classical string~$k$, outputs an $m$-qubit quantum state;
		\item $\ver(k,\rho) \in \{0,1\}$: is a (possibly unbounded) algorithm that given a classical string $k$ and a quantum state $\rho$ outputs either $0$ or $1$.
	\end{itemize}
	Further, they satisfy the following properties:
	\begin{enumerate}
		\item \textbf{Correctness.} Outputs of the samplers $(\kgen,\sgen)$ pass the verification with overwhelming probability, i.e.,
		\begin{equation*}
			\pr[\substack{k\leftarrow\kgen_h \\ \rho_k\leftarrow\sgen(k)}]{\ver(k,\rho_k)=1}\geq 1-\negl(n)\, .
		\end{equation*}
		\item \textbf{Security.} For every non-uniform~$T(n)$-time adversary $\adv$, and any polynomial $t(n)$
		\begin{equation*}
			\pr[\substack{k\leftarrow\kgen_h \\ \rho_k\leftarrow\sgen(k) \\ k'\leftarrow\adv(\rho_k^{\otimes t};h)}]{\ver(k',\rho_k)=1}\leq \theta(n) \, .
		\end{equation*}
	\end{enumerate}
	
	Furthermore, we say that~$\mathsf{G}$ is a~$\theta$-OWSG for an algorithm~$\mathcal{A}$ if the inequality concerning security (Property 2) holds for $\mathcal{A}$ without requiring any bound on the runtime of $\mathcal{A}$. If the above inequality only holds for all~$n$ in an infinite subset of the natural numbers, i.e. $S\subseteq\mathbb{N}$, then we say that~$\mathsf{F}$ is an infinitely-often OWSG.

\end{definition}
A weak OWSG can be recovered by the above definition for~$T=\poly(n)$ and~$\theta= 1-1/n^c$ for some constant~$c$.
It is shown in~\cite{MY24} that weak OWSGs imply OWSGs.

\medskip\noindent\textbf{Fine-Grained primitives.}
In fine-grained one-way functions, there is at most a polynomial gap between the runtime of the function and runtime of the adversary.

\begin{definition}[Fine-grained OWF]
Let~$\eta>0$ be a real number and~$\theta: \mathbb{N} \rightarrow [0,1]$.
A family of non-uniform algorithms~$\mathsf{F}:=\{\mathsf{F}_n\}_{n \in \mathbb{N}}$ is said to be a~$(\eta,\theta)$-fine-grained one-way function (FGOWF) if for any~$O(T_\mathsf{F}^{1+\eta})$-time algorithm~$\mathcal{A}$, for all sufficiently large~$n$, it holds that
$$\Pr_{x \sim \mathcal{U}_{\{0,1\}^n}}[\mathsf{F}( \mathcal{A}( \mathsf{F}(x) ) ) = \mathsf{F}(x)] \leq \theta(n) \, ,$$
where~$T_\mathsf{F}$ is the runtime of~$\mathsf{F}$.
If~$\theta$ is constant, we simply say that~$\mathsf{F}$ is a weak $\eta$-FGOWF.
\end{definition}

\section{Lossy Mappings and Disguising Lemma}\label{sec:disg}

~\cite{Dru15} derives a quantitative approach (called disguising distribution lemma) to measure how much information can be recovered from the ouput of a compressing mapping about its input, based on the compression size;  a distinguishing variant of Fano's inequality. Such mappings are indeed a special type of lossy mappings, an observation upon which Ball et al.~\cite{BBDD+20} develop their work.

In this section, we focus on variants of lossy mappings and their properties, and extend the disguising lemma.
In our analysis, we consider both randomized functions and quantum mappings. All the statements hold with respect to both cases. For simplicity and generality, we only refer to quantum mappings. We explicitly highlight the distinction when the analysis requires to distinguish between the two cases.

Classically, a randomized function~$R:\{0,1\}^{*} \rightarrow \{0,1\}^*$ is said to be~$\ell$-lossy for a class of distributions~$X=\{X_n\}_{n \in \mathbb{N}}$ if~$I(X_n;R(X_n)) \leq \ell(n)$. 
Below, we also consider general mappings with classical input and quantum output.

\begin{definition}[Lossy Mapping]\label{def:lossy-mapp2}
	Let~$\ell:\mathbb{N} \rightarrow \mathbb{R}^+$. Let~$\map:\{0,1\}^{*} \rightarrow S$ be a mapping, where $S = \{0,1\}^*$ (classical mapping) or $S = \mathsf{MS}_{*}$ (quantum mapping). We say that $\map$ is~$\ell$-lossy for a class of distributions~$X=\{X_n\}_{n \in \mathbb{N}}$ over $\{0,1\}^*$, if it holds that
	$$I(X_n;\map(X_n)) \leq \ell(n) \ .$$
	For the sake of simplicity, we say that~$\map$ is $\ell$-lossy, if it is~$\ell$-lossy for all distributions.
	
\end{definition}

The results by~\cite{Dru15,BBDD+20} rely on the lossiness of the mapping for all distributions. Such a condition seems quite strong, in particular, for the multi-variate mappings over $m$-tuple input. We simplify this condition in two different directions. First, 
we consider lossy mappings over a particular class of distributions as follows:

\begin{definition}[Splitting Lossy Mapping]\label{def:split-lossy}
	Let~$\ell:\mathbb{N} \rightarrow \mathbb{R}^+$, $m\in\mathbb N$ and $S_0,S_1\subseteq\{0,1\}^*$ be two disjoint sets. A mapping~$\map$ is splitting~$\ell$-lossy supported on~$(S_0,S_1)$ if it is~$\ell$-lossy for the class of distributions~$X=(X_1,\ldots,X_m)$ such that for each $i\in[m]$, either~$\supp(X_i)\subseteq S_0$ or~$\supp(X_i)\subseteq S_1$. In other words,~$\{X_1,X_2,\cdots,X_m\}$ splits into~$S_0$-supported and~$S_1$-supported distributions.
\end{definition}

\begin{remark}
A lossy mapping as per Definition~\ref{def:lossy-mapp2} is also a splitting lossy mapping.
\end{remark}

Later, for the lossy reductions of a problem~$\Pi$, we choose~$S_0$ and~$S_1$ as the sets~$\Pi_{\text{N}}$ and~$\Pi_{\text{Y}}$. Splitting the distribution in such a way allows us to precisely calculate the lossiness of randomized encodings.

In the rest of this section, we discuss the generalization of disguising distribution lemma in~\cite{Dru15} and its improvement by~\cite{BBDD+20}. In both of these results, the lossiness (compression in the former and lossiness in the latter) is considered as in Definition~\ref{def:lossy-mapp2} with respect to all possible input distributions. Instead, we adapt it for splitting lossy maps where the input distribution is uniform over a sparse set.
This is obtained by a more refined analysis but yet very similar to those of~\cite{Dru15,BBDD+20}. Below, we have the main lemma of this section.

\begin{lemma}[Extended Disguising Lemma]\label{lemma:perm-dis-dis}
Let~$n,m,m_0,m_1$ be positive integers such that~$m = m_0+m_1 + 1$, and~$\red:\{0,1\}^{*} \rightarrow \mathsf{MS}_{*}$ be any quantum mapping. Further, let~$S_0,S_1 \subseteq \{0,1\}^n$ be two disjoint sets,~$d$ be a positive integer,~$\varepsilon > 0 $ be real, and~$s := \lceil n \ln 2 / (2\varepsilon^2) \rceil $. 

For any choice of positive real~$\ell$, if~$\red$ is splitting~$\ell$-lossy for all~$ds$-uniform distributions supported on $(S_0,S_1)$, then there exist two collections~$K_1,\cdots,K_s$ and~$T_1,\cdots,T_s$ of multisets of $d$ elements respectively contained in~$S_0$ and~$S_1$, such that
\begin{itemize}
\item for any~$y \in S_0$, it holds that
\begin{align*}
\mathbb{E}_{a\sim \unif_{[s]},\pi \sim \perm_{m} } \left[  \Big\|
\red \left( \pi \left(
\unif_{K_a}^{\otimes m_0},y,\unif_{T_a}^{\otimes m_1} \right) \right) - \red\left( \pi \left( \unif_{K_a}^{\otimes (m_0+1)}, \unif_{T_a}^{\otimes m_1}\right) \right) 
 \Big\|_1 \right] \leq \delta + \frac{2(m+1)}{d+1} + 2\varepsilon \ ;
\end{align*}
\item and for any~$y \in S_1$, it holds that
\begin{align*}
\mathbb{E}_{a\sim \unif_{[s]},\pi \sim \perm_{m} } \left[  \Big\|
\red \left( \pi \left(
\unif_{K_a}^{\otimes m_0},y,\unif_{T_a}^{\otimes m_1} \right) \right) - \red\left( \pi \left( \unif_{K_a}^{\otimes m_0}, \unif_{T_a}^{\otimes (m_1+1)}\right) \right) 
 \Big\|_1 \right] \leq \delta + \frac{2(m+1)}{d+1} + 2\varepsilon \ ,
\end{align*}
\end{itemize} 
where 
\begin{align*}
		\delta \eqdef \min \left\{ \sqrt{\frac{\ell \ln 2}{2m}}, 1- 2^{-\frac{\ell}{m}-2} \right\}.
	\end{align*}
\end{lemma}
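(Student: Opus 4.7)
The plan is to adapt Drucker's original disguising lemma to the multivariate, splitting, mildly lossy setting. I model the claim as a two-player zero-sum game: Player~1's pure strategies are pairs $(K,T)$ of $d$-element multisets with $K\subseteq S_0$ and $T\subseteq S_1$, Player~2's pure strategies are points $y\in S_0$, and the payoff on $(K,T)$ versus $y$ is the permutation-averaged distance $\mathbb{E}_{\pi\sim\perm_m}\bigl[\|\red(\pi(\unif_K^{\otimes m_0},y,\unif_T^{\otimes m_1}))-\red(\pi(\unif_K^{\otimes(m_0+1)},\unif_T^{\otimes m_1}))\|_1\bigr]$. The statement then amounts to showing that Player~1 has an $s$-uniform mixed strategy with expected payoff at most $\delta+\tfrac{2(m+1)}{d+1}+2\varepsilon$ against any $y\in S_0$; a symmetric game handles the $y\in S_1$ case.

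The Fano-type step is the heart of the argument. Fix base distributions $\mathcal{Y}_0,\mathcal{Y}_1$ on $S_0,S_1$ that are $s$-uniform, so that their products remain $ds$-uniform in each coordinate. Couple $y$ with $(K,T)$ by drawing $d+1$ i.i.d.\ samples from $\mathcal{Y}_0$ and nominating one as $y$ and the remaining $d$ as $K$, and independently drawing $d$ i.i.d.\ samples from $\mathcal{Y}_1$ for $T$. Under this coupling the inputs fed to $\red\circ\pi$ form a splitting $ds$-uniform tuple, so mild lossiness bounds the joint mutual information between input and output. Pinsker's inequality and convexity of $\|\cdot\|_1$ then yield the estimate $\delta=\sqrt{\ell\ln 2/(2m)}$, while a direct compression/entropy argument in the style of Drucker yields the alternative $\delta=1-2^{-\ell/m-2}$; the two are incomparable in general, hence the minimum in the lemma. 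The additive $\tfrac{2(m+1)}{d+1}$ slack arises when bookkeeping the discrepancy between ``$K$ obtained by removing one out of $d+1$ i.i.d.\ draws'' and ``$K$ drawn as $d$ fresh i.i.d.\ samples'' under the random permutation~$\pi$.

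Next I combine this Fano bound with the minimax theorem (Lemma~\ref{lemma:minmax}) and two applications of Lipton--Young (Lemma~\ref{lemma:s-uniform}). Because mild lossiness only applies to $ds$-uniform inputs, the Fano bound is only valid when Player~2's mixed strategy is sufficiently sparse. I therefore first restrict Player~2 to $s$-uniform strategies via Lipton--Young, losing at most an additive $\varepsilon$ in the max-min value of the game; the choice $s=\lceil n\ln 2/(2\varepsilon^2)\rceil$ is exactly calibrated so that the induced product distributions on the input of $\red$ are $ds$-uniform per coordinate, as required by the hypothesis. By minimax, Player~1 then has a (possibly dense) mixed strategy achieving expected payoff at most $\delta+\tfrac{2(m+1)}{d+1}+\varepsilon$ against every pure $y$. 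A second Lipton--Young application, now to Player~1, sparsifies this strategy to an $s$-uniform mixed strategy over pairs $(K,T)$ at the cost of another additive $\varepsilon$; the $s$ multisets in the support of this sparse strategy furnish the collections $K_1,\ldots,K_s$ and $T_1,\ldots,T_s$. Running the same argument with Player~2 choosing $y\in S_1$ produces the symmetric inequality with $m_0$ and $m_1$ swapped in the target.

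The main obstacle will be the Fano-type step itself, which must be tight enough to accommodate two features absent from Drucker's original setting: the multivariate input to $\red$, in which the $m_0$ coordinates from $\mathcal{Y}_0$, the special coordinate $y$, and the $m_1$ coordinates from $\mathcal{Y}_1$ are entangled through the uniformly random permutation $\pi$; and the splitting structure, which forbids treating $S_0\cup S_1$ as a single support and forces the coupling to be done separately on each side. Additional care will be needed in the parameter bookkeeping: mild lossiness is a per-coordinate $ds$-uniformity condition, so after restricting Player~2 to an $s$-uniform mixture I must verify that the resulting product distribution on the full input of $\red$ genuinely lies in the class for which the hypothesis was made, and that both Lipton--Young applications can be carried out with the single choice of $s$ declared in the statement.
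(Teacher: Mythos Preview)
Your approach is essentially the paper's: a zero-sum game between the multiset-chooser (Player~1) and the point-chooser (Player~2), a Fano/QDS-type bound on the coupled strategy (which the paper isolates as Lemmas~\ref{lemma:lossy1QDS}--\ref{lemma:sparsified}), and two applications of Lipton--Young. Your accounting of $\delta$, of the $\tfrac{2(m+1)}{d+1}$ slack from the coupling, and of the two additive $\varepsilon$'s is correct.

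There is one real gap. You run two separate games, one with Player~2 choosing $y\in S_0$ and a ``symmetric'' one with $y\in S_1$. After the final Lipton--Young sparsification each game yields its own $s$-uniform Player~1 strategy, hence its own collection $\{(K_a,T_a)\}_{a\le s}$; but the lemma requires a \emph{single} collection that satisfies both bullets simultaneously, and every downstream application (Theorems~\ref{th:pi-qsd}--\ref{th:owsg}) relies on exactly this: the same advice $(K_a,T_a)$ must make the circuit pair close when $y\in\Pi_N$ and far when $y\in\Pi_Y$. Taking the union of the two collections does not rescue you, since averaging over a $2s$-element union only bounds the mean of the two payoffs, not each separately. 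The paper avoids this by playing one game in which Player~2 picks $y\in S_0\cup S_1$ and the payoff function switches form according to which side $y$ lands on; Player~1's coupled response against a mixed $\mathcal{Y}$ is to draw $K$ from $\mathcal{Y}|_{S_0}$ and $T$ from $\mathcal{Y}|_{S_1}$, which also resolves your vagueness about where $\mathcal{Y}_1$ comes from in the $S_0$-game. A single minimax plus a single sparsification of Player~1 then produces the common collection. Relatedly, the paper restricts Player~2 to $ds$-uniform (not $s$-uniform) strategies: the Lipton--Young step for Player~2 must be calibrated to Player~1's pure-strategy count, which is on the order of $|S_0|^d|S_1|^d$ rather than $2^n$.
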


Note that the states inside the trace distance are mixed states since the inputs of~$\red$ are randomized classical distributions.

The proof requires some background definitions and lemmas. 
Similar to~\cite{Dru15,BBDD+20}, we define distributional stability as follows. 

\begin{definition}
Let~$n,m,m_0,m_1$ be positive integers such that~$m=m_0+m_1+1$. For a real~$\delta \in [0,1]$, a quantum mapping~$\red:\{0,1\}^{m n} \rightarrow \mathsf{MS}_{*}$ is said to be~$\delta$-quantumly-distributionally stable ($\delta$-QDS) with respect to two distributions~$(\mathcal{D}_0,\mathcal{D}_1)$ over~$\{0,1\}^n$ if the following holds:

\begin{align*}
\mathbb{E}_{y\sim \mathcal{D}_0,\pi \sim \perm_{m} } \left[  \Big\|
\red \left( \pi \left(
\mathcal{D}_0^{\otimes m_0},y,\mathcal{D}_1^{\otimes m_1} \right) \right) - \red\left( \pi \left( \mathcal{D}_0^{\otimes (m_0+1)}, \mathcal{D}_1^{\otimes m_1}\right) \right) 
 \Big\|_1 \right] \leq \delta \ .
\end{align*}
Note that the order of the pair~$(\mathcal{D}_0,\mathcal{D}_1)$ matters. 
Furthermore, when~$m_1=0$, we simply say that the mapping is~$\delta$-QDS with respect to~$\mathcal{D}_0$.
\end{definition}

Below, we recall an adaptation of~\cite[Lemma~8.10]{Dru15}.
\begin{lemma}\label{lemma:lossy1QDS}
Assume that~$\red:\{0,1\}^{m\cdot n} \rightarrow \mathsf{MS}_{*}$ satisfies the properties in Lemma~\ref{lemma:perm-dis-dis} for~$m_1=0$.
	Then~$\red$ is~$\delta$-QDS with respect to any~$ds$-uniform distribution~$\mathcal{D}_0$ supported on either~$S_0$ or~$S_1$.
\end{lemma}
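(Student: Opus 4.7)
The plan is to translate the splitting-lossiness hypothesis into a bound on the classical-quantum mutual information between any single input coordinate and the permuted output, and then convert this into the two distance bounds composing~$\delta$ via two different information-theoretic inequalities. I will treat the case where $\mathcal{D}_0$ is supported on~$S_0$; the case of~$S_1$ is entirely symmetric. Let $X=(X_1,\ldots,X_m)$ consist of i.i.d.\ samples from~$\mathcal{D}_0$ and let $\pi$ be uniform over~$\mathfrak{S}_m$. First I will rewrite the QDS quantity (with $m_1=0$) as $\mathbb{E}_{\pi,y}[\|\rho_{\pi,y}-\rho_\pi\|_1]$, where $\rho_{\pi,y}$ denotes the state $\red(\pi(X))$ conditioned on $X_{m_0+1}=y$ and $\rho_\pi$ its marginal; this rewriting is possible because fixing the last slot to~$y$ before permuting by~$\pi$ is exactly the same as conditioning on $X_{m_0+1}=y$ in the i.i.d.\ sampling of~$X$.

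The key mutual-information bound is extracted as follows. Because $\mathcal{D}_0^{\otimes m}$ is $ds$-uniform with all marginals in~$S_0$, the splitting $\ell$-lossiness of~$\red$ gives $I_q(X;\red(\pi_0(X)))\leq\ell$ for every fixed~$\pi_0$. The classical-quantum chain rule combined with the mutual independence of the $X_i$'s yields $\sum_{i=1}^{m}I_q(X_i;\red(\pi_0(X)))\leq\ell$. A measure-preserving change of variables $(X,\pi)\mapsto(\sigma X,\pi\sigma^{-1})$ for the transposition $\sigma=(i,j)$ then shows that $\mathbb{E}_\pi[I_q(X_i;\red(\pi(X)))]$ is independent of~$i$, and averaging over~$i$ gives $\mathbb{E}_\pi[I_q(X_{m_0+1};\red(\pi(X)))]\leq\ell/m$.

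From here the two parts of~$\delta$ come from two different inequalities. For the first part, I will apply the classical-quantum Pinsker inequality $\|\rho_{\pi,y}-\rho_\pi\|_1\leq\sqrt{(\ln 2/2)\,D(\rho_{\pi,y}\|\rho_\pi)}$ inside the expectation over~$y$, pull the square root outside via Jensen to obtain $\sqrt{(\ln 2/2)\,I_q(X_{m_0+1};\red(\pi_0(X)))}$ for each fixed~$\pi_0$, and then apply Jensen once more over~$\pi$ to land on $\sqrt{\ell\ln 2/(2m)}$. For the second part I will invoke a Bretagnolle--Huber-style inequality $\mathbb{E}_y[\|\rho_{\pi,y}-\rho_\pi\|_1]\leq\sqrt{1-2^{-I_q(X_{m_0+1};\red(\pi(X)))}}$ combined with the elementary estimate $\sqrt{1-2^{-t}}\leq 1-2^{-t-2}$ (verified by squaring), which produces $1-2^{-\ell/m-2}$. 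The main obstacle will be verifying clean classical-quantum analogues of these inequalities for the c-q state $\rho_{X,\red(\pi(X))}$ and ordering the two averagings (first over~$y$, then over~$\pi$) so that Jensen is applied with the correct concavity direction; the symmetry argument itself is purely classical on the input indices and transfers without change. Taking the minimum of the two bounds yields~$\delta$.
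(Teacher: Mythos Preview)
Your proposal is correct and follows essentially the same approach the paper relies on: the paper simply cites Drucker's Lemma~8.10 and remarks that the compression-based entropy bound there can be replaced by the splitting-lossiness hypothesis, and your write-up is precisely a fleshed-out version of that argument (chain rule on the c-q mutual information plus independence of the~$X_i$ to get $\sum_i I_q(X_i;R(\pi_0(X)))\leq\ell$, the i.i.d.\ symmetry to equalize the summands after averaging over~$\pi$, then Pinsker with two applications of Jensen for the first branch of~$\delta$, and the Fuchs--van~de~Graaf/Bretagnolle--Huber bound $\mathrm{TD}\leq\sqrt{1-2^{-D}}$ applied to the joint c-q state versus its product together with convexity of $t\mapsto 2^{-t}$ for the second branch). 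One small phrasing fix: ``$\mathcal{D}_0^{\otimes m}$ is $ds$-uniform'' should read ``each coordinate of $\mathcal{D}_0^{\otimes m}$ is $ds$-uniform,'' which is what the splitting-lossiness hypothesis actually requires.
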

In the original lemma from~\cite{Dru15}, compression is used to bound the entropy of the mutual information. However, note that this can be argued directly from splitting lossiness, and that any restriction on the input distributions will give a result for the same restricted case.

The following lemma is the generalization of the above one. 
\begin{lemma}\label{lemma:lossy2QDS}
Assume that~$\red:\{0,1\}^{m n} \rightarrow \mathsf{MS}_{*}$ satisfies the properties in Lemma~\ref{lemma:perm-dis-dis}.
	Then~$\red$ is~$\delta$-QDS with respect to any~$ds$-uniform independent distributions~$(\mathcal{D}_0,\mathcal{D}_1)$ each supported on either~$S_0$ or~$S_1$.
\end{lemma}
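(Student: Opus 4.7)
The plan is to reduce Lemma~\ref{lemma:lossy2QDS} to the univariate case of Lemma~\ref{lemma:lossy1QDS} by absorbing the $\mathcal{D}_1^{\otimes m_1}$ samples together with the uniform permutation into an auxiliary randomized mapping. Concretely, I would define
\[ R'(x_1, \ldots, x_{m_0+1}) \eqdef R\bigl(\pi(x_1, \ldots, x_{m_0+1}, Z_1, \ldots, Z_{m_1})\bigr), \]
where $Z_1, \ldots, Z_{m_1}$ are drawn i.i.d.\ from $\mathcal{D}_1$ and $\pi \sim \perm_m$, all independent of the inputs and acting as internal randomness of $R'$. Note that $R'$ is a mapping on $m_0+1$ inputs.

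The first key step is to show that $R'$ inherits the splitting $\ell$-lossiness of $R$. For any $ds$-uniform splitting distribution $X = (X_1, \ldots, X_{m_0+1})$ on $(S_0, S_1)$, the extended tuple $(X, Z_1, \ldots, Z_{m_1})$ is itself a $ds$-uniform splitting distribution over $m$ coordinates (since each $Z_i$ is $ds$-uniform and supported on $S_1$), and permuting it by any fixed $\sigma$ preserves the coordinate-wise support structure. Hence splitting $\ell$-lossiness of $R$ gives $I(\sigma(X,Z); R(\sigma(X,Z))) \leq \ell$ for every fixed $\sigma$, and a data-processing argument -- exploiting the independence of $(\pi, Z)$ from $X$ -- then yields $I(X; R'(X)) \leq \ell$. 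Applying Lemma~\ref{lemma:lossy1QDS} to $R'$ directly yields the $\delta$-QDS property with respect to any $ds$-uniform $\mathcal{D}_0$ supported on $S_0$. Unfolding the definition of $R'$, the internal uniform permutation on $\perm_m$ absorbs the external permutation on $\perm_{m_0+1}$ coming from the QDS statement (by invariance of the uniform measure under composition), recovering the form of Lemma~\ref{lemma:lossy2QDS}. The case $y \in S_1$ is handled by swapping the roles of $\mathcal{D}_0, \mathcal{D}_1$ and $m_0, m_1$ in the construction of $R'$.

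The main obstacle I anticipate is reconciling the parameter $\delta$ produced by this reduction -- which naturally has $m_0+1$ in the denominator via Lemma~\ref{lemma:lossy1QDS} -- with the tighter $\delta$ in Lemma~\ref{lemma:perm-dis-dis} written in terms of $m = m_0+m_1+1$. Closing this gap may require a more refined direct analysis: invoking the chain rule on $I((Y_1, \ldots, Y_m); R(\pi(Y))) \leq \ell$ with $Y_i$ sampled from the appropriate $\mathcal{D}_0$ or $\mathcal{D}_1$ block, and using the per-block exchangeability induced by $\pi$ together with a Pinsker-type bound (in the statistical-distance convention of Definition~\ref{def:ell1}), in order to spread the lossiness budget across all $m$ positions rather than only the $m_0+1$ positions of the $\mathcal{D}_0$ block.
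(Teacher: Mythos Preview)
Your approach is essentially the paper's: it too defines an auxiliary mapping $R'$ on $m_0+1$ inputs by absorbing the $\mathcal{D}_1^{\otimes m_1}$ samples (together with part of the permutation) as internal randomness, verifies that $R'$ inherits the splitting $\ell$-lossiness of $R$ via the data-processing chain $I(X;R'(X)) \leq I(X,Z;R(\pi_1(X,Z))) \leq \ell$, and then invokes Lemma~\ref{lemma:lossy1QDS}. The only cosmetic difference is that the paper first conditions on the ``partial permutation'' $\pi_1$ that fixes where the $\mathcal{D}_1$-samples land and lets the remaining $\pi_0\in\perm_{m_0+1}$ play the role of the QDS permutation, whereas you absorb the full $\pi\in\perm_m$ into $R'$ and rely on invariance under composition; both are fine.

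Regarding your final paragraph: your concern is well-placed, and the paper's proof does \emph{not} resolve it either---applying Lemma~\ref{lemma:lossy1QDS} to $R'$ (a map on $m_0+1$ inputs with lossiness $\leq\ell$) yields $\delta(\ell/(m_0+1))$, not $\delta(\ell/m)$. The paper simply writes ``$\delta$'' and moves on. Your proposed fix of ``spreading the lossiness budget across all $m$ positions'' would not close this gap: the relevant symmetry (exchangeability of $I(X_j;\tilde R(X))$ over $j$) holds only within the $\mathcal{D}_0$-block, since the $\mathcal{D}_1$-coordinates have a different marginal, so subadditivity plus block-exchangeability still only gives $I(X_1;\tilde R(X))\leq \ell/(m_0+1)$. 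In short, you have matched the paper's argument and correctly identified a slack that the paper itself leaves unaddressed; it does not affect the downstream applications beyond constants (since one of $m_0+1,m_1+1$ is always $\geq m/2$), but the tight $\delta(\ell/m)$ as stated does not follow from either proof.
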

\begin{proof}
The proof is similar to that of~\cite[Proposition~B.1]{BBDD+20}. 
Let~$\pi \in \mathfrak{S}_m$ be a fixed permutation. One can rewrite it as the composition of two partial permutations~$\pi_0$ and~$\pi_1$, i.e.,~$\pi = \pi_0 \circ \pi_1$, such that~$\pi_1$ only acts on the last~$m_1$ arguments of the input. Let~$\rho_{\pi}(y)$ be as follows
\begin{align*}
&\rho_{\pi}(y):= \red \left( \pi \left(
\mathcal{D}_0^{\otimes m_0},y,\mathcal{D}_1^{\otimes m_1} \right) \right) \, . 
\end{align*}
For~$y,y' \sim \mathcal{D}_0$, two independent random variables, and~$\pi \sim \mathfrak{S}_m$, we want to prove that 
\begin{align*}
\mathbb{E}_{y, \pi} \left[ \Big\| 
\rho_{\pi}(y)  -  \rho_{\pi}(y')
 \Big\|_1 \right] \leq \delta \, .
\end{align*} 
Note that it is enough to bound the conditional distributions since
\begin{align*}
\mathbb{E}_{y, \pi} \left[ \Big\| 
\rho_{\pi}(y)  -  \rho_{\pi}(y')
 \Big\|_1 \right] =  \mathbb{E}_{\pi}\left[ \mathbb{E}_{y,\pi|\pi_1}\left[\Big\| 
\rho_{\pi}(y)  -  \rho_{\pi}(y') \Big\|_1 \right]\right],
\end{align*}
by the law of total probability.

Let~$\red'(x_1,x_2,\cdots,x_{m_0+1})$ be the mapping that first samples~$\pi$ then evaluates~$\red \left( \pi_1 \left(x_1,x_2,\cdots,x_{m_0+1},\mathcal{D}_1^{\otimes m_1} \right) \right)$.
For any fixed~$\pi_1$, we show that~$\red'$ is splitting~$\ell$-lossy for all~$ds$-uniform distributions over either~$S_0$ or~$S_1$.
Indeed, let~$(\mathcal{X}_1,\cdots,\mathcal{X}_{m_0+1})$ be 
independent~$ds$-uniform random variables with $\supp(\mathcal{X}_i)\subseteq S_0$ or $\supp(\mathcal{X}_i)\subseteq S_1$ for each $i\in[m_0+1]$, and~$(\mathcal{Z}_1,\cdots,\mathcal{Z}_{m_1})\sim \mathcal{D}_1^{\otimes m_1}$, thus $\supp(\mathcal{Z}_i)\subseteq \supp(\mathcal D_1)\subseteq S_j$ for all $i\in[m_1]$ and some $j\in\{0,1\}$. By the splitting lossiness of~$\red$ for any~$ds$-uniform distribution, we can bound the loss of $\red'$:
\begin{align*}
\ell &\geq I_q(\pi_1(\mathcal{X}_1,\cdots,\mathcal{X}_{m_0+1},\mathcal{Z}_1,\cdots,\mathcal{Z}_{m_1});\red(\pi_1(\mathcal{X}_1,\cdots,\mathcal{X}_{m_0+1},\mathcal{Z}_1,\cdots,\mathcal{Z}_{m_1}))) \\
&= I_q(\mathcal{X}_1,\cdots,\mathcal{X}_{m_0+1},\mathcal{Z}_1,\cdots,\mathcal{Z}_{m_1};\red(\pi_1(\mathcal{X}_1,\cdots,\mathcal{X}_{m_0+1},\mathcal{Z}_1,\cdots,\mathcal{Z}_{m_1}))) \\
&\geq I_q(\mathcal{X}_1,\cdots,\mathcal{X}_{m_0+1};\red(\pi_1(\mathcal{X}_1,\cdots,\mathcal{X}_{m_0+1},\mathcal{Z}_1,\cdots,\mathcal{Z}_{m_1}))).
\end{align*}
Finally, by Lemma~\ref{lemma:lossy1QDS} a splitting lossy map must also be $\delta$-QSD, thus
\begin{align*}
\mathbb{E}_{y, \pi| \pi_1} \left[ \Big\| 
\rho_{\pi}(y)  -  \rho_{\pi}(y')
 \Big\|_1 \right] &= \mathbb{E}_{y,\pi| \pi_1} \left[ \Big\| 
\red \left( \pi \left(
\mathcal{D}_0^{\otimes m_0},y,\mathcal{D}_1^{\otimes m_1} \right) \right)  -  \red\left( \pi \left( \mathcal{D}_0^{\otimes m_0},y', \mathcal{D}_1^{\otimes m_1}\right) \right)\Big\|_1 \right] \\
&= \mathbb{E}_{y,\pi_0} \left[ \Big\| 
\red' \left(
\mathcal{D}_0^{\otimes m_0},y \right)   -  \red'\left( \mathcal{D}_0^{\otimes m_0},y' \right)\Big\|_1 \right] \\
&\leq \delta \, .
\end{align*}

\end{proof}

If a mapping is distributionally stable with respect to a pair of distributions, then one can ``sparsify'' the distributions while nearly keeping the stability. 
\begin{lemma}\label{lemma:sparsified}
Let~$n,m,m_0,m_1,\ell,S_0,S_1,\red$ and~$\delta$ be as in Lemma~\ref{lemma:perm-dis-dis}. Let~$\mathcal{D}_0$ and~$\mathcal{D}_1$ be two independent distributions with supports over $S_0$ and $S_1$, respectively. Let~$\{x^{(0)}_i\}_{i \in [d+1]}$ and~$\{x^{(1)}_i\}_{i \in [d+1]}$ be independent samples from~$\mathcal{D}_0$ and~$\mathcal{D}_1$, respectively. For each $j\in\{0,1\}$, let~$y^*_j := x^{(j)}_{i^*}$ be uniformly chosen from~$\{x^{(j)}_i\}_{i \in [d+1]}$ and let~$\widehat{\mathcal{D}}_j$ be the uniform distribution over the multiset~$\{x^{(j)}_i\}_{i \in [d+1]\setminus \{i^*\}}$. Then it holds that
\begin{align*}
&\mathbb{E}_{\pi \sim \perm_m} \left[ \Big\| \red \left( \pi \left( \widehat{\mathcal{D}}_0^{\otimes m_0}, y^*_0 , \widehat{\mathcal{D}}_1^{\otimes m_1} \right) \right) - \red \left( \pi \left( \widehat{\mathcal{D}}_0^{\otimes (m_0+1)}, \widehat{\mathcal{D}}_1^{\otimes m_1} \right) \right) \Big\|_1 \right] \leq \delta + \frac{2m_0+1}{d+1}\, ,\\
&\mathbb{E}_{\pi \sim \perm_m} \left[ \Big\| \red \left( \pi \left( \widehat{\mathcal{D}}_0^{\otimes m_0}, y^*_1 , \widehat{\mathcal{D}}_1^{\otimes m_1} \right) \right) - \red \left( \pi \left( \widehat{\mathcal{D}}_0^{\otimes m_0}, \widehat{\mathcal{D}}_1^{\otimes (m_1+1)} \right) \right) \Big\|_1 \right] \leq \delta + \frac{2m_1+1}{d+1} \, .
\end{align*}
\end{lemma}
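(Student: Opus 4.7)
The plan is to bound the first inequality by inserting two intermediate mixed states, applying the triangle inequality, and recognizing the central step as an instance of Lemma~\ref{lemma:lossy2QDS} applied to the enlarged multiset
$K^+ \eqdef \widehat{\mathcal{D}}_0 \cup \{y^*_0\} = \{x^{(0)}_i\}_{i \in [d+1]}$
with its uniform distribution. The second inequality will follow by the same argument with the roles of $(S_0, m_0)$ and $(S_1, m_1)$ interchanged, using $T^+ \eqdef \widehat{\mathcal{D}}_1 \cup \{y^*_1\}$ in place of $K^+$.

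The crucial observation is an exchangeability argument. Since $(x^{(0)}_i)_{i \in [d+1]}$ are i.i.d.\ samples from $\mathcal{D}_0$ and $i^*$ is uniform in $[d+1]$ independent of these samples, the joint distribution of $(\widehat{\mathcal{D}}_0, y^*_0)$ coincides with that of a $d$-multiset of independent $\mathcal{D}_0$-samples together with one further independent $\mathcal{D}_0$-sample. Equivalently, the random multiset $K^+$ is distributed as $d+1$ i.i.d.\ $\mathcal{D}_0$-draws, and $y^*_0$ is uniform over $K^+$ given $K^+$. Because the first inequality does not involve $y^*_1$, we may marginalize it out, after which $y^*_0$ becomes independent of $\widehat{\mathcal{D}}_1$; in particular $K^+ \perp \widehat{\mathcal{D}}_1$.

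Introduce the intermediate mixed states
\begin{align*}
	X_1 &\eqdef \red\bigl(\pi(\unif_{K^+}^{\otimes m_0},\; y^*_0,\; \widehat{\mathcal{D}}_1^{\otimes m_1})\bigr),\\
	X_2 &\eqdef \red\bigl(\pi(\unif_{K^+}^{\otimes (m_0+1)},\; \widehat{\mathcal{D}}_1^{\otimes m_1})\bigr),
\end{align*}
and let $X_0, X_3$ denote the two states appearing in the lemma's claim. By the triangle inequality,
\begin{equation*}
	\mathbb{E}_\pi\bigl[\|X_0 - X_3\|_1\bigr] \leq \mathbb{E}_\pi\bigl[\|X_0 - X_1\|_1\bigr] + \mathbb{E}_\pi\bigl[\|X_1 - X_2\|_1\bigr] + \mathbb{E}_\pi\bigl[\|X_2 - X_3\|_1\bigr].
\end{equation*}
The two outer terms are elementary sparsification bounds: $X_0$ and $X_1$ (resp.\ $X_2$ and $X_3$) differ only in the distribution from which $m_0$ (resp.\ $m_0+1$) slots are drawn, namely $\widehat{\mathcal{D}}_0$ versus $\unif_{K^+}$; these two distributions are supported on multisets differing by a single element and hence have total-variation distance at most $1/(d+1)$. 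Passing from TV of input distributions to trace distance of the resulting mixed states gives $\mathbb{E}_\pi[\|X_0 - X_1\|_1] \leq m_0/(d+1)$ and $\mathbb{E}_\pi[\|X_2 - X_3\|_1] \leq (m_0+1)/(d+1)$. The middle term is precisely the conclusion of Lemma~\ref{lemma:lossy2QDS} applied to the $ds$-uniform (for $s \geq 2$) independent distributions $(\unif_{K^+}, \widehat{\mathcal{D}}_1)$ supported on $S_0$ and $S_1$, with $y^*_0$ taking the role of the uniform sample $y \sim \unif_{K^+}$ by virtue of the exchangeability observation; conditioning on $K^+, \widehat{\mathcal{D}}_1$ and averaging yields $\mathbb{E}_\pi[\|X_1 - X_2\|_1] \leq \delta$. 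Summing the three bounds produces $\delta + (2m_0+1)/(d+1)$, as required.

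The main obstacle is the exchangeability observation itself: in the raw formulation, $y^*_0$ and $\widehat{\mathcal{D}}_1$ appear coupled through the shared index $i^*$, which superficially prevents a direct appeal to Lemma~\ref{lemma:lossy2QDS}. Once one recognizes that $(\widehat{\mathcal{D}}_0, y^*_0)$ is distributionally equivalent to $d+1$ i.i.d.\ $\mathcal{D}_0$-samples with one position marked as special, the enlarged $\unif_{K^+}$ becomes a natural source of i.i.d.\ draws of which $y^*_0$ is a uniform one, and the remainder of the proof reduces to routine sparsification bookkeeping.
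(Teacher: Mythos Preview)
Your proof is correct and follows essentially the same approach as the paper: your $\unif_{K^+}$ is exactly the paper's auxiliary distribution $\widetilde{\mathcal{D}}_0$ (uniform over all $d+1$ samples), and both proofs perform the same triangle-inequality decomposition with the same three bounds $m_0/(d+1)$, $\delta$, and $(m_0+1)/(d+1)$, invoking Lemma~\ref{lemma:lossy2QDS} for the middle term. The one point where you are more careful than the paper is the exchangeability observation decoupling $y^*_0$ from the $\widehat{\mathcal{D}}_1$-multiset through the shared index $i^*$; the paper silently relies on this but does not spell it out.
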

\begin{proof}
We prove the first statement. The other one is implied similarly.
Let~$\widetilde{\mathcal{D}}_0$ denote the uniform distribution over~$\{x^{(0)}_i\}_{i \in [d+1]}$. For any fixed set of of multisets as above and any choice of permutation~$\pi$ and quantum mapping~$\red$, we have
\begin{align*}
&\Big\| \red \left( \pi \left( \widetilde{\mathcal{D}}_0^{\otimes(m_0+1)}, \widehat{\mathcal{D}}_1^{\otimes m_1} \right) \right) - \red \left( \pi \left( \widehat{\mathcal{D}}_0^{\otimes(m_0+1)}, \widehat{\mathcal{D}}_1^{\otimes m_1} \right) \right) \Big\|_1 \\
\leq\hspace{0.1cm}&\Big\|  \widetilde{\mathcal{D}}_0^{\otimes(m_0+1)} \otimes \widehat{\mathcal{D}}_1^{\otimes m_1}  -  \widehat{\mathcal{D}}_0^{\otimes(m_0+1)} \otimes \widehat{\mathcal{D}}_1^{\otimes m_1}  \Big\|_1 \\
\leq\hspace{0.1cm}& \Big\|  \widetilde{\mathcal{D}}_0^{\otimes(m_0+1)}   -  \widehat{\mathcal{D}}_0^{\otimes(m_0+1)}   \Big\|_1 \\
\leq\hspace{0.1cm}&(m_0+1) \big\|  \widetilde{\mathcal{D}}_0   -  \widehat{\mathcal{D}}_0   \big\|_1 \ ,
\end{align*}
where we used the quantum data processing inequality for the first two upper bounds, and the property of tensor product for the last one. Since both~$\widehat{\mathcal{D}}_0$ and~$\widetilde{\mathcal{D}}_0$ are classical, their trace distance coincides with their statistical distance. Therefore, we have
\begin{align*}
 \big\|  \widetilde{\mathcal{D}}_0   -  \widehat{\mathcal{D}}_0   \big\|_1 &= \frac{1}{2} \sum\limits_{x \in \{x^{(0)}_i\}_{i \in [d+1]}} |\Pr_{\widetilde{\mathcal{D}}_0}(x) - \Pr_{\widehat{\mathcal{D}}_0}(x) | \\
&=  \frac{1}{2(d+1)} + \frac{1}{2} \sum\limits_{x \in \{x^{(0)}_i\}_{i \in [d+1]\setminus \{i^*\}}} \left| \frac{1}{d+1} - \frac{1}{d} \right| \\
&= \frac{1}{d+1} \ . 
\end{align*}
Similarly, it holds that
\begin{align*}
\Big\| \red \left( \pi \left( \widetilde{\mathcal{D}}_0^{\otimes m_0}, y^*_0 , \widehat{\mathcal{D}}_1^{\otimes m_1} \right) \right) - \red \left( \pi \left( \widehat{\mathcal{D}}_0^{\otimes m_0}, y^*_0, \widehat{\mathcal{D}}_1^{\otimes m_1} \right) \right) \Big\|_1 \leq \frac{m_0}{d+1} \ .
\end{align*}
From the triangle inequality, it follows that
\begin{align*}
&\Big\| \red \left( \pi \left( \widehat{\mathcal{D}}_0^{\otimes m_0}, y^*_0 , \widehat{\mathcal{D}}_1^{\otimes m_1} \right) \right) - \red \left( \pi \left( \widehat{\mathcal{D}}_0^{\otimes (m_0+1)}, \widehat{\mathcal{D}}_1^{\otimes m_1} \right) \right) \Big\|_1 \\
\leq \hspace{.1cm} &\Big\| \red \left( \pi \left( \widehat{\mathcal{D}}_0^{\otimes m_0}, y^*_0 , \widehat{\mathcal{D}}_1^{\otimes m_1} \right) \right) - \red \left( \pi \left( \widetilde{\mathcal{D}}_0^{\otimes m_0}, y^*_0, \widehat{\mathcal{D}}_1^{\otimes m_1} \right) \right) \Big\|_1 \\
& \hspace{1cm}+ \Big\| \red \left( \pi \left( \widetilde{\mathcal{D}}_0^{\otimes m_0}, y^*_0 , \widehat{\mathcal{D}}_1^{\otimes m_1} \right) \right) - \red \left( \pi \left( \widetilde{\mathcal{D}}_0^{\otimes (m_0+1)},\widehat{\mathcal{D}}_1^{\otimes m_1} \right) \right) \Big\|_1 \\
& \hspace{2cm} + \Big\| \red \left( \pi \left( \widetilde{\mathcal{D}}_0^{\otimes(m_0+1)}, \widehat{\mathcal{D}}_1^{\otimes m_1} \right) \right) - \red \left( \pi \left( \widehat{\mathcal{D}}_0^{\otimes(m_0+1)}, \widehat{\mathcal{D}}_1^{\otimes m_1} \right) \right) \Big\|_1 \\
< \hspace{.1cm} & \Big\| \red \left( \pi \left( \widetilde{\mathcal{D}}_0^{\otimes m_0}, y^*_0 , \widehat{\mathcal{D}}_1^{\otimes m_1} \right) \right) - \red \left( \pi \left( \widetilde{\mathcal{D}}_0^{\otimes (m_0+1)},\widehat{\mathcal{D}}_1^{\otimes m_1} \right) \right) \Big\|_1 + \frac{2m_0+1}{d+1} \ .
\end{align*}
Recall that~$\red$ is splitting~$\ell$-lossy with respect to all~$ds$-uniform distributions supported on~$(S_0,S_1)$. Therefore, by Lemma~\ref{lemma:lossy2QDS} it is~$\delta$-QSD with respect to all~$ds$-uniform pair of distributions each supported on either~$S_0$ or~$S_1$, including~$(\widetilde{\mathcal{D}}_0, \widehat{\mathcal{D}}_1)$. Finally, by taking expectation from both sides above with respect to~$\pi$, and using the fact that~$\red$ is~$\delta$-QSD with respect to~$(\widetilde{\mathcal{D}}_0,\widehat{\mathcal{D}}_1)$, one obtains the claimed upper bound.

\end{proof}
\color{black}

\begin{proof}[Proof of Lemma~\ref{lemma:perm-dis-dis}]

Consider the following two-player, simultaneous-move, zero-sum game:
\begin{itemize}
\item Player 1: chooses a pair of multisets~$K \subseteq S_0$ and~$T \subseteq S_1$, each of size~$d$.
\item Player 2: chooses an element~$y \in S_0 \cup S_1$
\item Payoff: if~$y \in S_0$, Player 2 gains 
\begin{align*}
\mathbb{E}_{\pi \sim \perm_{m}} \left[ \Big\|
\red \left( \pi \left(
\unif_{K}^{\otimes m_0},y,\unif_{T}^{\otimes m_1} \right) \right) - \red\left( \pi \left( \unif_{K}^{\otimes (m_0+1)}, \unif_{T}^{\otimes m_1}\right) \right) 
 \Big\|_1 \right] \ , 
\end{align*}
otherwise, Player 2 gains
\begin{align*}
\mathbb{E}_{\pi \sim \perm_{m}} \left[ \Big\|
\red \left( \pi \left(
\unif_{K}^{\otimes m_0},y,\unif_{T}^{\otimes m_1} \right) \right) - \red\left( \pi \left( \unif_{K}^{\otimes m_0}, \unif_{T}^{\otimes (m_1+1)}\right) \right) 
 \Big\|_1 \right] \ .
\end{align*}
\end{itemize}

Consider a~$ds$-uniform strategy for Player 2, i.e. a distribution~$\mathcal{Y}$ of~$y$ that is uniform over a multiset of pure strategies of size~$ds$. We explain a strategy~$(\mathcal{K} , \mathcal{T})$ for Player 1 that bounds the expected payoff. 
Player 1 chooses~$K$ by sampling~$d$ independent instances of the restriction of~$\mathcal{Y}$ to~$S_0$, and chooses~$T$ by sampling~$d$ independent instances of the restriction of~$\mathcal{Y}$ to~$S_1$. The expected payoff is
\begin{align*}
E &:= \Pr_{y \sim \mathcal{Y}}(y \in S_0) \ \mathbb{E}_{\pi,K,T 
} \left[ \Big\|
\red \left( \pi \left(
\unif_{K}^{\otimes m_0},y,\unif_{T}^{\otimes m_1} \right) \right) - \red\left( \pi \left( \unif_{K}^{\otimes (m_0+1)}, \unif_{T}^{\otimes m_1}\right) \right) 
 \Big\|_1 \Big|  y \in S_0 \right] \\
 & \hspace{1cm} +  \Pr_{y \sim \mathcal{Y}}(y \in S_1) \  \mathbb{E}_{\pi,K,T 
} \left[  \Big\|
\red \left( \pi \left(
\unif_{K}^{\otimes m_0},y,\unif_{T}^{\otimes m_1} \right) \right) - \red\left( \pi \left( \unif_{K}^{\otimes m_0}, \unif_{T}^{\otimes (m_1+1)}\right) \right) 
 \Big\|_1  \Big|  y \in S_1 \right] \ .
\end{align*}
Let~$x^{(0)}_1,x^{(0)}_2,\cdots,x^{(0)}_{d+1}$ and~$x^{(1)}_1,x^{(1)}_2,\cdots,x^{(1)}_{d+1}$ be~$d+1$ independent samples from~$\mathcal{Y}\left|_{S_0} \right.$ and~$\mathcal{Y}\left|_{S_1} \right.$, respectively. Sample $i^* \rnd [d+1]$ and for~$j \in \{0,1\}$, let~$y_j^* := x^{(j)}_{i^*}$. Let~$\widehat{\mathcal{Y}}_0$ and~$\widehat{\mathcal{Y}}_1$ be the uniform distributions over the multisets~$\{x^{(0)}_i\}_{i \in [d+1]\setminus\{i^*\}}$ and~$\{x^{(1)}_i\}_{i \in [d+1]\setminus\{i^*\}}$, respectively. For~$j \in \{0,1\}$, we have that~$(y_j^*, \widehat{\mathcal{Y}}_0,\widehat{\mathcal{Y}}_1) \sim (\mathcal{Y}\left|_{S_j} \right., \mathcal{K}, \mathcal{T})$. Then, by Lemma~\ref{lemma:sparsified}, we have
\begin{align*}
\mathbb{E}_{\pi} \left[ \Big\| \red \left( \pi \left( \widehat{\mathcal{Y}}_0^{\otimes m_0},y , \widehat{\mathcal{Y}}_1^{\otimes m_1} \right) \right) - \red \left( \pi \left( \widehat{\mathcal{Y}}_0^{\otimes (m_0+1)}, \widehat{\mathcal{Y}}_1^{\otimes m_1} \right) \right) \Big\|_1 \ \Big| \  y \in S_0 \right] \leq \delta + \frac{2m_0+1}{d+1} \ ,
\end{align*}
and 
\begin{align*}
\mathbb{E}_{\pi} \left[ \Big\| \red \left( \pi \left( \widehat{\mathcal{Y}}_0^{\otimes m_0}, y , \widehat{\mathcal{Y}}_1^{\otimes m_1} \right) \right) - \red \left( \pi \left( \widehat{\mathcal{Y}}_0^{\otimes m_0}, \widehat{\mathcal{Y}}_1^{\otimes (m_1+1)} \right) \right) \Big\|_1 \ \Big| \  y \in S_1 \right] \leq \delta + \frac{2m_1+1}{d+1} \ .
\end{align*}
Therefore, we obtain~$E \leq \delta + 2(m+1)/(d+1)$. 

Above, we showed that for every~$ds$-uniform strategy for Player 2, there exists a strategy for Player 1 that bounds the expected payoff by~$\delta + 2(m+1)/(d+1)$. Let~$\mathbf{M}:=[\mathbf{M}_{ij}]_{i,j}$ be the matrix such that~$\mathbf{M}_{ij}$ corresponds to the payoff when Player 1 outputs~$i$ and Player 2 outputs~$j$. By Lemma~\ref{lemma:s-uniform}, we have
\begin{align*}
\delta + 2(m+1)/(d+1) \geq \max_{\mathcal{Q} \in \mathfrak{Q}_{ds}} \min_{i} \mathbb{E}_{j \sim \mathcal{Q}} [\mathbf{M}_{ij}] \geq \omega(\mathbf{M}) - \varepsilon (\mathbf{M}_{\text{max}}-\mathbf{M}_{\text{min}}) \geq \omega(\mathbf{M}) - \varepsilon \ ,
\end{align*}
where~$\mathfrak{Q}_{ds}$ is the set of all~$ds$-uniform strategies for Player 2. It follows that~$\omega(\mathbf{M}) \leq 	\delta + 2(m+1)/(d+1) + \varepsilon$.

Now we use Lemma~\ref{lemma:s-uniform} in other way around. In fact, the number of possible choices for Player~1 is~$|S_0 \cup S_1| \leq 2^n$. Therefore, Lemma~\ref{lemma:s-uniform} asserts that there exists a~$s$-uniform strategy for Player 2 such that for any possibly mixed strategy for Player 1, the expected payoff is at most~$\varepsilon$-far from the value of the game~$\omega(\mathbf{M})$. In other words, for this particular strategy of Player 1, the expected payoff is always at most
$$\omega(\mathbf{M}) + \varepsilon \leq \delta + 2(m+1)/(d+1) + 2\varepsilon \, .$$
Recall that a $s$-uniform strategy is, by definition, a uniformly sampled element from a size-$s$ multiset of choices of the player. Note that Player~1 chooses a pair~$(K,T)$. Therefore, this strategy is essentially a uniform distribution over some multiset~$\{(K_1,T_1),\cdots,(K_s,T_s)\}$, which concludes the proof.

\end{proof}

\section{\Cute~Problems} \label{sec:lossy}

In this section, we first put forward a new abstraction, called $f$-distinguisher reduction, that is suitable for our analysis and implies definitions of $f$-reductions (adapted from Drucker~\cite{Dru15}) as well as Karp and non-adaptive Turing reductions. Then, by considering the lossiness property(as defined in Section~\ref{sec:disg}), we introduce $\cute$ problems which will be the core of our analysis in the subsequent sections. Our analysis applies to both classical and quantum reductions. For the sake of simplicity and generality, we only refer to quantum reductions and we explicitly highlight the distinction when necessary.

\subsection{$f$-Distinguisher Reductions}

A Karp decision-to-decision reduction~$\red$ from~$\Pi$ to~$\Sigma$ has the following property: $\chi_\Pi(x)=1$ if and only if~$\chi_\Sigma(\red(x))$ (up to some error). In our work, the target problem~$\Sigma$ is not restricted and does not play any roles. Therefore, we consider the following more general notion: a mapping~$\red$ is a reduction if there exists a (possibly unbounded) distinguisher~$\mathcal{D}$ that can tell~$\red(x)$ and~$\red(x')$ apart, when~$\chi_\Pi(x) \neq \chi_\Pi(x')$ (up to some error). A reduction is therefore a mapping that preserves the distinguishing power of the unbounded algorithm. \footnote{Note that and unbounded algorithm can always distinguish YES and NO instances of a problem by simply solving them.} In other words, it preserves some information about the inputs. 
When the reduction is to a search problem, there must also exist an inverting algorithm such that given~$x$ and the solution (or witness) of~$\red(x)$, outputs~$\chi_\Pi(x)$. To include such reductions, we generalize this definition once more by allowing the distinguisher to have one and only one of the instances~$x$ or~$x'$. To see how this helps, we give an example: the reduction from~$\textsc{ParamSat}$ to~$\textsc{MaxSat}$. In~$\textsc{ParamSat}$, an instance~$x:=(\varphi,k)$, with~$\varphi$ a CNF formula and~$k$ an integer, is a YES instance if and only if at least~$k$ clauses of~$\varphi$ are satisfiable. The $\textsc{MaxSat}$ problem asks to find an assignment that satisfies the maximum number of clauses. Consdier the decision-to-search reduction as follows: given an instance~$x:=(\varphi,k)$ of~$\textsc{ParamSat}$, the outputs of the reduction is~$\varphi$. By having~$k$ and an assigment~$w_\varphi$ satisfying the maximum number of clauses of~$\varphi$ (solution of~$\varphi$ as a $\textsc{MaxSat}$ instance), it computes~$\chi_\textsc{ParamSat}(x)$ by comparing~$k$ and the number of satified clauses by~$w_\varphi$. Note that it is necessary for the inverting algorithm to know~$k$. In this subsection, we show that such reductions can be captured by the generalized distinguisher reductions: 

\begin{definition}[$f$-Distinguisher Reduction]\label{def:q_red}
Let~$n,m$ be positive integers, and~$\mu:\mathbb{N} \rightarrow [0,1]$ be a function of~$n$. Let~$f:\{0,1\}^m \rightarrow \{0,1\}$, and~$\Pi$ be a promise problem. A $(\mu,f^m)$-distinguisher reduction for~$\Pi$ %
is a mapping~$\red:\{0,1\}^{*} \rightarrow S$, where $S = \{0,1\}^*$ (classical) or $S = \mathsf{MS}_{*}$ (quantum), for which there exists an unbounded distinguisher~$\mathcal{D}$, 
such that for all~$(x_1,\cdots,x_{m})$ and~$(x'_1,\cdots,x'_{m})$ in~$((\Pi_Y \cup \Pi_N) \cap \{0,1\}^n )^{m}$ where~$f\left(\chi_\Pi(x_1),\cdots,\chi_\Pi(x_{m}) \right) \neq f\left(\chi_\Pi(x'_1),\cdots,\chi_\Pi(x'_{m}) \right)$, we have
\begin{align*}
\mathbb{E}_{i \sim \mathcal{U}_{[m]}} \big| \Pr[1 \leftarrow \mathcal{D}( h_i ,\red(x_1,\cdots,x_{m}))] - \Pr[1 \leftarrow \mathcal{D}( h_i ,\red(x'_1,\cdots,x'_{m}))] \big|  \geq 1-2\mu(n) \, ,
\end{align*}
where~$h_i := (x_i,\{\chi_\Pi(x_j)\}_{j},\{\chi_\Pi(x'_{j})\}_{j})$. We call $\mu$ the error of the reduction. 

\end{definition}

\subsection*{$f$-Reductions} Drucker~\cite[Definition~8.2]{Dru15} defines an $f$-compression reduction for a promise problem~$\Pi$ in a somewhat similar fashion that we define $f$-distinguisher reductions: as a mapping that sends an instances~$x_1,\cdots,x_m$ of size~$n$ to a quantum state~$\rho$, such that there exists a binary measurement~$\mathcal{M}$ (not necessarily efficient) that outputs~$f\left(\chi_\Pi(x_1),\cdots,\chi_\Pi(x_{m}) \right)$ with probability more than~$1-\mu$. We adapt this definition as below.
\begin{definition}[$f$-Reduction]\label{def:red}
	Let~$n,m$ be positive integers, and~$\mu:\mathbb{N} \rightarrow [0,1]$ be a function of~$n$. Let~$f:\{0,1\}^m \rightarrow \{0,1\}$, and~$\Pi$ be a promise problem. A $(\mu,f^m)$-reduction for~$\Pi$ %
	is a mapping~$\red:\{0,1\}^{mn} \rightarrow S$, where $S = \{0,1\}^*$ (classical) or $S = \mathsf{MS}_{*}$ (quantum), for which there exists a family of %
	unbounded algorithms ~$\{\mathcal{M}_k\}_{k\in \mathbb{N}}$,
	such that for all~$(x_1,\cdots,x_{m}) \in ((\Pi_Y \cup \Pi_N) \cap \{0,1\}^n )^{m}$, 
	\begin{align*}
		\Pr\left[ \mathcal{M}(\red(x_1,\cdots,x_{m})) = f\left(\chi_\Pi(x_1),\cdots,\chi_\Pi(x_{m}) \right)\right] \geq 1-\mu(n) \, ,
	\end{align*}
	where the probability is taken over the randomness of~$\red$ and~$\mathcal{M}$. We call $\mu$ the error of the reduction.\footnote{When considering quantum mappings, $\mathcal{M}$ can be a binary quantum measurement.}
\end{definition}

In the following, we show that $f$-reductions are special cases of $f$-distinguisher reductions (per Definition~\ref{def:q_red}) when the hint~$h_i$ is set to be empty. 
\begin{lemma}
Let~$f:\{0,1\}^m \rightarrow \{0,1\}$, and~$\Pi$ be a promise problem. If~$\red$ is a~$(\mu,f^m)$-reduction for~$\Pi$%
, then~$\red$ is also a~$(\mu,f^m)$-distinguisher reduction for~$\Pi$. %
\end{lemma}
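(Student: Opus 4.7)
The plan is to show that the $f$-reduction's unbounded decoder $\mathcal{M}$ can be repurposed as a distinguisher $\mathcal{D}$ that simply ignores the hint $h_i$. This is almost immediate from the definitions, but we need to carefully convert the absolute-value bound on the success probability of $\mathcal{M}$ into the bound required by Definition~\ref{def:q_red}.

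Given a $(\mu, f^m)$-reduction $\red$ for $\Pi$ with associated algorithm $\mathcal{M}$, I would define $\mathcal{D}(h_i, \sigma) := \mathcal{M}(\sigma)$, i.e., the distinguisher discards its hint and runs $\mathcal{M}$ on the reduction's output. Fix tuples $(x_1,\ldots,x_m)$ and $(x'_1,\ldots,x'_m)$ in $((\Pi_Y \cup \Pi_N) \cap \{0,1\}^n)^m$ with $f(\chi_\Pi(x_1),\ldots,\chi_\Pi(x_m)) \neq f(\chi_\Pi(x'_1),\ldots,\chi_\Pi(x'_m))$; without loss of generality let the left-hand value be $1$ and the right-hand value be $0$. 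By the defining property of the $f$-reduction we have
\begin{align*}
\Pr[\mathcal{M}(\red(x_1,\ldots,x_m)) = 1] &\geq 1 - \mu(n), \\
\Pr[\mathcal{M}(\red(x'_1,\ldots,x'_m)) = 1] &\leq \mu(n).
\end{align*}
Hence for every index $i \in [m]$,
\begin{align*}
\bigl| \Pr[1 \leftarrow \mathcal{D}(h_i, \red(x_1,\ldots,x_m))] - \Pr[1 \leftarrow \mathcal{D}(h_i, \red(x'_1,\ldots,x'_m))] \bigr| \geq (1-\mu(n)) - \mu(n) = 1 - 2\mu(n),
\end{align*}
and since this bound is uniform in $i$, averaging over $i \sim \mathcal{U}_{[m]}$ preserves it, which is exactly the condition demanded by Definition~\ref{def:q_red}.

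There is essentially no obstacle; the only subtlety is the direction of the inequality in the case $f(\vec{\chi(x)}) = 0$, which is handled by swapping roles (or, equivalently, by using that $|{\Pr[\mathcal{M}=1 \mid \vec{x}]} - {\Pr[\mathcal{M}=1 \mid \vec{x}'}]| = |\Pr[\mathcal{M}=0 \mid \vec{x}] - \Pr[\mathcal{M}=0 \mid \vec{x}']|$). The quantum and classical cases are identical because $\mathcal{M}$ is allowed to be an unbounded (possibly binary-measurement) procedure in both settings, so no additional quantum machinery is needed.
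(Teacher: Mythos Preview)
Your proposal is correct and follows essentially the same approach as the paper: both define the distinguisher to be $\mathcal{M}$ (ignoring the hint $h_i$), and both observe that the two applications of the $\mu$-correctness guarantee give the $1-2\mu$ separation uniformly in $i$. The paper phrases the last step as ``revealing more information to the distinguisher does not decrease its advantage,'' which is equivalent to your observation that $\mathcal{D}$ can simply discard $h_i$.
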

\begin{proof}
Recall that for an~$f$-reduction there exists an algorithm~$\mathcal{M}$ such that
\begin{align*}
\Pr[\mathcal{M}(\red(x_1,\cdots,x_{m})) = f\left(\chi_\Pi(x_1),\cdots,\chi_\Pi(x_{m}) \right)] \geq 1 - \mu(n) \, ,
\end{align*}
which implies that~$\mathcal{M}$ can distinguish~$\red(x_1,\cdots,x_{m})$ from~$\red(x'_1,\cdots,x'_{m})$ with probability at least~$1-2\mu$. Therefore, there exists  an unbounded distinguisher~$\mathcal{D}$ such that for~$h_i$ per Definition~\ref{def:q_red}, we have 
\begin{align*}
&\mathbb{E}_{i \sim \mathcal{U}_{[m]}} \big| \Pr[1 \leftarrow \mathcal{D}( h_i ,\red(x_1,\cdots,x_{m}))] - \Pr[1 \leftarrow \mathcal{D}( h_i ,\red(x'_1,\cdots,x'_{m}))] \big| \\
&\hspace{2cm} \geq  \big| \Pr[1 \leftarrow \mathcal{\mathcal{M}}(\red(x_1,\cdots,x_{m}))] - \Pr[1 \leftarrow \mathcal{\mathcal{M}}(\red(x'_1,\cdots,x'_{m}))] \big| \geq 1-2\mu \, ,
\end{align*}
where for the first inequality we used the fact that revealing more information to the distinguisher does not decrease its advantage.
\end{proof}

\subsection*{Turing and Karp Reductions}
In this part, we focus on (non-adaptive) Turing and Karp reductions, demonstrating that they are $f$-distinguisher reductions. This supports the generality of Definition~\ref{def:q_red} and will be used in Section~\ref{sec:lossiness-wild}.

In the following, we first recall the definition of Karp and (non-adaptive) Turing reductions in Definitions~\ref{def:na-rnd-turing} and~\ref{def:rnd-karp}, and prove in Lemmas~\ref{lem:turing-is-f-dist} and~\ref{lem:karp-is-f-dist} that the two are $f$-distinguisher reductions.

\begin{definition}[Non-Adaptive Turing $f$-Reduction]\label{def:na-rnd-turing}
	Let $n$ be a positive integer and $\mu : \mathbb{N} \rightarrow [0,1]$ be a function of $n$. Let~$f:\{0,1\}^m \rightarrow \{0,1\}$, $\Pi$ be a promise problem, and $\Sigma$ be a promise or search problem. A non-adaptive $(\mu,f^m)$-Turing reduction from $\Pi$ to $\Sigma$ consists of an algorithm $R_{\text{Turing}}$ that on input $(x_1,\ldots,x_m)$, where $x_i \in \{0,1\}^n$ for $i \in [m]$, outputs $(y_1,\ldots,y_k) \in \{0,1\}^*$ and a circuit $C$ such that
	\begin{itemize}
		\item[-] if $\Sigma$ is a promise problem: $$ \Pr\left[ C(y_1,\chi_{\Sigma}(y_1),\ldots,y_k,\chi_{\Sigma}(y_k)) = f\left(\chi_\Pi(x_1),\ldots,\chi_\Pi(x_m)\right)\right] \geq 1-\mu(n)~.$$
		
		\item[-] if $\Sigma$ is a search problem: $$\Pr\left[C(y_1,w_{y_1},\ldots,y_k,w_{y_k}) = f\left(\chi_\Pi(x_1),\ldots,\chi_\Pi(x_m)\right)\right] \geq 1-\mu(n)~,$$
		where $w_{y_i}$ is the witness of $y_i$ in $\Sigma$ for all $i \in [k]$.
	\end{itemize}
\end{definition}

The definition above can be generalized in the following manner:~$y_i$'s can be instances of different problems~$\Sigma_i$'s instead of one single problem~$\Sigma$. All our results also hold in this setting. 

\begin{definition}[Karp $f$-Reduction]\label{def:rnd-karp}
	Let $n$ be a positive integer and $\mu : \mathbb{N} \rightarrow [0,1]$ be a function of $n$. Let $f:\{0,1\}^m \rightarrow \{0,1\}$ and $\Pi$ be a promise problem and $\Sigma$ be a promise or search problem. A $(\mu,f^m)$-Karp reduction from $\Pi$ to $\Sigma$ consists of an algorithm $R_{\text{Karp}}$ and a circuit $C$, where $R_{\text{Karp}}$ on input $(x_1,\ldots,x_m)$, where $x_i \in \{0,1\}^n$ for $i \in [m]$, outputs $y \in \{0,1\}^*$ such that 
	\begin{itemize}
		\item[-] if $\Sigma$ is a promise problem: $$ \Pr\left[ C(y,\chi_{\Sigma}(y)) = f\left( \chi_\Pi(x_1),\ldots,\chi_\Pi(x_m)\right)\right] \geq 1-\mu(n)~.$$
		
		\item[-] if $\Sigma$ is a search problem: $$\Pr\left[C(y,w_{y}) = f\left( \chi_\Pi(x_1),\ldots,\chi_\Pi(x_m)\right)\right] \geq 1-\mu(n)~,$$
		where $w_{y}$ is the witness of $y$ in $\Sigma$.
	\end{itemize}
\end{definition}

Note that in a Karp reduction, the circuit~$C$ does not depend on the instance~$x$. In fact, in a standard definition of a Karp reduction to a promise problem,~$C$ simply outputs~$\chi_\Pi(x)$.

In the following lemma, we show that all non-adaptive Turing reductions are $f$-distinguisher reduction.

\begin{lemma}[Turing $f$-Reduction is $f$-Distinguisher Reduction]\label{lem:turing-is-f-dist}
	Let $\mu : \mathbb{N} \rightarrow [0,1]$. Let $\Pi$ be a promise problem and $\Sigma$ be a promise or search problem. If $R_{\text{Turing}}$ is a non-adaptive $(\mu,f^m)$-Turing reduction (Definition~\ref{def:na-rnd-turing}) from $\Pi$ to $\Sigma$, then it is $(\mu,f^m)$-distinguisher reduction for $\Pi$.
	
	\begin{proof}
		The distinguisher~$\mathcal{D}$ in Figure~\ref{algo:distinguisher-na-rnd-turing} satisfies the definition of $(\mu,f^m)$-distinguisher reductions (Definition~\ref{def:q_red}). This is because if~$B = \left((y_1,\ldots,y_k) , C\right)$ is an output of~$\red_{\Turing}(x_1,\ldots,x_m)$, then \linebreak by the correctness of the reduction, it holds with high probability that\linebreak $C(y_1,\chi_{\Sigma}(y_1),\ldots,y_k,\chi_{\Sigma}(y_k)) = f\left(\chi_\Pi(x_1),\ldots,\chi_\Pi(x_m)\right)$, if $\Sigma$ is a promise problem, and similarly~$C(y_1,w_{y_1},\ldots,y_k,w_{y_k}) = f\left(\chi_\Pi(x_1),\ldots,\chi_\Pi(x_m)\right)$, if $\Sigma$ is a search problem. 
		
		\begin{algorithm}[h!]
			\caption{Distinguisher~$\mathcal{D}$ for non-adaptive Turing reductions.}\label{algo:distinguisher-na-rnd-turing} 
			\begin{algorithmic}[1]
				\vspace{1mm}
				\item[\bf Parameters:]  $n,m,f,\Pi,\Sigma$
				\item[\bf Input:]  A pair $(h_i,B)$, where~$h_i := (x_i,\{\chi_\Pi(x_j)\}_{j},\{\chi_\Pi(x'_{j})\}_{j})$ for a uniformly random~$i \in [m]$ and $B=\left((y_1,\ldots,y_k) , C\right)$.
				\item[\bf Promise:]  Either~$B\leftarrow \red(x_1,\ldots,x_m)$ or~$B\leftarrow \red(x'_1,\ldots,x'_m)$ for some $(x'_1,\ldots,x'_m) \in (\{0,1\}^n)^m$ such that~$f\left(\chi_\Pi(x_1),\ldots,\chi_\Pi(x_m)\right)\neq f\left(\chi_\Pi(x_1'),\ldots,\chi_\Pi(x_m')\right)$.
				\item[\bf Output:] A bit~$b$.
				
				\vspace{0.2cm}
				\State Parse $h_i := (x_i,\{\chi_\Pi(x_j)\}_{j},\{\chi_\Pi(x'_{j})\}_{j})$ and $B=\left((y_1,\ldots,y_k) , C\right)$.
				\If{$\Sigma$ is a promise problem: %
				}
				\State Compute $\chi_{\Sigma}(y_1),\ldots,\chi_{\Sigma}(y_k)$.\vspace{0.5mm}
				\State Compute $\widehat{b} \gets C(y_1,\chi_{\Sigma}(y_1),\ldots,y_k,\chi_{\Sigma}(y_k))$.
				\Else
				\State Compute the witnesses $w_{y_1},\ldots,w_{y_k}$ in $\Sigma$.\vspace{0.5mm}
				\State Compute $\widehat{b} \gets C(y_1,w_{y_1},\ldots,y_k,w_{y_k})$.
				\EndIf
				
				\If{$\hat{b} = f(\chi_\Pi(x_1),\ldots,\chi_\Pi(x_m))$}
				\State Return $1$.
				\Else 
				\State Return $0$.
				\EndIf
			\end{algorithmic} 
		\end{algorithm}
		
	\end{proof}
\end{lemma}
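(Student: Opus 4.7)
The plan is to construct an explicit unbounded distinguisher $\mathcal{D}$ that, on input a purported output of $R_{\text{Turing}}$ together with the hint $h_i$, reproduces the value $f(\chi_\Pi(\cdot))$ corresponding to whichever input tuple was actually used, and then accepts if and only if this value matches the candidate $v := f(\chi_\Pi(x_1),\ldots,\chi_\Pi(x_m))$.

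First I would define $\mathcal{D}$ as follows. Given $(h_i, B)$ with $h_i = (x_i, \{\chi_\Pi(x_j)\}_j, \{\chi_\Pi(x'_j)\}_j)$ and $B = ((y_1,\ldots,y_k), C)$, the distinguisher uses its unbounded power to compute $\chi_\Sigma(y_1),\ldots,\chi_\Sigma(y_k)$ (or, when $\Sigma$ is a search problem, to find valid witnesses $w_{y_1},\ldots,w_{y_k}$). It then evaluates $\hat{b} := C(y_1,\chi_\Sigma(y_1),\ldots,y_k,\chi_\Sigma(y_k))$ (or the witness-analogue), extracts $v$ and $v' := f(\chi_\Pi(x'_1),\ldots,\chi_\Pi(x'_m))$ from the hint $h_i$ (note $v \neq v'$ by hypothesis), and returns $1$ if $\hat{b} = v$ and $0$ otherwise.

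To conclude, I would analyze the two cases. If $B$ is sampled from $R_{\text{Turing}}(x_1,\ldots,x_m)$, then by the correctness guarantee of the Turing reduction $\hat{b} = v$ with probability at least $1-\mu$, so $\mathcal{D}$ accepts with probability at least $1-\mu$. If instead $B$ is sampled from $R_{\text{Turing}}(x'_1,\ldots,x'_m)$, the same correctness guarantee applied to the primed tuple yields $\hat{b} = v' \neq v$ with probability at least $1-\mu$, so $\mathcal{D}$ accepts with probability at most $\mu$. Since this bias holds for every fixed $i$, the expectation over $i \sim \mathcal{U}_{[m]}$ of the absolute difference in acceptance probabilities is at least $(1-\mu) - \mu = 1 - 2\mu$, which is exactly the bound required by Definition~\ref{def:q_red}.

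I do not anticipate a real obstacle, since unboundedness of $\mathcal{D}$ erases all efficiency constraints and the key observation is simply that the hint $h_i$ carries enough $\chi_\Pi$-values to tag both the ``intended'' and the ``alternative'' output value of $f$. The only minor bookkeeping is handling the search-problem case of Definition~\ref{def:na-rnd-turing} uniformly with the promise-problem case, which is transparent because $\mathcal{D}$ can brute-force valid witnesses in $\Sigma$.
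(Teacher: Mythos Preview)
Your proposal is correct and follows essentially the same approach as the paper: both construct an unbounded distinguisher that parses $B=((y_1,\ldots,y_k),C)$, uses brute force to compute the $\chi_\Sigma$-values (or witnesses), evaluates $\hat b$ via $C$, and outputs $1$ iff $\hat b$ matches $f(\chi_\Pi(x_1),\ldots,\chi_\Pi(x_m))$ as read off from the hint. Your analysis of the two cases is in fact spelled out more explicitly than in the paper, which simply appeals to ``correctness of the reduction''; in particular, your observation that the behavior of $\mathcal{D}$ is independent of the index $i$ (so the expectation over $i$ is trivially at least $1-2\mu$) is a useful clarification.
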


\begin{lemma}[$R_{\text{Karp}}$ is $f$-Distinguisher Reduction]\label{lem:karp-is-f-dist}
	Let $\mu : \mathbb{N} \rightarrow [0,1]$. Let $\Pi$ be a promise problem and $\Sigma$ be a promise or search problem. If $R_{\text{Karp}}$ is a $(\mu,f^m)$-Karp reduction (Definition~\ref{def:rnd-karp}) from $\Pi$ to $\Sigma$, then it is $(\mu,f^m)$-distinguisher reduction for $\Pi$.
	
	\begin{proof}
		Since any Karp reduction is a Turing reduction, the statement holds due to Lemma~\ref{lem:turing-is-f-dist}.
	\end{proof}
\end{lemma}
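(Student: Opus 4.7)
The plan is to deduce this lemma as an immediate corollary of the preceding Lemma~\ref{lem:turing-is-f-dist}, by observing that a Karp $f$-reduction is syntactically a non-adaptive Turing $f$-reduction with a single query. More precisely, given a $(\mu, f^m)$-Karp reduction $R_{\text{Karp}}$ with associated circuit $C$ (as in Definition~\ref{def:rnd-karp}), I would define a non-adaptive Turing reduction $R_{\text{Turing}}$ that on input $(x_1,\ldots,x_m)$ runs $R_{\text{Karp}}$ to obtain $y$, and outputs the single query $y_1 := y$ together with the same circuit $C$ (viewed now as taking as input a single pair $(y_1, \chi_\Sigma(y_1))$ or $(y_1, w_{y_1})$). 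The correctness guarantee of Definition~\ref{def:rnd-karp} then yields exactly the correctness guarantee required by Definition~\ref{def:na-rnd-turing} with $k=1$.

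Once this identification is made, Lemma~\ref{lem:turing-is-f-dist} directly implies that $R_{\text{Turing}}$, and hence $R_{\text{Karp}}$, is a $(\mu, f^m)$-distinguisher reduction for $\Pi$ in the sense of Definition~\ref{def:q_red}. No new analysis is needed, and in particular the distinguisher $\mathcal{D}$ that witnesses the property is precisely the one described in Figure~\ref{algo:distinguisher-na-rnd-turing}, specialized to $k=1$: it parses $B = (y, C)$, computes $\chi_\Sigma(y)$ (or a witness $w_y$), runs $C$ on the appropriate input to get a candidate bit $\widehat{b}$, and returns $1$ if and only if $\widehat{b} = f(\chi_\Pi(x_1),\ldots,\chi_\Pi(x_m))$.

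There is essentially no obstacle here; the only thing to be mindful of is making sure the reduction in Definition~\ref{def:rnd-karp} is genuinely subsumed by Definition~\ref{def:na-rnd-turing}, which is immediate from the definitions since the latter explicitly allows the number of queries $k$ to be any positive integer and allows the verification circuit $C$ to depend on the output of $R_{\text{Turing}}$ (here, trivially, it does not). Hence the proof reduces to a single sentence invoking Lemma~\ref{lem:turing-is-f-dist}, exactly as one would expect from the fact that Karp reductions are the canonical single-query, output-preserving specialization of non-adaptive Turing reductions.
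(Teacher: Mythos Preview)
Your proposal is correct and matches the paper's own proof exactly: both observe that a Karp $f$-reduction is a special case (single query, fixed circuit) of a non-adaptive Turing $f$-reduction and then invoke Lemma~\ref{lem:turing-is-f-dist}. The paper states this in one sentence, while you spell out the $k=1$ identification, but the argument is the same.
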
 

\subsection{$\Cute$ Problems}

To analyze the lossiness of $f$-distinguisher reductions, we fix the set of functions~$f$ to those ones that are invariant under permuting their inputs.
\begin{definition}[Permutation-Invariant Boolean Function]\label{def:prelim-p}
A Boolean function~$f:\{0,1\}^m \rightarrow \{0,1\}$ is called permutation-invariant if for every~$\pi \in \perm_m$, it holds that~$f(\pi(b_1,b_2,\cdots,b_m)) = f(b_1,b_2,\cdots,b_m)$.
\end{definition}

This set of functions is of great interest. The functions~$\AND,\OR$, and~$\MAJ$ that were considered in~\cite{Dru15,BBDD+20} are all non-constant permutation-invarant. Moreover, the (non-monotone) functions~$\PARITY$ and~$\MOD_k$ are of this type as well as~$\Th_k$. 

We use the following technical lemma about non-constant permutation-invariant functions. 
\begin{lemma}\label{lemma:minimum-p}
Let~$f:\{0,1\}^m \rightarrow \{0,1\}$ be a non-constant permutation-invariant function. Then there exists an integer~$1 \leq p \leq m$ such that
$$f(\ \underbrace{1,1,\cdots,1}_{p-1},0,0,\cdots,0) = 0\, , \quad \text{and} \quad f(\ \underbrace{1,1,\cdots,1}_{p},0,0,\cdots,0) = 1 \ .$$ 
We let~$p(f)$ denote the minimum choice of such an integer.
\end{lemma}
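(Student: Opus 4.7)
The plan is to reduce the multivariate statement about $f$ to a one-dimensional problem via permutation-invariance, and then exploit non-constancy to pin down the transition. First, I would note that since $f$ is permutation-invariant, $f(x_1,\ldots,x_m)$ depends only on the Hamming weight of its argument; hence every value of $f$ is captured by the sequence $g\colon \{0,1,\ldots,m\} \to \{0,1\}$ defined by $g(i) := f(\underbrace{1,\ldots,1}_i,\underbrace{0,\ldots,0}_{m-i})$, and the lemma reduces to producing $p \in \{1,\ldots,m\}$ with $g(p-1) = 0$ and $g(p) = 1$.

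Second, I would exploit non-constancy of $f$, which is equivalent to non-constancy of $g$: the finite binary sequence $g(0), g(1), \ldots, g(m)$ takes both values. Assume without loss of generality that $g(0) = 0$---if instead $g(0) = 1$, replace $f$ by its complement $\bar f$, which remains non-constant and permutation-invariant, and which enters the downstream applications (the reduction to $\SD$) only up to the harmless swap of YES and NO instances. Non-constancy of $g$ then ensures that $\{\, i \in \{1,\ldots,m\} : g(i) = 1 \,\}$ is nonempty, so we may define
$$p := \min\{\, i \in \{1,\ldots,m\} : g(i) = 1 \,\}.$$
By construction $g(p) = 1$, and $g(p-1) = 0$ because either $p = 1$ and $g(0) = 0$ by assumption, or $p > 1$ and $p-1$ lies strictly below the minimal index at which $g$ attains $1$.

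The only delicate step is orienting the transition: a generic non-constant binary sequence need not contain a $0 \to 1$ adjacent pair (as illustrated by $g = 1, 1, 0, 0$), so some normalization is indispensable. The WLOG complementation of $f$ addresses this cleanly, and everything else is an elementary minimum argument on a $\{0,1\}$-sequence of length $m+1$.
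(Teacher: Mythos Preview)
Your approach---reducing to the Hamming-weight sequence $g(i) = f(1^i,0^{m-i})$ and locating an adjacent pair with differing values---is exactly the paper's argument. The paper partitions $\{0,1\}^m$ by Hamming weight, observes that permutation-invariance makes $f$ constant on each class, and concludes from non-constancy that two consecutive classes have different evaluations; that is your first and second steps verbatim.

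You are in fact more careful than the paper on one point. The lemma as written demands a $0\to 1$ transition (not merely $g(p-1)\neq g(p)$), and your example $g = 1,1,0,0$ shows this orientation is not automatic. The paper's own proof stops at ``two consecutive classes with different evaluation'' and never addresses direction, so it is technically incomplete for the statement as literally formulated. Your WLOG complementation is the natural repair; strictly speaking it proves the conclusion for $f$ or $\bar f$ rather than for $f$ itself, but as you note, in the only place the lemma is invoked (the reduction to $\SD$/$\QSD$ in Theorem~\ref{th:pi-qsd}) this costs only a swap of the roles of YES and NO instances, which is harmless. So your proof is correct for the statement-as-used, and your added scrutiny exposes a small sloppiness in the paper's formulation.
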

\begin{proof}
The set~$\{0,1\}^m$ can be partitioned into~$m+1$ equivalence classes where each class consists of strings with the same number of~$1$'s. We note that the result of a permutation on an input falls in the same equivalence class. Therefore, since the function is permutation-invariant, then the evaluation of~$f$ over each input is determined by its class. Because the function is non-constant, there must exist two consecutive classes (the classes can be ordered by the number of~$1$'s that they represent) with different evaluation under~$f$. This completes the proof. 
\end{proof}

Finally, we introduce the notion of \emph{\cute~problems} which are promise problems that admit lossy $f$-distinguisher reductions where $f$ is a non-constant permutation-invariant function.

\begin{definition}[$\Cute$ Problems]\label{def:cute}
Let~$n,m$ be positive integers,~$\lambda,T,\gamma$ be positive reals, and~$\mu \in [0,1/2)$. 
	A promise problem~$\Pi$ is said to be~$(T,\mu,f^m,\lambda,\gamma)$-$\cute$ if there exists a non-uniform~$(\mu,f^m)$-distinguisher reduction~$\red$ (per Definition~\ref{def:q_red}) for~$\Pi$ with the following properties:
\begin{enumerate}
\item $f$ is some non-constant permutation-invariant function~$f:\{0,1\}^m \rightarrow \{0,1\}$, and
\item the reduction~$\red$ runs in time~$T$, and
\item and~$\red$ is splitting~$m\lambda$-lossy (per Definition~\ref{def:split-lossy}) supported on~$(\Pi_Y,\Pi_N)$, for all pairwise independent~$(2^9 mn/\gamma^3)$-uniform distributions over~$n$-bit strings. %
\end{enumerate}	
We explicitly mention the type of the reduction~$\red$ (classical or quantum) when the distinction is necessary. Also, we interchangeably say that the reduction~$\red$ as above is $\cute$.
\end{definition}

Note that the sparseness is controlled by the parameter~$\gamma$. In the original full-fledged lossiness,~$\gamma$ is exponentially small. However, to obtain one-way functions, it suffices that~$\gamma$ be roughly bounded by~$\poly(1/T,1/n)$ (see section~\ref{sec:application} for more details). When considering polynomial-time reductions, the distribution in indeed very sparse, with a support of polynomial size.

Recall~$\delta$ from the upper bound for splitting lossy functions in Lemma~\ref{lemma:perm-dis-dis}. We include it here for clarity as it will be frequently used in all sections.
\begin{definition}\label{def:delta}
We let~$\delta:\mathbb{R}^+ \rightarrow \mathbb{R}^+ $ to be the following function
\begin{align*}
		\delta(\lambda) \eqdef \min \left\{ \sqrt{\frac{\lambda \ln 2}{2}}, 1- 2^{-\lambda-2} \right\}\, .
	\end{align*}
\end{definition}

\section{Zero-Knowledgeness from $\Cute$ Problems} \label{sec:szk}

In this section, we show that lossy problems admit Karp reductions to the statistical difference problem or the quantum state distinguishability problem, depending on the type of the lossy reduction. We provide a fine-grained analaysis. When restricted to polynomial-time $\AND$-compression reductions, this recreates the result of~Drucker~\cite[Theorem 8.14]{Dru15}:  roughly, if a promise problem~$\Pi$ has a (quantum) polynomial-time $\AND$-compression reduction, then~$\Pi$ must belong to $\SZK$ (resp., $\QSZK$). 
Similar statement holds for the $\AND$- or $\MAJ$-lossy reductions (see~\cite{BBDD+20}).
We note that our result holds for any non-constant permutation-invarinat function, requires less restricted notion of lossiness, and allows superpolynomial-time reductions.

\begin{theorem}\label{th:pi-qsd} 
	Let~$\Pi$ be~$(T,\mu,f^m,\lambda,\gamma)$-$\cute$.
	Assume that~$\thetaSZK := (1-2\mu)^2 / (\delta(\lambda)+\gamma) > 1 $, with $\delta(\lambda)$ as in Definition~\ref{def:delta}. Then~$\Pi$ reduces to a problem in~$\QSZK$ in time~$ O((T+ m^2n)/(\gamma \log \thetaSZK)) $ and with a classical advice of size~$4 mn/\gamma$ as described in Algorithm~\ref{algo:F}.
	Moreover, the reduction is deterministic (but non-uniform) and ~$\Pi$ reduces to~$\SZK$ if~$\Pi$ is $\lossy$ with respect to a classical reduction.
\end{theorem}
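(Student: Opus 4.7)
The overall plan is to reduce an arbitrary instance $y$ of $\Pi$ to an instance of $\QSD_{\alpha,\beta}$ with $\alpha:=\delta(\lambda)+\gamma$ and $\beta:=1-2\mu$. Since $\beta^2/\alpha=\thetaSZK>1$ by hypothesis, polarization (Lemma~\ref{lemma:polarization}) then converts it into a $\QSD_{1/4,3/4}$ instance, which puts $\Pi$ in $\QSZK$ by Lemma~\ref{lemma:qszk-def}. If the underlying reduction $R$ is classical, every step becomes classical and randomized; using hardcoded multisets as advice turns the reduction itself into a deterministic non-uniform many-to-one reduction to $\SD$, hence to $\SZK$.

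First, I would invoke the extended disguising lemma (Lemma~\ref{lemma:perm-dis-dis}) on $R$ with $S_0:=\Pi_N\cap\{0,1\}^n$, $S_1:=\Pi_Y\cap\{0,1\}^n$, and parameters $\varepsilon=\Theta(\gamma)$ and $d=\Theta(m/\gamma)$, chosen so that $2(m+1)/(d+1)+2\varepsilon\le\gamma$ while keeping $ds=O(mn/\gamma^3)$ within the $2^9 mn/\gamma^3$ sparsity budget of Definition~\ref{def:cute}. This yields multiset families $\{K_a\}_{a\in[s]}\subseteq S_0$ and $\{T_a\}_{a\in[s]}\subseteq S_1$. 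Next, I would exploit the non-constant permutation-invariance of $f$: Lemma~\ref{lemma:minimum-p} gives a pivot $p:=p(f)$, and setting $m_0:=m-p$, $m_1:=p-1$ ensures that the reference input $(U_{K_a}^{\otimes(m_0+1)},U_{T_a}^{\otimes m_1})$ has $\chi_\Pi$-weight $p-1$ (hence $f$-value $0$), while replacing one $K_a$-draw by a YES instance boosts the weight to $p$ (hence $f$-value $1$). This combinatorial pivot is exactly what matches the two regimes of the disguising lemma with the two $f$-values the distinguisher reduction can separate.

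I would then build two non-uniform cq-circuits $C_0$ and $C_1[y]$. Each samples $a\sim[s]$, $\pi\sim\mathfrak{S}_m$, an index $i\sim[m]$, and the required multiset elements, and outputs a cq-register containing the classical hint $h_i$ in the sense of Definition~\ref{def:q_red} together with the quantum output of $R$ on the corresponding permuted tuple: $R(\pi(U_{K_a}^{\otimes(m_0+1)},U_{T_a}^{\otimes m_1}))$ in $C_0$ versus $R(\pi(U_{K_a}^{\otimes m_0},y,U_{T_a}^{\otimes m_1}))$ in $C_1[y]$. For $y\in\Pi_N$, convexity of $\|\cdot\|_1$ combined with Lemma~\ref{lemma:perm-dis-dis} gives $\|C_0-C_1[y]\|_1\le\delta(\lambda)+\gamma$. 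For $y\in\Pi_Y$, the two tuples have opposite $f$-values by the previous step, so the unbounded distinguisher guaranteed by the $(\mu,f^m)$-distinguisher property achieves advantage at least $1-2\mu$ on the cq-register, and by the operational meaning of trace distance $\|C_0-C_1[y]\|_1\ge 1-2\mu$. Each $|C_i|=O(T+m^2n/\gamma)$, coming from one $R$-evaluation plus $O(m)$ samples of $O(mn/\gamma)$-bit multisets, and polarization multiplies by $O(n\log n/\log\thetaSZK)$, reproducing the stated runtime; the non-uniform advice is the single hardcoded pair of multisets of size $O(mn/\gamma)$.

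The main obstacle is the classical-hint bookkeeping. The NO-case upper bound requires that the marginals of $h_i$ under $C_0$ and $C_1[y]$ coincide, so that the cq-trace distance collapses to the expected quantum trace distance already governed by Lemma~\ref{lemma:perm-dis-dis}; the YES-case lower bound requires that $h_i$ still carries enough information for the distinguisher of Definition~\ref{def:q_red} to realize its $1-2\mu$ advantage. Both constraints can be satisfied by sampling a common reference tuple in both circuits and substituting $y$ only at a prescribed coordinate, so that the hint registers differ only on an $O(1/m)$-probability event that is absorbed by the slack in $\gamma$; this alignment, together with the permutation-invariance pivot, is the crux of the argument.
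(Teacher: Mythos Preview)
Your overall strategy—constructing two circuits from the mildly-lossy reduction, using Lemma~\ref{lemma:perm-dis-dis} together with the pivot $p=p(f)$ from Lemma~\ref{lemma:minimum-p} for the NO-case upper bound, the $(\mu,f^m)$-distinguisher property for the YES-case lower bound, and then polarizing via Lemma~\ref{lemma:polarization}—is exactly what the paper does, with the same parameter choices $\varepsilon=\gamma/4$, $d=\lceil(m+1)/\varepsilon\rceil$, $s=\lceil n\ln 2/(2\varepsilon^2)\rceil$.

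The one substantive difference is that you embed the classical hint $h_i$ in the circuit output, whereas the paper's circuits $\widehat C_0,\widehat C_1$ output \emph{only} $R(\pi(\tilde x))$. In the paper the NO-case bound is then literally the quantity controlled by Lemma~\ref{lemma:perm-dis-dis}, so no hint-alignment issue arises at all; for the YES case the paper simply observes that trace distance is the optimal distinguishing advantage and that the unbounded distinguisher is free to know the circuit descriptions (hence $K_a,T_a,\pi,p,y$) and to manufacture whatever side information it needs before invoking Definition~\ref{def:q_red}.

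Your hint-inclusion creates the ``main obstacle'' you name, and your proposed fix does not close it. Appending $h_i$ can only \emph{increase} the trace distance relative to the bare output (tracing out the hint recovers the bare output, so data processing goes the wrong way for the NO case), and Lemma~\ref{lemma:perm-dis-dis} says nothing about the refined cq-state. The $O(1/m)$ argument you sketch controls at most the discrepancy in the \emph{marginal} of the hint register, not the extra distinguishing power coming from the correlation between the hint and the quantum register. And the ``common reference tuple'' device is incompatible with your YES-case plan: trace distance depends only on marginals, not on any coupling you set up, so if $C_1[y]$ outputs a hint drawn from the reference tuple rather than from its own tuple, that hint is no longer the $h_i$ for which Definition~\ref{def:q_red} guarantees the $1-2\mu$ advantage. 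The simplest repair is to drop $h_i$ from the circuit output and follow the paper.
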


\begin{algorithm}[h!]
	\caption{Reduction from~$\Pi$ to~$\QSD_{1/4,3/4}$.}\label{algo:F} 
	\begin{algorithmic}[1]
		\item[\bf Parameters:]  $n,m,\mu,f,\lambda,\gamma,\red,\Pi$ as in Definition~\ref{def:cute}. Further $$
		S_0 \eqdef \Pi_N \cap \{0,1\}^n, \ \ S_1 \eqdef \Pi_Y \cap \{0,1\}^n, \ \ \varepsilon \eqdef \frac{\gamma}{4}, \ \
		d\eqdef \left\lceil \frac{m+1}{\varepsilon} \right\rceil, \  s\eqdef \left\lceil \frac{n \ln 2}{2\varepsilon^2} \right\rceil \ ,$$
		and $K_1,\cdots K_s,T_1,\cdots,T_s$ as in Lemma~\ref{lemma:perm-dis-dis}.
		\item[\bf Input:] An instance~$y \in \{0,1\}^n$. 
		\item[\bf Advice:] $p:=p(f)$ as in Lemma~\ref{lemma:minimum-p}, $b_Y,b_N \in \{0,1\}$ respectively representing whether $\Pi_Y \cap \{0,1\}^n$ and $\Pi_N \cap \{0,1\}^n$ are empty. $K_a,T_a,\pi$ for some uniformly chosen $a\in[s]$ and $\pi\in\perm_m$. 
		\item[\bf Output:] A pair of circuits~$(C_0,C_1)$. 

		\vspace{0.5cm}
		\State If $b_N = 1$, return~$(Y_0,Y_1)$ where~$\|Y_0-Y_1\|_1 \leq 1/4$.
		\State If $b_Y = 1$, return~$(N_0,N_1)$ where~$\|N_0-N_1\|_1 \geq 3/4$.
		\State Let~$\widehat{C}_0$ be the following circuit: 
		it samples~$\widetilde{x} \sim \left( \unif_{K_a}^{\otimes m-p+1}, \unif_{T_a}^{\otimes p-1} \right)$, then it outputs~$R(\pi(\widetilde{x}))$.
		\State Let~$\widehat{C}_1$ be the following circuit:
		it samples~$\widetilde{x} \sim \left(\unif_{K_a}^{\otimes m-p},y,\unif_{T_a}^{\otimes p-1}\right)$, then it outputs~$R(\pi(\widetilde{x}))$.

		\State Compute~$(C_0,C_1) \leftarrow \pol(\widehat{C}_0,\widehat{C}_1,1^2)$.
		\State Return~$(C_0,C_1)$.
	\end{algorithmic} 
\end{algorithm}

\begin{remark}[Input-output type of the circuits]
Consider the two circuits~$(\widehat{C}_0,\widehat{C}_1)$ in Algorithm~\ref{algo:F}, Lines 3 and 4. When~$\red$ is a randomized reduction, the two circuits are also randomized. Part of their randomness input is used to sample~$\widetilde{x}$ and the other part is fed to~$\red$. Let~$\kappa$ be the size of the total randomness. For~$r\in \{0,1\}^\kappa$ and any~$b\in \{0,1\}$, we let~$\widehat{C}_b(r)$ denote the outcome of~$\widehat{C}_b$ given the randomness~$r$. On the other hand, when~$\red$ is quantum, the circuits will be mixed algorithms; classical randomness is required for sampling~$\widetilde{x}$. Let~$\kappa'$ be the size of total randomness.\footnote{Note that~$\kappa$ and~$\kappa'$ are possibly different depending on how much classical randomness~$\red$ requires.} For any~$r \in \{0,1\}^{\kappa'}$ and any~$b\in \{0,1\}$, we let the mixed outcome of~$\widehat{C}_b$ be~$\widehat{C}_b\ket{r,\vec{0}}$ where~$\ket{\vec{0}}$ is some appropriate-size ancilla, emphasizing its mixed classical-quantum nature. When it is not relevent, we drop the dependency on~$r$ for simplification.
\end{remark}

\begin{proof}[Proof of Theorem~\ref{th:pi-qsd}]
	 In the following, we assume that~$\red$ is quantum. The classical case is similar with the only difference being the type of the inputs and outputs of~$(\widehat{C}_0,\widehat{C}_1)$.
	
	Consider the case~$y \in \Pi_Y$. We bound the~$\ell_1$ distance (per Definition~\ref{def:ell1}) of the outcomes of~$\widehat{C}_0$ and~$\widehat{C}_1$ from below. 
	Sample a uniform coin~$b\sim U_{\{0,1\}}$, and let~$z \leftarrow \widehat{C}_b\ket{r,\vec{0}}$ where~$r$ follows the uniform distribution. We drop the dependency on~$r$ for simplification. Let~$\mathcal{A}$ be a (possibly unbounded) distinguisher that takes~$z$ as input and guesses which circuit ($\widehat{C}_0$ or~$\widehat{C}_1$) is used to compute~$z$.
	Let~$\mathcal{A}$ be the quantum distinguisher of the ($\mu,f^m)$-distinguisher reduction (that comes from Definition~\ref{def:cute}) for $\Pi$.
	On the one hand, if~$z$ is computed by~$\widehat{C}_0$, we have that~$\widetilde{x}:= (x_1,\cdots,x_m) \sim \left( \unif_{K_a}^{\otimes m-p+1}, \unif_{T_a}^{\otimes p-1} \right)$ with $K_a\subseteq \Pi_N\cap\{0,1\}^n$ and $T_a\subseteq \Pi_Y\cap\{0,1\}^n$. Then, since $\widetilde{x}$ contains $p-1$ YES instances by Lemma~\ref{lemma:minimum-p}, for any~$\pi \in \mathfrak{S}_m$, we have
	$$f( \pi( \chi_\Pi(x_1),\cdots,\chi_\Pi(x_m) )) = 0 \ .$$ 
	On the other hand, if~$z$ is computed by~$\widehat{C}_1$, we have that~$\widetilde{x}$ contains one more YES instance~$y\in \Pi_Y \cap \{0,1\}^n$, therefore, $$f( \pi( \chi_\Pi(x_1),\cdots,\chi_\Pi(x_m) )) = 1 \ .$$
	Moreover, revealing $\pi$ with the description of the circuits does not decrease the success probability of the distinguisher, thus by the quantum $f$-distinguishability of the reduction, we have
	\begin{align*}
		\|\widehat{C}_0\ket{\mathbf{0}}-\widehat{C}_1\ket{\mathbf{0}}\|_1&
		\geq \mathbb{E}_{i \sim \mathcal{U}_{[m]}} \left| \Pr[1 \leftarrow \mathcal{D}(x_i,\widehat{C}_0\ket{\mathbf{0}})] - \Pr[1 \leftarrow \mathcal{D}(x_i,\widehat{C}_1\ket{\mathbf{0}})] \right|\\
		&  \geq 1-2\mu(n) \, .
	\end{align*}
	
	Now, we discuss the case of~$y\in \Pi_N$. We consider a modification of the distinguishing game where the random variables~$a$ and~$\pi$ are also given to the distinguisher. 
	Revealing~$a,\pi$ along with~$z$ does not decrease the success probability of the distinguisher, thus we can bound the original distinguishing probability by the distinguishing probability of the new task. It holds that 
	\begin{align*}
		\|\widehat{C}_0\ket{\mathbf{0}}-\widehat{C}_1\ket{\mathbf{0}}\|_1 \leq \left\|  \red\left( \pi \left( \unif_{K_a}^{\otimes m-p+1}, \unif_{T_a}^{\otimes p-1}\right) \right)  - \red \left( \pi \left(
		\unif_{K_a}^{\otimes m-p},y,\unif_{T_a}^{\otimes p-1} \right) \right) \right\|_1,
	\end{align*}
	By taking the expectation over~$a$ and~$\pi$, we have
	\begin{align}\label{eq:argument_dis-dis}
		\|\widehat{C}_0\ket{\mathbf{0}}-\widehat{C}_1\ket{\mathbf{0}}\|_1 &\leq \mathbb{E}_{a \sim \unif_{[s]},\pi \sim \perm_m} \Big[  \Big\|  \red\left( \pi \left( \unif_{K_a}^{\otimes m-p+1}, \unif_{T_a}^{\otimes p-1}\right) \right)  \\
		& \hspace{4cm}- \red \left( \pi \left( \unif_{K_a}^{\otimes m-p},y,\unif_{T_a}^{\otimes p-1} \right) \right) \Big\|_1  \Big] \ . \nonumber
	\end{align}
	By our choice of~$\varepsilon$,~$d$,~$s$,~$K_1,\ldots,K_s,T_1,\cdots,T_s$ and Lemma~\ref{lemma:perm-dis-dis}, we conclude that
	\begin{align*}
		\|\widehat{C}_0\ket{\mathbf{0}}-\widehat{C}_1\ket{\mathbf{0}}\|_1 &\leq\delta+\frac{2(m-p+1)}{d+1}+2\varepsilon\leq\delta+\gamma \, .\\
	\end{align*}
	
	Let~$\alpha := (\delta+\gamma)$ and~$\beta := (1-2\mu)$. Above, we proved that~$(\widehat{C}_0,\widehat{C}_1)$ is an instance of~$\QSD_{\alpha,\beta}$ of size~$(T+ m^2n)/\gamma$. By assumption, we have~$\thetaSZK=\beta^2 / \alpha$. Therefore, the runtime of~$\pol(\widehat{C}_0,\widehat{C}_1,1^2)$ and its output size are both of~$O((T+ m^2n)/(\gamma \log \thetaSZK) )$ according to Lemma~\ref{lemma:polarization}. 
	
\end{proof}

\section{One-Way Functions from $\Cute$ Problems}  \label{sec:efi-owf}

In this section and in Section~\ref{sec:owsg-lossy}, we discuss how $\cute$ problems can be used to build cryptographic primitives. In Theorem~\ref{th:efi}, we construct EFI schemes.  The statement allows both classical reductions and quantum reductions. We immediately obtain one-way functions (or quantum bit commitments if the reduction is quantum), by taking into account the known transforms from EFI schemes (see Remark~\ref{remark:efi-to-owf}). However, the required condition on the lossiness is highly restrictive. More precisely, ~$\lambda$ must be a small constant.
In Theorem~\ref{th:owf} and~\ref{th:owsg}, we explain how one can tackle this issue using different constructions. The construction in Theorem~\ref{th:owf} is inspired by~\cite{BBDD+20}, and resist adaptations to the quantum settings. On the other hand, the construction in Theorem~\ref{th:owsg} is quite flexible and allows obtaining one-way state generators. Finally, we note that the latter does imply one-way functions, too, but for simplicity, we only discuss one-way state generators.

\begin{theorem}\label{th:efi}
Let~$\Pi$ be~$(T,\mu,f^m,\lambda,\gamma)$-$\cute$.
Assume that~$\thetaEFI:= (1-2\mu) - 3(\delta(\lambda) + \gamma) > 0$, with $\delta(\lambda)$ as in Definition~\ref{def:delta}. Then there exists an algorithm~$\mathsf{EFI}$ that runs in~$O(T+m^2n\gamma^{-1})$ and an oracle algorithm~$\mathcal{C}$, such that for any algorithm~$\mathcal{A}$ one and only one of the following statements holds:
\begin{itemize}
\item[I.] $\mathcal{C}^\mathcal{A}$ solves~$\Pi \cap \{0,1\}^n$ in time~$O((T+m^2n\gamma^{-1})\thetaEFI^{-2})$ with~$O(\thetaEFI^{-2})$ queries to~$\mathcal{A}$,
\item[II.] $\mathsf{EFI}$ is~$(1-2\mu,1-2\mu-\thetaEFI/2 )$-EFI for~$\mathcal{A}$.
\end{itemize}
Moreover, if the $\cute$ reduction of~$\Pi$ is classical,~$\mathsf{EFI}$ would also be classical.
\end{theorem}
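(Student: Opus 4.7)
The construction will reuse the circuits $\widehat{C}_{0}$ and $\widehat{C}_{1}^{y}$ defined in Algorithm~\ref{algo:F}, and extend its non-uniform advice $h$ with a fixed witness $y^{*} \in \Pi_{Y} \cap \{0,1\}^{n}$ together with bits $b_{Y}, b_{N}$ marking the trivial cases where $\Pi_{Y} \cap \{0,1\}^{n}$ or $\Pi_{N} \cap \{0,1\}^{n}$ is empty. I set $\mathsf{EFI}_{h}(1^{n}, 0)$ to output $\widehat{C}_{0}\ket{\mathbf{0}}$ and $\mathsf{EFI}_{h}(1^{n}, 1)$ to output $\widehat{C}_{1}^{y^{*}}\ket{\mathbf{0}}$; in the trivial cases $\mathcal{C}$ simply ignores $\mathcal{A}$ and answers deterministically from $h$. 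Since sampling from $\unif_{K_{a}}$ and $\unif_{T_{a}}$ requires $O(mn/\gamma)$ bits per element and $\red$ runs in time $T$, both $\mathsf{EFI}$-circuits have size $O(T + m^{2} n / \gamma)$, matching the claimed runtime.

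The statistical-distance guarantee is inherited from the YES-case argument in Theorem~\ref{th:pi-qsd}: because $y^{*} \in \Pi_{Y}$, the multiset feeding $\widehat{C}_{0}$ contains $p(f) - 1$ YES instances whereas that of $\widehat{C}_{1}^{y^{*}}$ contains $p(f)$, so by Lemma~\ref{lemma:minimum-p} the $f$-values differ, and the $f$-distinguisher supplied by the $\cute$ reduction certifies $\|\rho_{0} - \rho_{1}\|_{1} \geq 1-2\mu$.

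For the dichotomy, assume $\mathcal{A}$ breaks the EFI, i.e., $|\Pr[\mathcal{A}(\rho_{0}) = 1] - \Pr[\mathcal{A}(\rho_{1}) = 1]| > 1 - 2\mu - \thetaEFI/2$. I define $\mathcal{C}^{\mathcal{A}}(\hat{y})$ to build $\widehat{C}_{1}^{\hat{y}}\ket{\mathbf{0}}$ using the same advice, estimate both $p_{0} \eqdef \Pr[\mathcal{A}(\widehat{C}_{0}\ket{\mathbf{0}}) = 1]$ and $p_{\hat{y}} \eqdef \Pr[\mathcal{A}(\widehat{C}_{1}^{\hat{y}}\ket{\mathbf{0}}) = 1]$ within additive error $\thetaEFI/8$ using $O(\thetaEFI^{-2})$ independent calls to $\mathcal{A}$ (Hoeffding), and return $\chi_{\Pi}(\hat{y}) = 1$ iff $|p_{0} - p_{\hat{y}}|$ exceeds a threshold placed midway between the two regimes below. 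This costs $O((T + m^{2} n / \gamma) \cdot \thetaEFI^{-2})$ total time and $O(\thetaEFI^{-2})$ queries to $\mathcal{A}$, as stated.

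Correctness of the decision rule rests on the extended disguising lemma applied in two directions. For $\hat{y} \in \Pi_{N}$, its first bullet (with $m_{0} = m - p(f)$, $m_{1} = p(f) - 1$, and $y = \hat{y} \in S_{0}$) yields $\|\widehat{C}_{0}\ket{\mathbf{0}} - \widehat{C}_{1}^{\hat{y}}\ket{\mathbf{0}}\|_{1} \leq \delta(\lambda) + \gamma$, hence $|p_{0} - p_{\hat{y}}| \leq \delta(\lambda) + \gamma$. For $\hat{y} \in \Pi_{Y}$, applying the second bullet of Lemma~\ref{lemma:perm-dis-dis} twice (once with $y = y^{*}$ and once with $y = \hat{y}$) and the triangle inequality gives $\|\widehat{C}_{1}^{y^{*}}\ket{\mathbf{0}} - \widehat{C}_{1}^{\hat{y}}\ket{\mathbf{0}}\|_{1} \leq 2(\delta(\lambda) + \gamma)$, whence $|p_{0} - p_{\hat{y}}| \geq (1 - 2\mu - \thetaEFI/2) - 2(\delta(\lambda) + \gamma)$. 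The gap between the two regimes equals exactly $\thetaEFI/2$ by definition, so the correct verdict is recovered with probability $\geq 2/3$. The only delicate point to handle is that the disguising-lemma bounds hold in expectation over $(a, \pi) \sim \unif_{[s]} \times \perm_{m}$, not for every fixed choice; I address this by treating $(a, \pi)$ as part of the observable randomness of each $\widehat{C}_{b}$-circuit (following the convention already implicit in the proof of Theorem~\ref{th:pi-qsd}), so that the trace distances to be bounded are themselves expectations over $(a, \pi)$ and the above inequalities transfer directly. The classical case is identical with classical distributions replacing quantum states.
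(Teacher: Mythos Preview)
Your proposal is correct and follows the same architecture as the paper: build the pair from the circuits of Algorithm~\ref{algo:F}, certify statistical distance via the $f$-distinguisher on a YES input, and turn an EFI-breaker into a decider for $\Pi$ by thresholding its behaviour on $\widehat{C}_1^{\hat y}$ using the disguising lemma and Hoeffding.

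The one noteworthy difference is how $\rho_1$ is produced. The paper samples $y\sim\unif_{T_a}$ each time, so $\rho_1$ is literally the ``center'' distribution $\red(\pi(\unif_{K_a}^{\,m-p},\unif_{T_a}^{\,p}))$ of the second bullet in Lemma~\ref{lemma:perm-dis-dis}; comparing $C_1'^{\hat y}$ to $\rho_1$ then costs a single $(\delta+\gamma)$. You instead hard-wire a fixed $y^*\in\Pi_Y$, which forces two invocations of that bullet (through $y^*$ and through $\hat y$) and hence a $2(\delta+\gamma)$ loss in the YES case. Your arithmetic absorbs this cleanly---the YES/NO gap is still $\thetaEFI/2$---so nothing is lost in the conclusion, but the paper's choice explains why its own gap analysis (via the guessing game $\mathcal B$) incurs only $\tfrac32(\delta+\gamma)$ against $\Pr[\mathcal B=1]$. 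Your explicit remark that $(a,\pi)$ must be treated as sampled randomness (so that the trace distances you bound are exactly the disguising-lemma expectations) matches what the paper does implicitly in both Theorem~\ref{th:pi-qsd} and the present proof.
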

\color{black}

\begin{remark}

From the conditions of Theorem~\ref{th:efi}, it must hold that~$\delta < 1/3$, therefore,~$\lambda$ must be small. Most notably, the statement does not include perfect~$1$-$\cute$ reductions. However, this can be overcome as follows: Let~$\red$ be~$1$-$\cute$ and perfect. Consider the new reduction~$\red'$ that with probability~$0.35$ randomly outputs a YES or a NO instance of the target language (note that instance can be given as advice). Otherwise, it applies~$\red$. The new reduction is~$0.35$-$\cute$ with error~$0.375$ which satisfies the condition~$(1-2\mu) - 3(\delta(\lambda) + \gamma) > 0$. 
\end{remark}

\begin{proof}
We prove the case where~$\red$ is quantum. The classical case can be done similarly. 
Let~$\Pi$ be the promised problem in the statement. Let~$\mathcal{F}$ denote Algorithm~\ref{algo:F} that returns the two circuits in  Lines 3 and 4, and $h$ be its advice as follows: $h := (K_a,T_a,p,b_Y,b_N) \, .$
The construction of the non-uniform EFI is the following:
	\begin{itemize}
		\item $\efi_h(1^n,b)$: %
		Sample~$y \sim \unif_{T_a}$. Compute~$(\widehat{C}_0,\widehat{C}_1) \leftarrow \mathcal{F}(y)$. 
		Return the state~$\widehat{C}_b\ket{\vec{0}}$.%
	\end{itemize}
Note that~$T_a$ has only YES instances. 

The two output states are statistically far. By Theorem~\ref{th:pi-qsd}, the pair of circuits~$(\widehat{C}_0,\widehat{C}_1) \leftarrow \mathcal{F}(y)$ is a~$\QSD_{1-2\mu,\delta+\gamma}$ instance. Since~$y\in \Pi_Y$, then~$\|\widehat{C}_0\ket{\vec{0}}-\widehat{C}_1\ket{\vec{0}}\|_1 \geq 1-2\mu$. This concludes the statistical distinguishability. 
	
On the computational indistinguishability, we will argue by contradiction. Assume there exists an adversary $\mathcal{A}$ that distinguishes the $\efi$ states~$\widehat{C}_b\ket{\vec{0}}$ with  advantage~$\nu$ that is to be determined later. Let us consider an algorithm~$\mathcal{B}$ targetting~$\Pi$ as follows: given an instance $z\in\{0,1\}^n$, it first computes~$(C'_0,C'_1) \leftarrow \mathcal{F}(z)$, then it samples a uniform coin $b\sim\unif_{\{0,1\}}$ and relays $C'_b\ket{\vec{0}}$ to the distinguisher $\mathcal{A}$. Finally, $\mathcal{B}$ will return $1$ if $\mathcal{A}$ returns $b$, and $0$ otherwise.
	
\smallskip
	\noindent\underline{Case $z\in\Pi_Y$}: Suppose that $z$ has been sampled from~$\unif_{T_a}$. Then, the (mixed) state $C'_b\ket{\vec{0}}$ that we deliver to the adversary $\mathcal{A}$ would be identical to the $\efi$ state $\widehat{C}_b\ket{\vec{0}}$. Therefore, from the $\nu$-distinguishability of $\efi$ states for $\mathcal{A}$, we would have
	\begin{equation*}\label{eq:b_1_notin}
		\Pr(\mathcal{B}(z)=1)=\Pr(\mathcal{A}(P_b\ket{0})=b)\geq\frac{1}{2}+\frac{\nu}{2} \ .
	\end{equation*}
	We know that~$z$ does not necessarily follow the distribution~$\unif_{T_a}$. However, one can argue that $\widehat{C}_b$ is not far from $C'_b$ by leveraging the disguising lemma. We have that

	\begin{align*}
	\|\widehat{C}_0 \otimes \widehat{C}_1 \ket{\vec{0},\vec{0}}- C'_0 \otimes C'_1\ket{\vec{0},\vec{0}}\|_{1} &\leq\|\widehat{C}_1\ket{\vec{0}}-C'_1\ket{\vec{0}}\|_1 \\
		&\leq\mathbb{E}_{a \sim \unif_{[s]},\pi \sim \perm_m} \Big[  \Big\|  \red\left( \pi \left( \unif_{K_a}^{\otimes m-p}, \unif_{T_a}^{\otimes p}\right) \right) \\
		& \hspace{3.5cm} - \red \left( \pi \left( \unif_{K_a}^{\otimes m-p},y,\unif_{T_a}^{\otimes p-1} \right) \right) \Big\|_1  \Big] \\		
		&\leq \delta+\frac{2(m+1)}{d+1}+\varepsilon\\
		&\leq \delta+ \gamma \ ,
	\end{align*}
	where we used the fact that~$\widehat{C}_0 = C'_0$, properties of trace distance, and Lemma~\ref{lemma:perm-dis-dis}. Using the fact that the trace distance is decreasing under partial trace, for any~$b\in \{0,1\}$, we obtain 
	\begin{equation*}\begin{split}
		\|\widehat{C}_b\ket{\vec{0}}-C'_b\ket{\vec{0}}\|_1  \leq  \delta+ \gamma \ .
	\end{split}\end{equation*}
	The adversary $\mathcal{A}$ can thus distinguish the general $C'_b$ with probability
	\begin{equation}\label{eq:B-No}
	\begin{split}
		\Pr(\mathcal{B}(z)=1)=\Pr(\mathcal{A}(C'_b\ket{\vec{0}})=b)&=\frac{1}{2}+\frac{1}{2}\left|\Pr_{x\leftarrow C'_0}\left(\mathcal{A}(x)=1\right)-\Pr_{x\leftarrow C'_1}\left(\mathcal{A}(x)=1\right)\right|\\
		&\geq \frac{1}{2}+ \frac{1}{2}\left(\left|\Pr_{x\leftarrow \widehat{C}_0}\left(\mathcal{A}(x)=1\right)-\Pr_{x\leftarrow \widehat{C}_1}\left(\mathcal{A}(x)=1\right)\right|\right.\\
		&\hspace{2cm} -\left|\Pr_{x\leftarrow \widehat{C}_0}\left(\mathcal{A}(x)=1\right)-\Pr_{x\leftarrow C'_0}\left(\mathcal{A}(x)=1\right)\right|\\
		&\hspace{2.5cm} \left.-\left|\Pr_{x\leftarrow \widehat{C}_1}\left(\mathcal{A}(x)=1\right)-\Pr_{x\leftarrow C'_1}\left(\mathcal{A}(x)=1\right)\right|\right)\\
		&\geq \frac{1}{2}+\frac{\nu}{2}-\delta-\gamma \ . 
	\end{split}
	\end{equation}

\smallskip
	\noindent\underline{Case $z\in\Pi_N$}: By Theorem~\ref{th:pi-qsd}, the two circuits~$(C'_0,C'_1)\leftarrow\mathcal{F}(z)$ are close in trace distance, namely, 
	 $$\|C'_0\ket{\vec{0}}-C'_1\ket{\vec{0}}\|_1\leq \delta + \gamma \ . $$ 
	 Recall that the trace distance provides the maximum distinguishability advantage for \emph{any} distinguisher, including $\mathcal{A}$, therefore
	\begin{equation}\label{eq:B-YES}
		\Pr(\mathcal{B}(z)=1)=\Pr(\mathcal{A}(C'_b\ket{\vec{0}})=b)\leq\frac{1}{2}(1+\|C_0'\ket{\vec{0}}-C_1'\ket{\vec{0}}\|_1)\leq\frac{1}{2}(1+\delta + \gamma) \ .
	\end{equation}
	
	\smallskip
	\noindent\underline{Conclusion}: We need one more algorithm that will leverage the capacity of~$\mathcal{B}$ to decide $\Pi$. 
	Let~$k\in\mathbb{N}$, and~$\mathcal{C}$ be an algorithm that on instance $z\in\{0,1\}^n$, runs~$\mathcal{B}(z)$ for~$k$ times independently. Let~$b_1,\ldots,b_k$ be~$k$ corresponding independent outputs of~$\mathcal{B}(z)$. Then~$\mathcal{C}$ returns as follows:
	\begin{equation*}
		\begin{cases}
			0 & \text{if } \left|\frac{1}{k}\sum_i b_i-\frac{1}{2}\right|\geq \tau,\\
			1 & \text{otherwise},
		\end{cases}
	\end{equation*}
	where $\tau(n)$ is chosen such that
	\begin{equation}
		\tau\eqdef \frac{\nu}{4}-\frac{3 (\delta + \gamma)}{4} \ .
	\end{equation}
	 Then, we have 
	\begin{equation*}
		\begin{split}
			\Pr(\mathcal{C}(z)=0 | z\in\Pi_Y)&= \Pr\left(\left|\frac{1}{k}\sum_i b_i-\frac{1}{2}\right|\geq \tau \ \Big| \ b_1,\ldots,b_k\leftarrow\mathcal{B}(z),z\in\Pi_Y\right)\\
			&\geq\Pr\left(\frac{1}{k}\sum_i b_i\geq \frac{1}{2}+ \tau \ \Big| \ b_1,\ldots,b_k\leftarrow\mathcal{B}(z),z\in\Pi_Y\right)\\
			&\geq \Pr\left(\frac{1}{k}\left(\sum_i b_i- \mathbb{E}(\mathcal{B}_i(z))\right)\geq -\tau\ \Big| \ b_1,\ldots,b_k\leftarrow\mathcal{B}(z),z\in\Pi_Y\right)\\
			&\geq 1-\exp(-2k\tau^2) \ ,
		\end{split}
	\end{equation*}
	where we used $\mathbb{E}(\mathcal{B}_i(z))-\tau\geq \frac{1}{2}+ \tau$ for $z\in\Pi_Y$ by Equation~\eqref{eq:B-No} in the second inequality, and Hoeffding's lemma in the last inequality. On the other hand, we have
	\begin{equation*}
		\begin{split}
			\Pr(\mathcal{C}(z)=1|z\in\Pi_N)&=\Pr\left(\left|\frac{1}{k}\sum_i b_i-\frac{1}{2}\right|< \tau \ \Big| \ b_1,\ldots,b_k\leftarrow\mathcal{B}(z),z\in\Pi_N\right)\\
			&= \Pr\left(\left|\frac{1}{k}\sum_i b_i-\frac{1}{k}\sum_i \mathbb{E}(\mathcal{B}_i(z))\right|< \tau \ \Big| \ b_1,\ldots,b_k\leftarrow\mathcal{B}(z),z\in\Pi_N\right)\\
			&\geq 1-\exp(-2k\tau^2)\ ,
		\end{split}
	\end{equation*}
	where we once again used Hoeffding's lemma and Equation~\eqref{eq:B-YES}.
	For~$k := 1/\tau^2$, any sufficiently large~$n \in \mathbb{N}$, and any~$z \in (\Pi_Y \cup \Pi_N) \cap \{0,1\}^n$, it holds that
	\begin{equation*}
		\Pr(\mathcal{C}(z)=\chi_\Pi(z)) \geq 1-\exp\left(-2k\tau^2\right)\geq\frac{2}{3} \ ,
	\end{equation*}
	This breaks the worst-case hardness of~$\Pi$.
	
	Since~$\thetaEFI := (1-2\mu) - 3(\delta+\gamma)$, we can set~$\nu := (1-2\mu) - \thetaEFI/2$, and the number of repetitions in the last step becomes
	$$1/\tau^2  = \frac{4^2}{(\nu-3 (\delta + \gamma))^2} = \frac{4^3}{\thetaEFI^2} \, .$$
	
	\smallskip\noindent\underline{Runtime}: 
	We compute the runtime of~$\efi$ as follows. It first samples~$2m$ instances from~$\mathcal{U}_{K_a}$ (or~$\mathcal{U}_{T_a}$), applies the permutations~$\pi$ twice to each half of the samplings, and computes~$\red$ on each half. One single sampling from~$\mathcal{U}_{K_a}$ (or~$\mathcal{U}_{T_a}$) takes time~$O(dn)$, where~$d\leq (m+1)/\gamma$ is the size of~$\mathcal{U}_{K_a}$ and~$n$ is the size of each element in~$\mathcal{U}_{K_a}$. The permutations can be applied in time~$O(m)$. Therefore, the total runtime of~$\efi$ is~$O(T+m^2n/\gamma)$.
	
	 Note that~$\mathcal{C}$ runs~$\mathcal{B}$ for~$O(1/{\thetaEFI}^2)$ times. Each execution of~$\mathcal{B}$ evaluates~$\widehat{C}_b$, queries~$\mathcal{A}$, and performs an equality check. All of this takes~$O((T+m^2n/\gamma)/{\thetaEFI}^2)$ with~$O(1/{\thetaEFI}^2)$ queries to~$\mathcal{A}$.
	
\end{proof}

Next, we consider larger values of~$\lambda$, for instance when~$\lambda = \Omega(\log n)$. The following concerns only \emph{classical} one-way functions.

\begin{theorem}\label{th:owf}
Let~$\Pi$ be~$(T,\mu,f^m,\lambda,\gamma)$-$\cute$ with a classical reduction.
Assume that~$\thetaOWF:= (1-10\mu) - (\delta(\lambda) + \gamma) > 0$, with $\delta(\lambda)$ as in Definition~\ref{def:delta}. 
Then there exists an algorithm~$\mathsf{F}$ that runs in time~$O(T+ m^2n\gamma^{-1} )$ and an oracle algorithm~$\mathcal{C}$, such that for any algorithm $\mathcal{A}$ one and only one of the following holds:
\begin{itemize}
\item[I.] $\mathcal{C}^\mathcal{A}$ solves~$\Pi  \cap \{0,1\}^n$ in time~$O((T+m^2n\gamma^{-1})\thetaOWF^{-2})$ with~$O(\thetaOWF^{-2})$ queries to~$\mathcal{A}$,
\item[II.] $\mathsf{F}$ is a~$(1-\thetaOWF/2)$-OWF for~$\mathcal{A}$.
\end{itemize}

\end{theorem}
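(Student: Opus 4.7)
The plan is to adapt the construction from Equation~\eqref{intro:owf-construction} in the technical overview. Using the same non-uniform advice $(K_a, T_a, p, \pi, b_Y, b_N)$ as Algorithm~\ref{algo:F} augmented by a fixed element $y^\star \in K_a \subseteq \Pi_N$, I define $\mathsf{F}(b,r)$ to evaluate $\widehat{C}_0(r)$ when $b=0$ and $\widehat{C}_1[y^\star](r)$ when $b=1$, where $\widehat{C}_0$ and $\widehat{C}_1[\cdot]$ are the circuits from Lines~3--4 of Algorithm~\ref{algo:F}. Because $y^\star \in \Pi_N$, Lemma~\ref{lemma:perm-dis-dis} applied exactly as in the NO-instance analysis inside the proof of Theorem~\ref{th:pi-qsd} yields $\|\widehat{C}_0 - \widehat{C}_1[y^\star]\|_1 \leq \delta(\lambda) + \gamma$, so the two branches of $\mathsf{F}$ induce almost-identical output distributions; the runtime $O(T + m^2n/\gamma)$ follows the accounting of Theorem~\ref{th:efi}.

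The algorithm $\mathcal{C}^\mathcal{A}$, on input $\hat{y}$, first constructs both $\widehat{C}_0$ and $\widehat{C}_1[\hat{y}]$ (by substituting $\hat{y}$ for $y^\star$ in the advice) and then runs $k = \Theta(\thetaOWF^{-2})$ independent trials: it samples $b \sim \unif_{\{0,1\}}$ and fresh randomness $r$, computes $z = \widehat{C}_b[\hat{y}](r)$ (where $\widehat{C}_0[\hat{y}] := \widehat{C}_0$ is independent of $\hat{y}$), queries $(b'',r'') \gets \mathcal{A}(z)$, and records a \emph{hit} when $\mathsf{F}(b'',r'') = z$. Comparing the empirical hit rate $\widehat{p}$ to a threshold $\tau$ placed strictly between the two expected per-trial rates $p_N$ (NO case) and $p_Y$ (YES case), $\mathcal{C}^\mathcal{A}$ outputs ``$\hat{y} \in \Pi_N$'' if $\widehat{p} \geq \tau$ and ``$\hat{y} \in \Pi_Y$'' otherwise. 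This incurs $O(\thetaOWF^{-2})$ queries to $\mathcal{A}$ and total runtime $O((T + m^2n/\gamma)\thetaOWF^{-2})$.

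For correctness, assume toward contradiction that $\Pr_{b,r}[\mathsf{F}(\mathcal{A}(\mathsf{F}(b,r))) = \mathsf{F}(b,r)] > 1 - \thetaOWF/2$. When $\hat{y} \in \Pi_N$, the trial distribution differs from the $\mathsf{F}$-inversion experiment only by swapping the NO instance $y^\star$ for the NO instance $\hat{y}$; Lemma~\ref{lemma:perm-dis-dis} bounds the resulting total variation distance between the $(b, r, z)$-distributions by $\delta(\lambda) + \gamma$, giving $p_N \geq 1 - \thetaOWF/2 - (\delta(\lambda) + \gamma)$. When $\hat{y} \in \Pi_Y$ I would split on $b$: the $b=0$ branch contributes at most $1/2$ to $p_Y$ since $\widehat{C}_0$ is independent of $\hat{y}$, while the $b=1$ branch feeds $\mathcal{A}$ samples from $\widehat{C}_1[\hat{y}]$, which the $f$-distinguisher property of $R$ separates from both $\widehat{C}_0$ and $\widehat{C}_1[y^\star]$ by advantage $\geq 1 - 2\mu$. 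The crucial claim is that the conditional hit rate in the $b=1$ branch is at most $2\mu + O(\delta(\lambda) + \gamma)$, giving $p_Y \leq 1/2 + \mu + O(\delta(\lambda) + \gamma)$ and hence a gap $p_N - p_Y = \Omega(\thetaOWF)$, with the specific constants being precisely those baked into $\thetaOWF = (1 - 10\mu) - (\delta(\lambda) + \gamma)$.

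The main obstacle is establishing the YES, $b=1$ upper bound, since the generic TVD inequality $|\Pr_{D_1}[E] - \Pr_{D_2}[E]| \leq \|D_1 - D_2\|_1$ only lower-bounds $\Pr_{D_2}[E]$ when $\Pr_{D_1}[E]$ is close to $1$, whereas an upper bound on $\Pr_{z \sim \widehat{C}_1[\hat{y}]}[\mathsf{F}(\mathcal{A}(z)) = z]$ is what we need. I would overcome this by exploiting the structural fact that any hit forces $z \in \mathrm{Image}(\widehat{C}_0) \cup \mathrm{Image}(\widehat{C}_1[y^\star])$---a set carrying all the mass of $\mathsf{F}$'s output but which, by the $(1-2\mu)$-separation between $\widehat{C}_1[\hat{y}]$ and both $\widehat{C}_0$ and $\widehat{C}_1[y^\star]$ promised by Definition~\ref{def:q_red} (via the unbounded distinguisher $\mathcal{D}$ and its hint $x_i$, supplied through the non-uniform advice)---can carry at most $O(\mu + \delta(\lambda) + \gamma)$ mass under $\widehat{C}_1[\hat{y}]$. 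With the gap $p_N - p_Y = \Omega(\thetaOWF)$ in hand, a standard Hoeffding bound over the $\Theta(\thetaOWF^{-2})$ trials delivers decision correctness, matching the stated runtime and query count.
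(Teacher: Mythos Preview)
Your construction follows the two-branch function from the technical overview; the paper's actual proof of Theorem~\ref{th:owf} takes $\mathsf{F}=\widehat{C}_0$ alone and studies the statistic $X=\big|\Pr[\text{hit}\mid b=0]-\Pr[\text{hit}\mid b=1]\big|$. That difference is not itself fatal, but your analysis of the YES, $b=1$ branch has a genuine gap.

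You argue that a hit forces $z\in\mathrm{Image}(\widehat{C}_0)\cup\mathrm{Image}(\widehat{C}_1[y^\star])$ and then claim this set carries at most $O(\mu+\delta(\lambda)+\gamma)$ mass under $\widehat{C}_1[\hat{y}]$, citing only the $(1-2\mu)$ statistical separation. That inference is invalid: large total variation distance says nothing about support overlap. Concretely, if $\widehat{C}_0$ happened to be uniform on $\{0,1\}^\ell$ and $\widehat{C}_1[\hat{y}]$ a point mass on some $z_0\in\{0,1\}^\ell$, then $\|\widehat{C}_0-\widehat{C}_1[\hat{y}]\|_1=1-2^{-\ell}$ yet $\Pr_{\widehat{C}_1[\hat{y}]}\big[z\in\mathrm{Image}(\widehat{C}_0)\big]=1$. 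Nothing in Definition~\ref{def:q_red} or Lemma~\ref{lemma:perm-dis-dis} rules this out, so your upper bound on $p_Y$ is unsupported and the $\Omega(\thetaOWF)$ gap does not follow.

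The paper closes exactly this hole via a restriction argument: taking $A=\{a:\Pr_{\widehat{D}_0}(a)\geq\Pr_{\widehat{D}_1}(a)\}$, the bound $\|\widehat{D}_0-\widehat{D}_1\|_1\geq 1-2\mu$ forces $\Pr_{\widehat{D}_0}(A)\geq 1-2\mu$ and $\Pr_{\widehat{D}_1}(A^c)\geq 1-2\mu$, so the restricted circuits $\widetilde{C}_0$ (to $A$) and $\widetilde{C}_1$ (to $A^c$) have \emph{disjoint} images while each is within $2\mu$ in TVD of its original. With disjoint images the $b=1$ hit probability is exactly zero, and a hybrid over the two swaps accounts for the extra $8\mu$, which is precisely why the constant $10\mu$ appears in $\thetaOWF$. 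If you want to keep your two-branch $\mathsf{F}$, you would need the same trick applied to three distributions simultaneously (restricting both $\widehat{C}_0$ and $\widehat{C}_1[y^\star]$ to $A$, and $\widehat{C}_1[\hat{y}]$ to $A^c$), which works but requires tracking an additional $\delta(\lambda)+\gamma$ term for $\widehat{C}_1[y^\star]$; the bare image-mass claim you wrote does not.
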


\begin{proof}
Consider the circuit~$\widehat{C}_0$ in Line~3 of Algorithm~\ref{algo:F}. This circuit is independent of the input of Algorithm~\ref{algo:F} and is randomized. Part of its randomness is used to sample~$\widetilde{x}$ and the other part is fed to~$\red$. Let~$\kappa$ be the size of the total randomness. For~$r\in \{0,1\}^\kappa$, we let~$\widehat{C}_0(r)$ be the outcome of the circuit when it is given~$r$ as the randomness. %
We show that~$\mathsf{F}$, defined by~$\widehat{C}_0(\cdot): \{0,1\}^\kappa \rightarrow \{0,1\}^*$, is a~$(\thetaOWF/2)$-weak one-way function. This suffices for the proof since weak one-way functions imply one-way functions.

The proof works by a reduction to the worst-case hardness of~$\Pi$. Assume that we are given a to-be-decided instance~$y$ of~$\Pi$. Apply Algorithm~\ref{algo:F} up to Line~$4$ to obtain~$(\widehat{C}_0,\widehat{C}_1)$.
Assume that there exists an adversary~$\mathcal{A}$ that inverts~$\widehat{C}_0(\cdot)$ 
with probability more than~$1-\thetaOWF/2$. Consider the following oracle algorithm~$\mathcal{B}^\mathcal{A}$:
\begin{itemize}
\item $\mathcal{B}^\mathcal{A}(\widehat{C}_0,\widehat{C}_1,y)$: samples a uniform~$r\in \{0,1\}^\kappa$ and a uniform~$b \in \{0,1\}$, and computes~$z := \widehat{C}_b(r)$. Runs the adversary~$r' \leftarrow \mathcal{A}(z)$, and computes~$z' = \widehat{C}_0(r')$. If $z=z'$ it outputs~$1$, otherwise it outputs~$0$. 
\end{itemize}
We show that $\mathcal{B}$ can distinguish between the YES and NO instances of $\Pi$ by analysing the probability of outputting~$1$. More precisely, we study the following random variable:
\begin{align*}
X (\widehat{C}_0,\widehat{C}_1,y):= \left| \Pr(\mathcal{B}^\mathcal{A}(\widehat{C}_0,\widehat{C}_1,y) =1 | b=0) - \Pr(\mathcal{B}^\mathcal{A}(\widehat{C}_0,\widehat{C}_1,y)= 1 | b=1) \right| \, .
\end{align*}

\smallskip
	\noindent\underline{Case $y\in\Pi_Y$}:
	We show the following bound for every~$y \in \Pi_Y$: 
\begin{align*}
X (\widehat{C}_0,\widehat{C}_1,y) > 1- \thetaOWF/2 - 10\mu \ .
\end{align*}
Instead of proving the inequality directly for the circuits~$(\widehat{C}_0,\widehat{C}_1)$, we will show it for two similar circuits~$(\widetilde{C}_0,\widetilde{C}_1)$ with disjoint image sets. Let~$\widehat{D}_0$ and $\widehat{D}_1$ be respectively the outcome distributions of $\widehat{C}_0$ and~$\widehat{C}_1$ when given uniform input, and~$A$ be the following set
\begin{align*}
	A := \{a \ | \ \Pr_{\widehat{D}_0}(a) \geq \Pr_{\widehat{D}_1}(a) \} \, .
\end{align*} 
Let~$\widetilde{C}_0$ be the restriction of~$\widehat{C}_0$ to~$A$ and~$\widetilde{C}_1$ the restriction of~$\widehat{C}_1$ to~$A^c$. We will show that 
\begin{equation*}
	X(\widetilde{C}_0,\widetilde{C}_1,y)\leq X(\widehat{C}_0,\widehat{C}_1,y)+8\mu.
\end{equation*}

Indeed in Theorem~\ref{th:pi-qsd}, we showed that for every~$y \in \Pi_Y$, the statistical distance between the outcome distributions of~$\widehat{C}_0$ and~$\widehat{C}_1$ when given uniform input is at least~$1-2\mu$. Moreover, we have
\begin{align*}
\|\widehat{D}_0 - \widehat{D}_1 \|_1 &= \frac{1}{2} \sum\limits_{a} |\Pr_{\widehat{D}_0}(a) - \Pr_{\widehat{D}_1}(a)| \\
&= \frac{1}{2} \sum\limits_{a \in A} \Pr_{\widehat{D}_0}(a) - \Pr_{\widehat{D}_1}(a) + \frac{1}{2} \sum\limits_{a \in A^c} \Pr_{\widehat{D}_1}(a) - \Pr_{\widehat{D}_0}(a)\\
&= \frac{1}{2} (\Pr_{\widehat{D}_0}(A) - \Pr_{\widehat{D}_0}(A^c) + \Pr_{\widehat{D}_1}(A^c) - \Pr_{\widehat{D}_1}(A))\\
&= \frac{1}{2} (\Pr_{\widehat{D}_0}(A) - (1-\Pr_{\widehat{D}_0}(A)) + \Pr_{\widehat{D}_1}(A^c) - (1-\Pr_{\widehat{D}_1}(A^c)))\\
&= \Pr_{\widehat{D}_0}(A) + \Pr_{\widehat{D}_1}(A^c) - 1 \, .
\end{align*}
It follows that~$\Pr_{\widehat{D}_0}(A) + \Pr_{\widehat{D}_1}(A^c) \geq 2-2\mu$. 
Therefore, we have
\begin{align*}
(\Pr_{\widehat{D}_0}(A) \geq 1-\mu) \land (\Pr_{\widehat{D}_1}(A^c) \geq 1-2\mu)\, , \quad \text{or} \quad
(\Pr_{\widehat{D}_0}(A) \geq 1-2\mu) \land (\Pr_{\widehat{D}_1}(A^c) \geq 1-\mu) \, .
\end{align*}
Then for either of cases above, we have
\begin{equation}\label{eq:owf-restr}
\| \widehat{D}_0 - \widetilde{D}_0 \|_1 \leq 2\mu \, , \quad \text{and} \quad  \| \widehat{D}_1 - \widetilde{D}_1 \|_1 \leq 2\mu \, ,
\end{equation}
where~$\widetilde{D}_0$ and $\widetilde{D}_1$ are respectively the outcome distributions of~$\widetilde{C}_0$ and~$\widetilde{C}_1$. 
Pretend that not only does $\mathcal{A}$ invert $\mathsf{F}$, but also tries to distinguish between $\widehat{C}_b$ and $\widetilde{C}_b$ for $b\in\{0,1\}$. Consider the following sequence of games that modifies~$\mathcal{B}^\mathcal{A}$:

\smallskip\noindent
\textbf{Game} $\mathcal{G}_1$: In this game~$\mathcal{B}$ behaves originally as above.

\smallskip\noindent
\textbf{Game} $\mathcal{G}_2$: In this game $\mathcal{B}$ replaces~$\widehat{C}_0$ with~$\widetilde{C}_0$. Note that~$\mathcal{A}$ can distinguish this modification with probability at most~$2\mu$ according to Equation~\eqref{eq:owf-restr}.
It follows that
\begin{align*}
X (\widetilde{C}_0,\widehat{C}_1,y) &= \left| \Pr(\mathcal{B}^\mathcal{A}(\widetilde{C}_0,\widehat{C}_1,y) =1 | b=0) - \Pr(\mathcal{B}^\mathcal{A}(\widetilde{C}_0,\widehat{C}_1,y)= 1 | b=1) \right| \\
&\leq \left| \Pr(\mathcal{B}^\mathcal{A}(\widetilde{C}_0,\widehat{C}_1,y) =1 | b=0) - \Pr(\mathcal{B}^\mathcal{A}(\widehat{C}_0,\widehat{C}_1,y)= 1 | b=0) \right| \\
& \hspace{1cm} + \left| \Pr(\mathcal{B}^\mathcal{A}(\widehat{C}_0,\widehat{C}_1,y) =1 | b=0) - \Pr(\mathcal{B}^\mathcal{A}(\widehat{C}_0,\widehat{C}_1,y)= 1 | b=1) \right| \\
& \hspace{2cm} + \left| \Pr(\mathcal{B}^\mathcal{A}(\widehat{C}_0,\widehat{C}_1,y) =1 | b=1) - \Pr(\mathcal{B}^\mathcal{A}(\widetilde{C}_0,\widehat{C}_1,y)= 1 | b=1) \right|  \\
&= X (\widehat{C}_0,\widehat{C}_1,y) + 4\mu \, .
\end{align*}

\smallskip\noindent
\textbf{Game} $\mathcal{G}_3$: In this game,~$\mathcal{B}$ replaces~$\widehat{C}_1$ with~$\widetilde{C}_1$. Note that~$\mathcal{A}$ can identify this modification with probability at most~$2\mu$. We obtain
\begin{align*}
X (\widetilde{C}_0,\widetilde{C}_1,y) &= \left| \Pr(\mathcal{B}^\mathcal{A}(\widetilde{C}_0,\widetilde{C}_1,y) =1 | b=0) - \Pr(\mathcal{B}^\mathcal{A}(\widetilde{C}_0,\widetilde{C}_1,y)= 1 | b=1) \right| \\
&\leq \left| \Pr(\mathcal{B}^\mathcal{A}(\widetilde{C}_0,\widetilde{C}_1,y) =1 | b=0) - \Pr(\mathcal{B}^\mathcal{A}(\widetilde{C}_0,\widehat{C}_1,y)= 1 | b=0) \right| \\
& \hspace{1cm} + \left| \Pr(\mathcal{B}^\mathcal{A}(\widetilde{C}_0,\widehat{C}_1,y) =1 | b=0) - \Pr(\mathcal{B}^\mathcal{A}(\widetilde{C}_0,\widehat{C}_1,y)= 1 | b=1) \right| \\
& \hspace{2cm} + \left| \Pr(\mathcal{B}^\mathcal{A}(\widetilde{C}_0,\widehat{C}_1,y) =1 | b=1) - \Pr(\mathcal{B}^\mathcal{A}(\widetilde{C}_0,\widetilde{C}_1,y)= 1 | b=1) \right|  \\
&= X (\widetilde{C}_0,\widehat{C}_1,y) + 4\mu  \\
&\leq X (\widehat{C}_0,\widehat{C}_1,y) + 8\mu \, .
\end{align*}

\smallskip\noindent
To prove the inequality for the YES instances, it suffices to show that $X (\widetilde{C}_0,\widetilde{C}_1,y) > 1- \thetaOWF/2-2\mu$. Recall that
\begin{align*}
X (\widetilde{C}_0,\widetilde{C}_1,y):= \left| \Pr(\mathcal{B}^\mathcal{A}(\widetilde{C}_0,\widetilde{C}_1,y) =1 | b=0) - \Pr(\mathcal{B}^\mathcal{A}(\widetilde{C}_0,\widetilde{C}_1,y)= 1 | b=1) \right| \, .
\end{align*}
First, when~$b=0$ and hence~$z=\widetilde{C}_0(r)$ with~$\|\widehat{D}_0-\widetilde{D}_0\|_1\leq2\mu$, the adversary~$\mathcal{A}$ succeeds with probability at least~$1- \thetaOWF/2 - 2\mu$ to invert~$\widetilde{C}_0$, which is equal to the probability that~$\mathcal{B}^\mathcal{A}$ outputs~$1$. Second, when~$b=1$ and hence~$z=\widetilde{C}_1(r)$, since the supports of~$\widetilde{C}_0$ and~$\widetilde{C}_1$ are distinct,~$\mathcal{A}$ never succeeds to find an~$r'$ such that~$\widetilde{C}_0(r') = \widetilde{C}_1(r)$, i.e., the probability of~$\mathcal{B}$ outputting one is zero. This completes the first part.

\smallskip
	\noindent\underline{Case $y\in\Pi_N$}:
In Theorem~\ref{th:pi-qsd}, we also proved that for every~$y \in \Pi_N$, the outcomes of the two circuits~$(\widehat{C}_0,\widehat{C}_1)$ is at most~$\delta + \gamma$. Therefore, the adversary~$\mathcal{A}$ cannot distinguish them with a probability larger than~$\delta + \gamma$. The information processing inequality then implies that 
\begin{align*}
X (\widehat{C}_0,\widehat{C}_1,y) &\leq \delta + \gamma  \ .
\end{align*}

\smallskip
	\noindent\underline{Conclusion}: The quantity~$X (\widehat{C}_0,\widehat{C}_1,y)$ diverges for YES and NO instances of~$y$. 
For our choice of parameters, we know that
\begin{align*}
1 - \thetaOWF/2 - 10\mu -(\delta + \gamma) &= \thetaOWF/2 \, .
\end{align*} 
 We denote by~$\mathcal{C}^\mathcal{A}$ an algorithm that runs~$\mathcal{B}$ for~$O(1/\thetaOWF^2)$ many times, and approximates the quantity above within error less than~$\thetaOWF/4$. If this value is more than~$\delta+\gamma+\thetaOWF/4$, then~$y$ must be a YES instance, otherwise it is a NO instance. Therefore, we finally obtain a algorithm that solves~$\Pi$. 
 
 \smallskip
 \noindent
 \underline{Runtime}:
The runtime of~$\mathsf{F}$ can be computed as follows. It samples~$m$ instances from~$\mathcal{U}_{K_a}$ (or~$\mathcal{U}_{T_a}$), applies a permutation~$\pi$, and computes~$\red$ on top of it. Each time, sampling from~$\mathcal{U}_{K_a}$ (or~$\mathcal{U}_{T_a}$) takes time~$O(dn)$, where~$d\leq (m+1)/\gamma$ is the size of~$\mathcal{U}_{K_a}$ and~$n$ is the size of each element in~$\mathcal{U}_{K_a}$. The permutation can be computed in~$O(m)$. Therefore, the total runtime of~$\mathsf{F}$ is~$O(T+m^2n/\gamma)$.

For the runtime of~$\mathcal{C}^\mathcal{A}$, note that~$\mathcal{C}$ runs~$\mathcal{B}$ for~$O(1/{\thetaOWF}^2)$ times. Each execution of~$\mathcal{B}$ evaluates~$\widehat{C}_b$, queries~$\mathcal{A}$, and performs an equality check. All of this takes~$O((T+m^2n/\gamma)/{\thetaOWF}^2)$ with~$O(1/{\thetaOWF}^2)$ queries to~$\mathcal{A}$.

\end{proof}

\section{One-Way State Generators from $\Cute$ Problems}\label{sec:owsg-lossy}

In the next theorem, we discuss the adaptation to the quantum settings, when~$\lambda$ is relatively large.

\begin{theorem}\label{th:owsg}
Let~$\Pi$ be~$(T,\mu,f^m,\lambda,\gamma)$-$\cute$ with a pure-outcome reduction.
Also assume that~$\thetaOWSG:= 1-(\delta(\lambda) + \gamma + 4\sqrt{2\mu}) > 0$ and~$\tauOWSG:= 1-2\mu -(\delta(\lambda)+\gamma) > 0$, with $\delta(\lambda)$ as in Definition~\ref{def:delta}. 
Then there exists an algorithm~$\mathsf{G}=(\mathsf{StateGen},\mathsf{Ver})$ such that~$\mathsf{StateGen}$ runs in time~$O(T+m^2 n\gamma^{-1})$ and an oracle algorithm~$\mathcal{C}$, such that for every algorithm~$\mathcal{A}$ one and only one of the following statements holds:
\begin{itemize}
\item[I.] $\mathcal{C}^\mathcal{A}$ solves~$\Pi \cap \{0,1\}^n$ in time~$O((T+m^2n\gamma^{-1}+\tauOWSG^{-2})\thetaOWSG^{-2})$ with~$O(\thetaOWSG^{-2})$ classical queries to~$\mathcal{A}$,
\item[II.] $\mathsf{G}$ is a~$(1-\thetaOWSG/4)$-OWSG for~$\mathcal{A}$.
\end{itemize}

\end{theorem}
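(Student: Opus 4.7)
The plan is to instantiate the OWSG following the blueprint of Theorem~\ref{th:owf}, but leveraging the pure-outcome hypothesis to base verification on a SWAP test. Define $\mathsf{KeyGen}(1^n)$ to sample a uniform $r\in\{0,1\}^\kappa$, where $\kappa$ is the randomness length of the circuit $\widehat{C}_0$ produced by Algorithm~\ref{algo:F}; set $\mathsf{StateGen}(r) := \widehat{C}_0\ket{r,\vec 0}$, which is pure by hypothesis on $R$; and define $\mathsf{Ver}(r',\rho)$ to run the SWAP test of Lemma~\ref{lemma:swap} between a freshly generated $\widehat{C}_0\ket{r',\vec 0}$ and $\rho$, accepting iff the test passes. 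Correctness is immediate, since the SWAP test on two identical pure states accepts with probability~$1$.

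For security I would argue by contrapositive, mirroring the skeleton of Theorem~\ref{th:owf}. Suppose $\mathcal{A}$ breaks $\mathsf{G}$ with success probability $>1-\thetaOWSG/4$; then build a decider $\mathcal{C}^\mathcal{A}$ for $\Pi\cap\{0,1\}^n$ as follows. On input $y$, obtain $(\widehat{C}_0,\widehat{C}_1)\leftarrow\mathcal{F}(y)$ from Algorithm~\ref{algo:F}, sample $b\in\{0,1\}$ and $r\in\{0,1\}^\kappa$ uniformly, hand $(\widehat{C}_b\ket{r,\vec 0})^{\otimes t}$ to $\mathcal{A}$ to obtain $r'$, and empirically estimate
\[
p_b^y := \mathbb{E}_r\,\mathbb{E}_{r'\leftarrow\mathcal{A}((\widehat{C}_b\ket{r,\vec 0})^{\otimes t})}\,\Pr[\mathsf{Ver}(r',\widehat{C}_b\ket{r,\vec 0})=1]
\]
to within additive error $\thetaOWSG/8$, by repeating the outer experiment $\Theta(\thetaOWSG^{-2})$ times and using $\Theta(\tauOWSG^{-2})$ SWAP-test iterations (with fresh copies of $\widehat{C}_b\ket{r,\vec 0}$) per repetition to resolve the intrinsic randomness of a single $\mathsf{Ver}$ call. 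Output YES iff the estimated $p_0^y-p_1^y$ exceeds an appropriate threshold. Since $\widehat{C}_0$ does not depend on $y$, $p_0^y>1-\thetaOWSG/4$ uniformly in~$y$, so the task reduces to separating $p_1^y$ between YES and NO.

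For the analysis, let $\rho_0,\rho_1$ denote the mixed outputs of $\widehat{C}_0,\widehat{C}_1$ on uniform randomness. The extended disguising lemma used in Theorem~\ref{th:pi-qsd} gives $\|\rho_0-\rho_1\|_1\le\delta(\lambda)+\gamma$ when $y\in\Pi_N$ and $\|\rho_0-\rho_1\|_1\ge 1-2\mu$ when $y\in\Pi_Y$. For NO, treating ``$\mathcal{A}$ followed by one SWAP test'' as a single quantum channel acting on $t+1$ copies of the state, data processing together with joint convexity of the trace distance bounds $|p_0^y-p_1^y|$ by a quantity of order $\delta+\gamma$. For YES, Fuchs--van de Graaf converts the trace-distance lower bound into $F(\rho_0,\rho_1)\le \sqrt{1-(1-2\mu)^2}\le 2\sqrt{\mu}$; combined with the pure-outcome property, the SWAP-test identity $\tfrac12(1+|\braket{\phi}{\psi}|^2)$ from Lemma~\ref{lemma:swap}, and joint concavity of fidelity, this controls the expected overlap between $\widehat{C}_0\ket{r',\vec 0}$ and $\widehat{C}_1\ket{r,\vec 0}$ by $O(\sqrt{\mu})$ and yields $p_1^y\le \tfrac12+2\sqrt{2\mu}$, hence $p_0^y-p_1^y\ge\thetaOWSG/2$. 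Estimating the gap to additive error $\thetaOWSG/8$ then distinguishes YES from NO, which also accounts for the runtime claim.

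The main obstacle is the quantum analogue of the ``disjoint ranges'' step used in the proof of Theorem~\ref{th:owf}. Classically one restricts $\widehat{C}_0$ and $\widehat{C}_1$ to complementary sets $A,A^c$ so that no $r'$ satisfies $\widehat{C}_0(r')=\widehat{C}_1(r)$, paying only $8\mu$ via a short sequence of games. Quantum image subspaces can overlap nontrivially even when the induced mixed states are nearly orthogonal, so the plan is to replace this combinatorial step by a Helstrom-style projector $\Pi$ onto the positive eigenspace of $\rho_0-\rho_1$ and invoke the gentle measurement lemma (Lemma~\ref{lemma:gml}) to argue that, on all but an $O(\sqrt{\mu})$-fraction of keys, the pure outputs of $\widehat{C}_0$ and $\widehat{C}_1$ essentially lie in $\Pi$ and $I-\Pi$ respectively. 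The two gentle-measurement slacks together with triangle-inequality losses across the hybrid account for the $4\sqrt{2\mu}$ appearing in $\thetaOWSG$; the pure-outcome assumption on~$R$ is essential both so that $\mathsf{Ver}$ is a clean SWAP test with acceptance $\tfrac12(1+|\braket{\phi}{\psi}|^2)$ and so that the Fuchs--van de Graaf/fidelity reasoning above applies.
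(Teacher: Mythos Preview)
Your high-level plan matches the paper's, but the construction and the YES-case argument diverge in ways that matter. The paper's OWSG has key $(r,b)$ and uses circuits $(C^*_0,C^*_1)$ obtained from Algorithm~\ref{algo:F} applied to a \emph{fixed NO instance} $z\sim\unif_{K_a}$, with the unbounded verifier $\ver((r',b'),\rho)=\big[\,\|C^*_{b'}\ket{r',\vec 0}-\rho\|_1\le\delta(\lambda)+\gamma\,\big]$; SWAP tests appear only inside $\mathcal{C}$, to estimate this distance to precision $\Theta(\tauOWSG)$ (this is why $\tauOWSG^{-2}$ enters the runtime). You take the key to be $r$ alone and let $\ver$ be a single SWAP against $\widehat{C}_0\ket{r',\vec 0}$. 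That is a legitimate variant, but dropping the bit $b$ makes the YES case harder, not easier, and the argument you give for it does not go through.

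The concrete gap is your bound $p_1^y\le\tfrac12+2\sqrt{2\mu}$ via Fuchs--van de Graaf plus joint concavity of fidelity. Concavity yields only $\mathbb{E}_r\,\bigl|\langle\widehat{C}_0(r)\,|\,\widehat{C}_1(r)\rangle\bigr|\le F(\rho_0,\rho_1)$, i.e.\ a bound on the \emph{diagonal} overlap at matched $r$; it says nothing about $\bigl|\langle\widehat{C}_0(r')\,|\,\widehat{C}_1(r)\rangle\bigr|$ when $r'$ is chosen by $\mathcal{A}$ after seeing copies of $\widehat{C}_1\ket{r,\vec 0}$. Nothing prevents $\mathcal{A}$ from always returning one of the keys $r'$ for which $\widehat{C}_0\ket{r',\vec 0}$ has a large $(I-\Pi)$-component, driving the SWAP acceptance close to~$1$. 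What actually works is exactly what the paper does, and what you relegate to a fix for the ``disjoint ranges'' obstacle: pass via two gentle-measurement hybrids to the projected circuits $\widetilde{C}_0,\widetilde{C}_1$ supported in $\Pi$ and $I-\Pi$ respectively (this is the source of the $4\sqrt{2\mu}$); in the projected game \emph{every} $\widetilde{C}_0\ket{r',\vec 0}$ lies in $\Pi$ and \emph{every} $\widetilde{C}_1\ket{r,\vec 0}$ in $I-\Pi$, so the overlap is $0$ for all $r'$ and the check fails unconditionally. The FvdG/fidelity route is then unnecessary. Separately, your NO-case data-processing step would need a bound on the trace distance of the correlated $(t{+}1)$-copy states $\mathbb{E}_r\bigl[(\ket{\psi_b^r}\bra{\psi_b^r})^{\otimes(t+1)}\bigr]$, which the single-copy disguising lemma does not provide; the paper's $\mathcal{B}$ hands $\mathcal{A}$ a single copy and thereby sidesteps this.
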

\begin{proof}
	Sample~$z\sim\unif_{K_a}$ and apply Algorithm~\ref{algo:F} up to Line~$4$ on input $z$ to obtain the two circuits~$(C^*_0,C^*_1)$. Note that the two circuits are mixed; a classical randomness is used to sample~$\widetilde{x}$ but the algorihm~$\red$ is a pure quantum circuit. Let~$\kappa$ be the size of the randomness of these circuits. For any~$r\in \{0,1\}^\kappa$ and~$b\in\{0,1\}$, let~$C^*_b\ket{r,\vec{0}}$ be the pure state obtained by sampling~$\widetilde{x}$ using~$r$ and applying~$\red$ to~$\pi(\widetilde{x})$ and a possibly ancilla~$\ket{\vec{0}}$ with an appropriate size. We show that~$\mathsf{G}$, defined as follows:
	\begin{itemize}
		\item $\sgen(r,b):$ output~$C^*_b\ket{r,\vec{0}}$.
		\item $\ver((r,b),\rho):$ If~$\|C^*_b\ket{r,\vec{0}}-\rho\|_1 \leq \delta+\gamma$ output~$1$, otherwise output~$0$.
	\end{itemize}
	is a~$(\thetaOWSG/2)$-weak one-way state generator. 

Assume that there exists an adversary~$\mathcal{A}$ that breaks the scheme above with probability more than~$1-\thetaOWSG/4$. We use~$\mathcal{A}$ to construct an algorithm for~$\Pi$. Consider the following oracle algorithm~$\mathcal{B}^\mathcal{A}$:
\begin{itemize}
	\item $\mathcal{B}^\mathcal{A}(\widehat{C}_0,\widehat{C}_1,y)$: computes~$(\widehat{C}_0,\widehat{C}_1(y))$ as in Algorithm~\ref{algo:F} up to Line $4$ on input $y$. Samples a uniform~$r\in \{0,1\}^\kappa$ and a uniform~$b \in \{0,1\}$, and computes~$\rho := \widehat{C}_b\ket{r,\vec{0}}$. Runs the adversary~$(r',b') \leftarrow \mathcal{A}(\rho)$, and computes~$\rho' = \widehat{C}_{b'}\ket{r',\vec{0}}$. If $\|\rho-\rho'\|_1\leq\delta+\gamma$ it outputs~$1$, otherwise it outputs~$0$. 
\end{itemize}
We compute the advantage of $\mathcal{B}$ in distinguishing between YES and NO instances of $\Pi$ by analyzing the probability~$\Pr ( \mathcal{B}^\mathcal{A}(\widehat{C}_0,\widehat{C}_1,y) = 1 ) $.

\smallskip
	\noindent\underline{Case $y\in\Pi_Y$}:
	We show that for every~$y \in \Pi_Y$, we have: 
\begin{align*}
\Pr(\mathcal{B}^\mathcal{A}(\widehat{C}_0,\widehat{C}_1,y) = 1) \leq \frac{1}{2} + 2\sqrt{2\mu} \ .
\end{align*}

Instead of proving the inequality directly for the circuits~$(\widehat{C}_0,\widehat{C}_1)$, we will show it for two similar circuits~$(\widetilde{C}_0,\widetilde{C}_1)$ with disjoint images. Let~$\widehat{\rho}_0$ and~$\widehat{\rho}_1$ be respectively the mixed states~$\widehat{C}_0\ket{r,\vec{0}}$ and~$\widehat{C}_1\ket{r,\vec{0}}$ when~$r$ follows the  uniform distribution. For any POVM~$\mathcal{M}=\{M_i\}_i$, let us define by~$A_\mathcal{M}$ the following set:
\begin{align*}
	A_\mathcal{M} := \left\{i \ | \ \tr(M_i\widehat{\rho}_0)\geq \tr(M_i\widehat{\rho}_1) \right\} \, .
\end{align*} 
In Theorem~\ref{th:pi-qsd}, we showed that for every~$y \in \Pi_Y$, the statistical distance between~$\widehat{\rho}_0$ and~$\widehat{\rho}_1$ is at least~$1-2\mu$. Moreover, we can rewrite the trace distance in terms of the POVMs as
\begin{align*}
	\|\widehat{\rho}_0 - \widehat{\rho}_1 \|_1 &=\max_{\{M_i\}_i}\, \frac{1}{2} \sum\limits_{i} |\tr(M_i\widehat{\rho}_0)-\tr(M_i\widehat{\rho}_1)| \\
	& =\max_{\{M_i\}_i}\, \frac{1}{2} \left[\sum\limits_{i\in A_\mathcal{M}}\left( \tr(M_i\widehat{\rho}_0)-\tr(M_i\widehat{\rho}_1)\right)+\sum\limits_{i\in A^c_\mathcal{M}}\left( \tr(M_i\widehat{\rho}_1)-\tr(M_i\widehat{\rho}_0)\right)\right]\\
	& =\max_{\{M_i\}_i}\, \left\{\sum\limits_{i\in A_\mathcal{M}} \tr(M_i\widehat{\rho}_0)+\sum\limits_{i\in A^c_\mathcal{M}}\tr(M_i\widehat{\rho}_1)-1\right\}\, .
\end{align*}

It follows that there exists a particular POVM~$\mathcal{M}$, such that if we define the projections of~$\widehat{C}_0$ and~$\widehat{C}_1$ onto~$A_\mathcal{M}$ and~$A_\mathcal{M}^c$ by~$\widetilde{C}_0$ and~$\widetilde{C}_1$ respectively, i.e., 
\begin{equation*}
	\widetilde{C}_0=\sum_{i\in A_\mathcal{M}}M_i\widehat{C}_0,\quad\text{and}\quad\widetilde{C}_1=\sum_{i\in A^c_\mathcal{M}}M_i\widehat{C}_1\, ,
\end{equation*}
we have $\tr(\widetilde{\rho}_0)+\tr(\widetilde{\rho}_1)\geq2-2\mu$, where~$\widetilde{\rho}_b$ is the mixed state~$\widetilde{C}_b\ket{r,\vec{0}}$ and~$r$ is uniform. Therefore
\begin{align*}
(\tr(\widetilde{\rho}_0) \geq 1-\mu) \land (\tr(\widetilde{\rho}_1) \geq 1-2\mu)\, , \quad \text{or} \quad
(\tr(\widetilde{\rho}_0) \geq 1-2\mu) \land (\tr(\widetilde{\rho}_1) \geq 1-\mu) \, .
\end{align*}
By the Gentle Measurement Lemma~\ref{lemma:gml}, for either of cases above, we have
\begin{equation}\label{eq:owsg-restr}
\| \widehat{\rho}_0 - \widetilde{\rho}_0 \|_1 \leq \sqrt{2\mu} \, , \quad \text{and} \quad  \| \widehat{\rho}_1 - \widetilde{\rho}_1 \|_1 \leq \sqrt{2\mu} \, .
\end{equation}
Pretend that~$\mathcal{A}$ also tried to distinguish between for~$\widehat{C}_b$ and~$\widetilde{C}_b$ for $b\in\{0,1\}$, and consider the following sequence of games that modifies~$\mathcal{B}^\mathcal{A}$.

\smallskip\noindent
\textbf{Game} $\mathcal{G}_1$: In this game~$\mathcal{B}$ behaves originally as above.

\smallskip\noindent
\textbf{Game} $\mathcal{G}_2$: In this game $\mathcal{B}$ replaces~$\widehat{C}_0$ with~$\widetilde{C}_0$. Note that~$\mathcal{A}$ can distinguish this modification with probability at most~$\sqrt{2\mu}$ according to Equation~\eqref{eq:owsg-restr}.
It follows that
\begin{align*}
	\Pr(\mathcal{B}^\mathcal{A}(\widehat{C}_0,\widehat{C}_1,y)=1)&\leq \left|\Pr(\mathcal{B}^\mathcal{A}(\widehat{C}_0,\widehat{C}_1,y)=1)-\Pr(\mathcal{B}^\mathcal{A}(\widetilde{C}_0,\widehat{C}_1,y)=1)\right|\\
	&\hspace{1cm}+\Pr(\mathcal{B}^\mathcal{A}(\widetilde{C}_0,\widehat{C}_1,y)=1)\\
	&\leq \sqrt{2\mu}+\Pr(\mathcal{B}^\mathcal{A}(\widetilde{C}_0,\widehat{C}_1,y)=1)\, .
\end{align*}

\smallskip\noindent
\textbf{Game} $\mathcal{G}_3$: In this game,~$\mathcal{B}$ replaces~$\widehat{C}_1$ with~$\widetilde{C}_1$. Note that~$\mathcal{A}$ can identify this modification with probability at most~$\sqrt{2\mu}$. We obtain
\begin{align*}
	\Pr(\mathcal{B}^\mathcal{A}(\widehat{C}_0,\widehat{C}_1,y)=1)&\leq \left|\Pr(\mathcal{B}^\mathcal{A}(\widehat{C}_0,\widehat{C}_1,y)=1)-\Pr(\mathcal{B}^\mathcal{A}(\widetilde{C}_0,\widehat{C}_1,y)=1)\right|\\
	&\hspace{1cm}+\left|\Pr(\mathcal{B}^\mathcal{A}(\widetilde{C}_0,\widehat{C}_1,y)=1)-\Pr(\mathcal{B}^\mathcal{A}(\widetilde{C}_0,\widetilde{C}_1,y)=1)\right|\\
	&\hspace{2cm}+\Pr(\mathcal{B}^\mathcal{A}(\widetilde{C}_0,\widetilde{C}_1,y)=1)\\
	&\leq2\sqrt{2\mu}+\Pr(\mathcal{B}^\mathcal{A}(\widetilde{C}_0,\widetilde{C}_1,y)=1)
\end{align*}

\smallskip\noindent
Now, note that the projection onto the supports of~$\widetilde{C}_0$ and~$\widetilde{C}_1$ are orthogonal to each other. Therefore, the adversary never succeeds when the bit~$b$ (chosen by~$\mathcal{B}$) is equal to~$1$; there exists no~$r'$ such that $\|\widetilde{C}_0\ket{r,\ket{\vec{0}}} - \widetilde{C}_1\ket{r',\vec{0}}\|_1 \leq \delta + \gamma$. So
\begin{align*}
\Pr(\mathcal{B}^\mathcal{A}(\widetilde{C}_0,\widetilde{C}_1,y) = 1)=\frac{1}{2}\left(\Pr(\mathcal{B}^\mathcal{A}(\widetilde{C}_0,\widetilde{C}_1,y) = 1 | b=0)+\Pr(\mathcal{B}^\mathcal{A}(\widetilde{C}_0,\widetilde{C}_1,y) = 1 | b=1)\right) \leq \frac{1}{2} \, .
\end{align*}

\smallskip
	\noindent\underline{Case $y\in\Pi_N$}:
	By Lemma~\ref{lemma:perm-dis-dis}, the trace distance of the outcomes of~$\widehat{C}_1$ and~$C_1^*$ is at most~$\delta + \gamma$. Moreover,~$\widehat{C}_0$ is exactly the same as~$C_0^*$. Therefore, if the bit~$b$, chosen by~$\mathcal{B}$ is equal to~$0$, then~$\mathcal{A}$ succeeds with probability at least~$1-\thetaOWSG/4$, and if~$b=1$, it succeeds with probability~$1-\thetaOWSG/4 - (\delta + \gamma)$. In total, we obtain
	\begin{align*}
	\Pr(\mathcal{B}^\mathcal{A}(\widehat{C}_0,\widehat{C}_1,y) = 1) \geq \frac{1}{2}(1-\frac{\thetaOWSG}{4}) + \frac{1}{2}(1-\frac{\thetaOWSG}{4} - (\delta + \gamma)) = 1 - \frac{\thetaOWSG}{4} - \frac{(\delta + \gamma)}{2} \, .
	\end{align*}

\smallskip
	\noindent\underline{Conclusion}: We showed that the quantity of~$\Pr(\mathcal{B}^\mathcal{A}(\widehat{C}_0,\widehat{C}_1,y) = 1)$ diverges for YES and NO instances of~$y$. For our choice of parameters, we have
	\begin{align*}
		1 - \frac{\thetaOWSG}{4} - \frac{(\delta + \gamma)}{2} - \left(\frac{1}{2} + 2\sqrt{2\mu}\right)& =  \frac{1-(\delta+\gamma+4\sqrt{2\mu})}{2}-\frac{\thetaOWSG}{4}\\
		&= \frac{\thetaOWSG}{4} \, .
	\end{align*}

 Let~$\mathcal{C}$ be an algorithm that runs~$\mathcal{B}$ for~$O(1/\thetaOWSG^2)$ many times, and approximates the quantity above within error less than~$\thetaOWSG/4$. If this value is more than~$1-\thetaOWSG/4-(\delta+\gamma)/2$, then~$y$ must be a NO instance, otherwise it is a YES instance. Therefore, we finally obtain a algorithm that solves~$\Pi$. 
 Note that~$\mathcal{B}$ verifies whether~$\|\widehat{C}_b\ket{r,\vec{0}}-\widehat{C}_{b'}\ket{r',\vec{0}}\|_1$ is smaller than~$\delta+\gamma$. %
 Since the reduction~$\red$ is pure and~$r,r'$ are fixed, these states are pure, therefore~$\mathcal{B}$ can perform a SWAP test for $O(1/\tauOWSG^2)$ number of times on them to approximate their~$\ell_1$ distance.   

\end{proof}

\section{ $\Cuteness$ and Instance Randomization }\label{sec:lossiness-wild}

In Section~\ref{sec:lossy} we introduced $\cute$ problems, promise problems that admit reductions that \emph{lose} some information about the input, and in Section~\ref{sec:efi-owf} and~\ref{sec:owsg-lossy} we constructed cryptography primitives from these. In this section we show that $\cute$ problems are not uncommon by proving that both worst-case to average-case reductions and randomized encodings imply $\cuteness$, given a classical reduction. An in the final subsection we prove that the former is also true for certain type of quantum reductions.

\subsection{Worst-Case to Average-Case Reductions}

In this section we analyse the $\cuteness$ of worst-case to average-case reductions. Since we discuss $\cuteness$ of such reductions, as motivated in Section~\ref{sec:lossy}, we focus on worst-case to average-case \emph{$f$-distinguisher} reductions (Definition~\ref{def:q_red}). In Definition~\ref{def:wc-dist-red}, we put forward the definition of \emph{worst-case to distribution $f$-distinguisher reduction} which can be viewed as a generalization of worst-case to average-case reductions in the sense that (i) the reduction is oblivious to the target average-case problem (inherited from being $f$-distinguisher), and (ii) the reduction maps inputs to a distribution that is \emph{not} necessarily efficiently samplable. The latter does not impose any issues in our setting, since we are only discussing $\cuteness$ of the reductions, and not the hardness of the problems.  
We then prove, in Theorem~\ref{thm:wc-avg-is-cute}, that such reductions are lossy and specify the $\cuteness$ parameters. Combined with Theorem~\ref{th:owf}, this would yield in Corollary~\ref{cor:avg-owf} that worst-case to average-case reductions can be used to build one-way functions.

\begin{definition}[Worst-Case to Distribution $f$-Distinguisher Reduction]\label{def:wc-dist-red}
	Let $\Pi$ be a promise problem, $n \in \mathbb{N}$, and $d \in [0,1]$. We say that a reduction $R$ is a $(T,\mu,f^m,d)$-worst-case to distribution (\wcwhatevs) reduction for $\Pi$ if 
	\begin{itemize}
		\item[-] $R$ is a~$(\mu,f^m)$-distinguisher reduction for $\Pi$ (Definition~\ref{def:q_red}), and
		
		\item[-] for all $x \in \Pi \cap \{0,1\}^n$,  $R(x)$ runs in time $T(n)$, and
		
		\item[-] there exists a distribution $D = \{D_n\}_{n \in \mathbb{N}}$ over $\{0,1\}^*$, such that 
		$$\forall (x_1,\cdots,x_m) \in (\Pi \cap \{0,1\}^n)^m: \ \frac{1}{2}\|R(x_1,\cdots,x_m)-D\|_1 \leq d~. $$
	\end{itemize}
	The upper bound~$d$ is called the distance of the reduction.

	If there exist two distributions $D_Y$ and $D_N$ over $\{0,1\}^*$ such that for inputs $x\in\Pi_Y$ the distribution $D_Y$ approximates $\red(x)$ up to error $d$, and for inputs $x\in\Pi_N$ the distribution $D_N$ approximates $\red(x)$ up to error $d$,
	we say that the reduction $\red$ is a $(T,\mu,f^m,d)$-worst-case to distribution splitting-reduction for $\Pi$.
\end{definition}

\begin{theorem}[$\Cuteness$ of \wcwhatevs~$f$-Distinguisher Classical Reductions]\label{thm:wc-avg-is-cute}
	Let $\Pi = \Pi_Y \cup \Pi_N $ for two disjoint sets $\Pi_Y,\Pi_N \subset \{0,1\}^*$. If there exists a $(T,\mu,f^m,d)$-$\wcwhatevs$ classical splitting-reduction~$\red$ for~$\Pi$ (Definition~\ref{def:wc-dist-red}), such that $f$ is a non-constant permutation-invariant function,
	then for any $\gamma > 0$, $\Pi$ is~$(T,\mu,f^m,\lambda,\gamma)$-$\cute$, where
	\begin{align*}
		\lambda = \max\left\{ 1,13+\log\left(\frac{mnd^2}{\gamma^3}\right) \right\}  \, .
	\end{align*}
\end{theorem}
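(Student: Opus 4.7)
The plan is to exploit the closeness of $R(x_1,\ldots,x_m)$ to a fixed distribution $D^*$ guaranteed by the splitting $\wcwhatevs$ property, in order to show that $R(X)$ carries little information about a sparse uniform input $X$. Concretely, I will fix an arbitrary tuple of pairwise independent $s$-uniform distributions $(X_1,\ldots,X_m)$ with $s = 2^9 mn/\gamma^3$, split-supported on $(\Pi_Y,\Pi_N)$, and bound $I((X_1,\ldots,X_m); R(X_1,\ldots,X_m))$ by $m\lambda$. Since each marginal $X_i$ lies in either $\Pi_Y$ or $\Pi_N$, Definition~\ref{def:wc-dist-red} produces a single distribution $D^*$, depending only on the split pattern, with $\|R(x_1,\ldots,x_m) - D^*\|_1 \le d$ for every tuple in the support of $X$.

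The first step is to replace the true output marginal with $D^*$ in a KL divergence: using the identity
\[
D(P_{X,Y}\,\|\,P_X\otimes Q) \;=\; I(X;Y) + D(P_Y\,\|\,Q) \;\ge\; I(X;Y)
\]
for any $Q$, applied with $Y=R(X)$ and $Q=D^*$, I get $I(X;R(X))\le \mathbb{E}_{x\sim P_X}[D(R(x)\,\|\,D^*)]$. The averaged statistical distance on the right-hand side is at most $d$ by the splitting $\wcwhatevs$ property, and importantly $D^*$ is independent of $x$.

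The second step applies Sason's inverse Pinsker inequality, stated in the preliminaries as $D_{KL}(X\|Y)\le \log(1+2\Delta(X,Y)^2/\alpha_X)$, to bound each per-instance divergence. Here the sparseness of the input enters: because each marginal $X_i$ is $s$-uniform, the min-probability parameter $\alpha$ relevant for Sason can be taken to be $1/s = \gamma^3/(2^9 mn)$, giving a per-coordinate contribution of order $\log(1+2sd^2)$. To aggregate this into an $m$-fold bound I will invoke the chain rule
\[
D(P_{X,Y}\,\|\,P_X\otimes D^*) \;=\; \sum_{i=1}^m \mathbb{E}_{x_{<i}}\!\left[D\!\left(P_{X_i,Y\mid x_{<i}}\,\Big\|\,P_{X_i\mid x_{<i}}\otimes D^*_{i,x_{<i}}\right)\right]\!,
\]
where $D^*_{i,x_{<i}}$ is the mixture-average output distribution under the conditional over $X_{>i}$, still at trace distance $\le d$ from $D^*$. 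Pairwise independence ensures that each conditional marginal $P_{X_i\mid x_{<i}}$ remains $s$-uniform, so $\alpha\ge 1/s$ coordinate-wise, and Sason yields $m\lambda$ after a short calculation matching the constants in $\lambda=\max\{1,\,13+\log(mnd^2/\gamma^3)\}$; the $\max\{1,\cdot\}$ absorbs the regime where the log-argument drops below $2$ and the $\log(1+\cdot)$ can otherwise be controlled by $O(sd^2)$ directly.

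The main obstacle I expect is this choice of $\alpha$ in Sason's inequality. Using $\alpha$ as the min of the joint $P_{X,R(X)}$ would be fatally loose, scaling as $1/(s^m\cdot 2^{T})$ and blowing up by a factor of $T$ from the internal randomness of $R$; recovering the correct bound requires that Sason be applied only after the chain-rule reduction and only on a per-coordinate basis, where the effective first-argument distribution is supported on the $s$-uniform coordinate $X_i$. Once this per-coordinate Sason bound is in place, the remaining bookkeeping (picking $D^*_{i,x_{<i}}$, applying Jensen on $y\mapsto \log(1+y)$ to average the $\Delta^2$ terms, and matching the numerical constants) is routine.
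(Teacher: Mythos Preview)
Your plan has a genuine gap in the direction in which Sason's inverse Pinsker inequality is applied. After your first step you reach $I(X;R(X))\le \mathbb{E}_{x}[D_{KL}(R(x)\,\|\,D^*)]$, and each term $D_{KL}(R(x)\|D^*)$ is a divergence between two distributions on the \emph{output} space. The $\alpha$ appearing in Sason's bound is the minimum probability mass of that divergence's reference distribution, here $\min_z D^*(z)$ (or of $R(x)$ if you use the variant with $\alpha$ on the first argument); in either case this is a quantity you have no control over and which can be as small as $2^{-T}$. The sparseness of the input marginals $X_i$ is simply irrelevant to this KL, so the claim that ``$\alpha$ relevant for Sason can be taken to be $1/s$'' is unjustified at this point. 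Two secondary issues compound this: the displayed chain-rule identity for $D(P_{X,Y}\|P_X\otimes D^*)$ is not valid (for the natural choice $D^*_{i,x_{<i}}=P_{Y\mid x_{<i}}$ the right-hand side equals $I(X;Y)$, strictly smaller than the left-hand side whenever $P_Y\ne D^*$), and pairwise independence does not guarantee that $P_{X_i\mid x_{<i}}$ remains $s$-uniform once you condition on two or more earlier coordinates.

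The paper avoids all of this by conditioning in the opposite direction. It writes
\[
I(X;R(X)) \;=\; \sum_{y} p_{R(X)}(y)\; D_{KL}\!\big(p_{X\mid R(X)=y}\,\big\|\,p_X\big),
\]
so each KL lives on the \emph{input} space and the second argument is exactly the sparse-uniform input distribution $p_X$, whose minimum probability is controlled directly by the hypothesis. Sason then applies with $\alpha=\alpha_{p_X}$; after Jensen on the concave logarithm, the remaining task is the short calculation
\[
\mathbb{E}_{y}\big[\Delta(X_{\mid R(X)=y},X)\big]\;\le\;\max_{x}\Delta(R(x),R(X))\;\le\;2d,
\]
where $D^*$ enters only at the very last step via the triangle inequality. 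No chain rule is used, and $D^*$ never sits inside a KL divergence. If you wish to salvage a coordinate-wise argument, the correct route is the mutual-information chain rule $I(X;Y)=\sum_i I(X_i;Y\mid X_{<i})$ followed by the \emph{same} output-conditioning decomposition on each summand; that works (under full independence), but it is not the argument you described.
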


\begin{proof}
	The proof consists of showing that the reduction $\red$ satisfies Definition~\ref{def:cute}. Let $\gamma > 0$. We show that for all pairwise independent $2^9 mn/{\gamma^3}$-uniform distributions $X_n$ over $n$-bit strings, %
	$$ I(X_n;\red(X_n)) \leq \max\left\{ 1,13+\log\left(\frac{mnd^2}{\gamma^3}\right) \right\}~.$$ 
	Dropping the subscript~$n$ for simplicity and writing~$p_X(y):=\Pr(X=y)$, we first rewrite the mutual information in terms of Kullback-Leibler divergence. 
	\begin{align}\label{eq:develop_mi}
		I(X;R(X)) = \sum\limits_{y \in \supp(R)} p_{R(X)}(y) \cdot D_{KL}\left( p_{X | R(X)=y} \ \| \ p_{X} \right)~.
	\end{align}
	
	From a reverse Pinsker inequality due to~\cite{corr:sas15}, the KL divergence of two distributions decreases as their trace distance does, in particular%
	\begin{align*}
		D_{KL}\left( p_{X | R(X)=y} \ \| \ p_{X} \right) &\leq \log\left( 1+ \frac{2 \cdot \Delta(X_{ | R(X)=y},X)^2}{\alpha_{X}}  \right) 
	\end{align*}	
	where $\alpha_{X} = \min\limits_x p_X(x)>0$. If $\Delta(X_{ | R(X)=y},X) = 0$, then $I(X;R(X)) = 0$.\footnote{However, this is very unlikely!} Otherwise, since for any value $a \in (0,1]$, we have that $\log(1+a) \leq \max \{ 1, 1+\log(a) \}$, we can write
	\begin{align*}
		D_{KL}\left( p_{X | R(X)=y} \ \| \ p_{X} \right) \leq \max\{1,2 + 2\log (\Delta(X_{ | R(X)=y},X)) - \log (\alpha_{X}) \}~, \ 
	\end{align*}
	Substituting above in Equation~\ref{eq:develop_mi} we obtain:
	\begin{align}\label{eq:minfo-upper-bound}
		I(X;R(X)) \leq \max\{ 1, \ 2 - \log(\alpha_X) + 2\sum_{y \in \supp(R)} p_{R(X)}(y) \cdot \log(\Delta(X_{ | R(X)=y},X))   \}~.
	\end{align}
	
	\noindent We split the bound on the right-hand side of the Inequality~\ref{eq:minfo-upper-bound} into two terms.\\
	
	\noindent{\textbf{Bounding $\mathsf{term}_1 = -\log(\alpha_{X})$:}} Since $X_n$ is a $2^9mn/\gamma^3$-uniform distribution, we have $\alpha_X \geq \gamma^3/2^9mn$. Therefore $-\log(\alpha_X) \leq 9+\log( mn/\gamma^3)$.\\
	
	\noindent{\textbf{Bounding $ \mathsf{term}_2 = \sum\limits_{y \in \supp(R)} p_{R(X)}(y) \cdot \log(\Delta(X_{ | R(X)=y},X))$:}} Firstly, for any $y \in \supp(R)$, we have
	\begin{align}
		\Delta(X_{ | R(X)=y},X) &= \frac{1}{2} \sum\limits_{x} \left|\Pr \left(X=x | R(X) = y\right) - \Pr(X=x)\right| \notag \\
		&= \frac{1}{2} \sum\limits_{x} \left| \frac{\Pr\left(X=x \land R(X) = y\right)}{\Pr(R(X)=y)} - \Pr(X=x) \right| \notag \\
		&= \frac{1}{2} \sum\limits_{x} \frac{1}{\Pr(R(X)=y)} \left| \Pr\left(X=x \land R(X) = y\right) - \Pr(X=x)\cdot \Pr(R(X)=y) \right| \notag \\
		&= \frac{1}{\Pr(R(X)=y)} \cdot \Delta((X,R(X)=y),X \cdot (R(X)=y))~. \label{eq:bounding-sd-RX}
	\end{align}
	Rewriting $\mathsf{term}_2 = \mathbb{E}_{R(X)}\left[ \log(\Delta(X_{ | R(X)=y},X)) \right] $, we now have to bound
	\begin{align}
		\mathsf{term}_2 & = \mathbb{E}_{R(X)} \left[ \log \Delta(X_{ | R(X)=y},X) \right] \notag \\ 
		&\leq  \log \mathbb{E}_{R(X)}\left[\Delta(X_{ | R(X)=y},X)\right] \quad (\text{by Jensen's inequality}) \notag \\
		&=  \log \left(\sum\limits_{y \in \supp(R)} \Pr(R(X)=y) \cdot \Delta(X_{ | R(X)=y},X) \right)  \notag \\
		& =  \log \left( \sum\limits_{y \in \supp(R)} \Delta((X,R(X)=y),X\cdot (R(X)=y)) \right) \quad (\text{by Equation~\ref{eq:bounding-sd-RX}})~. \label{eq:sd-joint-prod}
	\end{align}
	Analysing the term inside the logarithm above, we have
	\begin{align}
		&\sum\limits_{y \in \supp(R)} \Delta((X,R(X)=y),X\cdot (R(X)=y)) \notag \\ 
		&= \frac{1}{2} \sum\limits_{y \in \supp(R)} \sum\limits_{x \in X} \left| \Pr(R(X)=y | X=x) \cdot \Pr(X=x) - \Pr(R(X)=y) \cdot \Pr(X=x) \right| \notag \\
		& = \frac{1}{2} \sum\limits_{y \in \supp(R)} \sum\limits_{x} \Pr(X=x) \cdot \left| \Pr(R(x)=y) - \Pr(R(X)=y) \right| \notag \\
		& = \sum\limits_{x} \Pr(X=x) \cdot \Delta(R(x),R(X)) \notag \\
		& \leq \max_x \Delta(R(x),R(X))~. \notag
	\end{align}
	We therefore have that $
	\mathsf{term}_2 \leq \max\limits_x \ \log(\Delta(R(x),R(X))) \label{eq:bounding-sd-Rx-RX} $~.
	Finally, note that since $\red$ is a~$(T,\mu,f^m,d)$-WC-DIST reduction, for any $x \in \Pi_Y \cap \{0,1\}^n$, it holds that $\Delta(R(x),D_{n,Y}) \leq d$. Therefore $\Delta(R(X),D_{n,Y}) \leq d$ for any distribution $X$ over $\Pi_Y \cap \{0,1\}^n$. We conclude that for any $x\in\Pi_Y \cap \{0,1\}^n$, $\Delta(R(x),R(X)) \leq 2d$ for any distribution $X$ over $\Pi_Y \cap \{0,1\}^n$, which yields $\mathsf{term}_2 \leq 1+\log(d)~$. Note that the same argument holds for $x\in\Pi_N\cap\{0,1\}^n$ and distributions $D_{n,N}$.
	
	Combining upper bounds on $\mathsf{term}_1$ and $\mathsf{term}_2$, we finish by proving that 
	$$I(X;\red(X)) \leq \max\left\{ 1,13+\log\left(\frac{mnd^2}{\gamma^3}\right) \right\}~,$$
	for splitting lossy distributions $X$.
	
\end{proof}

The following corollary is a direct result of combining Theorems~\ref{thm:wc-avg-is-cute} and~\ref{th:owf}.
\begin{corollary}[OWFs from $\wcwhatevs$~$f$-Distinguisher Reductions]\label{cor:avg-owf}
	Let $\Pi$ be a promise problem, and assume that there exists a $(T,\mu,f^m,d)$-$\wcwhatevs$~splitting-reduction for $\Pi$. %
	\linebreak Let~$\thetaOWF = (1-10\mu) - (\delta(\lambda) + \gamma) > 0$, where $\gamma >0$, $\lambda = \max\left\{ 1,13+\log\left({mnd^2/\gamma^3}\right) \right\}$, and $\delta(\lambda)$ is the function defined in Definition~\ref{def:delta}. Then there exists an algorithm~$\mathsf{F}$ that runs in time~$O(T+m^2n\gamma^{-1})$ and an oracle algorithm $\mathcal{C}$, such that for any algorithm $\mathcal{A}$ one and only one of the following holds: 
	\begin{itemize}
		\item[I.] $\mathcal{C}^{\mathcal{A}}$ solves $\Pi  \cap \{0,1\}^n$ in time $O((T+m^2n\gamma^{-1})\thetaOWF^{-2})$ with~$O(\thetaOWF^{-2})$ queries to~$\mathcal{A}$,
		\item[II.] $\mathsf{F}$ is a~$(1-\thetaOWF/2)$-OWF for~$\mathcal{A}$.
	\end{itemize}
\end{corollary}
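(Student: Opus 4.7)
The proof is a direct composition of the two preceding results, Theorem~\ref{thm:wc-avg-is-cute} and Theorem~\ref{th:owf}, so I would simply thread the parameters carefully through both statements rather than redo any of the analysis.

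First, I would invoke Theorem~\ref{thm:wc-avg-is-cute} with the given $(T,\mu,f^m,d)$-$\wcwhatevs$ splitting-reduction $\red$ for $\Pi$. Since $f$ is (by the hypothesis of Corollary~\ref{cor:avg-owf}, inherited from $\Pi$ being $\cute$) a non-constant permutation-invariant Boolean function, Theorem~\ref{thm:wc-avg-is-cute} applies verbatim with the chosen sparseness parameter $\gamma > 0$, and yields that $\Pi$ is
\[
\left(T,\mu,f^m,\lambda,\gamma\right)\text{-$\cute$},\qquad \lambda = \max\!\left\{ 1,\,13+\log\!\left(\tfrac{mnd^2}{\gamma^3}\right) \right\}.
\]
Note that this is precisely the value of $\lambda$ appearing in the hypothesis of the corollary, so no reparametrization is needed.

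Second, I would feed the resulting $\cute$ problem into Theorem~\ref{th:owf}. The hypothesis $\thetaOWF = (1-10\mu) - (\delta(\lambda)+\gamma) > 0$ required by Theorem~\ref{th:owf} is exactly the hypothesis placed on $\thetaOWF$ in the corollary, with the same $\delta(\cdot)$ from Definition~\ref{def:delta} and the same $\lambda,\gamma$. Applying Theorem~\ref{th:owf} then produces a classical function $\mathsf{F}$ of runtime $O(T + m^2 n \gamma^{-1})$ together with an oracle algorithm $\mathcal{C}$ such that for every adversary $\mathcal{A}$, either $\mathcal{C}^{\mathcal{A}}$ decides $\Pi \cap \{0,1\}^n$ in time $O((T+m^2 n\gamma^{-1})\thetaOWF^{-2})$ using $O(\thetaOWF^{-2})$ oracle calls to $\mathcal{A}$, or $\mathsf{F}$ is a $(1-\thetaOWF/2)$-OWF against $\mathcal{A}$. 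These are exactly the two alternatives stated in the corollary.

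There is no genuine obstacle here: the only thing to verify is that Theorem~\ref{thm:wc-avg-is-cute} indeed delivers classical $\cuteness$ (so that the classical construction of Theorem~\ref{th:owf}, rather than the quantum Theorem~\ref{th:owsg}, can be invoked), and this is immediate because $\red$ is assumed to be a classical reduction. The only subtle point, worth mentioning explicitly in the write-up, is that the quantification is over the \emph{same} sparseness parameter $\gamma$ in both theorems, so no $\gamma$-mismatch arises and the bounds on $\lambda$ and $\thetaOWF$ compose cleanly.
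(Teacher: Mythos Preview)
Your proposal is correct and matches the paper's approach exactly: the paper states the corollary as ``a direct result of combining Theorems~\ref{thm:wc-avg-is-cute} and~\ref{th:owf}'' with no further argument, and your write-up simply spells out the parameter threading between those two theorems. One small wording fix: the non-constant permutation-invariance of $f$ is a hypothesis needed to invoke Theorem~\ref{thm:wc-avg-is-cute} (and is implicit in the corollary), not something ``inherited from $\Pi$ being $\cute$'' --- the $\cuteness$ is the \emph{conclusion} of that step.
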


\subsubsection*{WC-DIST Turing Reductions}~\\

All reductions in the rest of the work until Section~\ref{sec:comp-to-owsg} are classical. In this part, we give an adapted version of the worst-case to distribution reduction (Definition~\ref{def:wc-dist-red}) to the case of non-adaptive randomized Turing reductions. 

Definition~\ref{def:wc-dist-red} covers the  notion of worst-case to average-case \emph{Karp} reductions, that is the type of most cryptographic reductions. However, in order to discuss the $\cuteness$ of WC-DIST Turing reductions, we have to slightly refine this definition; Recall from Section~\ref{sec:lossy} that a non-adaptive randomized Turing reduction from $\Pi$ to $\Sigma$, maps an input $x$ to $(y_1,\ldots,y_k)$, where each $y_i$ is an instance of $\Sigma$, as well as a Boolean circuit $C$. Since $C$ depends on $x$, it can carry some information about the input and affect the $\cuteness$. On the other hand, the requirement of Definition~\ref{def:wc-dist-red} requires analysing the joint distribution of $((y_1,\ldots,y_k),C)$ that might be tedious. We therefore relax the above definition to this case and discuss the $\cuteness$ of randomized Turing reductions in this relaxed setting.

\begin{definition}[WC-DIST Non-Adaptive Randomized Turing $f$-Reductions]\label{def:wc-dist-turing}
	Let $\Pi$ be a promise problem. We say that $R_{{\Turing}}$ is a $(T,\mu,f^m,d,h)$-worst-case to distribution (WC-DIST) non-adaptive randomized Turing reduction for $\Pi$, if 
	\begin{itemize}
		\item[-] $R_{{\Turing}}$ is a non-adaptive $(f^m,\mu)$-Turing reduction from $\Pi$ to some promise or search problem $\Sigma$ (per Definition~\ref{def:na-rnd-turing}), and 
		
		\item[-] for all $x \in \Pi \cap \{0,1\}^n$,  $R_{{\Turing}}(x)$ runs in time $T(n)$, and
		
		\item[-] there exists a distribution $D = \{D_n\}_{n \in \mathbb{N}}$ over $\{0,1\}^*$, such that:
		$$\forall x \in \Pi \cap \{0,1\}^n: \ \Delta((y_1,\ldots,y_k),D_n) \leq d~,$$
		where $((y_1,\ldots,y_k),C) \gets R_{{\Turing}}(x)$, and
		
		\item[-] for all $2^9n/\gamma^3$-uniform distributions $X$ over $n$-bit strings: $$I\left((X,Y_1,\ldots,Y_k);C\right) \leq h,$$
		where $\left((Y_1,\ldots,Y_k),C\right) \gets R_{{\Turing}}(X)$.
		
	\end{itemize}
\end{definition}

We now state the following lemma, on the $\cuteness$ of worst-case to distribution Turing reductions.

\begin{lemma}[$\Cuteness$ of WC-DIST Non-Adaptive Randomized Turing Reductions]\label{lem:wc-dist-turing-is-lossy}
	Let $\Pi$ be a promise problem. If there exists a~$(T,\mu,d,h)$-WC-DIST non-adaptive randomized Turing reduction $R_{{\Turing}}$ for $\Pi$ (per Definition~\ref{def:wc-dist-turing}), then for any $\gamma>0$, $\Pi$ is $(T,\mu,id,\lambda,\gamma)$-$\cute$, where $id : x \mapsto x$ is the identity function and $\lambda = \max\{ 1+h,13+h+\log(nd^2/\gamma^3) \}$.
\end{lemma}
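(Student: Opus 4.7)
The plan is to reduce the claim to Theorem~\ref{thm:wc-avg-is-cute} applied with $m=1$, plus an extra additive term coming from the circuit $C$. Since $f = id$ forces $m=1$ in Definition~\ref{def:cute}, splitting lossiness collapses to ordinary lossiness, and we need only show that for every $(2^9 n/\gamma^3)$-uniform distribution $X$ over $\{0,1\}^n$ supported on $\Pi \cap \{0,1\}^n$,
\[
I\bigl(X; R_{\Turing}(X)\bigr) \;\leq\; \max\bigl\{1+h,\; 13+h+\log(nd^2/\gamma^3)\bigr\}.
\]

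First, I would decompose the output of the reduction as $R_{\Turing}(X) = ((Y_1,\ldots,Y_k), C)$ and apply the chain rule for mutual information:
\[
I\bigl(X;(Y_1,\ldots,Y_k),C\bigr) \;=\; I\bigl(X;(Y_1,\ldots,Y_k)\bigr) \;+\; I\bigl(X;C \,\big|\, Y_1,\ldots,Y_k\bigr).
\]
For the second term, the standard chain-rule identity $I((X,Y_1,\ldots,Y_k); C) = I(Y_1,\ldots,Y_k; C) + I(X; C \mid Y_1,\ldots,Y_k)$, together with the non-negativity of $I(Y_1,\ldots,Y_k; C)$, gives
\[
I\bigl(X;C \,\big|\, Y_1,\ldots,Y_k\bigr) \;\leq\; I\bigl((X,Y_1,\ldots,Y_k);C\bigr) \;\leq\; h,
\]
where the last inequality is exactly the fourth bullet of Definition~\ref{def:wc-dist-turing}.

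For the first term, $I(X;(Y_1,\ldots,Y_k))$, I would repeat verbatim the argument of Theorem~\ref{thm:wc-avg-is-cute} applied with $m=1$ to the marginal map $x \mapsto (Y_1,\ldots,Y_k)$. Concretely, the third bullet of Definition~\ref{def:wc-dist-turing} provides a distribution $D_n$ such that $\Delta\bigl((Y_1,\ldots,Y_k),D_n\bigr)\le d$ for every $x\in\Pi\cap\{0,1\}^n$, which is precisely the WC-DIST property needed in Theorem~\ref{thm:wc-avg-is-cute}. Expanding the mutual information as a weighted KL-divergence, applying the inverse Pinsker inequality of~\cite{corr:sas15}, bounding $\alpha_X \geq \gamma^3/(2^9 n)$ (by $2^9 n/\gamma^3$-uniformity of $X$), and using Jensen's inequality together with the triangle inequality $\Delta(R(x),R(X))\le 2d$ to control the expected log-distance term, yields
\[
I\bigl(X;(Y_1,\ldots,Y_k)\bigr) \;\leq\; \max\bigl\{1,\; 13+\log(nd^2/\gamma^3)\bigr\}.
\]
Adding this to the bound $h$ on the conditional term, and noting that $\max\{1,A\}+h \le \max\{1+h,\, A+h\}$, delivers the desired $\lambda$.

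There is no serious obstacle here: the only mild subtlety is separating the two sources of information leakage by the chain rule, since the circuit $C$ genuinely depends on $x$ and cannot be absorbed into the WC-DIST condition on the queries alone, which is exactly why Definition~\ref{def:wc-dist-turing} packages the information carried by $C$ as a separate parameter $h$.
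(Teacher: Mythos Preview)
Your proposal is correct and follows essentially the same approach as the paper: decompose $I(X;(Y_1,\ldots,Y_k),C)$ via the chain rule, bound the conditional term by $h$ using $I(X;C\mid Y_1,\ldots,Y_k)\le I((X,Y_1,\ldots,Y_k);C)$, and bound $I(X;(Y_1,\ldots,Y_k))$ by rerunning the argument of Theorem~\ref{thm:wc-avg-is-cute} with $m=1$. The paper simply asserts the inequality $I(X;(Y_1,\ldots,Y_k),C)\le I(X;(Y_1,\ldots,Y_k))+I((X,Y_1,\ldots,Y_k);C)$ without spelling out the two chain-rule steps you give, so your write-up is in fact slightly more explicit.
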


\begin{proof}
	
	Similarly to the proof of Theorem~\ref{thm:wc-avg-is-cute}, we show that for any $\gamma > 0$, the reduction $R_{{\Turing}}$ is $\lambda$-lossy for all pairwise independent $2^9 n/{\gamma^3}$-uniform distributions over $n$-bit inputs, where $\lambda = \max\{ 1+h,13+h+\log(nd^2/\gamma^3)\}$. In other words,
	$$ I(X_n;\red_{\Turing}(X_n)) \leq \max\left\{ 1+h,13+h+\log\left(\frac{nd^2}{\gamma^3}\right) \right\}~,$$
	for all $n \in \mathbb{N}$ and $2^9 n/{\gamma^3}$-uniform distributions $X_n$ over $n$-bit strings.\\
	
	For any distribution $X_n$ let $((Y_1,\ldots,Y_k),C)$ denote the distribution of $R_{\text{Turing}}(X_n)$. Dropping the subscript $n$ for simplicity, we have
	\begin{align*}
		I\left(X; \ ((Y_1,\ldots,Y_k),C) \right) & \leq I(X; \ (Y_1,\ldots,Y_k)) + I((X, (Y_1,\ldots,Y_k)) ; \ C )\\ &\leq I(X; \ (Y_1,\ldots,Y_k)) + h,
	\end{align*}
	
	where we used the inequality $I((X, (Y_1,\ldots,Y_k)) ; \ C ) \leq h$ imposed by the conditions.
	The rest of the proof is similar to that of Theorem~\ref{thm:wc-avg-is-cute} and consists of using the condition $\Delta((y_1,\ldots,y_k),D_n) \leq d$ to derive $I(X; \ (Y_1,\ldots,Y_k))\leq \max\left\{ 1,13+\log\left(\frac{mnd^2}{\gamma^3}\right) \right\}$. It therefore concludes that
	$$I(X;R_{\text{Turing}}(X)) \leq \max\left\{ 1+h,13+h+\log\left(\frac{mnd^2}{\gamma^3}\right) \right\}~.$$
	
\end{proof}

\begin{corollary}[OWFs from $\wcwhatevs$~Non-Adaptive Randomized Turing Reductions]\label{cor:avg-turing-owf}
	Let $\Pi$ be a promise problem, and assume that there exists a $(T,\mu,d,h)$-$\wcwhatevs$ Turing reduction for $\Pi$. %
	Let~$\thetaOWF = (1-10\mu) - (\delta(\lambda) + \gamma) > 0$, where $\gamma >0$ and $\lambda$ is defined in Lemma~\ref{lem:wc-dist-turing-is-lossy}. Then there exists an algorithm~$\mathsf{F}$ that runs in time~$O(T+n\gamma^{-1})$ and an oracle algorithm $\mathcal{C}$, such that for any algorithm $\mathcal{A}$ one and only one of the following holds: 
	\begin{itemize}
		\item[I.] $\mathcal{C}^{\mathcal{A}}$ solves $\Pi  \cap \{0,1\}^n$ in time $O((T+n\gamma^{-1})\thetaOWF^{-2})$ with~$O(\thetaOWF^{-2})$ queries to~$\mathcal{A}$,
		\item[II.] $\mathsf{F}$ is a~$(1-\thetaOWF/2)$-OWF for~$\mathcal{A}$.
	\end{itemize}
\end{corollary}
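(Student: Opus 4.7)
The plan is to prove this corollary by directly combining Lemma~\ref{lem:wc-dist-turing-is-lossy} with Theorem~\ref{th:owf}, in essentially the same way Corollary~\ref{cor:avg-owf} was obtained from Theorem~\ref{thm:wc-avg-is-cute} and Theorem~\ref{th:owf}. First, I would invoke Lemma~\ref{lem:wc-dist-turing-is-lossy} on the given $(T,\mu,d,h)$-WC-DIST non-adaptive randomized Turing reduction. This immediately tells us that, for the chosen $\gamma>0$, the problem $\Pi$ is $(T,\mu,id,\lambda,\gamma)$-$\cute$ for $\lambda = \max\{1+h,\,13+h+\log(nd^2/\gamma^3)\}$, where $id : \{0,1\}\to\{0,1\}$ is the identity, viewed as a single-variable Boolean function.

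Next, I would verify the hypothesis of Definition~\ref{def:cute} is met: the function $id$ is non-constant and (trivially) permutation-invariant since $m=1$, and the reduction is a $(\mu, id^1)$-distinguisher reduction for $\Pi$ by Lemma~\ref{lem:turing-is-f-dist}. Hence Theorem~\ref{th:owf} applies to the problem $\Pi$ with parameters $(T,\mu,id^1,\lambda,\gamma)$. The quantity $\thetaOWF = (1-10\mu)-(\delta(\lambda)+\gamma)$ defined in the corollary statement is exactly the one appearing in Theorem~\ref{th:owf}, so the positivity assumption lines up directly.

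Applying Theorem~\ref{th:owf} then yields the algorithm $\mathsf{F}$ running in time $O(T + m^2 n\gamma^{-1})$ together with the oracle algorithm $\mathcal{C}$. Specializing to $m=1$, the runtime of $\mathsf{F}$ becomes $O(T + n\gamma^{-1})$, and the runtime of $\mathcal{C}^\mathcal{A}$ becomes $O((T + n\gamma^{-1})\thetaOWF^{-2})$ with $O(\thetaOWF^{-2})$ queries to $\mathcal{A}$. The dichotomy guaranteed by Theorem~\ref{th:owf} --- either $\mathcal{C}^\mathcal{A}$ decides $\Pi \cap \{0,1\}^n$ in the stated time with the stated number of queries, or $\mathsf{F}$ is a $(1-\thetaOWF/2)$-OWF for $\mathcal{A}$ --- is precisely the conclusion of the corollary.

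Since the argument is a pure composition of two already-established results, there is no substantive obstacle: the only things to check are that the parameters of Lemma~\ref{lem:wc-dist-turing-is-lossy} and Theorem~\ref{th:owf} correctly compose, in particular that $\lambda$ is the same and that specializing $m=1$ in the bound $O(T+m^2 n\gamma^{-1})$ indeed gives $O(T+n\gamma^{-1})$. The mild subtlety --- that the Turing reduction's output contains both $(y_1,\ldots,y_k)$ and a circuit $C$ potentially carrying information about the input --- is already absorbed into the parameter $h$ by Definition~\ref{def:wc-dist-turing} and handled inside the proof of Lemma~\ref{lem:wc-dist-turing-is-lossy}, so no additional work is required at this layer.
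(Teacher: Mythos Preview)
Your proposal is correct and takes essentially the same approach as the paper: the corollary is stated without proof in the paper, being an immediate combination of Lemma~\ref{lem:wc-dist-turing-is-lossy} (yielding $(T,\mu,id,\lambda,\gamma)$-$\cuteness$ with $m=1$) and Theorem~\ref{th:owf}, exactly as you describe. Your observation that specializing $m=1$ collapses the $O(T+m^2n\gamma^{-1})$ bound to $O(T+n\gamma^{-1})$ is the only bookkeeping needed.
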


\subsection{Randomized Encodings}\label{subsec:re}

We now discuss the $\cuteness$ of \emph{randomized encodings}~\cite{IK00,AIK06,App17}. In Lemma~\ref{lem:re-is-wc-dist}, we show that a randomized encoding of a Boolean function is in fact a worst-case to distribution reductions (Definition~\ref{def:wc-dist-red}). Hence, we conclude the $\cuteness$ of randomized encodings and their utility in building one-way functions in Corollary~\ref{cor:encod-owf}.

We first recall the definition of randomized encodings.

\begin{definition}[Randomized Encoding (Adapted from~\cite{AIK06})]
	Let $\mu,d \in [0,1]$ and let $F : \{0, 1\}^* \rightarrow \{0, 1\}^*$ be a function. We say that a function $E : \{0, 1\}^*\rightarrow \{0, 1\}^*$ is a $(T,\mu,d)$-randomized encoding of $F$, if 
	\begin{itemize}
		\item[-] for all $x \in \{0,1\}^n$, $E(x)$ can be computed in time $T(n)$, and
		
		\item[-](\textbf{$\mu$-correctness}) there exists an algorithm $\Dec$ such that for all $x \in \{0,1\}^n$: $$\Pr\left[\Dec(E(x)) \neq F(x)\right] \leq \mu~,$$ and
		
		\item[-](\textbf{$d$-privacy}) there exists an algorithm $\Sim$ such that for all $x \in \{0,1\}^n$: $$\Delta(\Sim(F(x)),E(x)) \leq d~.$$
	\end{itemize}
\end{definition}

\begin{lemma}\label{lem:re-is-wc-dist}
	Let $E : \{0, 1\}^* \rightarrow \{0, 1\}^*$ be a $(T,\mu,d)$-randomized encoding for a Boolean function $F : \{0,1\}^* \rightarrow \{0,1\}$. Then $E$ is a $(T,\mu,id,d)$-worst-case to distribution splitting-reduction for $\Pi$, where $\Pi = \Pi_Y \cup \Pi_N$ is a promise problem defined as $\Pi_Y = \{x \ | \ F(x) = 1\}$, and $\Pi_N = \{x \ | \ F(x) = 0\}$, and $id : x \mapsto x$ is the identity function.
\end{lemma}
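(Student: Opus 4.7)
The plan is to unpack the three bullet points in the definition of a worst-case to distribution splitting-reduction (Definition~\ref{def:wc-dist-red}) and verify each one using the two algorithms guaranteed by the randomized encoding: the decoder $\Dec$ (for the $f$-distinguisher condition) and the simulator $\Sim$ (for the distance condition). Throughout, $m=1$ and $f = id$, so the reduction is univariate and the $f$-distinguisher condition collapses to the statement that $E(x)$ and $E(x')$ can be told apart whenever $\chi_\Pi(x) \neq \chi_\Pi(x')$.

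First I would handle the $f$-distinguisher property. By $\mu$-correctness, $\Pr[\Dec(E(x)) = F(x)] \geq 1-\mu$ for every $x$. Define the (unbounded) distinguisher $\mathcal{D}(h,y) := \Dec(y)$, which ignores the hint. For any $x, x'$ with $\chi_\Pi(x) \neq \chi_\Pi(x')$ we have $F(x) \neq F(x')$ by the definition of $\Pi_Y, \Pi_N$, and consequently
\[
\bigl|\Pr[\mathcal D(h,E(x))=1] - \Pr[\mathcal D(h,E(x'))=1]\bigr| \;\geq\; (1-\mu) - \mu \;=\; 1-2\mu,
\]
which is the $(\mu, id^1)$-distinguisher condition of Definition~\ref{def:q_red}. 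The runtime bullet is immediate from the definition of randomized encoding.

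Next I would produce the two distributions required by the splitting property. Using the simulator, set $D_Y := \Sim(1)$ and $D_N := \Sim(0)$, each viewed as a distribution over $\{0,1\}^*$ (with the appropriate dependence on $n$ coming from $\Sim$'s input length). For any $x \in \Pi_Y \cap \{0,1\}^n$ we have $F(x)=1$, and hence by $d$-privacy
\[
\tfrac{1}{2}\|E(x) - D_Y\|_1 \;=\; \Delta(E(x),\Sim(F(x))) \;\leq\; d,
\]
and symmetrically $\tfrac12\|E(x)-D_N\|_1 \leq d$ for $x \in \Pi_N\cap\{0,1\}^n$. This is exactly the splitting condition in the second half of Definition~\ref{def:wc-dist-red}.

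There is no real obstacle here: the claim is essentially a dictionary translation, with $\Dec$ supplying the distinguisher and $\Sim$ supplying the target distributions $D_Y, D_N$. The only points to be careful about are (i) noting that the distinguisher reduction is univariate ($m=1$), so the hint $h_i$ in Definition~\ref{def:q_red} is trivial and can be ignored, and (ii) that the splitting flavor of Definition~\ref{def:wc-dist-red} is genuinely needed, because a randomized encoding of a Boolean function produces two distinct target distributions (one per output value) rather than a single one, and we should not claim the stronger single-distribution form.
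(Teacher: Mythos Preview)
Your proposal is correct and follows essentially the same approach as the paper: the paper also uses $\Dec$ to witness the distinguisher condition (first showing $E$ is a $(\mu,id)$-reduction in the sense of Definition~\ref{def:red}, which is the same computation you carry out directly) and then uses $\Sim(1)$ and $\Sim(0)$ as the two target distributions for the splitting condition via $d$-privacy. Your remarks about $m=1$ trivializing the hint and about genuinely needing the splitting variant are accurate and match the paper's treatment.
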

\begin{proof}
	We start by showing that $E(\cdot,\unif_m)$ is a $(\mu,id)$-reduction for $\Pi$ as in Definition~\ref{def:red}, which by definition implies that it is a $(\mu,id)$-distinguisher reduction. Let $x,x'\in\Pi\cap\{0,1\}^*$ such that $\chi_\Pi(x)\not=\chi_\Pi(x')$, i.e.~without loss of generality we can assume that $F(x)=1$ and $F(x')=0$. By $\mu$-correctness of the randomized encoding $E$, there is a distinguisher $\Dec$ such that
	\begin{align*}
		&\left|\Pr(\Dec(E(x))=1)-\Pr(\Dec(E(x'))=1)\right|\\
		&\quad=\left|\Pr(\Dec(E(x))=F(x))-\Pr(\Dec(E(x'))\not=F(x'))\right|\\
		&\quad\geq (1-\mu)-\mu.
	\end{align*}
	For $x\in\Pi_Y\cap\{0,1\}^*$, we have $F(x)=1$, thus $\Sim(1)=\Sim(F(x))$ is a distribution over the YES instances, by a similar argument $\Sim(0)$ is a distribution over the NO instances. By $d$-secrery of the randomized encoding, for every $x\in\Pi_Y\cap\{0,1\}^*$, we have that
	\begin{align*}
		\frac{1}{2}\|E(x)-\Sim(1)\|_1\leq d~,
	\end{align*}
	and the same approximation holds for $E(x)$ with instances $x\in\Pi_N\cap\{0,1\}^*$ with respect to $\Sim(0)$, leading to the desired result.
\end{proof}

\begin{corollary}[$\Cuteness$ of Randomized Encodings]\label{cor:encod-is-cute}
	If there exists a $(T,\mu,d)$-randomized encoding $\encod$ for a promise problem $\Pi$, then for any $\gamma > 0$, $\Pi$ is~$(T,\mu,id,\lambda,\gamma)$-$\cute$, where~$ \lambda = \max\left\{ 1,13+\log\left(nd^2/\gamma^3\right) \right\}  \,$, and $id : x \mapsto x$ is the identity function.
\end{corollary}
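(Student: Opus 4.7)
The plan is to obtain Corollary~\ref{cor:encod-is-cute} as a direct specialization of Theorem~\ref{thm:wc-avg-is-cute}, routed through Lemma~\ref{lem:re-is-wc-dist}, at the parameter choice $m=1$ and $f=id$. Concretely, a randomized encoding is (by Lemma~\ref{lem:re-is-wc-dist}) already packaged as a WC-DIST splitting-reduction, so the cuteness is inherited from Theorem~\ref{thm:wc-avg-is-cute} almost verbatim.

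First I would instantiate the promise problem as in the statement: given the randomized encoding $\encod$ for $\Pi$, set $F := \chi_\Pi$ restricted to $\Pi_Y \cup \Pi_N$, so that $\Pi_Y = F^{-1}(1)$ and $\Pi_N = F^{-1}(0)$. Lemma~\ref{lem:re-is-wc-dist} then asserts that $\encod$ is a $(T,\mu,id,d)$-worst-case to distribution splitting-reduction for $\Pi$, with simulators $\Sim(1)$ and $\Sim(0)$ playing the role of the two target distributions $D_Y$ and $D_N$ required by Definition~\ref{def:wc-dist-red}.

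Next I would verify that $id : \{0,1\} \to \{0,1\}$ satisfies the hypothesis on $f$ in Theorem~\ref{thm:wc-avg-is-cute}. The function $id$ is non-constant since it takes both values $0$ and $1$, and it is permutation-invariant in the sense of Definition~\ref{def:prelim-p} since $\perm_1$ is the trivial group. Hence Theorem~\ref{thm:wc-avg-is-cute} applies with $m=1$ and $f=id$.

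Finally I would read off the conclusion. Theorem~\ref{thm:wc-avg-is-cute} gives, for every $\gamma>0$, that $\Pi$ is $(T,\mu,f^m,\lambda,\gamma)$-$\cute$ with
\[
\lambda \;=\; \max\!\left\{\,1,\; 13+\log\!\left(\tfrac{mnd^2}{\gamma^3}\right)\right\},
\]
which, plugging in $m=1$ and $f=id$, is precisely the bound in the corollary. I do not expect any genuine obstacle here: the only care needed is bookkeeping the parameters (checking that the splitting distributions $\Sim(0),\Sim(1)$ from the privacy condition match the $D_Y,D_N$ of Definition~\ref{def:wc-dist-red}, and that the single-input setting $m=1$ makes permutation-invariance of $id$ automatic).
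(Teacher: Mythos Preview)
Your proposal is correct and matches the paper's approach exactly: the corollary is stated immediately after Lemma~\ref{lem:re-is-wc-dist} precisely because one feeds that lemma (randomized encoding $\Rightarrow$ $(T,\mu,id,d)$-WC-DIST splitting-reduction) into Theorem~\ref{thm:wc-avg-is-cute} at $m=1$, $f=id$, and reads off $\lambda=\max\{1,13+\log(nd^2/\gamma^3)\}$. Your bookkeeping checks (that $id$ on one bit is non-constant and trivially permutation-invariant, and that $\Sim(0),\Sim(1)$ serve as $D_N,D_Y$) are exactly what is needed.
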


\begin{corollary}[OWFs from Randomized Encodings]\label{cor:encod-owf}
	Let $\Pi$ be a promise problem, and assume that there exists a~$(T,\mu,d)$-randomized encoding for $\Pi$. Let~$\thetaOWF = (1-10\mu) - (\delta(\lambda) + \gamma) > 0$, where $\gamma >0$ and $\lambda$ is defined in Corollary~\ref{cor:encod-is-cute}. Then there exists an algorithm~$\mathsf{F}$ that runs in time~$O(T+n\gamma^{-1})$ and an oracle algorithm $\mathcal{C}$, such that for any algorithm $\mathcal{A}$ one and only one of the following holds: 
	\begin{itemize}
		\item[I.] $\mathcal{C}^{\mathcal{A}}$ solves $\Pi  \cap \{0,1\}^n$ in time $O((T+n\gamma^{-1})\thetaOWF^{-2})$ with~$O(\thetaOWF^{-2})$ queries to~$\mathcal{A}$,
		\item[II.] $\mathsf{F}$ is a~$(1-\thetaOWF/2)$-OWF for~$\mathcal{A}$.
	\end{itemize}
\end{corollary}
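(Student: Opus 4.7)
\textbf{Proof plan for Corollary~\ref{cor:encod-owf}.} The statement is a direct composition of two earlier results, so my plan is simply to chain them together and verify the parameters line up. First, I would invoke Corollary~\ref{cor:encod-is-cute} on the given $(T,\mu,d)$-randomized encoding $E$ for $\Pi$; this yields that, for the arbitrary $\gamma>0$ fixed in the hypothesis, $\Pi$ is $(T,\mu,id,\lambda,\gamma)$-$\cute$ with $\lambda = \max\{1,\,13+\log(nd^2/\gamma^3)\}$, where $id:\{0,1\}\to\{0,1\}$ is the identity map viewed as a Boolean function on $m=1$ input.

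Next, I need to check the hypotheses of Theorem~\ref{th:owf} are satisfied with these parameters. The identity function on a single bit is non-constant and (vacuously) permutation-invariant, so $f=id$ is a legal choice in Definition~\ref{def:cute}. The quantity $\thetaOWF$ from the Corollary statement is defined exactly as in Theorem~\ref{th:owf}, namely $(1-10\mu)-(\delta(\lambda)+\gamma)$, and positivity is assumed. Since the encoding $E$ is classical, the $\cute$ reduction inherited via Corollary~\ref{cor:encod-is-cute} (which is $E(\cdot,\unif_m)$ by the proof of Lemma~\ref{lem:re-is-wc-dist}) is classical as well, so Theorem~\ref{th:owf} applies.

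Applying Theorem~\ref{th:owf} then gives an algorithm $\mathsf{F}$ running in time $O(T+m^2n\gamma^{-1})$ and an oracle algorithm $\mathcal{C}$ satisfying the claimed dichotomy. Specializing to $m=1$, the runtime becomes $O(T+n\gamma^{-1})$ for $\mathsf{F}$ and $O((T+n\gamma^{-1})\thetaOWF^{-2})$ for $\mathcal{C}^{\mathcal{A}}$ with $O(\thetaOWF^{-2})$ queries, matching the statement exactly.

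There is no real obstacle here; the only minor point worth flagging is the $m=1$ specialization, which is why the $m^2$ factor collapses. Since Theorem~\ref{th:owf} and Corollary~\ref{cor:encod-is-cute} have already done all of the heavy lifting (the disguising lemma, the reverse Pinsker bound on mutual information of the encoding, and the one-wayness argument via the game $\mathcal{B}^{\mathcal{A}}$), the corollary reduces to a one-line composition together with the verification of the parameter values.
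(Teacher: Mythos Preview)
Your proposal is correct and matches the paper's approach exactly: the corollary is stated without proof in the paper because, just like Corollary~\ref{cor:avg-owf}, it is a direct composition of the $\cuteness$ result (here Corollary~\ref{cor:encod-is-cute}) with Theorem~\ref{th:owf}, specialized to $m=1$. Your observation that the $m=1$ specialization collapses the $m^2n\gamma^{-1}$ term to $n\gamma^{-1}$ is precisely the point that makes the stated runtimes match.
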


\subsection{Quantum Worst-Case to Average-Case Reductions}
In this section we show that a worst-case to average-case quantum reduction also implies $\cuteness$, and therefore OWSGs and EFIs. However, since a quantum reverse Pinsker inequality is not known in its most general form, we include here two independent assumptions on the quantum worst-case to average-case reductions that imply quantum cryptography.

\begin{theorem}[$\Cuteness$ of \wcwhatevs~$f$-Distinguisher Quantum Reductions]\label{thm:wc-avg-q-is-cute}
	Let $\Pi = \Pi_Y \cup \Pi_N $ for two disjoint sets $\Pi_Y,\Pi_N \subset \{0,1\}^*$. If there exists a $(T,\mu,f^m,d)$-$\wcwhatevs$ quantum reduction~$\red$ for~$\Pi$, such that $f$ is a non-constant permutation-invariant function,
	then for any $\gamma > 0$, we have
	\begin{enumerate}
		\item If the minimum eigenvalue of the reduction is uniformly bounded from below for every pairwise independent $2^9mn/\gamma^3$-uniform distribution $X$, i.e~there exists a constant~$\beta>0$ such that~$\lambda_{\mathrm{min}}(\red(X))>\beta$, then~$\Pi$ is~$(T,\mu,f^m,\lambda,\gamma)$-$\cute$, where
		\begin{align*}
			\lambda = (\beta+2d)\log(1+\frac{2d}{\beta})~.
		\end{align*}
		\item If instead the dimension of the image space is upper bounded for every pairwise independent $2^9mn/\gamma^3$-uniform distribution $X$, i.e.~there exists a constant~$d_\red\in\mathbb{N}$ such that~$\dim(\operatorname{Im}(\red(X)))\leq d_\red$, then~$\Pi$ is~$(T,\mu,f^m,\lambda,\gamma)$-$\cute$, where
		\begin{align*}
			\lambda = 4d\log d_\red+h(2d)~.
		\end{align*}
	\end{enumerate}
\end{theorem}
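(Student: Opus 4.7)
The plan is to adapt the argument of Theorem~\ref{thm:wc-avg-is-cute} to the quantum setting, replacing Sason's classical reverse Pinsker inequality by quantum analogues: Eisert's inequality (Lemma~\ref{lemma:eisert}) in Part~1, and the Fannes--Audenaert and Alicki--Fannes--Winter inequalities (Lemma~\ref{thm:afw_inequality}) in Part~2. In both cases I would fix a pairwise independent $2^9 mn/\gamma^3$-uniform distribution $X$ with $p_X(x) := \Pr(X=x)$, write $\rho_B := \sum_x p_X(x)\,\red(x)$ for the marginal on the output register, and aim to bound $I_q(X;\red(X))$; the conclusion then follows directly from Definition~\ref{def:cute}. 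The common starting point is the identity
\begin{equation*}
\sum_x p_X(x)\, D\bigl(\red(x)\,\|\,\sigma\bigr) \;=\; I_q(X;\red(X)) + D\bigl(\rho_B\,\|\,\sigma\bigr),
\end{equation*}
valid for every state $\sigma$, which is immediate from Equation~\eqref{eq:relative_entropy_cq} applied to $\rho_{X,\red(X)}$ against $\rho_X \otimes \sigma$. Non-negativity of the relative entropy yields $I_q(X;\red(X)) \leq \sum_x p_X(x)\, D(\red(x)\,\|\,\sigma)$ for every $\sigma$, and the WC-DIST hypothesis provides a uniform trace-distance bound $\tr(\red(x),D) \leq 2d$ (combining Definition~\ref{def:wc-dist-red} with Definition~\ref{def:ell1}), which I exploit repeatedly via the triangle inequality.

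For Part~1 I would choose $\sigma = \rho_B$; by hypothesis $\lambda_{\min}(\rho_B) > \beta$, so Eisert's inequality applies pointwise and gives
\begin{equation*}
D(\red(x) \,\|\, \rho_B) \;\leq\; \bigl(\beta + \tr(\red(x),\rho_B)\bigr)\log\!\Bigl(1 + \tfrac{\tr(\red(x),\rho_B)}{\beta}\Bigr).
\end{equation*}
Bounding $\tr(\red(x),\rho_B)$ via the triangle inequality and the WC-DIST distance, and using the monotonicity of $t \mapsto (\beta+t)\log(1+t/\beta)$, averaging over $x$ yields $I_q(X;\red(X)) \leq (\beta + 2d)\log(1 + 2d/\beta)$ as required.

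For Part~2 the previous approach can fail when $\supp(\red(x)) \not\subseteq \supp(\rho_B)$, so I would work with von Neumann entropies directly. Writing $I_q(X;\red(X)) = S(\rho_B) - S(\red(X)\,|\,X)_\rho$ and comparing with the product state $\omega_{X,\red(X)} := \rho_X \otimes D$, for which $I_q(X;\red(X))_\omega = 0$, the Fannes--Audenaert inequality controls $|S(\rho_B) - S(D)|$ via $\tr(\rho_B, D)$ and $\log d_\red$, while the Alicki--Fannes--Winter inequality (Lemma~\ref{thm:afw_inequality}) controls $|S(\red(X)\,|\,X)_\rho - S(D)|$ via $\tr(\rho_{X,\red(X)}, \omega_{X,\red(X)})$ and $\log d_\red$. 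Block-diagonality gives $\tr(\rho_{X,\red(X)}, \omega_{X,\red(X)}) = \sum_x p_X(x)\tr(\red(x),D) \leq 2d$, and combining the two estimates yields the claimed $\lambda = 4d\log d_\red + h(2d)$.

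The main obstacle is the well-known fragility of reverse Pinsker bounds in the quantum regime: unlike the classical case, closeness in trace distance alone does not control relative entropy, which is precisely what forces the two distinct hypotheses of the theorem and the two qualitatively different bounds. A secondary subtlety in Part~2 is to ensure that the Fannes--Audenaert and AFW contributions combine into a single $h(2d)$ rather than $2h(2d)$; this can be handled by absorbing the smaller $h$-term into the larger AFW constant, or equivalently by appealing to a unified continuity estimate for the Holevo quantity $\chi(\{p_X(x),\red(x)\})$ in place of the two separate entropy bounds.
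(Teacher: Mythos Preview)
Your Part~1 argument is essentially identical to the paper's: both express $I_q(X;\red(X))=\sum_x p_X(x)\,D(\red(x)\,\|\,\rho_B)$ via the identity you state, apply Lemma~\ref{lemma:eisert} using $\lambda_{\min}(\rho_B)>\beta$, and bound $\tr(\red(x),\rho_B)\le 2d$ by triangulating through $D$.

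For Part~2 your route diverges from the paper's, and is more laborious than necessary. The paper takes the comparison state $\omega=\rho_X\otimes\rho_B$ rather than $\rho_X\otimes D$. Because \emph{both} marginals of $\omega$ then coincide with those of $\rho$, the difference $I_\rho-I_\omega$ collapses to a single conditional-entropy difference $S(\red(X)\mid X)_\omega-S(\red(X)\mid X)_\rho$, and one application of AFW with $\tr(\rho,\omega)=\sum_x p_X(x)\,\tr(\red(x),\rho_B)\le 2d$ yields $4d\log d_\red+h(2d)$ directly---no Fannes--Audenaert step and no $h$-term bookkeeping. Your choice $\omega=\rho_X\otimes D$ forces the two-bound detour you describe, and it also introduces a gap you did not address: the dimension factor in both Fannes--Audenaert and AFW is the ambient Hilbert-space dimension of the output register, and to replace it by $d_\red$ you need $D$ (not only $\rho_B$) to be supported on the $d_\red$-dimensional subspace $\supp(\rho_B)$. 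The hypothesis bounds only $\dim\operatorname{Im}(\red(X))$, i.e.\ the support of $\rho_B$, and says nothing about where $D$ lives. This is repairable (e.g.\ project $D$ onto $\supp(\rho_B)$, which can only shrink each $\tr(\red(x),D)$ since every $\red(x)$ already lies in that support), but the paper's choice of reference state makes the issue vanish for free and simultaneously eliminates the double-$h$ nuisance you flagged.
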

\begin{proof} \textit{Case $1$:~$\lambda_{\mathrm{min}}(\red(X))>\beta$~.} Let us denote by~$\rho_{X,\red(X)}$ (or simply by~$\rho$) the joint system of the classical-quantum state after the reduction~$\red$ is applied to a pairwise independent $2^9mn/\gamma^3$-uniform distribution~$X_n$ over~$n$-bit strings, where we drop the subscript~$n$ simplicity, see Equation~\ref{eq:c-q_map}. We denote the subsystems of~$\rho_{X,\red(X)}$ by~$\rho_X$ and~$\rho_{\red(X)}$. Note that since~$\rho_{X,\red(X)}$ is a classical-quantum system, so is~$\rho_X\otimes\rho_{\red(X)}$. We can rewrite the mutual information in terms of the relative entropy, which by Equation~\ref{eq:relative_entropy_cq} for classical-quantum systems takes a simple form
	\begin{align*}
		I(X;R(X))_\rho = D(\rho_{X,\red(X)} \ \| \ \rho_X\otimes\rho_{\red(X)}) = \sum_x \Pr(X=x)D_{KL}(\rho_{\red(X)|X=x} \ \| \ \rho_{\red(X)})~.
	\end{align*}
	Note that we drop the classical term from the previous equation because both states have the same classical distribution. By Lemma~\ref{lemma:eisert}, if the minimum eigenvalue of the reduction is uniformly bounded from below by a constant~$\beta$, i.e.~$\lambda_{\textrm{min}}(\red(X))>\beta$, then we have a reserve Pinsker-like inequality
	$$ D(\rho_{\red(X)|X=x}||\rho_{\red(X)})\leq \left(\beta+\frac{1}{2}\|\rho_{\red(X)|X=x}-\rho_{\red(X)}\|_1\right)\log(1+\frac{1}{2\beta}\|\rho_{\red(X)|X=x}-\rho_{\red(X)}\|_1)~. $$
	
	Finally, note that since~$\red$ is a~$(T,\mu,f^m,d)$-WC-DIST reduction, there exists a distribution~$D_n$ such that for any~$x \in \Pi \cap \{0,1\}^n$, it holds that $\frac{1}{2}\|\rho_{\red(X)|X=x}-D_n\|_1 \leq d$, thus~$\frac{1}{2}\|\rho_{\red(X_n)}-D_n\|_1\leq d$. By the triangle inequality, we have~$\frac{1}{2}\|\rho_{\red(X)|X=x}-\rho_{\red(X)}\|_1\leq 2d$. We conclude that
	\begin{align*}
		I(X;R(X))_\rho \leq\sum_x\Pr(X=x)(\beta+2d)\log(1+\frac{2d}{\beta}) = (\beta+2d)\log(1+\frac{2d}{\beta})\, .
	\end{align*}
	
	\noindent\textit{Case $2$:~$\dim(\operatorname{Im}(\red(X)))\leq d_\red$~.} We can find an alternative bound using the quantum conditional entropy. Let us denote by~$\omega$ the product state~$\omega_{X,\red(X)}:=\rho_X\otimes\rho_{\red(X)}$, since the mutual information between subsystems of product states are zero, we have
	\begin{align*}
		I(X;\red(X))_\rho & = |I(X;\red(X))_\rho-I(X;\red(X))_\omega|\\
		& = |S(\rho_X) - S(X|\red(X))_\rho -S(\omega_X) + S(X|\red(X))_\omega| \\ 
		& = |S(X|\red(X))_\rho-S(X|\red(X))_\omega |\\
		&\leq 2\Tr(\rho,\omega)\log\dim(H_A)+h(\Tr(\rho,\omega))\, ,
	\end{align*}
	where in the last inequality we used Theorem~\ref{thm:afw_inequality}. We can bound the trace distance between~$\rho$ and~$\omega$ by the worst-case indistinguishability of the reduction~$\red$. Indeed, note that~$\rho$ and~$\omega$ are classical-quantum states with the same classical distribution, thus
	\begin{align*}
		\|\rho_{X,\red(X)}-\rho_X\otimes\rho_{\red(X)}\|_1 & = \sum_x \Pr(X=x)\|\rho_{\red(X)|X=x}-\rho_{\red(X)}\|_1\\
		& \leq \sum_x \Pr(X=x) 2d = 2d\, .
	\end{align*}
	Since the binary entropy function is increasing on~$[0,1/2]$, we can conclude that for $d<1/4$, 
	$$I(X;\red(X))_\rho\leq 4d\log d_\red+h(2d)~.$$
	
\end{proof}

The following corollaries stating conditions for the existence of OWSG are a direct result of combining Theorems~\ref{thm:wc-avg-q-is-cute} and~\ref{th:owsg}, we split the two conditions on the quantum reduction for clarity.
\begin{corollary}
	Let $\Pi$ be a promise problem, and assume that there exists a $(T,\mu,f^m,d)$-$\wcwhatevs$~reduction for $\Pi$. Let~$\beta>0$ be such that~$\lambda_{\mathrm{min}}(\red(X))>\beta$ for every pairwise independent $2^9mn/\gamma^3$-uniform distribution $X$. Let~$\thetaOWSG:= 1-(\delta(\lambda) + \gamma + 4\sqrt{2\mu}) > 0$ and~$\tauOWSG:= 1-2\mu -(\delta+\gamma) > 0$, where $\gamma >0$, $\lambda = (\beta+2d)\log(1+2d/\beta)$, and $\delta(\lambda)$ is the function defined in Definition~\ref{def:delta}.
	Then there exists an algorithm~$\mathsf{G}=(\mathsf{StateGen},\mathsf{Ver})$ such that~$\mathsf{StateGen}$ runs in time~$O(T+m^2 n\gamma^{-1})$ and an oracle algorithm~$\mathcal{C}$, such that for every algorithm~$\mathcal{A}$ one and only one of the following statements holds:
	\begin{itemize}
		\item[I.] $\mathcal{C}^\mathcal{A}$ solves~$\Pi  \cap \{0,1\}^n$ in time~$O((T+m^2n\gamma^{-1}+\tauOWSG^{-2})\thetaOWSG^{-2})$ with~$O(\thetaOWSG^{-2})$ classical queries to~$\mathcal{A}$,
		\item[II.] $\mathsf{G}$ is a~$(1-\thetaOWSG/4)$-OWSG for~$\mathcal{A}$.
	\end{itemize}
\end{corollary}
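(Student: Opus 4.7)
The statement is a direct corollary obtained by chaining two earlier results, so the plan is essentially a verification of the hypotheses rather than the development of new machinery. First I would invoke Theorem~\ref{thm:wc-avg-q-is-cute}, Case~1: the assumption $\lambda_{\min}(\red(X))>\beta$ for every pairwise independent $2^9 mn/\gamma^3$-uniform distribution $X$ is exactly the hypothesis of that case, and the $(T,\mu,f^m,d)$-$\wcwhatevs$ property is the other input. The conclusion is that $\Pi$ is $(T,\mu,f^m,\lambda,\gamma)$-$\cute$ with the prescribed $\lambda=(\beta+2d)\log(1+2d/\beta)$. This identifies $\Pi$ as an instance to which the general $\cute$-to-cryptography machinery applies.

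Next I would invoke Theorem~\ref{th:owsg} on this $\cute$ problem. The quantities $\thetaOWSG=1-(\delta(\lambda)+\gamma+4\sqrt{2\mu})$ and $\tauOWSG=1-2\mu-(\delta(\lambda)+\gamma)$ defined in the hypothesis of the corollary are exactly the parameters that Theorem~\ref{th:owsg} requires to be positive, and the corollary assumes both positivity conditions. The theorem then yields a pair $\mathsf{G}=(\sgen,\ver)$, with $\sgen$ running in time $O(T+m^2n\gamma^{-1})$, and an oracle algorithm $\mathcal{C}$ with the dichotomy that either $\mathcal{C}^\mathcal{A}$ decides $\Pi\cap\{0,1\}^n$ in time $O((T+m^2n\gamma^{-1}+\tauOWSG^{-2})\thetaOWSG^{-2})$ using $O(\thetaOWSG^{-2})$ queries, or $\mathsf{G}$ is a $(1-\thetaOWSG/4)$-OWSG for $\mathcal{A}$. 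These are precisely statements~I and~II of the corollary.

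One point that needs a brief verification is that Theorem~\ref{th:owsg} requires the underlying $\cute$ reduction to be \emph{pure-outcome}, whereas Theorem~\ref{thm:wc-avg-q-is-cute} produces $\cuteness$ of a general quantum reduction. I would therefore note that the corollary implicitly inherits the pure-outcome requirement from Theorem~\ref{th:owsg}, which is needed so that $\mathcal{C}$ can estimate trace distances in $\ver$ via SWAP tests with $O(\tauOWSG^{-2})$ repetitions (this is where the $\tauOWSG^{-2}$ term enters the runtime). This is the only non-mechanical detail; otherwise, the proof is a straightforward ``apply Theorem~\ref{thm:wc-avg-q-is-cute} (Case~1), then apply Theorem~\ref{th:owsg}'' with the parameter dictionary $\lambda\mapsto(\beta+2d)\log(1+2d/\beta)$, after which the output parameters of the two theorems align identically with the conclusions claimed in the corollary.
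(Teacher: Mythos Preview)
Your proposal is correct and matches the paper's approach exactly: the paper states that this corollary is ``a direct result of combining Theorems~\ref{thm:wc-avg-q-is-cute} and~\ref{th:owsg}'' without giving any further argument. Your additional remark about the implicit pure-outcome hypothesis (needed by Theorem~\ref{th:owsg} but not restated in the corollary) is a valid observation that the paper itself glosses over.
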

\begin{corollary}
	Let $\Pi$ be a promise problem, and assume that there exists a $(T,\mu,f^m,d)$-$\wcwhatevs$~reduction for $\Pi$, with $d<1/4$. Let~$d_\red\in\mathbb N$ be such that~$\dim(\mathrm{Im}(\red(X)))\leq d_\red$ for every pairwise independent $2^9mn/\gamma^3$-uniform distribution $X$. Let~$\thetaOWSG:= 1-(\delta(\lambda) + \gamma + 4\sqrt{2\mu}) > 0$ and~$\tauOWSG:= 1-2\mu -(\delta+\gamma) > 0$, where $\gamma >0$, $\lambda = 4d \log d_\red+h(2d)$, and $\delta(\lambda)$ is the function defined in Definition~\ref{def:delta}.
	Then there exists an algorithm~$\mathsf{G}=(\mathsf{StateGen},\mathsf{Ver})$ such that~$\mathsf{StateGen}$ runs in time~$O(T+m^2 n\gamma^{-1})$ and an oracle algorithm~$\mathcal{C}$, such that for every algorithm~$\mathcal{A}$ one and only one of the following statements holds:
	\begin{itemize}
		\item[I.] $\mathcal{C}^\mathcal{A}$ solves~$\Pi  \cap \{0,1\}^n$ in time~$O((T+m^2n\gamma^{-1}+\tauOWSG^{-2})\thetaOWSG^{-2})$ with~$O(\thetaOWSG^{-2})$ classical queries to~$\mathcal{A}$,
		\item[II.] $\mathsf{G}$ is a~$(1-\thetaOWSG/4)$-OWSG for~$\mathcal{A}$.
	\end{itemize}
\end{corollary}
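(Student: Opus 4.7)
The plan is to obtain this corollary as a direct composition of Theorem~\ref{thm:wc-avg-q-is-cute} (case~2) with Theorem~\ref{th:owsg}. The corollary is structurally identical to the preceding one (with the eigenvalue bound), and is stated as ``a direct result of combining Theorems~\ref{thm:wc-avg-q-is-cute} and~\ref{th:owsg},'' so the proof is a matter of threading the parameters through.

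First, I would invoke case~2 of Theorem~\ref{thm:wc-avg-q-is-cute}. The hypothesis that $\dim(\mathrm{Im}(\red(X))) \leq d_\red$ for every pairwise independent $2^9 mn/\gamma^3$-uniform distribution $X$, combined with the $d$-closeness of $\red$ to a single (splitting) distribution $D_n$ and the fact that $d<1/4$ is in the range where the Alicki--Fannes--Winter estimate from the proof of Theorem~\ref{thm:wc-avg-q-is-cute} applies, yields that $\red$ is splitting-lossy with parameter $\lambda = 4d\log d_\red + h(2d)$. Since by assumption $\red$ is a $(\mu, f^m)$-distinguisher reduction running in time $T$ with $f$ a non-constant permutation-invariant Boolean function, Definition~\ref{def:cute} is satisfied and so $\Pi$ is $(T,\mu,f^m,\lambda,\gamma)$-$\cute$.

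Second, I would feed this $\cuteness$ into Theorem~\ref{th:owsg}. The positivity hypotheses $\thetaOWSG = 1 - (\delta(\lambda) + \gamma + 4\sqrt{2\mu}) > 0$ and $\tauOWSG = 1 - 2\mu - (\delta(\lambda) + \gamma) > 0$ in the corollary are exactly the ones required there, so Theorem~\ref{th:owsg} produces a state generator $\mathsf{G} = (\mathsf{StateGen},\mathsf{Ver})$ with $\mathsf{StateGen}$ of runtime $O(T + m^2 n\gamma^{-1})$ and an oracle algorithm $\mathcal{C}$ such that, for every $\mathcal{A}$, either $\mathcal{C}^{\mathcal{A}}$ solves $\Pi \cap \{0,1\}^n$ in time $O((T + m^2 n\gamma^{-1} + \tauOWSG^{-2})\thetaOWSG^{-2})$ with $O(\thetaOWSG^{-2})$ classical queries, or $\mathsf{G}$ is a $(1-\thetaOWSG/4)$-OWSG for $\mathcal{A}$. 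This is precisely the stated dichotomy.

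The only subtle point is that Theorem~\ref{th:owsg} additionally asks the underlying $\cute$ reduction to be \emph{pure-outcome}, which is not explicitly recorded in Definition~\ref{def:wc-dist-red}; the intended reading is that the quantum $\wcwhatevs$ reduction $\red$ produces pure states on each fixed input (so that the image-dimension bound $d_\red$ is the quantity appearing in the Alicki--Fannes--Winter step and the SWAP-test-based verifier of Theorem~\ref{th:owsg} is well-defined). Under this reading the chaining is routine and no new analysis is required beyond checking that the parameters $(T, \mu, f^m, \lambda, \gamma)$ produced in the first step match those consumed in the second.
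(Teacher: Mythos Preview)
Your proposal is correct and matches the paper's approach exactly: the paper states that both corollaries are ``a direct result of combining Theorems~\ref{thm:wc-avg-q-is-cute} and~\ref{th:owsg}'' and gives no further argument. Your observation about the implicit pure-outcome assumption needed to invoke Theorem~\ref{th:owsg} is apt and is indeed glossed over in the paper as well.
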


\begin{remark}
	We can also instantiate Theorem~\ref{thm:wc-avg-q-is-cute} with the construction of EFIs in Theorem~\ref{th:efi} to obtain $(1-2\mu,1-2\mu-\thetaEFI/2 )$-EFIs from WC-DIST $f$-Distinguisher Quantum Reductions with the same two possible conditions on the parameter~$\lambda$ from~$\thetaEFI:=(1-2\mu) - 3(\delta(\lambda) + \gamma)$.
\end{remark}
\section{Applications: Hardness vs One-Wayness}\label{sec:application}

In the previous sections, we analysed the conditions under which a $\cute$ reduction or a~$\wcwhatevs$ reduction of~$\Pi$ implies one-way functions under the hardness of~$\Pi$. In this section, we discuss the concrete parameters. Except in Section~\ref{sec:comp-to-owsg}, all statements are subject to classical algorithms.

Let us discuss the implications of generic $\cute$ reductions. We explicit some particular conditions under which one-way functions exist. 
\begin{lemma}\label{lemma:generic-lossy-owf}
	Let $n \in \mathbb{N}$, $\lambda: \mathbb{N} \rightarrow \mathbb{R}^+$. Let $\Pi$ be a $(T,\mu,f^m,\lambda,\gamma)$-$\cute$ for parameters below: 
\begin{align*}
T,m = 2^{O(\lambda+\log n)}\, , \quad \mu \leq 2^{-\lambda-8}\, , \quad \gamma = 2^{-\lambda-4}\, .
\end{align*} 
If~$\Pi$ cannot be solved in time~$2^{O(\lambda+\log n)}$, then one-way functions exist.
\end{lemma}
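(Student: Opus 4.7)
The plan is to instantiate Theorem~\ref{th:owf} and then invoke Yao's hardness amplification. First I would verify that the parameter regime makes $\thetaOWF$ meaningfully positive. Using the bound $\delta(\lambda)\leq 1-2^{-\lambda-2}$ from Definition~\ref{def:delta}, together with $\gamma=2^{-\lambda-4}$ and $\mu\leq 2^{-\lambda-8}$, a direct calculation gives
\begin{equation*}
\thetaOWF = (1-10\mu)-(\delta(\lambda)+\gamma)\;\geq\; 2^{-\lambda-2}-2^{-\lambda-4}-10\cdot 2^{-\lambda-8}\;=\;\Omega(2^{-\lambda})\, .
\end{equation*}
In particular $\thetaOWF>0$, so Theorem~\ref{th:owf} applies and produces an algorithm $\mathsf{F}$ together with an oracle algorithm $\mathcal{C}$ satisfying the stated dichotomy.

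Next I would control the runtimes in terms of the security parameter $\kappa := n\cdot 2^{\lambda}$. The runtime of $\mathsf{F}$ is $O(T+m^2n\gamma^{-1})$; plugging in $T,m=2^{O(\lambda+\log n)}$ and $\gamma^{-1}=2^{\lambda+4}$ yields a runtime of $2^{O(\lambda+\log n)}=\poly(\kappa)$. The runtime of $\mathcal{C}^{\mathcal{A}}$ for an adversary $\mathcal{A}$ of runtime $T_{\mathcal{A}}$ is $O\!\left((T+m^2n\gamma^{-1}+T_{\mathcal{A}})\cdot \thetaOWF^{-2}\right)$; since $\thetaOWF^{-2}=O(2^{2\lambda})=2^{O(\lambda+\log n)}$, this is still $2^{O(\lambda+\log n)}$ whenever $T_{\mathcal{A}}=2^{O(\lambda+\log n)}$. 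Consequently, if $\mathcal{A}$ is any algorithm with runtime $2^{O(\lambda+\log n)}$, Option~I of Theorem~\ref{th:owf} would contradict the assumption that $\Pi$ has no $2^{O(\lambda+\log n)}$-time solver, forcing Option~II.

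Therefore, for every such $\mathcal{A}$, $\mathsf{F}$ is a $(1-\thetaOWF/2)$-OWF for $\mathcal{A}$. Taking $\kappa=n\cdot 2^{\lambda}$ as the security parameter, $\mathsf{F}$ runs in time $\poly(\kappa)$, is secure against all $\poly(\kappa)$-time adversaries, and is invertible with probability at most $1-\Omega(2^{-\lambda})\leq 1-\Omega(1/\kappa)$, which is precisely the definition of a non-uniform weak OWF. Applying Yao's hardness amplification~\cite{Yao82} then upgrades $\mathsf{F}$ into a full (non-uniform) one-way function, completing the proof. I do not foresee any genuine obstacle: the argument is an arithmetic matching of parameters to Theorem~\ref{th:owf}, the only delicate point being the simultaneous choice of $\mu$, $\gamma$, and $\lambda$ so that both $\thetaOWF=\Omega(2^{-\lambda})$ and the polynomial blow-up $\thetaOWF^{-2}$ stay absorbed into the $2^{O(\lambda+\log n)}$ runtime envelope dictated by the hardness assumption on $\Pi$.
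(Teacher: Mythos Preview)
Your proposal is correct and follows essentially the same approach as the paper: verify $\thetaOWF=\Omega(2^{-\lambda})$ via the bound $\delta(\lambda)\leq 1-2^{-\lambda-2}$, instantiate Theorem~\ref{th:owf}, argue that Option~I contradicts the hardness assumption because all runtimes stay within $2^{O(\lambda+\log n)}$, conclude that $\mathsf{F}$ is a weak OWF with security parameter $\kappa=n\cdot 2^{\lambda}$, and finish with Yao's amplification. The only cosmetic difference is that the paper records the slightly sharper constant $\thetaOWF\geq 2^{-\lambda-3}$, which your $\Omega(2^{-\lambda})$ already implies.
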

\begin{proof}
For these parameters, we have~$\thetaOWF:= (1-10\mu) - (\delta(\lambda) + \gamma) \geq 2^{-\lambda-3}$.
Then, in Theorem~\ref{th:owf},~$\mathsf{F}$ has runtime~$O(T+m^2n 2^\lambda)$ and the runtime of the~$\Pi$-solver is~$O(2^{2\lambda}(T+T_\mathcal{A}) + m^2 n 2^{3\lambda}) = 2^{\Theta(\lambda + \log n)}$, for all sufficiently large~$n$. Therefore, if~$\Pi$ is~$ 2^{O(\lambda+\log n)}$-hard, then no algorithm~$\mathcal{A}$ of the same runtime can invert~$\mathsf{F}$ with probability better than~$1-\thetaOWF/2$ since otherwise~$\mathcal{C}^\mathcal{A}$ must solve~$\Pi$ which breaks its~$2^{O(\lambda+\log n)}$ hardness.
 \
 Set~$\kappa := 2^{\lambda+\log n}$ as the security parameter. This means that no algorithm of runtime~$\poly(\kappa)$ can invert~$\mathsf{F}$ (whose runtime is~$\poly(\kappa)$) with advantage more than~$1-1/(16\kappa)$. This implies weak one-way functions, which itself implies one-way functions. 
\end{proof}

As a result, we have the following theorem.
\begin{theorem}\label{thm:lossy-to-owf}
Let $n \in \mathbb{N}$, $\lambda:\mathbb{N} \rightarrow \mathbb{R}^+$, and $\Pi$ be a promise problem that cannot be solved by any algorithm in time $2^{O(\lambda+\log n)}$. If~$\Pi$ has a~$f^m$-distinguisher reduction for some non-constant permutation-invariant~$f^m$, with the following parameters: 
\begin{align*}
\text{it is } m\lambda^\circ \leq m\lambda \ \cute \, , \quad T,m = 2^{O(\lambda+\log n)}\, , \quad \text{and } \mu \leq 2^{-\lambda - 8}\, ,
\end{align*}
then one-way functions exist.
\end{theorem}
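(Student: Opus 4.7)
The plan is to reduce the statement to Lemma~\ref{lemma:generic-lossy-owf} by instantiating the sparseness parameter $\gamma$ appropriately. Specifically, I would fix $\gamma := 2^{-\lambda-4}$ so that the hypothesis becomes exactly the one required by the lemma: $\Pi$ is $(T,\mu,f^m,\lambda,\gamma)$-$\cute$ with $T,m = 2^{O(\lambda+\log n)}$, $\mu \leq 2^{-\lambda-8}$, and $\gamma = 2^{-\lambda-4}$. The slight caveat that $\lambda^\circ \leq \lambda$ (rather than equality) is harmless, since Definition~\ref{def:cute} is monotone in the lossiness budget: any reduction which is splitting $m\lambda^\circ$-lossy is also splitting $m\lambda$-lossy.

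Before invoking the lemma, I would carry out two quick consistency checks. First, the support size $2^9 mn/\gamma^3 = 2^{21} mn \cdot 2^{3\lambda}$ appearing in Definition~\ref{def:cute} remains $2^{O(\lambda + \log n)}$ under the stated bounds on $T,m$, so the mild-lossiness required by the hypothesis is of exactly the type consumed by Lemma~\ref{lemma:generic-lossy-owf}. Second, the critical gap parameter $\thetaOWF := (1-10\mu) - (\delta(\lambda) + \gamma)$ from Theorem~\ref{th:owf} satisfies $\thetaOWF \geq 2^{-\lambda-3}$ for our parameter choices: using $\delta(\lambda) \leq 1 - 2^{-\lambda-2}$ from Definition~\ref{def:delta} we get $1 - \delta(\lambda) \geq 2^{-\lambda-2} = 4 \cdot 2^{-\lambda-4}$, while $10\mu + \gamma \leq (10/16 + 1) \cdot 2^{-\lambda-4}$. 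This inverse-exponential gap is the quantitative ingredient that drives the lemma's proof.

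With these checks in place, Lemma~\ref{lemma:generic-lossy-owf} directly yields the conclusion when viewed with security parameter $\kappa := 2^{\lambda + \log n}$: the construction $\mathsf{F}$ of Theorem~\ref{th:owf} runs in time $\poly(\kappa)$, and the $2^{O(\lambda+\log n)}$-hardness of $\Pi$ (equivalently $\poly(\kappa)$-hardness) prevents any $\poly(\kappa)$-time adversary from inverting $\mathsf{F}$ with advantage exceeding $1 - \thetaOWF/2 = 1 - 1/\poly(\kappa)$, via the oracle algorithm $\mathcal{C}^{\mathcal{A}}$ provided by the theorem. This gives a weak one-way function, which promotes to a full OWF by standard hardness amplification. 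I do not anticipate a substantive obstacle: the theorem is essentially a clean repackaging of Lemma~\ref{lemma:generic-lossy-owf} at a specific choice of the sparseness parameter, with all the heavy lifting — the extended disguising lemma and the analysis of $\mathsf{F}$ in Theorem~\ref{th:owf} — already performed.
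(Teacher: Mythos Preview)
Your approach is correct and essentially identical to the paper's: fix $\gamma := 2^{-\lambda-4}$, observe that the hypothesis (together with monotonicity of the lossiness budget) makes $\Pi$ a $(T,\mu,f^m,\lambda,\gamma)$-$\cute$ problem, and invoke Lemma~\ref{lemma:generic-lossy-owf}. The consistency checks you spell out are accurate but already absorbed into the proof of that lemma, so the paper's one-line proof simply cites it directly.
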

\begin{proof}
One can use Lemma~\ref{lemma:generic-lossy-owf} and the fact that such a reduction implies that~$\Pi$ is $(T,\mu,f^m,\lambda,\gamma)$-$\cute$ for~$\gamma = 2^{-\lambda-4}$.
\end{proof}

Perhaps surprisingly, the non-existence of infinitely-often one-way functions has strong implications. To explicit these implications, we first define a quantitative measure of the hardness of problems as below.

\begin{definition}[Exact Hardness of Problems]\label{def:inf-hard}
For a problem~$\Pi$, let~$\tau_\Pi(n) := \inf_{\tau_i(n) \in \Upsilon}\{\tau_i\}$ (the limit is taken point-wise), where~$\Upsilon$ is the set of family of functions~$\tau_i$ such that~$\Pi \cap \{0,1\}^n$ can be solved in time~$O(2^{\tau_i(n)})$ on all instances with probability~$\geq 2/3$. 
\end{definition}

Note that always~$\tau_\Pi(n) \leq n$. This is because algorithms with an advice of size~$2^n$ (maximum size of the truth table of~$\chi_\Pi$) can solve any instance of size~$n$.

We need following lemma.

\begin{lemma}\label{lemma:aux-2}
For a non-constant permutation-invariant function~$f^m$, if an~$f^m$-reduction has an error~$\mu$ that is within a constant distance from~$1/2$, then it must have runtime~$\Omega(m)$.
\end{lemma}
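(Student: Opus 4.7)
My plan is a sensitivity-style lower bound: a reduction whose error is bounded away from $1/2$ must read at least one bit of nearly every one of the $m$ input blocks, which in turn forces runtime $\Omega(m)$. I interpret the hypothesis as $\mu \le 1/2 - c$ for some absolute constant $c > 0$.

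First I would use Lemma~\ref{lemma:minimum-p} to locate a threshold $p := p(f)$ and observe that $\max\{p,\,m-p+1\} \ge m/2$. In the case $p \ge m/2$, fix a threshold input $A$ consisting of $p$ fixed YES-instances in positions $\{1,\ldots,p\}$ together with $m-p$ fixed NO-instances; replacing the $i$-th block of $A$ (for any $i\in[p]$) by a fixed NO-instance yields an input $A'_i$ with $f(\chi_\Pi(A'_i))=0\neq 1 = f(\chi_\Pi(A))$. The symmetric case $m-p+1 \ge m/2$ uses a $(p-1)$-YES threshold input and instead flips NO-coordinates to YES. Either way, I obtain a \emph{sensitive} set $S\subseteq[m]$ of size $|S|\ge m/2$ such that altering the $i$-th block of $A$ (for any $i\in S$) flips the value of $f$. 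Next, for each $i \in S$, let $E_i$ be the event that $R$, when run on $A$ with its internal randomness $r$, accesses at least one bit of block $i$; coupling the executions of $R$ on $A$ and on $A'_i$ with the same $r$ yields identical outputs whenever $\neg E_i$ occurs, so the decoder $\mathcal{M}$ from Definition~\ref{def:red} satisfies
\[
2c \;\le\; 1-2\mu \;\le\; \bigl|\Pr[\mathcal{M}(R(A))=1] - \Pr[\mathcal{M}(R(A'_i))=1]\bigr| \;\le\; \Pr[E_i].
\]
Finally, I would aggregate: the number of blocks probed by $R$ on $A$ is at least $\sum_{i\in[m]}\mathbf{1}[E_i]$, so its expectation---and therefore any worst-case runtime bound---is at least $\sum_{i\in S}\Pr[E_i] \ge 2c\cdot|S| \ge c m = \Omega(m)$, as desired.

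The main obstacle I anticipate is making the coupling step model-independent: it implicitly uses that whenever $R$'s execution never touches any bit of the $i$-th input block, the output distribution of $R$ depends only on the remaining $m-1$ blocks and on $R$'s internal randomness. This is transparent in the standard non-uniform random-access computation model (where ``runtime'' already upper-bounds the number of input bit accesses), but it should be stated explicitly in the proof; after that, both the permutation-invariance hypothesis (used only in Step~1, to guarantee a threshold with many sensitive coordinates) and the constant-distance-from-$1/2$ hypothesis (used only in Step~2) enter the argument in a clean and quantitative way.
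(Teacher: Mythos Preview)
Your proposal is correct and follows essentially the same sensitivity idea as the paper: locate the threshold $p(f)$ via Lemma~\ref{lemma:minimum-p}, observe that flipping any one of $\Omega(m)$ coordinates at the threshold changes $f$, and conclude that the reduction must touch $\Omega(m)$ blocks to achieve error bounded away from $1/2$. The paper's proof takes the contrapositive (runtime $o(m)$ implies a fixed set $\mathcal{I}$ of $m-o(m)$ ignored blocks, hence the error approaches $1/2$) and is somewhat terse about the randomized case; your coupling argument with the per-block events $E_i$ and the linearity-of-expectation aggregation is a cleaner and more explicit treatment of randomized reductions, but the underlying argument is the same.
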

\begin{proof}
Assume that the reduction has runtime~$o(m)$. Supposing that reading each input of the reduction takes instant time, the assumption implies that the circuit evaluating the reduction ignores~$m-o(m)$ number of inputs. Let~$\mathcal{I}$ be the  indices of the discarded inputs, and let~$p(f)$ be as in Lemma~\ref{lemma:minimum-p}. As shown in the same lemma, function~$f$ only depends on the number of~$1$'s in its inputs. On each input with~$p(f)-1$ number of~$1$'s (which evaluates to~$0$), one can flip one of the~$0$'s to~$1$ and obtain an input that evaluates to~$1$. However, if the index of this input is in~$\mathcal{I}$, it will be discarded by the reduction. Therefore, on~$|\mathcal{I}|= m-o(m)$ number of bit-flips, the reduction errs. Consequently, the error must be at least~$(m-o(m))/(2m) = \omega(1)$.  
\end{proof}

The non-existence of one-way functions has implications on $\cute$ reductions, as follows:
\begin{theorem}\label{th:no-owf-lossy}
If infinitely often one-way functions do not exist, then for any~$\Pi$ and any $f$-distinguisher reduction for~$\Pi$ with $\cuteness$~$\leq m( \tau_\Pi/\log\log n -\log n)$ and~$\mu \leq 2^{-\tau_{\Pi}(n)-8}$, where~$f^m$ is a non-constant permutation-invariant function, it holds that~$T = 2^{\Omega(\tau_\Pi/\log\log n)}$.
\end{theorem}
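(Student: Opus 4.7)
The plan is to prove the contrapositive of Theorem~\ref{th:no-owf-lossy}: assuming that some $\Pi$ admits a non-constant permutation-invariant $f^m$-distinguisher reduction with $\cuteness$ at most $m(\tau_\Pi/\log\log n - \log n)$, error $\mu \leq 2^{-\tau_\Pi - 8}$, and runtime $T(n) = 2^{o(\tau_\Pi(n)/\log\log n)}$ on an infinite subset $S \subseteq \mathbb{N}$, we produce infinitely-often one-way functions. This will follow almost directly from Theorem~\ref{thm:lossy-to-owf} once we calibrate the parameter $\lambda$ correctly.

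Concretely, I would set $\lambda(n) \eqdef \tau_\Pi(n)/\log\log n - \log n$ (positive for all sufficiently large $n$), which is the largest value allowed by the $\cuteness$ hypothesis. Then I would verify, in order, the four hypotheses of Theorem~\ref{thm:lossy-to-owf}. (i) Worst-case hardness: we need $\Pi$ to not be solvable in time $2^{O(\lambda + \log n)} = 2^{O(\tau_\Pi/\log\log n)}$; this follows from Definition~\ref{def:inf-hard} since any such solver would place a function that is eventually smaller than $\tau_\Pi$ in $\Upsilon$, contradicting that $\tau_\Pi$ is the pointwise infimum. (ii) $\Cuteness$: the hypothesis $\lambda^\circ \leq \tau_\Pi/\log\log n - \log n = \lambda$ is exactly what our choice of $\lambda$ absorbs. (iii) Runtime and arity: $T = 2^{o(\tau_\Pi/\log\log n)} = 2^{O(\lambda + \log n)}$ is the assumption we wish to contradict; for the number of inputs $m$, Lemma~\ref{lemma:aux-2} gives $T = \Omega(m)$ because $\mu \leq 2^{-\tau_\Pi-8}$ is bounded away from $1/2$ by a constant, hence $m \leq T = 2^{O(\lambda+\log n)}$. (iv) Error: since $\lambda \leq \tau_\Pi$, the bound $\mu \leq 2^{-\tau_\Pi-8} \leq 2^{-\lambda-8}$ is immediate.

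With all four hypotheses verified, Theorem~\ref{thm:lossy-to-owf} yields a one-way function for every $n$ in the infinite set $S$ on which the runtime hypothesis fails. By definition this is an infinitely-often one-way function, contradicting the assumption that none exist.

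The most delicate step is step~(i): one must argue carefully that the infimum characterization of $\tau_\Pi$ in Definition~\ref{def:inf-hard} really does rule out a single algorithm solving $\Pi$ in time $2^{O(\tau_\Pi/\log\log n)}$ on all large enough inputs, since any such algorithm would witness a family $\tau' \in \Upsilon$ with $\tau'(n) = O(\tau_\Pi(n)/\log\log n) < \tau_\Pi(n)$ eventually. The remaining steps are direct plug-in verifications of the parameter conditions of Theorem~\ref{thm:lossy-to-owf}, with Lemma~\ref{lemma:aux-2} playing the small but essential role of upgrading the runtime bound on $T$ to a bound on $m$.
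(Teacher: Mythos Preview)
Your proposal is correct and follows essentially the same route as the paper: both argue the contrapositive by setting $\lambda = \tau_\Pi/\log\log n - \log n$, verifying the hypotheses of Theorem~\ref{thm:lossy-to-owf} (hardness via Definition~\ref{def:inf-hard}, the $\cuteness$ and error bounds by direct comparison, and $m = 2^{O(\lambda+\log n)}$ via Lemma~\ref{lemma:aux-2}), and concluding that an infinitely-often reduction with sub-$2^{\Omega(\tau_\Pi/\log\log n)}$ runtime would yield infinitely-often one-way functions. The paper phrases it slightly more generally---first for any $\tau$ with $\tau + \log n = o(\tau_\Pi)$ and then specializing---but this is purely presentational.
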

\begin{proof}
Let~$\tau$ be such that~$ \tau(n) + \log n =o( \tau_{\Pi}(n))$, and assume that there exists an infinitely often $f^m$-distinguisher reduction for~$\Pi$ with parameters
\begin{align*}
\cuteness \ m\tau^\circ \leq m\tau \, , \quad T,m = 2^{O(\tau+\log n)}\, , \quad \text{and } \mu \leq 2^{-\tau_\Pi(n) - 8}\, .
\end{align*}
Since~$\Pi$ is~$\Omega(2^{\tau_\Pi(n)})$-hard, then no algorithm that runs in time~$2^{O(\tau(n)+\log n)}$ can solve it. This is because~$\tau(n)+\log n = o(\tau_\Pi(n))$.
Therefore, by Theorem~\ref{thm:lossy-to-owf}, infinitely often one-way functions exist. This contradicts the assumption. Therefore, for such a lossiness and error~$\mu \leq 2^{-\tau_\Pi(n) - 8}$, it must hold that the~$f^m$-distinguisher reduction either runs in time~$2^{\omega(\tau(n)+\log n)}$ or~$m=2^{\omega(\tau(n)+\log n)}$, for all sufficiently large~$n$. Note that the latter implies the former by Lemma~\ref{lemma:aux-2}. Hence, we have~$T=2^{\omega(\tau(n)+\log n)}$. This holds for every~$\tau$ such that~$ \tau(n) + \log n =o( \tau_{\Pi}(n))$. We let~$\tau = \tau_\Pi/\log\log n -\log n$. Therefore, the runtime must be at least~$2^{\Omega(\tau_\Pi/\log n)}$.
\end{proof}

\begin{remark}
We note that~$f^m$-compression reductions are special cases of $\cute$ reductions. More precisely, a mapping that compresses~$mn$ bits to~$m\lambda$ bits is~$m\lambda$ $\cute$. Therefore, all the results above immediately apply to~$f^m$-compression reductions.
\end{remark}

When the reductions are~$\wcwhatevs$, we obtain fine-grained one-way functions with a slightly looser range of parameters. We first simplify the conditions of Corollary~\ref{cor:avg-owf}.

\begin{lemma}\label{lemma:gap-dichotomy}
	Let $n \in \mathbb{N}$ and $\gamma,T_\mathcal{A} > 0$. Let $\Pi$ be a promise problem that admits
	a~$(T,\mu,f^m,d)$-WC-DIST reduction (per Definition~\ref{def:wc-dist-red}). If $d^2 \leq \gamma^3/mn$ and~$\mu,\gamma \leq 10^{-5}$, then there exist a constant~$\vartheta < 1$ and an algorithm $\mathsf{F}$ that runs in time $O(T+m^2n\gamma^{-1})$, such that if $\mathsf{F}$ is not a $\vartheta$-OWF for every~$T_\mathcal{A}$-bounded adversary, then $\Pi \cap \{0,1\}^n$ can be solved in time $O(T_\mathcal{A}+T+m^2n\gamma^{-1})$.
\end{lemma}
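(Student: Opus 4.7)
The plan is a direct composition: first convert the $\wcwhatevs$ reduction into a $\cute$ reduction via Theorem~\ref{thm:wc-avg-is-cute}, then feed the result into Theorem~\ref{th:owf} to obtain the dichotomy with a constant security parameter.

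First I would apply Theorem~\ref{thm:wc-avg-is-cute} to the given $(T,\mu,f^m,d)$-$\wcwhatevs$ splitting-reduction for $\Pi$ at the given $\gamma$. This yields that $\Pi$ is $(T,\mu,f^m,\lambda,\gamma)$-$\cute$ with
$$\lambda \;=\; \max\!\left\{\,1,\ 13+\log\!\left(\tfrac{mnd^{2}}{\gamma^{3}}\right)\right\}.$$
The hypothesis $d^{2}\le \gamma^{3}/(mn)$ forces $\log(mnd^{2}/\gamma^{3})\le 0$, so $1\le \lambda\le 13$. In particular $\lambda$ is an absolute constant, and hence, by Definition~\ref{def:delta}, $\delta(\lambda)$ is an absolute constant bounded strictly below $1$.

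Second, I would verify that the hypotheses $\mu,\gamma\le 10^{-5}$ together with the bound $\lambda\le 13$ imply that
$$\thetaOWF \;=\; (1-10\mu)\,-\,(\delta(\lambda)+\gamma)$$
is a strictly positive absolute constant $c>0$. This reduces to the numerical inequality $\delta(\lambda)+\gamma+10\mu<1$, which follows (after possibly tightening the explicit constant $10^{-5}$ if needed) by expanding $\delta(\lambda)=\min\{\sqrt{\lambda\ln 2/2},\,1-2^{-\lambda-2}\}$ and inserting the numerical bounds on $\mu,\gamma,\lambda$. I expect this numerical verification to be the only delicate point of the proof; every other step is a direct appeal to earlier machinery.

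Third, I would invoke Theorem~\ref{th:owf} (equivalently, Corollary~\ref{cor:avg-owf}) with these parameters. This produces an algorithm $\mathsf{F}$ running in time $O(T+m^{2}n/\gamma)$ and an oracle algorithm $\mathcal{C}$ such that, for every $\mathcal{A}$, either $\mathsf{F}$ is a $(1-\thetaOWF/2)$-OWF for $\mathcal{A}$, or $\mathcal{C}^{\mathcal{A}}$ decides $\Pi\cap\{0,1\}^{n}$ in time $O((T+m^{2}n/\gamma)\,\thetaOWF^{-2})$ with $O(\thetaOWF^{-2})$ queries to $\mathcal{A}$. I set $\vartheta:=1-c/2<1$, which is an absolute constant.

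Finally, I would repackage the statement into the claimed contrapositive form. If $\mathsf{F}$ is not a $\vartheta$-OWF against every $T_{\mathcal{A}}$-bounded adversary, then some $T_{\mathcal{A}}$-time $\mathcal{A}$ inverts $\mathsf{F}$ with probability strictly exceeding $\vartheta=1-c/2$; plugging such $\mathcal{A}$ into $\mathcal{C}^{\mathcal{A}}$ yields a solver for $\Pi\cap\{0,1\}^{n}$ whose total runtime is $O(T_{\mathcal{A}}+T+m^{2}n/\gamma)$, the factors of $c^{-2}=\thetaOWF^{-2}$ being absorbed into the big-$O$ since $c$ is an absolute constant. The hard part, as noted above, is not any conceptual obstacle but the explicit constant bookkeeping in step two that ensures $\thetaOWF$ stays bounded away from $0$ uniformly in $n$.
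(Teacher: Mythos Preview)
Your proposal is correct and follows essentially the same route as the paper: both derive $\lambda\le 13$ from the hypothesis $d^2\le\gamma^3/(mn)$ via Theorem~\ref{thm:wc-avg-is-cute}, bound $\delta(\lambda)\le 1-2^{-15}$, check that $\thetaOWF$ is a positive absolute constant under $\mu,\gamma\le 10^{-5}$, invoke Corollary~\ref{cor:avg-owf}, set $\vartheta=1-\thetaOWF/2$, and absorb $\thetaOWF^{-2}=O(1)$ into the big-$O$. Your hedge ``after possibly tightening the explicit constant $10^{-5}$ if needed'' is in fact prudent, since the paper's displayed numerical lower bound $2^{-15}-10^{-4}-10^{-5}$ is actually negative as written; the argument is sound but the stated constants need minor adjustment exactly as you anticipated.
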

\begin{proof}
In Corollary~\ref{cor:avg-owf}, if~$d^2 \leq\gamma^3/mn$, then~$\lambda \leq 13$ and~$\delta(\lambda) \leq 1-2^{-15}$. 
Since~$\mu,\gamma \leq 10^{-5}$, we have~$\theta_{\mathsf{owf}} \geq (1-10\mu) - (\delta(\lambda) + \gamma) = 2^{-15} - 10 ^{-4} - 10^{-5}$. Thus~$\theta_{\mathsf{owf}} = \Omega(1)$. Let~$\vartheta:=1-\theta_{\mathsf{owf}}/2$. Corollary~\ref{cor:avg-owf} implies that there exists a function~$\mathsf{F}$ that runs in time~$O(T+m^2n \gamma^{-1})$ such that if it is not a~$\vartheta$-OWF for an algorithm~$\mathcal{A}$, then~$\Pi \cap \{0,1\}^n$ can be solved in time~$O((T_\mathcal{A}+ T+m^2n\gamma^{-1})\theta_{\mathsf{owf}}^{-2})$, where~$T_\mathcal{A}$ is the runtime of~$\mathcal{A}$. The statement follows by noting that~$\theta_{\mathsf{owf}}^{-2} = O(1)$. 
\end{proof}

The following lemma will be used in the proof.
\begin{lemma}\label{lemma:aux-1}
For a function~$g:\mathbb{N} \rightarrow \mathbb{R}^+$, if~$g(n) > 2^{c\tau(n)}$ for every constant~$c< 1$, then~$g = \Omega(2^{\tau})$.
\end{lemma}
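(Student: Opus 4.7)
The plan is to extract the $\Omega$-bound by instantiating the hypothesis at a single, well-chosen value of $c$ and reading the conclusion in the sense in which it is subsequently used. Concretely, I would first fix $c^{*}\in(0,1)$; the most convenient choice is $c^{*}=1/2$. The hypothesis then guarantees an $n_0 \in \mathbb{N}$ such that for all $n \ge n_0$, $g(n) > 2^{c^{*}\tau(n)} = 2^{\tau(n)/2}$. Read in the coarse exponential sense $g = 2^{\Omega(\tau)}$ — which is how the conclusion is actually used in the subsequent section (for instance in the bound $T = 2^{\Omega(\tau_\Pi / \log\log n)}$ in Theorem~\ref{th:no-owf-lossy}) — this already gives $g(n) = 2^{\Omega(\tau(n))}$ and closes the argument.

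If one instead wants the tighter multiplicative reading $g(n) \ge C\cdot 2^{\tau(n)}$ for some absolute constant $C>0$, the plan is to take $c \to 1^{-}$ inside the hypothesis. Writing
\[
g(n) > 2^{c\tau(n)} \;=\; 2^{\tau(n)} \cdot 2^{-(1-c)\tau(n)},
\]
one sees that the residual factor $2^{-(1-c)\tau(n)}$ governs how close $g(n)$ sits to $2^{\tau(n)}$. Whenever the threshold $n_0$ in the hypothesis can be chosen uniformly in $c$ — which is the natural reading of ``for every constant $c<1$'' — passing to the limit $c \to 1$ pointwise in $n \ge n_0$ yields $g(n) \ge 2^{\tau(n)}$ for all $n \ge n_0$, hence $g = \Omega(2^{\tau})$.

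I expect the only delicate point to be precisely this quantifier interplay between $c$ and $n_0$; the exponent manipulation itself is a one-line calculation. In particular, for the intended application it suffices to observe that any single choice of $c \in (0,1)$ already extracts a $2^{\Omega(\tau)}$ lower bound on $g$, so the proof reduces to recording this choice and invoking the definition of $\Omega$.
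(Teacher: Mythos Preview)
The paper states this lemma without proof, so there is nothing to compare against directly; your analysis is in fact more careful than what the paper provides.

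Your second approach is the one that actually delivers the stated conclusion $g = \Omega(2^{\tau})$: reading the hypothesis as ``$g(n) > 2^{c\tau(n)}$ for all $n$ (past a threshold independent of $c$) and all $c<1$'', the pointwise limit $c \to 1^{-}$ gives $g(n) \ge 2^{\tau(n)}$, hence $\Omega(2^{\tau})$ with implied constant $1$. Your first approach (fix $c^{*}=1/2$) only yields $g = 2^{\Omega(\tau)}$, which is strictly weaker; you acknowledge this, but note that the lemma is invoked in Theorem~\ref{thm:no-fgowf-dist}, whose conclusion is the multiplicative $T = \Omega(2^{\tau_\Pi})$, not the coarser $2^{\Omega(\tau_\Pi)}$ of Theorem~\ref{th:no-owf-lossy}. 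So approach~1 alone does not establish the lemma as written.

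You are right that the only real content is the quantifier interplay, and your caution there is warranted: under the non-uniform reading (each $c<1$ comes with its own threshold $n_0(c)$) the statement is false --- take $\tau(n)=n$ and $g(n)=2^{n}/n$, which satisfies $g(n) > 2^{cn}$ eventually for every fixed $c<1$ yet has $g(n)/2^{\tau(n)} \to 0$. The paper's phrasing and usage are loose on this point, and your proposal handles it more carefully than the paper does.
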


Using the above lemmas, one can leverage the hardness of~$\Pi$ to build fine-grained one-way functions.
\begin{theorem}\label{thm:gap-to-fgowf}
	Let $n \in \mathbb{N}$, $\tau:\mathbb{N} \rightarrow \mathbb{R}^+$, and $\Pi$ be a promise problem that cannot be solved by any algorithm in time $O(2^{\tau(n)})$. For any~$\eta > 0$, if~$\Pi$ admits a~$(T,\mu,f^m,d)$-$\wcwhatevs$ reduction for some~$\mu \leq 10^{-5}$,~$d \leq m^{2.5}n/2^{1.5\tau/(1+\eta)}$, and~$T,m = O(2^{\tau/(1+\eta)})$, then there exists a constant~$\vartheta < 1$ and a one-way function $\mathsf{F}$, such that no $O(|\mathsf{F}|^{1+\eta})$-time algorithm can invert it with a probability better than~$\vartheta$. 
\end{theorem}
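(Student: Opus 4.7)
The plan is to instantiate Lemma~\ref{lemma:gap-dichotomy} with parameters calibrated so that the resulting OWF candidate $\mathsf{F}$ has runtime $T_{\mathsf{F}} = O(2^{\tau/(1+\eta)})$, so that breaking it in time $O(T_{\mathsf{F}}^{1+\eta}) = O(2^{\tau})$ would contradict the assumed hardness of $\Pi$.

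First I would fix the sparseness parameter $\gamma$. Lemma~\ref{lemma:gap-dichotomy} requires $d^2 \leq \gamma^3 / (mn)$ and $\gamma \leq 10^{-5}$. The natural choice is
\begin{equation*}
\gamma \eqdef \frac{m^2 n}{2^{\tau/(1+\eta)}} \, ,
\end{equation*}
since the hypothesis $d \leq m^{2.5} n / 2^{1.5 \tau/(1+\eta)}$ is exactly the upper bound on $d$ that ensures $d^2 \leq \gamma^3/(mn)$. For sufficiently large $n$ (where $\tau$ is large enough compared to $\log(m^{2}n)$) we also get $\gamma \leq 10^{-5}$, which together with $\mu \leq 10^{-5}$ fulfils the hypotheses of Lemma~\ref{lemma:gap-dichotomy}. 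Invoking that lemma yields an absolute constant $\vartheta<1$ and an algorithm $\mathsf{F}$ of runtime $O(T + m^2 n \gamma^{-1})$ such that if some adversary $\mathcal{A}$ of runtime $T_{\mathcal{A}}$ inverts $\mathsf{F}$ with probability exceeding $\vartheta$, then $\Pi \cap \{0,1\}^n$ is decidable in time $O(T_{\mathcal{A}} + T + m^2 n \gamma^{-1})$.

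Next, I would bound the runtime of $\mathsf{F}$. By the choice of $\gamma$ we have
\begin{equation*}
m^2 n \gamma^{-1} \;=\; 2^{\tau/(1+\eta)} \, ,
\end{equation*}
and combined with $T = O(2^{\tau/(1+\eta)})$ this gives $T_{\mathsf{F}} = O(2^{\tau/(1+\eta)})$, i.e.\ $|\mathsf{F}|^{1+\eta} = O(2^{\tau})$. I would then conclude by contradiction: suppose there exists an algorithm $\mathcal{A}$ of runtime $O(|\mathsf{F}|^{1+\eta}) = O(2^{\tau})$ that inverts $\mathsf{F}$ with probability greater than $\vartheta$. Lemma~\ref{lemma:gap-dichotomy} then produces a decider for $\Pi \cap \{0,1\}^n$ whose runtime is $O(T_{\mathcal{A}} + T_{\mathsf{F}}) = O(2^{\tau} + 2^{\tau/(1+\eta)}) = O(2^{\tau})$, contradicting the assumed hardness of $\Pi$.

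The only delicate point I anticipate is verifying that the chosen $\gamma$ simultaneously satisfies both $\gamma \geq (d^2 mn)^{1/3}$ (for the lossiness to collapse to a constant $\lambda \leq 13$) and $\gamma \leq 10^{-5}$ (to keep $\theta_{\mathsf{owf}} = \Omega(1)$), for all sufficiently large $n$. The first is handed to us by the hypothesis on $d$; the second follows because $\tau$ must satisfy $\tau/(1+\eta) \gg \log(m^2 n)$ for the hardness regime of the statement to be non-trivial (otherwise $\Pi$ would be polynomially solvable once $m$ saturates its upper bound, trivialising the theorem). Beyond that, everything else is bookkeeping of the constants coming out of Lemma~\ref{lemma:gap-dichotomy}.
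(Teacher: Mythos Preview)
Your proposal is correct and follows essentially the same route as the paper: set $\gamma^{-1} = 2^{\tau/(1+\eta)}/(m^2 n)$, verify this meets the hypotheses of Lemma~\ref{lemma:gap-dichotomy} (in particular $d^2 \leq \gamma^3/(mn)$ is exactly the assumed bound on $d$), and then use the dichotomy to argue that an $O(|\mathsf{F}|^{1+\eta}) = O(2^\tau)$-time inverter would yield an $O(2^\tau)$-time solver for $\Pi$, contradicting the hardness assumption. Your write-up is in fact more explicit than the paper's, which skips the parameter verifications and simply invokes the lemma.
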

\begin{proof}
	Set~$\gamma^{-1}$ as~$2^{\tau/(1+\eta)}/(m^2n)$. Let~$\mathcal{A}$ be an algorithm with runtime~$T_\mathcal{A}=O(2^\tau)$. By assumption,~$\Pi\cap \{0,1\}^n$ cannot be solved in time~$O(T_\mathcal{A}+T+m^2n\gamma^{-1}) = O(2^\tau)$. Then, Lemma~\ref{lemma:gap-dichotomy} implies the existence of a constant~$\vartheta < 1$ and a function~$\mathsf{F}$ that runs in time~$O(2^{\tau/(1+\eta)})$ but no~$O(2^{\tau})$-time algorithm can break it with a probability better than~$\vartheta$. This concludes the proof. 
\end{proof}

The above theorem implies the existence of weak fine-grained one-way functions based on the~$O(2^\tau)$-hardness of~$\Pi$ and the fact that it admits an $\wcwhatevs$ $f$-distinguisher reduction. Similar to Theorem~\ref{th:no-owf-lossy}, we obtain an impossibility as below.

\begin{theorem}\label{thm:no-fgowf-dist}
If infinitely-often weak fine-grained one-way functions do not exist, then for any~$\Pi$ and any $(T,\mu\leq 10^{-5},f^m,d\leq m^{2.5}n/2^{1.5\tau_\Pi})$-$\wcwhatevs$ reduction for~$\Pi$, where~$f^m$ is a non-constant permutation-invariant function, it holds that~$T = \Omega(2^{\tau_\Pi(n)})$, for all sufficiently large~$n$.
\end{theorem}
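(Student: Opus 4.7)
The plan is to contrapose the statement and derive an infinitely-often weak fine-grained OWF via Theorem~\ref{thm:gap-to-fgowf}. I would fix a problem $\Pi$ and a $(T,\mu\le 10^{-5},f^m,d\le m^{2.5}n/2^{1.5\tau_\Pi})$-$\wcwhatevs$ reduction (with $f^m$ non-constant permutation-invariant), and show: if infinitely-often weak FGOWFs do not exist, then for every constant $c<1$ one has $T(n)>2^{c\tau_\Pi(n)}$ for all sufficiently large $n$. Once this is established, Lemma~\ref{lemma:aux-1} (applied with $g:=T$ and $\tau:=\tau_\Pi$) immediately delivers the conclusion $T=\Omega(2^{\tau_\Pi(n)})$.

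The core step is the ``for every $c<1$'' claim, which I would prove by contradiction: suppose some $c<1$ violates it, so that $T(n)\le 2^{c\tau_\Pi(n)}$ on an infinite subsequence $\{n_k\}$. Set $\eta>0$ by $1/(1+\eta)=c$. Since $\mu\le 10^{-5}$ is bounded away from $1/2$, Lemma~\ref{lemma:aux-2} gives $m=O(T)$, hence $m(n_k),T(n_k)=O(2^{\tau_\Pi(n_k)/(1+\eta)})$. Now pick a concrete hardness function $\tau(n):=\tau_\Pi(n)-1$; the infimum definition (Definition~\ref{def:inf-hard}) ensures $\Pi$ cannot be solved in time $O(2^{\tau(n)})$. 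The three hypotheses of Theorem~\ref{thm:gap-to-fgowf} are all met on the subsequence: $T,m=O(2^{\tau/(1+\eta)})$ (absorbing the constant factor $2^{1/(1+\eta)}$ from the $-1$ shift), $\mu\le 10^{-5}$, and the bound
\[
d\le \frac{m^{2.5}n}{2^{1.5\tau_\Pi}}\le \frac{m^{2.5}n}{2^{1.5\tau/(1+\eta)}},
\]
where the last inequality uses $\tau/(1+\eta)\le \tau\le \tau_\Pi$. Theorem~\ref{thm:gap-to-fgowf} therefore produces, at each $n_k$, a function $\mathsf{F}$ and a constant $\vartheta<1$ such that no $O(|\mathsf{F}|^{1+\eta})$-time adversary inverts $\mathsf{F}$ with probability larger than $\vartheta$. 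Since this succeeds on an infinite set of input lengths, $\mathsf{F}$ is an infinitely-often weak $\eta$-FGOWF, contradicting our standing assumption.

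The main obstacle I anticipate is the careful handling of the infimum in Definition~\ref{def:inf-hard}: passing from the qualitative statement ``$T$ is not $\Omega(2^{\tau_\Pi})$'' to a regime where Theorem~\ref{thm:gap-to-fgowf} applies demands quantitative separation, which is exactly what Lemma~\ref{lemma:aux-1} is designed to bridge. The use of the $-1$ shift in $\tau$ is a convenient trick so that $\Pi$ remains hard enough to invoke Theorem~\ref{thm:gap-to-fgowf}, while the reduction's parameters are unchanged up to absorbed constants. Everything else is routine bookkeeping of constants in the exponent and an invocation of Lemma~\ref{lemma:aux-2} to convert bounds on $T$ into bounds on $m$.
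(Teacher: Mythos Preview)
Your proposal is correct and follows essentially the same route as the paper: contrapose, invoke Theorem~\ref{thm:gap-to-fgowf} on an infinite subsequence to manufacture an infinitely-often weak FGOWF, use Lemma~\ref{lemma:aux-2} to tie $m$ to $T$, and close with Lemma~\ref{lemma:aux-1}. The only cosmetic difference is that you fix $\tau=\tau_\Pi-1$ once and apply Lemma~\ref{lemma:aux-1} directly with $\tau_\Pi$, whereas the paper keeps $\tau<\tau_\Pi$ generic, obtains $T=\Omega(2^{\tau})$ for each such $\tau$, and then ``takes the limit $\tau\to\tau_\Pi$''; your version is arguably cleaner since a single constant shift already absorbs into the $\Omega(\cdot)$.
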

\begin{proof}
Fix~$n$. For any fixed choice of~$\tau(n) < \tau_\Pi(n)$, we have~$m^{2.5n}/2^{1.5\tau_\Pi} \leq m^{2.5}n/2^{1.5\tau/(1+\eta)}$ for every~$\eta > 0$. Therefore, if for some~$\eta$,~$\Pi$ admits an infinitely-often $(T,\mu\leq 10^{-5},f^m,d\leq 2^{-1.5\tau_\Pi})$-$\wcwhatevs$ reduction for~$T,m=O(2^{\tau(n)/(1+\eta)})$, then by Theorem~\ref{thm:gap-to-fgowf} infinitely-often weak fine-grained one-way functions exist. This contradicts the assumption (note that~$\Pi$ is~$O(2^\tau)$-hard per Definition~\ref{def:inf-hard}). Therefore, such~$\eta > 0$ does not exist. Therefore, any~$\wcwhatevs$ reduction, within the mentioned parameter setting, must satisfy~$T = \Omega(2^{\tau(n)})$ or $m = \Omega(2^{\tau(n)})$, for all sufficiently large~$n$, by Lemma~\ref{lemma:aux-1}. Note that the latter implies the former by Lemma~\ref{lemma:aux-2}. Finally, the statement follows by taking the limit~$\tau(n) \rightarrow \tau_\Pi(n)$.  
\end{proof}

Intuitively, the above theorem asserts that any randomization algorithm of~$\Pi$, even it is allowed to have a small constant error, is inherently capable of solving it. 

In Theorem~\ref{lemma:gap-dichotomy}, if the statement holds for all constants~$\eta > 0$, one obtains a weak one-way function. Based on this observation, we immediately obtain the following result:

\begin{theorem}\label{thm:gap-to-owf}
	Let $n \in \mathbb{N}$, $\tau:\mathbb{N} \rightarrow \mathbb{R}^+$, and $\Pi$ be a promise problem that cannot be solved by any algorithm in time $O(2^\tau)$. If~$\Pi$ admits a~$(T,\mu,f^m,d)$-$\wcwhatevs$ reduction for some~$\mu \leq 10^{-5}$, $d \leq m^{2.5}n/2^{o(\tau)}$, and~$T,m=2^{o(\tau)}$, then there exists a constant~$\vartheta < 1$ and a one-way function $\mathsf{F}$, such that no $|\mathsf{F}|^{O(1)}$-time algorithm can invert it with a probability better than~$\vartheta$. 
\end{theorem}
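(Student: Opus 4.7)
The plan is to apply Lemma~\ref{lemma:gap-dichotomy} once, with a carefully chosen $\gamma$, and then promote the resulting dichotomy into an actual weak one-way function by observing that the given parameter regime makes every polynomial bound $|\mathsf{F}|^{O(1)}$ strictly sub-$2^\tau$. The construction of $\mathsf{F}$ is therefore exactly the one from Lemma~\ref{lemma:gap-dichotomy}; the content of the proof is a choice of $\gamma$ plus a contradiction argument against the $O(2^\tau)$-hardness of $\Pi$.

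First, I would pick $\gamma$ to be the largest value compatible with the hypothesis $d^2 \leq \gamma^3/(mn)$ of Lemma~\ref{lemma:gap-dichotomy}. Writing the bound on $d$ as $d \leq m^{2.5} n / 2^{g(n)}$ for some $g = o(\tau)$, one gets $(d^2 m n)^{1/3} \leq m^2 n / 2^{g(n)/3}$, so it suffices to set $\gamma^{-1} \eqdef 2^{g(n)/3}/(m^2 n)$. This satisfies $\mu,\gamma\leq 10^{-5}$ for sufficiently large $n$ and makes $m^2 n \gamma^{-1} = 2^{g(n)/3} = 2^{o(\tau)}$. Combined with $T = 2^{o(\tau)}$, the function $\mathsf{F}$ handed back by Lemma~\ref{lemma:gap-dichotomy} has size $|\mathsf{F}| = O(T + m^2 n \gamma^{-1}) = 2^{o(\tau(n))}$, and the lemma also fixes a constant $\vartheta < 1$ (which depends only on the absolute bounds $\mu,\gamma\leq 10^{-5}$, not on $n$).

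Next, I would argue by contradiction that for every constant $c$, no $O(|\mathsf{F}|^c)$-time adversary inverts $\mathsf{F}$ with probability exceeding $\vartheta$. Indeed, suppose some such $\mathcal{A}$ with runtime $T_\mathcal{A} = O(|\mathsf{F}|^c)$ did succeed with probability $> \vartheta$. Lemma~\ref{lemma:gap-dichotomy} would then produce an algorithm deciding $\Pi \cap \{0,1\}^n$ in time $O(T_\mathcal{A} + T + m^2 n \gamma^{-1}) = O(|\mathsf{F}|^c) = 2^{c \cdot o(\tau)} = 2^{o(\tau(n))}$, contradicting the assumption that $\Pi$ cannot be solved in time $O(2^\tau)$. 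Hence $\mathsf{F}$ is a weak OWF in the sense claimed.

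The one subtle point — not really a technical obstacle, just bookkeeping — is that the $o(\tau)$ notation must be handled uniformly: the single $\gamma$ above has to make $T$, $m$, $\gamma^{-1}$ and all $|\mathsf{F}|^c$ simultaneously subexponential in $\tau$. This is fine because $\mathsf{F}$ is constructed before any adversary (and hence before $c$) is fixed, and $c \cdot o(\tau) = o(\tau)$ for every constant $c$; so the same $\mathsf{F}$ defeats every polynomial-in-$|\mathsf{F}|$ adversary at advantage $\vartheta$, which is exactly the definition of a weak one-way function.
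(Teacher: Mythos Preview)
Your proof is correct and follows essentially the same route as the paper: both invoke Lemma~\ref{lemma:gap-dichotomy} with a choice of $\gamma$ that makes $|\mathsf{F}|=2^{o(\tau)}$, so that any $|\mathsf{F}|^{O(1)}$-time inverter yields a $2^{o(\tau)}$-time solver for $\Pi$ and hence a contradiction. The paper packages this as ``apply Theorem~\ref{thm:gap-to-fgowf} for every constant $\eta>0$,'' but unwinding that invocation gives exactly your direct argument (your arithmetic slip $2^{g(n)/3}$ in place of the tight $2^{2g(n)/3}$ only loosens an intermediate bound and is harmless).
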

\begin{proof}
As mentioned above, if for all constant~$\eta >0$,~$\Pi$ admits a~$(T,\mu,f^m,d)$-$\wcwhatevs$ reduction for some~$\mu \leq 10^{-5}$,~$d \leq m^{2.5}n/2^{1.5\tau/(1+\eta)}$, and~$T,m = O(2^{\tau/(1+\eta)})$, then there exists a constant~$\vartheta < 1$ and a one-way function $\mathsf{F}$, such that no $|\mathsf{F}|^{O(1)}$-time algorithm can invert it with a probability better than~$\vartheta$. We note that the parameters in the statement satisfy these conditions.
\end{proof}

This implies one-way functions using the known hardness amplification techniques~\cite{Yao82}. Moreover, similar to above, the non-existence of infinitely-often one-way functions has implications for~$\wcwhatevs$ reductions of problems.
\begin{theorem}\label{thm:no-owf-dist}
If infinitely-often one-way functions do not exist, then for any~$\Pi$ and any $(T,\mu\leq 10^{-5},f^m,d\leq m^{2.5}n 2^{-1.5\tau_\Pi})$-$\wcwhatevs$ reduction for~$\Pi$, where~$f^m$ is a non-constant permutation-invariant function, it holds that~$T = 2^{\Omega(\tau_\Pi(n))}$.
\end{theorem}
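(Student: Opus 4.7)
The plan is to mirror the proof of Theorem~\ref{thm:no-fgowf-dist}, substituting Theorem~\ref{thm:gap-to-owf} for Theorem~\ref{thm:gap-to-fgowf} and replacing the non-existence of infinitely-often weak fine-grained one-way functions with the (weaker) non-existence of infinitely-often one-way functions. Fix $n$ and consider any function $\tau(n)<\tau_\Pi(n)$ pointwise. By Definition~\ref{def:inf-hard}, $\Pi$ cannot be solved in time $O(2^{\tau(n)})$. Moreover, since $1.5\tau_\Pi$ grows faster than any sub-linear function of $\tau$ whenever $\tau<\tau_\Pi$, the hypothesis $d\leq m^{2.5}n\cdot 2^{-1.5\tau_\Pi}$ entails $d\leq m^{2.5}n/2^{o(\tau)}$, and similarly the $\mu\leq 10^{-5}$ hypothesis matches the requirement of Theorem~\ref{thm:gap-to-owf}.

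Suppose for contradiction that the $\wcwhatevs$ reduction satisfies $T,m=2^{o(\tau)}$ infinitely often. Then, for each such $n$, the construction of Theorem~\ref{thm:gap-to-owf}, which traces back to the per-$n$ dichotomy of Lemma~\ref{lemma:gap-dichotomy} and Corollary~\ref{cor:avg-owf}, produces a function $\mathsf{F}$ whose one-wayness holds at the input size associated with $n$. Collecting these $n$'s yields an infinitely-often one-way function, contradicting the standing assumption. Hence at least one of $T,m$ must grow strictly faster than $2^{o(\tau)}$: there exists $c_\tau>0$ such that $T>2^{c_\tau\tau}$ or $m>2^{c_\tau\tau}$ for all sufficiently large $n$. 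By Lemma~\ref{lemma:aux-2}, since $\mu\leq 10^{-5}$ is a constant away from $1/2$, $T=\Omega(m)$, so the bound on $m$ transfers to $T$. Therefore $T=2^{\Omega(\tau)}$.

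Finally, picking $\tau(n)$ to differ from $\tau_\Pi(n)$ by a constant (e.g.~$\tau=\tau_\Pi-1$, so that $\tau<\tau_\Pi$ strictly and hence $\Pi$ is not solvable in $O(2^\tau)$ by definition of the infimum) gives $T=2^{\Omega(\tau_\Pi(n)-1)}=2^{\Omega(\tau_\Pi(n))}$, as claimed. The main technical subtlety will be justifying the ``infinitely often'' application of Theorem~\ref{thm:gap-to-owf}: one must verify that the construction of $\mathsf{F}$ is genuinely per-$n$ so that parameter conditions holding on an infinite set $S$ of input sizes translate into security of $\mathsf{F}$ on exactly that set $S$. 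This follows from Lemma~\ref{lemma:gap-dichotomy}, which is stated per-$n$, combined with the fact that Yao's hardness amplification preserves the set of input sizes at which weak one-wayness holds.
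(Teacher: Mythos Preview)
Your proposal is correct and follows essentially the same approach as the paper's proof: both argue by contradiction via Theorem~\ref{thm:gap-to-owf}, use the definition of~$\tau_\Pi$ to guarantee hardness at any~$\tau<\tau_\Pi$, and then push~$\tau$ toward~$\tau_\Pi$ to conclude. Your version is slightly more explicit in invoking Lemma~\ref{lemma:aux-2} and in discussing the per-$n$ nature of the construction for the infinitely-often conclusion, but the structure and key steps are the same.
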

\begin{proof}
Fix~$n$ as the size of the instances. For any fixed choice of~$\tau(n) < \tau_\Pi(n)$, and every~$\eta > 0$, it holds that~$m^{2.5}n 2^{-1.5\tau_\Pi} \leq m^{2.5}n 2^{-o(\tau)}$. Therefore, if~$\Pi$ admits an infinitely-often  $(T,\mu\leq 10^{-5},f^m,d\leq 2^{-1.5\tau_\Pi})$-$\wcwhatevs$ reduction for some~$T,m=2^{o(\tau(n))}$, then by Theorem~\ref{thm:gap-to-owf} infinitely-often one-way functions must exist (note that~$\Pi$ is~$O(2^\tau)$-hard per Definition~\ref{def:inf-hard}), which contradicts the assumption. Therefore, any~$\wcwhatevs$ reduction, within the mentioned parameter setting, must satisfy~$T = 2^{\Omega(\tau(n))}$, for all sufficiently large~$n$. One concludes by taking the limit~$\tau(n) \rightarrow \tau_\Pi(n)$.  
\end{proof}

\begin{remark}
All the results above regarding WC-DIST reductions can be adapted to~$\wcwhatevs$ non-adaptive Turing reductions for which the hint $h$ (see Definition~\ref{def:wc-dist-turing}) is not too large, by putting more restrictions on the error.
\end{remark}

\subsection*{Towards One-Way Functions from SAT}\label{sec:sat}

Let~$s_k := \inf \{c \in \mathbb{R} \ | \ \text{there exists a } O(2^{cN}) \text{ algorithm for } \kSAT \}$. The Exponential Time Hypothesis (ETH) asserts that~$s_3 > 0$, namely,~$\SAT$ does not have any subexponential-time algorithm in terms of the number of variables. In fact, Impagliazzo and Paturi~\cite{IP01} show that this is equivalent to~$\forall k \geq 3: \ s_k >0$. Firstly, we reformulate the assumption in terms of the bit-size of the instance. 

\begin{lemma}\label{lemma:eth-sat}
	Let~$s^*_k$ be the infimum of all~$c\in \mathbb{R}$ such that there exists a $O(2^{cn/\log n})$-time algorithm for $\kSAT$ where~$n$ is the bit-size of the instance. Then under the ETH, we have~$0 < s^*_k \leq 2ks_k $ and~$\tau_{\kSAT} = s^*_k n/\log n$.
\end{lemma}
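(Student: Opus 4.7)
The plan is to relate the bit-size $n$ of a $k\textsc{Sat}$ instance to its number of variables $N$ via the elementary observation that, after discarding variables absent from every clause (doable in linear time), each remaining variable must occur in some clause; since every clause contains only $k$ literals, this forces $M \geq N/k$ and hence the encoding length satisfies $n = \Theta(Mk\log N) = \Omega(N\log N)$.

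First I would prove the upper bound $s^*_k \leq 2ks_k$. Fix any $c>s_k$, so that there is an $O(2^{cN})$-time algorithm for $k\textsc{Sat}$. Given a bit-size-$n$ instance, preprocess as above to obtain $n \geq N\log N$, hence $N \leq n/\log N$. In the main regime $N\geq n^{1/(2k)}$ one has $\log N\geq (\log n)/(2k)$, which yields $N\leq 2kn/\log n$ and therefore runtime $O(2^{2kc\cdot n/\log n})$. In the degenerate regime $N< n^{1/(2k)}$ the runtime $2^{cN}$ is already dominated by $2^{cn/\log n}$ for large $n$, so the bound holds trivially. Letting $c\to s_k^+$ gives $s^*_k \leq 2ks_k$.

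Next I would prove $s^*_k>0$ under the ETH, arguing by contradiction. If $s^*_k=0$, then for every $\varepsilon>0$ there is an $O(2^{\varepsilon n/\log n})$-time $k\textsc{Sat}$ algorithm. For any $\delta>0$, the Impagliazzo--Paturi--Zane Sparsification Lemma reduces an $N$-variable instance to a disjunction of at most $2^{\delta N}$ sparse instances each with $N$ variables and $M=O_{\delta,k}(N)$ clauses, hence bit-size $n=O_{\delta,k}(N\log N)$. Applying the hypothesized algorithm on each sparse instance takes $2^{\varepsilon\cdot O_{\delta,k}(N\log N)/\log(N\log N)} = 2^{\varepsilon\cdot O_{\delta,k}(N)}$ time, and summing over the $2^{\delta N}$ instances gives total runtime $2^{(\delta+\varepsilon\cdot O_{\delta,k}(1))N}$. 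Choosing $\delta,\varepsilon$ small enough, this becomes $2^{c N}$ for arbitrarily small $c>0$, contradicting $s_k>0$.

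Finally, the identity $\tau_{k\textsc{Sat}}(n) = s^*_k n/\log n$ is essentially a restatement of the definition of $s^*_k$: any runtime exponent $\tau_i(n)$ achievable by a $k\textsc{Sat}$ solver corresponds to a choice of $c$ in the definition of $s^*_k$ up to $o(n/\log n)$ corrections, and the point-wise infimum of such $\tau_i$ equals $s^*_k\cdot n/\log n$.

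The main obstacle is the careful case-split in the first step to ensure $N\leq 2kn/\log n$ holds uniformly in $n$ rather than just asymptotically; this is resolved by the observation that the degenerate regime is trivially benign since smaller $N$ only makes the $2^{cN}$ algorithm faster. The sparsification step is a black-box invocation of a standard tool, provided the constants $O_{\delta,k}(1)$ are tracked when combining the per-instance runtime with the number of sparse subproblems.
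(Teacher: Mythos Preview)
Your proposal is correct and follows essentially the same approach as the paper: both arguments use the inequality $M\geq N/k$ to relate bit-size $n=\Theta(M\log N)$ to $N$, invoke the Impagliazzo--Paturi--Zane sparsification lemma to show $s_k^*>0$, and exploit the brute-force $O(2^{s_kN})$ solver for the upper bound. The only cosmetic difference is that the paper packages the inversion of $M\mapsto M\log M$ via an inverse function $g^*$ satisfying $n/\log n\leq g^*(n)\leq 2n/\log n$, whereas you do the same estimate by a direct case split on the size of $N$; both routes yield the factor $2k$.
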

\begin{proof}
	For any fixed~$k$, we have~$\lceil N/k \rceil \leq   M \leq (2N)^k$. On the other hand, the bit-size of an instance is~$n:=\Theta(M\log N)$. Equivalently, we have~$n=\Theta(M\log M)$.
	Using the standard sparsification Lemma~\cite{IPZ98}, under the ETH, there is no~$2^{o(N+M)}$-time algorithm, or simply~$2^{o(M)}$-time algorithm, for~$\kSAT$. Let~$g^*$ be the inverse of the function~$M \mapsto M \log M$. Therefore, under the ETH,~$\kSAT$ cannot be solved in time~$2^{o(g^*(n))}$, where~$n$ is now the bit-size of the instance. Note that one can use~$g^*(n)$ and~$M$ interchageably. On the other hand,~$\kSAT$ can be solved in time~$O(2^{s_kN})$ by an exhaustive search, therefore, it can also be solved in time~$O(2^{s_kN}) \leq O(2^{ks_k M}) = O(2^{ks_k g^*(n)}) \leq O(2^{2k s^*_k n/\log n})$, where we used the fact that~$n/\log n \leq g^*(n) \leq 2 n/\log n$. Therefore,~$ 0 < s^*_k \leq 2ks_k$. Finally, we have~$\tau_{\kSAT} = s^*_k n/\log n$ by Definition~\ref{def:inf-hard}. 
\end{proof}

We immediately obtain the following corollary by Theorem~\ref{thm:gap-to-fgowf}.
\begin{corollary}\label{cor:gap-to-fgowf}
	For any~$\eta > 0$, if~$\kSAT$ admits a~$(T,\mu,f^m,d)$-$\wcwhatevs$ reduction for some~$\mu \leq 10^{-5}$,~$d \leq m^{2.5}n/2^{1.5s^*_k n/(\log n(1+\eta))}$, and~$T,m = O(2^{s^*_k n/(\log n(1+\eta))})$, then weak~$\eta$-fine-grained one-way functions exist.
\end{corollary}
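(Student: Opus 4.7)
\medskip

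\noindent\textbf{Proof plan.} The plan is to obtain the corollary as a direct instantiation of Theorem~\ref{thm:gap-to-fgowf} with $\Pi := \kSAT$, once the appropriate hardness parameter $\tau$ for $\kSAT$ under the ETH has been identified. The only real work beyond a parameter substitution is to verify that the quantitative hypotheses line up and that the notion of ``weak $\eta$-FGOWF'' matches the conclusion of Theorem~\ref{thm:gap-to-fgowf}.

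\medskip

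\noindent\textbf{Step 1: Set the hardness parameter.} First I would invoke Lemma~\ref{lemma:eth-sat}, which under the ETH gives $s^*_k > 0$ and $\tau_{\kSAT}(n) = s^*_k n / \log n$. By Definition~\ref{def:inf-hard}, this means $\kSAT$ cannot be solved by any algorithm in time $O(2^{\tau(n)})$ for $\tau(n) := s^*_k n / \log n$. This puts $\kSAT$ precisely in the setting of Theorem~\ref{thm:gap-to-fgowf}.

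\medskip

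\noindent\textbf{Step 2: Match the WC-DIST parameter bounds.} Fix $\eta > 0$ and assume the hypothesized $(T,\mu,f^m,d)$-$\wcwhatevs$ reduction for $\kSAT$. The corollary's bounds
\[
\mu \leq 10^{-5}, \quad d \leq \frac{m^{2.5} n}{2^{1.5 s^*_k n / (\log n \, (1+\eta))}}, \quad T,m = O(2^{s^*_k n / (\log n \, (1+\eta))}),
\]
are exactly the instantiation with $\tau(n) = s^*_k n / \log n$ of the hypotheses
\[
\mu \leq 10^{-5}, \quad d \leq \frac{m^{2.5} n}{2^{1.5 \tau/(1+\eta)}}, \quad T,m = O(2^{\tau/(1+\eta)})
\]
of Theorem~\ref{thm:gap-to-fgowf}. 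So the theorem applies verbatim.

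\medskip

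\noindent\textbf{Step 3: Conclude the weak FGOWF.} By Theorem~\ref{thm:gap-to-fgowf}, there exists a constant $\vartheta < 1$ and a function $\mathsf{F}$ such that no $O(|\mathsf{F}|^{1+\eta})$-time algorithm inverts $\mathsf{F}$ with probability better than $\vartheta$. Since $\vartheta$ is a constant and the runtime gap between $\mathsf{F}$ and the adversary is the fixed polynomial $|\mathsf{F}|^{1+\eta}$, this matches the definition of a weak $\eta$-FGOWF (i.e.\ an $(\eta,\vartheta)$-FGOWF with $\vartheta$ constant). This concludes the proof.

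\medskip

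\noindent\textbf{Where the real content sits.} There is no genuine obstacle at the level of this corollary: all the technical work lies upstream, namely in Theorem~\ref{thm:gap-to-fgowf} (which packages the $\cuteness$-to-OWF pipeline with the fine-grained time-accounting) and in Lemma~\ref{lemma:eth-sat} (which translates ETH into a clean exact-hardness statement $\tau_{\kSAT} = s^*_k n / \log n$ in terms of the input bit-length $n$ rather than the number of variables $N$). The only thing to double-check when writing this out is that the bit-length convention used in the corollary's bound on $d$ is the same $n = \Theta(M \log N)$ appearing in Lemma~\ref{lemma:eth-sat}, so that the exponent $1.5 s^*_k n / (\log n (1+\eta))$ in the corollary really is $1.5\tau/(1+\eta)$ with $\tau = \tau_{\kSAT}$.
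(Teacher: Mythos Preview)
Your proposal is correct and matches the paper's approach: the paper simply states that the corollary follows immediately from Theorem~\ref{thm:gap-to-fgowf}, and your proof is precisely that instantiation with $\tau = s^*_k n/\log n$ together with the observation (via Lemma~\ref{lemma:eth-sat}) that this is the right hardness parameter for $\kSAT$. The parameter matching in your Step~2 is exactly what is needed, and your Step~3 correctly identifies that the conclusion of Theorem~\ref{thm:gap-to-fgowf} is the definition of a weak $\eta$-FGOWF.
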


The following corollary is obtained by Theorem~\ref{thm:no-fgowf-dist} and Lemma~\ref{lemma:eth-sat}.
\begin{corollary}
	Under the ETH, if infinitely often weak fine-grained one-way functions do not exist, then
	for any non-constant permutation-invariant~$f^m$, any~$f^m$-$\wcwhatevs$ reduction for~$\kSAT$ with error~$\mu \leq 10^{-5}$ and distance~$d \leq  m^{2.5}n / 2^{1.5s^*_k n/\log n}$ runs in time~$\Omega(2^{s^*_k n/\log n})$.
\end{corollary}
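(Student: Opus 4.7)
The plan is to obtain this corollary as a direct instantiation of Theorem~\ref{thm:no-fgowf-dist} to the particular problem $\Pi = \kSAT$, using Lemma~\ref{lemma:eth-sat} to translate the abstract quantity $\tau_\Pi$ into a concrete function of the bit-size $n$. Concretely, I would first invoke Lemma~\ref{lemma:eth-sat}, which under the ETH guarantees $s^*_k > 0$ and gives the identity $\tau_{\kSAT}(n) = s^*_k n/\log n$. This identity is the only piece of information from the ETH that Theorem~\ref{thm:no-fgowf-dist} needs: the theorem's quantitative conclusions are stated in terms of $\tau_\Pi$ alone, so any further use of ETH is folded into the value of $\tau_{\kSAT}$.

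Next, I would plug $\Pi = \kSAT$ into Theorem~\ref{thm:no-fgowf-dist}. The hypothesis of that theorem asks that the $\wcwhatevs$ reduction be of type $(T,\mu,f^m,d)$ with $\mu \leq 10^{-5}$, with $f^m$ a non-constant permutation-invariant function, and with $d \leq m^{2.5} n / 2^{1.5\tau_\Pi(n)}$. Substituting $\tau_{\kSAT}(n) = s^*_k n/\log n$ yields exactly the parameter regime in the statement of the corollary, namely $d \leq m^{2.5} n / 2^{1.5 s^*_k n/\log n}$. Under the standing assumption that infinitely-often weak fine-grained one-way functions do not exist, Theorem~\ref{thm:no-fgowf-dist} then asserts $T = \Omega(2^{\tau_\Pi(n)}) = \Omega(2^{s^*_k n/\log n})$, as desired.

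Since the proof is a substitution, there is no real obstacle; the only place where care is needed is confirming that $s^*_k > 0$ so that the exponent is meaningful and grows. This is precisely the content of Lemma~\ref{lemma:eth-sat} under the ETH, together with Impagliazzo--Paturi's equivalence between ETH for $\SAT$ and ETH for $\kSAT$. I would simply note this point in one sentence before applying Theorem~\ref{thm:no-fgowf-dist}, and conclude.
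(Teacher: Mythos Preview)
Your proposal is correct and matches the paper's own proof, which simply states that the corollary is obtained by combining Theorem~\ref{thm:no-fgowf-dist} with Lemma~\ref{lemma:eth-sat}. You have spelled out the substitution $\tau_{\kSAT}(n) = s^*_k n/\log n$ in slightly more detail than the paper does, but the approach is identical.
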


Finally, we have the following corollary regarding the existence of one-way functions and hardness of randomization and compression of~$\kSAT$.
\begin{corollary}
	Under the ETH, either infinitely often one-way functions exist, or, for every non-constant permutation-invariant function~$f^m$, the following statements hold:
	\begin{itemize}
		\item[I. ] Any~$f^m$-$\wcwhatevs$ reduction for~$\kSAT$ with error~$\mu \leq 10^{-5}$ and distance~$d \leq m^{2.5}n / 2^{1.5 s^*_k n/\log n}$ runs in time~$2^{\Omega(n/\log n)}$.
		\item[II. ] Any~$f^m$-compression reduction for~$\kSAT$ with size-compression from~$mn$ bits to~$ \leq m(s^*_k n/(\log n \cdot \log\log n) -\log n)$ bits with error~$\mu \leq 2^{-s^*_k n/\log n-8}$ runs in time~$2^{\Omega(n/(\log n \cdot \log\log n))}$. In particular, for any constant~$\varepsilon < 1$ and any~$m=\poly(n)$, any perfect~$f^m$-compression that compresses~$mn$ bits to~$mn^\varepsilon$ bits runs in time~$2^{\Omega(n/(\log n \cdot \log\log n))}$. 
	\end{itemize}
\end{corollary}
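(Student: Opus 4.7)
The plan is to dispatch both items by specializing the general ``hardness vs one-wayness'' theorems from Section~\ref{sec:application} to $\Pi = \kSAT$, using Lemma~\ref{lemma:eth-sat} to translate ETH into a concrete lower bound on $\tau_{\kSAT}$. Under ETH, Lemma~\ref{lemma:eth-sat} gives $\tau_{\kSAT}(n) = s^*_k\, n / \log n$ with $s^*_k > 0$. So once we assume that infinitely-often one-way functions do not exist, Theorems~\ref{th:no-owf-lossy} and~\ref{thm:no-owf-dist} apply with this value of $\tau_\Pi$.

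For item~I, the plan is to invoke Theorem~\ref{thm:no-owf-dist} directly. That theorem states that, in the absence of infinitely-often OWFs, any $(T,\mu,f^m,d)$-$\wcwhatevs$ reduction for any promise problem $\Pi$ with $\mu \leq 10^{-5}$ and $d \leq m^{2.5} n \cdot 2^{-1.5\tau_\Pi}$ must satisfy $T = 2^{\Omega(\tau_\Pi)}$. Plugging in $\tau_{\kSAT} = s^*_k n/\log n$ yields exactly the claimed parameters and the runtime lower bound $2^{\Omega(n/\log n)}$. No further work is needed beyond checking that the distance bound $d \leq m^{2.5} n / 2^{1.5 s^*_k n/\log n}$ in the statement coincides with the one in Theorem~\ref{thm:no-owf-dist}.

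For item~II, the strategy is to use the remark immediately after Theorem~\ref{th:no-owf-lossy}, namely that an $f^m$-compression reduction which maps $mn$ bits to $m\lambda$ bits is in particular an $m\lambda$-$\cute$ $f^m$-distinguisher reduction. Then I apply Theorem~\ref{th:no-owf-lossy} with $\lambda^\circ = \lambda \leq \tau_{\kSAT}/\log\log n - \log n = s^*_k n/(\log n \cdot \log\log n) - \log n$ and error bound $\mu \leq 2^{-\tau_{\kSAT}-8} = 2^{-s^*_k n/\log n - 8}$; the theorem then forces runtime $2^{\Omega(\tau_{\kSAT}/\log\log n)} = 2^{\Omega(n/(\log n \cdot \log\log n))}$, as desired. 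For the ``in particular'' clause, a perfect compression ($\mu = 0$) from $mn$ to $mn^\varepsilon$ bits corresponds to $\lambda = n^\varepsilon$; for any constant $\varepsilon < 1$ and $m = \poly(n)$, one checks that $n^\varepsilon \leq s^*_k n/(\log n \cdot \log\log n) - \log n$ for all sufficiently large $n$, so the hypotheses of the preceding bullet are met and the runtime bound follows.

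The main obstacle, if any, is purely bookkeeping: making sure the constants and polynomial factors in $d \leq m^{2.5} n / 2^{1.5 \tau_{\kSAT}}$ and in the allowed compression size $m\lambda$ match up with the exact hypotheses of Theorems~\ref{thm:no-owf-dist} and~\ref{th:no-owf-lossy}, and verifying that $f^m$ being any non-constant permutation-invariant function is compatible with both cited theorems (which it is, since both are stated in exactly that generality). Beyond this bookkeeping, the proof is essentially a two-line substitution of $\tau_{\kSAT}$ from Lemma~\ref{lemma:eth-sat} into the two general impossibility results.
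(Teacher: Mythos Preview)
Your proposal is correct and matches the paper's own proof essentially line for line: the paper simply states that, via Lemma~\ref{lemma:eth-sat}, Item~(I) follows from Theorem~\ref{thm:no-owf-dist} and Item~(II) from Theorem~\ref{th:no-owf-lossy}, which is exactly the substitution you describe. Your write-up is more detailed (in particular you spell out the remark that compression to $m\lambda$ bits yields $m\lambda$-$\cuteness$, and you check the ``in particular'' clause), but the underlying argument is the same.
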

\begin{proof}
	By using Lemma~\ref{lemma:eth-sat}, Item (I) follows from Theorem~\ref{thm:no-owf-dist} and Item (II) from Theorem~\ref{th:no-owf-lossy}.
\end{proof}

\subsection{Quantum Hardness vs Quantum One-Wayness}\label{sec:comp-to-owsg}

In this section, we show that quantum compression reductions imply one-way state generators. 

\begin{lemma}\label{lemma:generic-lossy-owsg}
	Let $n \in \mathbb{N}$, $\lambda: \mathbb{N} \rightarrow \mathbb{R}^+$. Let $\Pi$ be a $(T,\mu,f^m,\lambda,\gamma)$-$\cute$ with a pure-outcome reduction, with the parameters below: 
	\begin{align*}
		T,m = 2^{O(\lambda+\log n)}\, , \quad \mu \leq 2^{-2\lambda-11}\, , \quad \gamma = 2^{-\lambda-4}\, .
	\end{align*} 
	If~$\Pi$ cannot be solved in time~$2^{O(\lambda+\log n)}$ using quantum algorithms, then one-way state generators exist.
\end{lemma}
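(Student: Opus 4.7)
The plan is to instantiate Theorem~\ref{th:owsg} with the parameter choices from the statement and then invoke the standard amplification from weak one-way state generators to full one-way state generators. Concretely, I will first check that the two thresholds $\thetaOWSG$ and $\tauOWSG$ appearing in Theorem~\ref{th:owsg} are strictly positive and, more quantitatively, bounded below by $2^{-O(\lambda)}$ for the parameter regime given in the lemma.

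First I would verify the thresholds. Recall $\delta(\lambda) \leq 1 - 2^{-\lambda-2}$ from Definition~\ref{def:delta} and $\gamma = 2^{-\lambda-4}$. Then
\begin{equation*}
\delta(\lambda) + \gamma \;\leq\; 1 - 2^{-\lambda-2} + 2^{-\lambda-4} \;=\; 1 - 3\cdot 2^{-\lambda-4}.
\end{equation*}
Using $\mu \leq 2^{-2\lambda-11}$, we get $2\mu \leq 2^{-2\lambda-10}$ and $4\sqrt{2\mu} \leq 2^{-\lambda-3}$. Hence
\begin{equation*}
\tauOWSG \;\geq\; 3\cdot 2^{-\lambda-4} - 2^{-2\lambda-10} \;=\; \Omega(2^{-\lambda}), \qquad \thetaOWSG \;\geq\; 3\cdot 2^{-\lambda-4} - 2^{-\lambda-3} \;\geq\; 2^{-\lambda-4}.
\end{equation*}
Thus both thresholds are positive and scale as $2^{-O(\lambda)}$, and Theorem~\ref{th:owsg} applies and yields an algorithm $\mathsf{G} = (\sgen, \ver)$ whose $\sgen$ runs in time $O(T + m^2 n\gamma^{-1}) = 2^{O(\lambda + \log n)}$, together with an oracle algorithm $\mathcal{C}$ that either solves $\Pi \cap \{0,1\}^n$ in time $O((T+m^2 n \gamma^{-1} + \tauOWSG^{-2})\thetaOWSG^{-2}) = 2^{O(\lambda+\log n)}$ with $2^{O(\lambda)}$ classical queries, or else $\mathsf{G}$ is a $(1 - \thetaOWSG/4)$-OWSG for every adversary $\mathcal{A}$.

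Next I would rule out the first branch of the dichotomy using the assumed hardness of $\Pi$. If there existed a quantum adversary $\mathcal{A}$ of runtime $2^{O(\lambda + \log n)}$ that broke $\mathsf{G}$ with success probability exceeding $1 - \thetaOWSG/4$, then $\mathcal{C}^{\mathcal{A}}$ would solve $\Pi \cap \{0,1\}^n$ in time $2^{O(\lambda + \log n)}$, contradicting the assumption that $\Pi$ admits no such quantum solver. Therefore Option~II of Theorem~\ref{th:owsg} must hold against all adversaries of runtime $2^{O(\lambda + \log n)}$.

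Finally, setting the security parameter to $\kappa := 2^{\lambda + \log n}$, the generator $\sgen$ runs in time $\poly(\kappa)$, while no adversary of runtime $\poly(\kappa)$ can invert it with probability better than $1 - \thetaOWSG/4 = 1 - 1/\poly(\kappa)$. Pure-outcomeness of the reduction is exactly what Theorem~\ref{th:owsg} needs so that the SWAP test inside $\ver$ (and inside $\mathcal{C}$'s simulation of the verification) is applicable; this is the only step that is delicate, since without purity of $C_b^\star\ket{r,\vec 0}$ the trace-distance estimator in $\ver$ would not be implementable. Given this, we obtain a weak one-way state generator, which by the amplification of~\cite{MY24} implies the existence of (full) one-way state generators, completing the proof.
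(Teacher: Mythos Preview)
Your proposal is correct and follows essentially the same route as the paper's proof: verify that $\thetaOWSG,\tauOWSG = 2^{-O(\lambda)}$ under the stated parameter choices, plug into Theorem~\ref{th:owsg} to get the dichotomy, rule out branch~I via the $2^{O(\lambda+\log n)}$-hardness of $\Pi$, and conclude with the weak-to-strong OWSG amplification of~\cite{MY24}. Your threshold bookkeeping is in fact slightly more careful than the paper's (you obtain $\thetaOWSG\geq 2^{-\lambda-4}$ where the paper asserts $2^{-\lambda-3}$), but both are $2^{-O(\lambda)}$ and the argument goes through identically.
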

\begin{proof}
	We compute~$\thetaOWF$ and~$\tau_{\mathsf{owsg}}$ that are required in Theorem~\ref{th:owsg}. For the given parameters, we have~$\thetaOWSG = 1-(\delta(\lambda) + \gamma + 4\sqrt{2\mu}) \geq 2^{-\lambda-3}$ and~$\tauOWSG \geq 2^{-\lambda-3}$. 
	The runtime of the construction~$\mathsf{G}$ in theorem~\ref{th:owsg} is~$O(T+m^2n 2^\lambda)$ and the runtime of the~$\Pi$-solver is~$O(2^{2\lambda}(T+T_\mathcal{A}+2^{2\lambda}) + m^2 n 2^{3\lambda}) = 2^{\Theta(\lambda + \log n)}$, for all sufficiently large~$n$. Following a similar argument as in Lemma~\ref{lemma:generic-lossy-owf}, we obtain a weak one-way state generator. One can conclude by noting that weak one-way state generators imply one-way state generators~\cite{MY24}.
\end{proof}

The following theorem is direct.
\begin{theorem}\label{thm:lossy-to-owsg}
	Let $n \in \mathbb{N}$, $\lambda:\mathbb{N} \rightarrow \mathbb{R}^+$, and $\Pi$ be a promise problem that cannot be solved by any quantum algorithm in time $2^{O(\lambda+\log n)}$. If~$\Pi$ has a quantum~$f^m$-distinguisher pure-outcome reduction for some non-constant permutation-invariant~$f^m$, with the following parameters: 
	\begin{align*}
		\cuteness \ m\lambda^\circ \leq m\lambda \, , \quad T,m = 2^{O(\lambda+\log n)}\, , \quad \text{and } \mu \leq 2^{-2\lambda - 11}\, ,
	\end{align*}
	then one-way state generators exist.
\end{theorem}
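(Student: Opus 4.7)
The plan is a direct application of Lemma~\ref{lemma:generic-lossy-owsg}, mirroring the way Theorem~\ref{thm:lossy-to-owf} was obtained from Lemma~\ref{lemma:generic-lossy-owf} in the classical setting. First, I would fix $\gamma := 2^{-\lambda-4}$, matching the choice prescribed by Lemma~\ref{lemma:generic-lossy-owsg}. The hypothesis supplies a quantum $f^m$-distinguisher pure-outcome reduction $R$ for $\Pi$ with $f$ non-constant permutation-invariant, runtime $T = 2^{O(\lambda+\log n)}$, error $\mu \leq 2^{-2\lambda-11}$, and $\cuteness$ at most $m\lambda$. Consulting Definition~\ref{def:cute}, this is exactly what is needed: for the chosen $\gamma$, the splitting $m\lambda$-lossiness of $R$ for all pairwise independent $(2^9 mn/\gamma^3)$-uniform distributions supported on $(\Pi_Y,\Pi_N)$ certifies that $\Pi$ is $(T,\mu,f^m,\lambda,\gamma)$-$\cute$ with a pure-outcome quantum reduction.

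Next, I would verify that the numerical conditions in Lemma~\ref{lemma:generic-lossy-owsg} are satisfied verbatim: $T, m = 2^{O(\lambda+\log n)}$, $\mu \leq 2^{-2\lambda-11}$, and $\gamma = 2^{-\lambda-4}$. Since $\Pi$ is assumed not to be solvable by any quantum algorithm in time $2^{O(\lambda+\log n)}$, the hypothesis of Lemma~\ref{lemma:generic-lossy-owsg} is met, and the lemma concludes the existence of a one-way state generator.

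No substantive obstacle arises: all the heavy lifting — the extended disguising lemma of Section~\ref{sec:disg}, the pure-outcome OWSG construction in Theorem~\ref{th:owsg} together with its SWAP-test based verifier, and the weak-to-strong OWSG amplification of~\cite{MY24} used inside Lemma~\ref{lemma:generic-lossy-owsg} — is already in place. The only thing to check is that the quantitative slack between $\mu \leq 2^{-2\lambda-11}$ (slightly tighter than the classical $2^{-\lambda-8}$) and the required $\thetaOWSG, \tauOWSG \geq 2^{-\lambda-3}$ thresholds is consistent; this is precisely where the extra factor $4\sqrt{2\mu}$ appearing in $\thetaOWSG$ forces the error tolerance to be quadratically smaller than in the classical setting, and it is already absorbed by the hypothesis.
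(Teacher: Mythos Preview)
Your proposal is correct and matches the paper's own proof essentially line for line: fix $\gamma = 2^{-\lambda-4}$, observe that the hypothesis makes $\Pi$ $(T,\mu,f^m,\lambda,\gamma)$-$\cute$ with a pure-outcome quantum reduction, and invoke Lemma~\ref{lemma:generic-lossy-owsg}. Your additional remarks about the $4\sqrt{2\mu}$ term forcing the tighter error bound are accurate and simply make explicit what the paper leaves implicit.
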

\begin{proof}
	Note that~$\Pi$ is indeed $(T,\mu,f^m,\lambda,\gamma)$-$\cute$ for~$\gamma = 2^{-\lambda-4}$, with a quantum reduction.
	Then the statement follows by Lemma~\ref{lemma:generic-lossy-owsg}.
\end{proof}

In the beginning of this section, we showed impossibility results for classical $\cute$ reductions assuming that one-way functions do not exist. Here, we adapt them to one-way state generators. We define a measure of quantum hardness as follows:

\begin{definition}[Exact Quantum Hardness of Problems]\label{def:inf-qhard}
	For a problem~$\Pi$, let~$\tau^Q_\Pi(n) := \inf_{\tau_i(n) \in \Upsilon}\{\tau_i\}$ (the limit is taken point-wise), where~$\Upsilon$ is the set of family of functions~$\tau_i$ such that~$\Pi \cap \{0,1\}^n$ can be solved by quantum algorithms in time~$O(2^{\tau_i(n)})$ on all instances with probability~$\geq 2/3$. 
\end{definition}

\begin{theorem}\label{th:no-owsg-lossy}
	If infinitely often one-way state generators do not exist, then for any~$\Pi$ and any quantum $f$-distinguisher reduction for~$\Pi$ with $\cuteness$~$\leq m( \tau^Q_\Pi/\log\log n -\log n)$ and~$\mu \leq 2^{-\tau^Q_{\Pi}(n)-8}$, where~$f^m$ is a non-constant permutation-invariant function, we have~$T = 2^{\Omega(\tau^Q_\Pi/\log\log n)}$.
\end{theorem}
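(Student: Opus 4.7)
The plan is to mirror the proof of Theorem~\ref{th:no-owf-lossy}, replacing the appeal to Theorem~\ref{thm:lossy-to-owf} with its quantum analogue Theorem~\ref{thm:lossy-to-owsg}, and to confirm along the way that the slightly stricter error bound required on the OWSG side still follows from the hypothesis $\mu\leq 2^{-\tau^Q_\Pi(n)-8}$.

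First I would fix an auxiliary function $\tau:\mathbb{N}\rightarrow\mathbb{R}^+$ satisfying $\tau(n)+\log n = o(\tau^Q_\Pi(n))$, and argue by contradiction: suppose that $\Pi$ admits an infinitely-often quantum pure-outcome $f^m$-distinguisher reduction whose $\cuteness$ is at most $m\tau$, whose error is at most $2^{-\tau^Q_\Pi(n)-8}$, and whose runtime and arity satisfy $T, m = 2^{O(\tau+\log n)}$. Since $\tau+\log n = o(\tau^Q_\Pi)$, by Definition~\ref{def:inf-qhard} no quantum algorithm of runtime $2^{O(\tau+\log n)}$ can decide $\Pi$. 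The goal is now to invoke Theorem~\ref{thm:lossy-to-owsg} to obtain infinitely-often OWSGs, which will contradict our standing assumption.

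Before applying Theorem~\ref{thm:lossy-to-owsg}, I would verify the error condition: the theorem demands $\mu\leq 2^{-2\tau-11}$, whereas we have $\mu\leq 2^{-\tau^Q_\Pi-8}$. Since eventually $\tau \leq \tau^Q_\Pi/\log\log n - \log n$ (which is how $\tau$ will be specialized at the end), for all sufficiently large $n$ we have $\tau^Q_\Pi - 2\tau = \tau^Q_\Pi(1 - 2/\log\log n) + 2\log n \geq 3$, so $2^{-\tau^Q_\Pi-8}\leq 2^{-2\tau-11}$ holds. With this check, Theorem~\ref{thm:lossy-to-owsg} directly yields infinitely-often OWSGs, contradicting the assumption. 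Consequently, any such reduction must have either $T = 2^{\omega(\tau+\log n)}$ or $m = 2^{\omega(\tau+\log n)}$. Invoking Lemma~\ref{lemma:aux-2}, the latter forces the runtime to also be $\Omega(m) = 2^{\omega(\tau+\log n)}$ once the error $\mu\leq 2^{-\tau^Q_\Pi-8}$ is bounded away from $1/2$, which it certainly is. Hence in all cases $T = 2^{\omega(\tau+\log n)}$.

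Finally I would take $\tau(n):= \tau^Q_\Pi(n)/\log\log n - \log n$, which lies in $o(\tau^Q_\Pi)$ as required, to conclude $T = 2^{\Omega(\tau^Q_\Pi/\log\log n)}$. The main obstacle I anticipate is a clean bookkeeping of the $\mu$-bound: Theorem~\ref{thm:lossy-to-owsg} has a quadratically tighter requirement on $\mu$ than its classical counterpart (due to the SWAP-test step in Theorem~\ref{th:owsg}), so one has to be slightly careful that the stated bound $\mu\leq 2^{-\tau^Q_\Pi(n)-8}$ is compatible with $2^{-2\tau-11}$ for the particular $\tau$ we use; as shown above, this works because $\tau$ is smaller than $\tau^Q_\Pi$ by a $\log\log n$ factor, which is asymptotically sufficient.
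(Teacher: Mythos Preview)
Your proposal is correct and follows essentially the same approach as the paper's own proof: fix an auxiliary $\tau$ with $\tau+\log n = o(\tau^Q_\Pi)$, derive a contradiction via Theorem~\ref{thm:lossy-to-owsg}, use Lemma~\ref{lemma:aux-2} to rule out the large-$m$ branch, and then specialize $\tau = \tau^Q_\Pi/\log\log n - \log n$. Your explicit verification that the hypothesis $\mu \leq 2^{-\tau^Q_\Pi-8}$ implies the stronger bound $\mu \leq 2^{-2\tau-11}$ required by Theorem~\ref{thm:lossy-to-owsg} is a detail the paper glosses over, and your observation that the pure-outcome assumption is needed is also accurate (the theorem statement leaves it implicit).
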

\begin{proof}
	Let~$\tau$ be such that~$ \tau(n) + \log n =o( \tau^Q_{\Pi}(n))$. Further, assume that there exists an infinitely often quantum $f^m$-distinguisher reduction for~$\Pi$ with parameters
	\begin{align*}
		\cuteness \ m\tau^\circ \leq m\tau \, , \quad T,m = 2^{O(\tau+\log n)}\, , \quad \text{and } \mu \leq 2^{-\tau^Q_\Pi(n) - 8}\, .
	\end{align*}
	Note that we have~$\tau(n)+\log n = o(\tau^Q_\Pi(n))$ and $\Pi$ is~$\Omega(2^{\tau^Q_\Pi(n)})$-hard using quantum algorithms. Therefore, no quantum algorithm that runs in time~$2^{O(\tau(n)+\log n)}$ can solve it.
	By Theorem~\ref{thm:lossy-to-owsg}, it follows that infinitely often one-way state generators exist, which contradicts the assumption. Hence any the~$f^m$-distinguisher reduction (within the given parameters) either runs in time~$2^{\omega(\tau(n)+\log n)}$ or we have~$m=2^{\omega(\tau(n)+\log n)}$, for all sufficiently large~$n$. Note that the latter implies the former by Lemma~\ref{lemma:aux-2}. Hence, we have~$T=2^{\omega(\tau(n)+\log n)}$. The only condition that we impose on~$\tau$ is that~$ \tau(n) + \log n =o( \tau^Q_{\Pi}(n))$. By letting~$\tau = \tau^Q_\Pi/\log\log n -\log n$, we conclude that the runtime must be at least~$2^{\Omega(\tau^Q_\Pi/\log n)}$.
\end{proof}

\begin{remark}
	We note that all the results above immediately apply to quantum~$f^m$-compression reductions since any quantum~$f^m$-compression reduction is quantum $\cute$. 
\end{remark}

\subsubsection*{Acknowledgments.} 
The authors thank Damien Vergnaud for helpful discussions. This work is part of HQI initiative\footnote{www.hqi.fr} and is supported by France 2030 under the French National
Research Agency award number ANR-22-PNCQ-0002.

\newpage
\phantom{\cite{}}
\bibliographystyle{alpha}
\newcommand{\etalchar}[1]{$^{#1}$}

\end{document}